\newif\ifcomments     
\newif\ifanonymous    
\newif\ifextended     
\newif\ifsubmission   
\newif\ifpublic       
\renewcommand\@formatdoi[1]{\ignorespaces}
\renewcommand\footnotetextcopyrightpermission[1]{}
\newcommand{\pref}[1]{\prettyref{#1}}
\newcommand{\scw}[1]{\textcolor{blue}{{SCW: #1}}}
\newcommand{\rae}[1]{\textcolor{magenta}{{RAE: #1}}}
\newcommand{\hde}[1]{\textcolor{gray}{{HDE: #1}}}
\newcommand{\pc}[1]{\textcolor[rgb]{0.0, 0.5, 0.0}{{PC: #1}}}
\newcommand{\scw}[1]{}
\newcommand{\rae}[1]{}
\newcommand{\hde}[1]{}
\newcommand{\pc}[1]{}
\newcommand{\auxiliarymaterial}{the appendix}
\newcommand{\auxref}[1]{Appendix~\ref{#1}}
\newcommand{\auxiliarymaterial}{the anonymized supplementary material}
\newcommand{\auxiliarymaterial}{the extended version of this paper}
\newcommand{\auxref}[1]{\auxiliarymaterial}
\newtheorem{theorem}{Theorem}[section]
\newtheorem{lemma}[theorem]{Lemma}
\newtheorem{notation}[theorem]{Notation}
\theoremstyle{remark}
\newtheorem{example}[theorem]{Example}
\newcommand{\alt}{\ensuremath{\ |\ }}
\newcommand\ncoverline[1]{\mkern1mu\overline{\mkern-1mu#1\mkern-1mu}\mkern1mu}
\keywords{Irrelevance, linearity, quantitative reasoning, heap semantics.}
\begin{document}

\bibliographystyle{ACM-Reference-Format}

\citestyle{acmauthoryear}   

\ifextended
\title{A Graded Dependent Type System with a Usage-Aware Semantics (extended version)}
\else
\title{A Graded Dependent Type System with a Usage-Aware Semantics}
\fi

\author{Pritam Choudhury}
 \affiliation{
   \position{}
   \department{Computer and Information Science}              
   \institution{University of Pennsylvania}                   
   \country{USA}
 }
 \email{pritam@seas.upenn.edu}

\author{Harley Eades III}
\affiliation{
  \department{School of Computer and Cyber Sciences}              
  \institution{Augusta University}                   
  \streetaddress{2500 Walton Way}
  \city{Augusta}
  \state{GA}
  \postcode{30904}
  \country{USA}
}
\email{harley.eades@gmail.com}

\author{Richard A.~Eisenberg}
\affiliation{
  \position{Principal Researcher}
  \institution{Tweag I/O}
  \city{Paris}
  \country{France}
}
 \affiliation{
   \position{Assistant Professor}
   \department{Computer Science}              
   \institution{Bryn Mawr College}            
   \streetaddress{101 N.~Merion Ave}
   \city{Bryn Mawr}
   \state{PA}
   \postcode{19010}
   \country{USA}
 }
 \email{rae@richarde.dev}

\author{Stephanie Weirich}
\affiliation{
  \department{Computer and Information Science}              
  \institution{University of Pennsylvania}                   
  \streetaddress{3330 Walnut St}
  \city{Philadelphia}
  \state{PA}
  \postcode{19104}
  \country{USA}
}
\email{sweirich@cis.upenn.edu}

\begin{abstract}
Graded Type Theory provides a mechanism to track and reason about
resource usage in type systems. In this paper, we develop \textsc{GraD}\xspace, a novel
version of such a graded dependent type system that includes functions, tensor products,
additive sums, and a unit type. Since standard operational semantics is
resource-agnostic, we develop a heap-based operational semantics and
prove a soundness theorem that shows correct accounting of resource usage. Several useful properties,
including the standard type soundness theorem, non-interference of irrelevant resources in computation and single pointer property for linear resources, can be derived from this theorem. We hope that our work
will provide a base for integrating linearity, irrelevance and
dependent types in practical programming languages like Haskell.

\end{abstract}

\maketitle
\ifextended
\thispagestyle{plain}
\pagestyle{plain}
\fi


\newcommand{\Langname}{\textsc{GraD}\xspace}
\newcommand{\extendedurl}{\url{http://arxiv.org/abs/2011.04070}}

\newcommand{\ottdrule}[4][]{{\displaystyle\frac{\begin{array}{l}#2\end{array}}{#3}\quad\ottdrulename{#4}}}
\newcommand{\ottusedrule}[1]{\[#1\]}
\newcommand{\ottpremise}[1]{ #1 \\}
\newenvironment{ottdefnblock}[3][]{ \framebox{\mbox{#2}} \quad #3 \\[0pt]}{}
\newenvironment{ottfundefnblock}[3][]{ \framebox{\mbox{#2}} \quad #3 \\[0pt]\begin{displaymath}\begin{array}{l}}{\end{array}\end{displaymath}}
\newcommand{\ottfunclause}[2]{ #1 \equiv #2 \\}
\newcommand{\ottnt}[1]{\mathit{#1}}
\newcommand{\ottmv}[1]{\mathit{#1}}
\newcommand{\ottkw}[1]{\mathbf{#1}}
\newcommand{\ottsym}[1]{#1}
\newcommand{\ottcom}[1]{\text{#1}}
\newcommand{\ottdrulename}[1]{\textsc{#1}}
\newcommand{\ottcomplu}[5]{\overline{#1}^{\,#2\in #3 #4 #5}}
\newcommand{\ottcompu}[3]{\overline{#1}^{\,#2<#3}}
\newcommand{\ottcomp}[2]{\overline{#1}^{\,#2}}
\newcommand{\ottgrammartabular}[1]{\begin{supertabular}{llcllllll}#1\end{supertabular}}
\newcommand{\ottmetavartabular}[1]{\begin{supertabular}{ll}#1\end{supertabular}}
\newcommand{\ottrulehead}[3]{$#1$ & & $#2$ & & & \multicolumn{2}{l}{#3}}
\newcommand{\ottprodline}[6]{& & $#1$ & $#2$ & $#3 #4$ & $#5$ & $#6$}
\newcommand{\ottfirstprodline}[6]{\ottprodline{#1}{#2}{#3}{#4}{#5}{#6}}
\newcommand{\ottlongprodline}[2]{& & $#1$ & \multicolumn{4}{l}{$#2$}}
\newcommand{\ottfirstlongprodline}[2]{\ottlongprodline{#1}{#2}}
\newcommand{\ottbindspecprodline}[6]{\ottprodline{#1}{#2}{#3}{#4}{#5}{#6}}
\newcommand{\ottprodnewline}{\\}
\newcommand{\ottinterrule}{\\[5.0mm]}
\newcommand{\ottafterlastrule}{\\}
\newcommand{\ottmetavars}{
\ottmetavartabular{
 $ \ottmv{tmvar} ,\, \ottmv{x} ,\, \ottmv{y} ,\, \ottmv{z} ,\, \ottmv{f} ,\, \ottmv{g} $ & \ottcom{variables} \\
 $ \ottmv{covar} ,\, \ottmv{d} $ & \ottcom{coercion variables} \\
 $ \ottmv{semiringvar} ,\, \ottmv{m} $ & \ottcom{semiring variables} \\
 $ \ottmv{datacon} ,\, \ottmv{K} $ &  \\
 $ \ottmv{const} ,\, \ottmv{T} $ &  \\
 $ \mathit{index} ,\, \mathit{i} ,\, \mathit{j} $ & \ottcom{indices} \\
}}

\newcommand{\ottusage}{
\ottrulehead{\ottnt{usage}  ,\ { \color{black}{q} }  ,\ { \color{black}{r} }  ,\ { \color{black}{s} }}{::=}{}}

\newcommand{\otttm}{
\ottrulehead{\ottnt{tm}  ,\ \ottnt{a}  ,\ \ottnt{b}  ,\ \ottnt{c}  ,\ \ottnt{A}  ,\ \ottnt{B}  ,\ \ottnt{C}  ,\ \ottnt{u}  ,\ \ottnt{v}  ,\ \ottnt{t}}{::=}{\ottcom{terms and types}}\ottprodnewline
\ottfirstprodline{|}{\ottkw{Unit}}{}{}{}{}\ottprodnewline
\ottprodline{|}{\ottkw{unit}}{}{}{}{}\ottprodnewline
\ottprodline{|}{ \ottkw{let}\,  \ottkw{unit} \,=\, \ottnt{a} \ \ottkw{in}\  \ottnt{b} }{}{}{}{}\ottprodnewline
\ottprodline{|}{ \Pi  \ottmv{x} \!:^ { \color{black}{q} } \! \ottnt{A} . \ottnt{B} }{}{\textsf{bind}\; \ottmv{x}\; \textsf{in}\; \ottnt{B}}{}{}\ottprodnewline
\ottprodline{|}{ \lambda \ottmv{x} \!:^ { \color{black}{q} } \! \ottnt{A} . \ottnt{a} }{}{\textsf{bind}\; \ottmv{x}\; \textsf{in}\; \ottnt{a}}{}{}\ottprodnewline
\ottprodline{|}{\ottnt{a} \, \ottnt{b}}{}{}{}{}\ottprodnewline
\ottprodline{|}{ \Box^{ { \color{black}{q} } }  \ottnt{A} }{}{}{}{}\ottprodnewline
\ottprodline{|}{ \ottkw{let}\, \ottkw{box} \, \ottmv{x} \,=\, \ottnt{a} \ \ottkw{in}\  \ottnt{b} }{}{\textsf{bind}\; \ottmv{x}\; \textsf{in}\; \ottnt{b}}{}{}\ottprodnewline
\ottprodline{|}{ \textbf{Type} }{}{}{}{}\ottprodnewline
\ottprodline{|}{\ottmv{x}}{}{}{}{}\ottprodnewline
\ottprodline{|}{\ottnt{a}  \ottsym{\{}  \ottnt{b}  \ottsym{/}  \ottmv{x}  \ottsym{\}}} {\textsf{S}}{}{}{}\ottprodnewline
\ottprodline{|}{ \ottnt{a} } {\textsf{S}}{}{}{\ottcom{parsing precedence is hard}}\ottprodnewline
\ottprodline{|}{ \ottkw{box} _ { \color{black}{q} } \, \ottnt{a} }{}{}{}{}\ottprodnewline
\ottprodline{|}{\ottkw{let} \, \ottmv{x}  \ottsym{=}  \ottnt{a} \, \mathsf{in} \, \ottnt{b}}{}{\textsf{bind}\; \ottmv{x}\; \textsf{in}\; \ottnt{b}}{}{\ottcom{eliminator for box}}\ottprodnewline
\ottprodline{|}{ \ottnt{A_{{\mathrm{1}}}}  \oplus  \ottnt{A_{{\mathrm{2}}}} }{}{}{}{}\ottprodnewline
\ottprodline{|}{ \ottkw{inj}_1\,  \ottnt{a} }{}{}{}{}\ottprodnewline
\ottprodline{|}{ \ottkw{inj}_2\,  \ottnt{a} }{}{}{}{}\ottprodnewline
\ottprodline{|}{ \ottkw{case}_ { \color{black}{q} } \,  \ottnt{a} \, \ottkw{of}\,  \ottnt{b_{{\mathrm{1}}}}  ;  \ottnt{b_{{\mathrm{2}}}} }{}{}{}{}\ottprodnewline
\ottprodline{|}{ \Sigma  \ottmv{x} \!\!:^ { \color{black}{q} } \!\! \ottnt{A} . \ottnt{B} }{}{\textsf{bind}\; \ottmv{x}\; \textsf{in}\; \ottnt{B}}{}{}\ottprodnewline
\ottprodline{|}{\ottsym{(}  \ottnt{a}  \ottsym{,}  \ottnt{b}  \ottsym{)}}{}{}{}{}\ottprodnewline
\ottprodline{|}{ \ottkw{spread}\,  \ottnt{a} \, \ottkw{to}\,  \ottmv{x} \, \ottkw{in}\,  \ottnt{b} }{}{\textsf{bind}\; \ottmv{x}\; \textsf{in}\; \ottnt{b}}{}{}\ottprodnewline
\ottprodline{|}{\ottkw{let} \, \ottsym{(}  \ottmv{x}  \ottsym{,}  \ottmv{y}  \ottsym{)}  \ottsym{=}  \ottnt{a} \, \mathsf{in} \, \ottnt{b}}{}{\textsf{bind}\; \ottmv{x}\; \textsf{in}\; \ottnt{b}}{}{}\ottprodnewline
\ottprodline{|}{\ottkw{Maybe} \, \ottnt{a}}{}{}{}{}\ottprodnewline
\ottprodline{|}{\ottkw{Just} \, \ottnt{a}}{}{}{}{}\ottprodnewline
\ottprodline{|}{\ottkw{Nothing}}{}{}{}{}\ottprodnewline
\ottprodline{|}{\ottkw{case} \, \ottnt{a} \, \ottkw{of} \, \ottkw{Just} \, \ottmv{x}  \to  \ottnt{b_{{\mathrm{1}}}}  \mathsf{;} \, \ottkw{Nothing} \, \to  \ottnt{b_{{\mathrm{2}}}}}{}{\textsf{bind}\; \ottmv{x}\; \textsf{in}\; \ottnt{b_{{\mathrm{1}}}}}{}{}}

\newcommand{\ottsort}{
\ottrulehead{\ottnt{sort}}{::=}{\ottcom{binding classifier}}\ottprodnewline
\ottfirstprodline{|}{\ottkw{Tm} \, \ottnt{A}}{}{}{}{}\ottprodnewline
\ottprodline{|}{\ottkw{Def} \, \ottnt{a} \, \ottnt{A}}{}{}{}{}}

\newcommand{\ottcontext}{
\ottrulehead{\ottnt{context}  ,\ \Gamma}{::=}{\ottcom{contexts}}\ottprodnewline
\ottfirstprodline{|}{\varnothing}{}{}{}{}\ottprodnewline
\ottprodline{|}{ \ottmv{x} \! :^{ { \color{black}{q} } }\! \ottnt{A} }{}{}{}{}\ottprodnewline
\ottprodline{|}{ \ottmv{x}  \! = \!  \ottnt{a}  \! :^{ { \color{black}{q} } } \!  \ottnt{A} }{}{}{}{}}

\newcommand{\ottD}{
\ottrulehead{\Delta}{::=}{\ottcom{contexts}}\ottprodnewline
\ottfirstprodline{|}{\varnothing}{}{}{}{}\ottprodnewline
\ottprodline{|}{ \ottmv{x} \!\!:\!\! \ottnt{A} }{}{}{}{}\ottprodnewline
\ottprodline{|}{ \ottmv{x} \! = \!  \ottnt{a}  \! : \!  \ottnt{A} }{}{}{}{}}

\newcommand{\ottheap}{
\ottrulehead{\ottnt{heap}  ,\ \ottnt{H}}{::=}{\ottcom{heap}}\ottprodnewline
\ottfirstprodline{|}{\varnothing}{}{}{}{}\ottprodnewline
\ottprodline{|}{ \ottmv{x}  \overset{ { \color{black}{q} } }{\mapsto} { \Gamma \vdash  \ottnt{a}  :  \ottnt{A} } }{}{}{}{}\ottprodnewline
\ottprodline{|}{ \ottmv{x}  \stackrel{ { \color{black}{q} } }{\mapsto} \! \ottnt{a}  \in  \Gamma }{}{}{}{}\ottprodnewline
\ottprodline{|}{ \ottmv{x}  \overset{ { \color{black}{q} } }{\mapsto}  \ottnt{a} }{}{}{}{}}

\newcommand{\ottqlist}{
\ottrulehead{\ottnt{qlist}  ,\ \mathbf{u}  ,\ \mathbf{v}}{::=}{\ottcom{qlist}}\ottprodnewline
\ottfirstprodline{|}{ \_ }{}{}{}{}\ottprodnewline
\ottprodline{|}{ { \color{black}{q} } }{}{}{}{}}

\newcommand{\ottsupp}{
\ottrulehead{\ottnt{supp}  ,\ \ottnt{S}}{::=}{\ottcom{support}}}

\newcommand{\ottn}{
\ottrulehead{\ottnt{n}}{::=}{\ottcom{natural number}}}

\newcommand{\ottgrammar}{\ottgrammartabular{
\ottusage\ottinterrule
\otttm\ottinterrule
\ottsort\ottinterrule
\ottcontext\ottinterrule
\ottD\ottinterrule
\ottheap\ottinterrule
\ottqlist\ottinterrule
\ottsupp\ottinterrule
\ottn\ottafterlastrule
}}

\newcommand{\ottdruleSXXAppCong}[1]{\ottdrule[#1]{%
\ottpremise{ \ottnt{a}  \leadsto  \ottnt{a'} }%
}{
 \ottnt{a} \, \ottnt{b}  \leadsto  \ottnt{a'} \, \ottnt{b} }{%
{\ottdrulename{S\_AppCong}}{}%
}}

\newcommand{\ottdruleSXXBeta}[1]{\ottdrule[#1]{%
}{
 \ottsym{(}   \lambda \ottmv{x} \!:^ { \color{black}{q} } \! \ottnt{A} . \ottnt{a}   \ottsym{)} \, \ottnt{b}  \leadsto  \ottnt{a}  \ottsym{\{}  \ottnt{b}  \ottsym{/}  \ottmv{x}  \ottsym{\}} }{%
{\ottdrulename{S\_Beta}}{}%
}}

\newcommand{\ottdruleSXXUnitCong}[1]{\ottdrule[#1]{%
\ottpremise{ \ottnt{a}  \leadsto  \ottnt{a'} }%
}{
  \ottkw{let}\,  \ottkw{unit} \,=\, \ottnt{a} \ \ottkw{in}\  \ottnt{b}   \leadsto   \ottkw{let}\,  \ottkw{unit} \,=\, \ottnt{a'} \ \ottkw{in}\  \ottnt{b}  }{%
{\ottdrulename{S\_UnitCong}}{}%
}}

\newcommand{\ottdruleSXXUnitBeta}[1]{\ottdrule[#1]{%
}{
  \ottkw{let}\,  \ottkw{unit} \,=\, \ottkw{unit} \ \ottkw{in}\  \ottnt{b}   \leadsto  \ottnt{b} }{%
{\ottdrulename{S\_UnitBeta}}{}%
}}

\newcommand{\ottdruleSXXBoxCong}[1]{\ottdrule[#1]{%
\ottpremise{ \ottnt{a}  \leadsto  \ottnt{a'} }%
}{
  \ottkw{let}\, \ottkw{box} \, \ottmv{x} \,=\, \ottnt{a} \ \ottkw{in}\  \ottnt{b}   \leadsto   \ottkw{let}\, \ottkw{box} \, \ottmv{x} \,=\, \ottnt{a'} \ \ottkw{in}\  \ottnt{b}  }{%
{\ottdrulename{S\_BoxCong}}{}%
}}

\newcommand{\ottdruleSXXBoxBeta}[1]{\ottdrule[#1]{%
}{
  \ottkw{let}\, \ottkw{box} \, \ottmv{x} \,=\,  \ottkw{box} _ { \color{black}{q} } \, \ottnt{a}  \ \ottkw{in}\  \ottnt{b}   \leadsto  \ottnt{b}  \ottsym{\{}  \ottnt{a}  \ottsym{/}  \ottmv{x}  \ottsym{\}} }{%
{\ottdrulename{S\_BoxBeta}}{}%
}}

\newcommand{\ottdruleSXXCaseCong}[1]{\ottdrule[#1]{%
\ottpremise{ \ottnt{a}  \leadsto  \ottnt{a'} }%
}{
  \ottkw{case}_ { \color{black}{q} } \,  \ottnt{a} \, \ottkw{of}\,  \ottnt{b_{{\mathrm{1}}}}  ;  \ottnt{b_{{\mathrm{2}}}}   \leadsto   \ottkw{case}_ { \color{black}{q} } \,  \ottnt{a'} \, \ottkw{of}\,  \ottnt{b_{{\mathrm{1}}}}  ;  \ottnt{b_{{\mathrm{2}}}}  }{%
{\ottdrulename{S\_CaseCong}}{}%
}}

\newcommand{\ottdruleSXXCaseOneBeta}[1]{\ottdrule[#1]{%
}{
  \ottkw{case}_ { \color{black}{q} } \,  \ottsym{(}   \ottkw{inj}_1\,  \ottnt{a}   \ottsym{)} \, \ottkw{of}\,  \ottnt{b_{{\mathrm{1}}}}  ;  \ottnt{b_{{\mathrm{2}}}}   \leadsto  \ottnt{b_{{\mathrm{1}}}} \, \ottnt{a} }{%
{\ottdrulename{S\_Case1Beta}}{}%
}}

\newcommand{\ottdruleSXXCaseTwoBeta}[1]{\ottdrule[#1]{%
}{
  \ottkw{case}_ { \color{black}{q} } \,  \ottsym{(}   \ottkw{inj}_2\,  \ottnt{a}   \ottsym{)} \, \ottkw{of}\,  \ottnt{b_{{\mathrm{1}}}}  ;  \ottnt{b_{{\mathrm{2}}}}   \leadsto  \ottnt{b_{{\mathrm{2}}}} \, \ottnt{a} }{%
{\ottdrulename{S\_Case2Beta}}{}%
}}

\newcommand{\ottdruleSXXSpreadCong}[1]{\ottdrule[#1]{%
\ottpremise{ \ottnt{a}  \leadsto  \ottnt{a'} }%
}{
 \ottkw{let} \, \ottsym{(}  \ottmv{x}  \ottsym{,}  \ottmv{y}  \ottsym{)}  \ottsym{=}  \ottnt{a} \, \mathsf{in} \, \ottnt{b}  \leadsto  \ottkw{let} \, \ottsym{(}  \ottmv{x}  \ottsym{,}  \ottmv{y}  \ottsym{)}  \ottsym{=}  \ottnt{a'} \, \mathsf{in} \, \ottnt{b} }{%
{\ottdrulename{S\_SpreadCong}}{}%
}}

\newcommand{\ottdruleSXXSpreadBeta}[1]{\ottdrule[#1]{%
}{
 \ottkw{let} \, \ottsym{(}  \ottmv{x}  \ottsym{,}  \ottmv{y}  \ottsym{)}  \ottsym{=}  \ottsym{(}  \ottnt{a}  \ottsym{,}  \ottnt{b}  \ottsym{)} \, \mathsf{in} \, \ottnt{b}  \leadsto  \ottnt{b}  \ottsym{\{}  \ottnt{a}  \ottsym{/}  \ottmv{x}  \ottsym{\}}  \ottsym{\{}  \ottnt{b}  \ottsym{/}  \ottmv{y}  \ottsym{\}} }{%
{\ottdrulename{S\_SpreadBeta}}{}%
}}

\newcommand{\ottdefnStep}[1]{\begin{ottdefnblock}[#1]{$ \ottnt{a}  \leadsto  \ottnt{a'} $}{\ottcom{small-step}}
\ottusedrule{\ottdruleSXXAppCong{}}
\ottusedrule{\ottdruleSXXBeta{}}
\ottusedrule{\ottdruleSXXUnitCong{}}
\ottusedrule{\ottdruleSXXUnitBeta{}}
\ottusedrule{\ottdruleSXXBoxCong{}}
\ottusedrule{\ottdruleSXXBoxBeta{}}
\ottusedrule{\ottdruleSXXCaseCong{}}
\ottusedrule{\ottdruleSXXCaseOneBeta{}}
\ottusedrule{\ottdruleSXXCaseTwoBeta{}}
\ottusedrule{\ottdruleSXXSpreadCong{}}
\ottusedrule{\ottdruleSXXSpreadBeta{}}
\end{ottdefnblock}}

\newcommand{\ottdefnsJOp}{
\ottdefnStep{}}

\newcommand{\ottdruleSTXXSub}[1]{\ottdrule[#1]{%
\ottpremise{ \Delta \ ;\  \Gamma_{{\mathrm{1}}}  \vdash \ottnt{a} : \ottnt{A} }%
\ottpremise{ \Gamma_{{\mathrm{1}}}   \leq   \Gamma_{{\mathrm{2}}} }%
}{
 \Delta \ ;\  \Gamma_{{\mathrm{2}}}  \vdash \ottnt{a} : \ottnt{A} }{%
{\ottdrulename{ST\_Sub}}{}%
}}

\newcommand{\ottdruleSTXXVar}[1]{\ottdrule[#1]{%
\ottpremise{\ottmv{x} \, \not\in \, \mathsf{dom} \, \Delta}%
\ottpremise{ \Delta \vdash \Gamma }%
}{
 \ottsym{(}   \Delta ,   \ottmv{x} \!\!:\!\! \ottnt{A}    \ottsym{)} \ ;\  \ottsym{(}      { \color{black}{0} }    \cdot   \Gamma   ,   \ottmv{x} \! :^{  { \color{black}{1} }  }\! \ottnt{A}    \ottsym{)}  \vdash \ottmv{x} : \ottnt{A} }{%
{\ottdrulename{ST\_Var}}{}%
}}

\newcommand{\ottdruleSTXXWeak}[1]{\ottdrule[#1]{%
\ottpremise{\ottmv{x} \, \not\in \, \mathsf{dom} \, \Delta}%
\ottpremise{ \Delta \ ;\  \Gamma  \vdash \ottnt{a} : \ottnt{B} }%
}{
  \Delta ,   \ottmv{x} \!\!:\!\! \ottnt{A}   \ ;\   \Gamma ,   \ottmv{x} \! :^{  { \color{black}{0} }  }\! \ottnt{A}    \vdash \ottnt{a} : \ottnt{B} }{%
{\ottdrulename{ST\_Weak}}{}%
}}

\newcommand{\ottdruleSTXXUnit}[1]{\ottdrule[#1]{%
}{
 \varnothing \ ;\  \varnothing  \vdash \ottkw{unit} : \ottkw{Unit} }{%
{\ottdrulename{ST\_Unit}}{}%
}}

\newcommand{\ottdruleSTXXInjOne}[1]{\ottdrule[#1]{%
\ottpremise{ \Delta \ ;\  \Gamma  \vdash \ottnt{a} : \ottnt{A_{{\mathrm{1}}}} }%
}{
 \Delta \ ;\  \Gamma  \vdash  \ottkw{inj}_1\,  \ottnt{a}  :  \ottnt{A_{{\mathrm{1}}}}  \oplus  \ottnt{A_{{\mathrm{2}}}}  }{%
{\ottdrulename{ST\_Inj1}}{}%
}}

\newcommand{\ottdruleSTXXInjTwo}[1]{\ottdrule[#1]{%
\ottpremise{ \Delta \ ;\  \Gamma  \vdash \ottnt{a} : \ottnt{A_{{\mathrm{2}}}} }%
}{
 \Delta \ ;\  \Gamma  \vdash  \ottkw{inj}_2\,  \ottnt{a}  :  \ottnt{A_{{\mathrm{1}}}}  \oplus  \ottnt{A_{{\mathrm{2}}}}  }{%
{\ottdrulename{ST\_Inj2}}{}%
}}

\newcommand{\ottdruleSTXXCase}[1]{\ottdrule[#1]{%
\ottpremise{ { \color{black}{1} }   \leq  { \color{black}{q} }}%
\ottpremise{ \Delta \ ;\  \Gamma_{{\mathrm{1}}}  \vdash \ottnt{a} :  \ottnt{A_{{\mathrm{1}}}}  \oplus  \ottnt{A_{{\mathrm{2}}}}  }%
\ottpremise{ \Delta \ ;\  \Gamma_{{\mathrm{2}}}  \vdash \ottnt{b_{{\mathrm{1}}}} :  {}^{ { \color{black}{q} } } \ottnt{A_{{\mathrm{1}}}} \rightarrow  \ottnt{B}  }%
\ottpremise{ \Delta \ ;\  \Gamma_{{\mathrm{2}}}  \vdash \ottnt{b_{{\mathrm{2}}}} :  {}^{ { \color{black}{q} } } \ottnt{A_{{\mathrm{2}}}} \rightarrow  \ottnt{B}  }%
}{
 \Delta \ ;\    { \color{black}{q} }   \cdot   \Gamma_{{\mathrm{1}}}    \ottsym{+}  \Gamma_{{\mathrm{2}}}  \vdash  \ottkw{case}_ { \color{black}{q} } \,  \ottnt{a} \, \ottkw{of}\,  \ottnt{b_{{\mathrm{1}}}}  ;  \ottnt{b_{{\mathrm{2}}}}  : \ottnt{B} }{%
{\ottdrulename{ST\_Case}}{}%
}}

\newcommand{\ottdruleSTXXUnitE}[1]{\ottdrule[#1]{%
\ottpremise{ \Delta \ ;\  \Gamma_{{\mathrm{1}}}  \vdash \ottnt{a} : \ottkw{Unit} }%
\ottpremise{ \Delta \ ;\  \Gamma_{{\mathrm{2}}}  \vdash \ottnt{b} : \ottnt{B} }%
}{
 \Delta \ ;\  \Gamma_{{\mathrm{1}}}  \ottsym{+}  \Gamma_{{\mathrm{2}}}  \vdash  \ottkw{let}\,  \ottkw{unit} \,=\, \ottnt{a} \ \ottkw{in}\  \ottnt{b}  : \ottnt{B} }{%
{\ottdrulename{ST\_UnitE}}{}%
}}

\newcommand{\ottdruleSTXXLam}[1]{\ottdrule[#1]{%
\ottpremise{  \Delta ,   \ottmv{x} \!\!:\!\! \ottnt{A}   \ ;\   \Gamma ,   \ottmv{x} \! :^{ { \color{black}{q} } }\! \ottnt{A}    \vdash \ottnt{a} : \ottnt{B} }%
}{
 \Delta \ ;\  \Gamma  \vdash  \lambda \ottmv{x} \!:^ { \color{black}{q} } \! \ottnt{A} . \ottnt{a}  : \ottsym{(}   {}^{ { \color{black}{q} } } \ottnt{A} \rightarrow  \ottnt{B}   \ottsym{)} }{%
{\ottdrulename{ST\_Lam}}{}%
}}

\newcommand{\ottdruleSTXXApp}[1]{\ottdrule[#1]{%
\ottpremise{ \Delta \ ;\  \Gamma_{{\mathrm{1}}}  \vdash \ottnt{a} : \ottsym{(}   {}^{ { \color{black}{q} } } \ottnt{A} \rightarrow  \ottnt{B}   \ottsym{)} }%
\ottpremise{ \Delta \ ;\  \Gamma_{{\mathrm{2}}}  \vdash \ottnt{b} : \ottnt{A} }%
}{
 \Delta \ ;\  \Gamma_{{\mathrm{1}}}  \ottsym{+}   { \color{black}{q} }   \cdot   \Gamma_{{\mathrm{2}}}   \vdash \ottnt{a} \, \ottnt{b} : \ottnt{B} }{%
{\ottdrulename{ST\_App}}{}%
}}

\newcommand{\ottdruleSTXXBox}[1]{\ottdrule[#1]{%
\ottpremise{ \Delta \ ;\  \Gamma  \vdash \ottnt{a} : \ottnt{A} }%
}{
 \Delta \ ;\   { \color{black}{q} }   \cdot   \Gamma   \vdash  \ottkw{box} _ { \color{black}{q} } \, \ottnt{a}  :  \Box^{ { \color{black}{q} } }  \ottnt{A}  }{%
{\ottdrulename{ST\_Box}}{}%
}}

\newcommand{\ottdruleSTXXLetBox}[1]{\ottdrule[#1]{%
\ottpremise{ \Delta \ ;\  \Gamma_{{\mathrm{1}}}  \vdash \ottnt{a} :  \Box^{ { \color{black}{q} } }  \ottnt{A}  }%
\ottpremise{  \Delta ,   \ottmv{x} \!\!:\!\! \ottnt{A}   \ ;\   \Gamma_{{\mathrm{2}}} ,   \ottmv{x} \! :^{ { \color{black}{q} } }\! \ottnt{A}    \vdash \ottnt{b} : \ottnt{B} }%
}{
 \Delta \ ;\  \Gamma_{{\mathrm{1}}}  \ottsym{+}  \Gamma_{{\mathrm{2}}}  \vdash  \ottkw{let}\, \ottkw{box} \, \ottmv{x} \,=\, \ottnt{a} \ \ottkw{in}\  \ottnt{b}  : \ottnt{B} }{%
{\ottdrulename{ST\_LetBox}}{}%
}}

\newcommand{\ottdruleSTXXPair}[1]{\ottdrule[#1]{%
\ottpremise{ \Delta \ ;\  \Gamma_{{\mathrm{1}}}  \vdash \ottnt{a} : \ottnt{A} }%
\ottpremise{ \Delta \ ;\  \Gamma_{{\mathrm{2}}}  \vdash \ottnt{b} : \ottnt{B} }%
}{
 \Delta \ ;\  \Gamma_{{\mathrm{1}}}  \ottsym{+}  \Gamma_{{\mathrm{2}}}  \vdash \ottsym{(}  \ottnt{a}  \ottsym{,}  \ottnt{b}  \ottsym{)} :  \ottnt{A}  \otimes  \ottnt{B}  }{%
{\ottdrulename{ST\_Pair}}{}%
}}

\newcommand{\ottdruleSTXXSpread}[1]{\ottdrule[#1]{%
\ottpremise{ \Delta \ ;\  \Gamma_{{\mathrm{1}}}  \vdash \ottnt{a} :  \ottnt{A_{{\mathrm{1}}}}  \otimes  \ottnt{A_{{\mathrm{2}}}}  }%
\ottpremise{ \Delta \ ;\   \Gamma_{{\mathrm{2}}} ,     \ottmv{x} \! :^{  { \color{black}{1} }  }\! \ottnt{A_{{\mathrm{1}}}}  ,   \ottmv{y} \! :^{  { \color{black}{1} }  }\! \ottnt{A_{{\mathrm{2}}}}      \vdash \ottnt{b} : \ottnt{B} }%
}{
 \Delta \ ;\  \Gamma_{{\mathrm{1}}}  \ottsym{+}  \Gamma_{{\mathrm{2}}}  \vdash \ottkw{let} \, \ottsym{(}  \ottmv{x}  \ottsym{,}  \ottmv{y}  \ottsym{)}  \ottsym{=}  \ottnt{a} \, \mathsf{in} \, \ottnt{b} : \ottnt{B} }{%
{\ottdrulename{ST\_Spread}}{}%
}}

\newcommand{\ottdefnSimpleTyping}[1]{\begin{ottdefnblock}[#1]{$ \Delta \ ;\  \Gamma  \vdash \ottnt{a} : \ottnt{A} $}{\ottcom{Simple Typing}}
\ottusedrule{\ottdruleSTXXSub{}}
\ottusedrule{\ottdruleSTXXVar{}}
\ottusedrule{\ottdruleSTXXWeak{}}
\ottusedrule{\ottdruleSTXXUnit{}}
\ottusedrule{\ottdruleSTXXInjOne{}}
\ottusedrule{\ottdruleSTXXInjTwo{}}
\ottusedrule{\ottdruleSTXXCase{}}
\ottusedrule{\ottdruleSTXXUnitE{}}
\ottusedrule{\ottdruleSTXXLam{}}
\ottusedrule{\ottdruleSTXXApp{}}
\ottusedrule{\ottdruleSTXXBox{}}
\ottusedrule{\ottdruleSTXXLetBox{}}
\ottusedrule{\ottdruleSTXXPair{}}
\ottusedrule{\ottdruleSTXXSpread{}}
\end{ottdefnblock}}

\newcommand{\ottdefnsJSimpleTyping}{
\ottdefnSimpleTyping{}}

\newcommand{\ottdrulePXXVar}[1]{\ottdrule[#1]{%
}{
 \text{Path }  \ottmv{x} }{%
{\ottdrulename{P\_Var}}{}%
}}

\newcommand{\ottdrulePXXApp}[1]{\ottdrule[#1]{%
\ottpremise{ \text{Path }  \ottnt{a} }%
}{
 \text{Path }  \ottnt{a} \, \ottnt{b} }{%
{\ottdrulename{P\_App}}{}%
}}

\newcommand{\ottdrulePXXPair}[1]{\ottdrule[#1]{%
\ottpremise{ \text{Path }  \ottnt{a} }%
}{
 \text{Path }  \ottsym{(}  \ottkw{let} \, \ottsym{(}  \ottmv{x}  \ottsym{,}  \ottmv{y}  \ottsym{)}  \ottsym{=}  \ottnt{a} \, \mathsf{in} \, \ottnt{b}  \ottsym{)} }{%
{\ottdrulename{P\_Pair}}{}%
}}

\newcommand{\ottdrulePXXUnit}[1]{\ottdrule[#1]{%
\ottpremise{ \text{Path }  \ottnt{a} }%
}{
 \text{Path }  \ottsym{(}   \ottkw{let}\,  \ottkw{unit} \,=\, \ottnt{a} \ \ottkw{in}\  \ottnt{b}   \ottsym{)} }{%
{\ottdrulename{P\_Unit}}{}%
}}

\newcommand{\ottdrulePXXCase}[1]{\ottdrule[#1]{%
\ottpremise{ \text{Path }  \ottnt{a} }%
}{
 \text{Path }  \ottsym{(}   \ottkw{case}_ { \color{black}{q} } \,  \ottnt{a} \, \ottkw{of}\,  \ottnt{b_{{\mathrm{1}}}}  ;   \ottnt{b_{{\mathrm{2}}}}   \ottsym{)} }{%
{\ottdrulename{P\_Case}}{}%
}}

\newcommand{\ottdefnPath}[1]{\begin{ottdefnblock}[#1]{$ \text{Path }  \ottnt{a} $}{\ottcom{Path}}
\ottusedrule{\ottdrulePXXVar{}}
\ottusedrule{\ottdrulePXXApp{}}
\ottusedrule{\ottdrulePXXPair{}}
\ottusedrule{\ottdrulePXXUnit{}}
\ottusedrule{\ottdrulePXXCase{}}
\end{ottdefnblock}}

\newcommand{\ottdefnsJPath}{
\ottdefnPath{}}

\newcommand{\ottdruleTXXsub}[1]{\ottdrule[#1]{%
\ottpremise{ \Delta  ;  \Gamma_{{\mathrm{1}}}  \vdash \ottnt{a} : \ottnt{A} }%
\ottpremise{ \Gamma_{{\mathrm{1}}}   \leq   \Gamma_{{\mathrm{2}}} }%
}{
 \Delta  ;  \Gamma_{{\mathrm{2}}}  \vdash \ottnt{a} : \ottnt{A} }{%
{\ottdrulename{T\_sub}}{}%
}}

\newcommand{\ottdruleTXXtype}[1]{\ottdrule[#1]{%
}{
 \varnothing  ;  \varnothing  \vdash  \textbf{Type}  :  \textbf{Type}  }{%
{\ottdrulename{T\_type}}{}%
}}

\newcommand{\ottdruleTXXvar}[1]{\ottdrule[#1]{%
\ottpremise{\ottmv{x} \, \not\in \, \mathsf{dom} \, \Delta}%
\ottpremise{ \Delta  ;  \Gamma  \vdash \ottnt{A} :  \textbf{Type}  }%
}{
  \Delta ,   \ottmv{x} \!\!:\!\! \ottnt{A}    ;      { \color{black}{0} }    \cdot   \Gamma   ,   \ottmv{x} \! :^{  { \color{black}{1} }  }\! \ottnt{A}    \vdash \ottmv{x} : \ottnt{A} }{%
{\ottdrulename{T\_var}}{}%
}}

\newcommand{\ottdruleTXXdef}[1]{\ottdrule[#1]{%
\ottpremise{\ottmv{x} \, \not\in \, \mathsf{dom} \, \Delta}%
\ottpremise{ \Delta  ;  \Gamma  \vdash \ottnt{a} : \ottnt{A} }%
}{
  \Delta ,   \ottmv{x} \! = \!  \ottnt{a}  \! : \!  \ottnt{A}    ;      { \color{black}{0} }    \cdot   \Gamma   ,   \ottmv{x}  \! = \!  \ottnt{a}  \! :^{  { \color{black}{1} }  } \!  \ottnt{A}    \vdash \ottmv{x} : \ottnt{A} }{%
{\ottdrulename{T\_def}}{}%
}}

\newcommand{\ottdruleTXXweak}[1]{\ottdrule[#1]{%
\ottpremise{\ottmv{x} \, \not\in \, \mathsf{dom} \, \Delta}%
\ottpremise{ \Delta  ;  \Gamma_{{\mathrm{1}}}  \vdash \ottnt{a} : \ottnt{B} }%
\ottpremise{ \Delta  ;  \Gamma_{{\mathrm{2}}}  \vdash \ottnt{A} :  \textbf{Type}  }%
}{
  \Delta ,   \ottmv{x} \!\!:\!\! \ottnt{A}    ;   \Gamma_{{\mathrm{1}}} ,   \ottmv{x} \! :^{  { \color{black}{0} }  }\! \ottnt{A}    \vdash \ottnt{a} : \ottnt{B} }{%
{\ottdrulename{T\_weak}}{}%
}}

\newcommand{\ottdruleTXXweakXXdef}[1]{\ottdrule[#1]{%
\ottpremise{\ottmv{x} \, \not\in \, \mathsf{dom} \, \Delta}%
\ottpremise{ \Delta  ;  \Gamma_{{\mathrm{1}}}  \vdash \ottnt{b} : \ottnt{B} }%
\ottpremise{ \Delta  ;  \Gamma_{{\mathrm{2}}}  \vdash \ottnt{a} : \ottnt{A} }%
}{
  \Delta ,   \ottmv{x} \! = \!  \ottnt{a}  \! : \!  \ottnt{A}    ;   \Gamma_{{\mathrm{1}}} ,   \ottmv{x}  \! = \!  \ottnt{a}  \! :^{  { \color{black}{0} }  } \!  \ottnt{A}    \vdash \ottnt{b} : \ottnt{B} }{%
{\ottdrulename{T\_weak\_def}}{}%
}}

\newcommand{\ottdruleTXXpi}[1]{\ottdrule[#1]{%
\ottpremise{ \Delta  ;  \Gamma_{{\mathrm{1}}}  \vdash \ottnt{A} :  \textbf{Type}  }%
\ottpremise{  \Delta ,   \ottmv{x} \!\!:\!\! \ottnt{A}    ;   \Gamma_{{\mathrm{2}}} ,   \ottmv{x} \! :^{ { \color{black}{r} } }\! \ottnt{A}    \vdash \ottnt{B} :  \textbf{Type}  }%
}{
 \Delta  ;  \Gamma_{{\mathrm{1}}}  \ottsym{+}  \Gamma_{{\mathrm{2}}}  \vdash  \Pi  \ottmv{x} \!:^ { \color{black}{q} } \! \ottnt{A} . \ottnt{B}  :  \textbf{Type}  }{%
{\ottdrulename{T\_pi}}{}%
}}

\newcommand{\ottdruleTXXlam}[1]{\ottdrule[#1]{%
\ottpremise{  \Delta ,   \ottmv{x} \!\!:\!\! \ottnt{A}    ;   \Gamma_{{\mathrm{1}}} ,   \ottmv{x} \! :^{ { \color{black}{q} } }\! \ottnt{A}    \vdash \ottnt{a} : \ottnt{B} }%
\ottpremise{ \Delta  ;  \Gamma_{{\mathrm{2}}}  \vdash \ottnt{A} :  \textbf{Type}  }%
}{
 \Delta  ;  \Gamma_{{\mathrm{1}}}  \vdash  \lambda \ottmv{x} \!:^ { \color{black}{q} } \! \ottnt{A} . \ottnt{a}  :  \Pi  \ottmv{x} \!:^ { \color{black}{q} } \! \ottnt{A} . \ottnt{B}  }{%
{\ottdrulename{T\_lam}}{}%
}}

\newcommand{\ottdruleTXXapp}[1]{\ottdrule[#1]{%
\ottpremise{ \Delta  ;  \Gamma_{{\mathrm{1}}}  \vdash \ottnt{a} :  \Pi  \ottmv{x} \!:^ { \color{black}{q} } \! \ottnt{A} . \ottnt{B}  }%
\ottpremise{ \Delta  ;  \Gamma_{{\mathrm{2}}}  \vdash \ottnt{b} : \ottnt{A} }%
}{
 \Delta  ;  \Gamma_{{\mathrm{1}}}  \ottsym{+}    { \color{black}{q} }   \cdot   \Gamma_{{\mathrm{2}}}    \vdash \ottnt{a} \, \ottnt{b} : \ottnt{B}  \ottsym{\{}  \ottnt{b}  \ottsym{/}  \ottmv{x}  \ottsym{\}} }{%
{\ottdrulename{T\_app}}{}%
}}

\newcommand{\ottdruleTXXconv}[1]{\ottdrule[#1]{%
\ottpremise{ \Delta  ;  \Gamma_{{\mathrm{1}}}  \vdash \ottnt{a} : \ottnt{A} }%
\ottpremise{ \Delta  ;  \Gamma_{{\mathrm{2}}}  \vdash \ottnt{B} :  \textbf{Type}  }%
\ottpremise{ \ottnt{A}  \{  \Delta  \}   \equiv   \ottnt{B}  \{  \Delta  \} }%
}{
 \Delta  ;  \Gamma_{{\mathrm{1}}}  \vdash \ottnt{a} : \ottnt{B} }{%
{\ottdrulename{T\_conv}}{}%
}}

\newcommand{\ottdruleTXXunit}[1]{\ottdrule[#1]{%
}{
 \varnothing  ;  \varnothing  \vdash \ottkw{unit} : \ottkw{Unit} }{%
{\ottdrulename{T\_unit}}{}%
}}

\newcommand{\ottdruleTXXUnit}[1]{\ottdrule[#1]{%
}{
 \varnothing  ;  \varnothing  \vdash \ottkw{Unit} :  \textbf{Type}  }{%
{\ottdrulename{T\_Unit}}{}%
}}

\newcommand{\ottdruleTXXUnitE}[1]{\ottdrule[#1]{%
\ottpremise{ \Delta  ;  \Gamma_{{\mathrm{1}}}  \vdash \ottnt{a} : \ottkw{Unit} }%
\ottpremise{\ottnt{B_{{\mathrm{1}}}}  \ottsym{=}  \ottnt{B}  \ottsym{\{}  \ottkw{unit}  \ottsym{/}  \ottmv{y}  \ottsym{\}}}%
\ottpremise{ \Delta  ;  \Gamma_{{\mathrm{2}}}  \vdash \ottnt{b} : \ottnt{B_{{\mathrm{1}}}} }%
\ottpremise{  \Delta ,   \ottmv{y} \!\!:\!\! \ottkw{Unit}    ;   \Gamma_{{\mathrm{3}}} ,   \ottmv{y} \! :^{ { \color{black}{r} } }\! \ottkw{Unit}    \vdash \ottnt{B} :  \textbf{Type}  }%
}{
 \Delta  ;  \Gamma_{{\mathrm{1}}}  \ottsym{+}  \Gamma_{{\mathrm{2}}}  \vdash  \ottkw{let}\,  \ottkw{unit} \,=\, \ottnt{a} \ \ottkw{in}\  \ottnt{b}  : \ottnt{B}  \ottsym{\{}  \ottnt{a}  \ottsym{/}  \ottmv{y}  \ottsym{\}} }{%
{\ottdrulename{T\_UnitE}}{}%
}}

\newcommand{\ottdruleTXXBox}[1]{\ottdrule[#1]{%
\ottpremise{ \Delta  ;  \Gamma  \vdash \ottnt{A} :  \textbf{Type}  }%
}{
 \Delta  ;  \Gamma  \vdash  \Box^{ { \color{black}{q} } }  \ottnt{A}  :  \textbf{Type}  }{%
{\ottdrulename{T\_Box}}{}%
}}

\newcommand{\ottdruleTXXbox}[1]{\ottdrule[#1]{%
\ottpremise{ \Delta  ;  \Gamma  \vdash \ottnt{a} : \ottnt{A} }%
}{
 \Delta  ;   { \color{black}{q} }   \cdot   \Gamma   \vdash  \ottkw{box} _ { \color{black}{q} } \, \ottnt{a}  :  \Box^{ { \color{black}{q} } }  \ottnt{A}  }{%
{\ottdrulename{T\_box}}{}%
}}

\newcommand{\ottdruleTXXletbox}[1]{\ottdrule[#1]{%
\ottpremise{ \Delta  ;  \Gamma_{{\mathrm{1}}}  \vdash \ottnt{a} :  \Box^{ { \color{black}{q} } }  \ottnt{A}  }%
\ottpremise{  \Delta ,   \ottmv{x} \!\!:\!\! \ottnt{A}    ;   \Gamma_{{\mathrm{2}}} ,   \ottmv{x} \! :^{ { \color{black}{q} } }\! \ottnt{A}    \vdash \ottnt{b} : \ottnt{B}  \ottsym{\{}   \ottkw{box} _ { \color{black}{q} } \, \ottmv{x}   \ottsym{/}  \ottmv{y}  \ottsym{\}} }%
\ottpremise{  \Delta ,   \ottmv{y} \!\!:\!\!  \Box^{ { \color{black}{q} } }  \ottnt{A}     ;   \Gamma_{{\mathrm{3}}} ,   \ottmv{y} \! :^{ { \color{black}{r} } }\!  \Box^{ { \color{black}{q} } }  \ottnt{A}     \vdash \ottnt{B} :  \textbf{Type}  }%
}{
 \Delta  ;  \Gamma_{{\mathrm{1}}}  \ottsym{+}  \Gamma_{{\mathrm{2}}}  \vdash  \ottkw{let}\, \ottkw{box} \, \ottmv{x} \,=\, \ottnt{a} \ \ottkw{in}\  \ottnt{b}  : \ottnt{B}  \ottsym{\{}  \ottnt{a}  \ottsym{/}  \ottmv{y}  \ottsym{\}} }{%
{\ottdrulename{T\_letbox}}{}%
}}

\newcommand{\ottdruleTXXsum}[1]{\ottdrule[#1]{%
\ottpremise{ \Delta  ;  \Gamma_{{\mathrm{1}}}  \vdash \ottnt{A_{{\mathrm{1}}}} :  \textbf{Type}  }%
\ottpremise{ \Delta  ;  \Gamma_{{\mathrm{2}}}  \vdash \ottnt{A_{{\mathrm{2}}}} :  \textbf{Type}  }%
}{
 \Delta  ;  \Gamma_{{\mathrm{1}}}  \ottsym{+}  \Gamma_{{\mathrm{2}}}  \vdash  \ottnt{A_{{\mathrm{1}}}}  \oplus  \ottnt{A_{{\mathrm{2}}}}  :  \textbf{Type}  }{%
{\ottdrulename{T\_sum}}{}%
}}

\newcommand{\ottdruleTXXinjOne}[1]{\ottdrule[#1]{%
\ottpremise{ \Delta  ;  \Gamma  \vdash \ottnt{a} : \ottnt{A_{{\mathrm{1}}}} }%
\ottpremise{ \Delta  ;  \Gamma_{{\mathrm{1}}}  \vdash \ottnt{A_{{\mathrm{2}}}} :  \textbf{Type}  }%
}{
 \Delta  ;  \Gamma  \vdash  \ottkw{inj}_1\,  \ottnt{a}  :  \ottnt{A_{{\mathrm{1}}}}  \oplus  \ottnt{A_{{\mathrm{2}}}}  }{%
{\ottdrulename{T\_inj1}}{}%
}}

\newcommand{\ottdruleTXXinjTwo}[1]{\ottdrule[#1]{%
\ottpremise{ \Delta  ;  \Gamma  \vdash \ottnt{a} : \ottnt{A_{{\mathrm{2}}}} }%
\ottpremise{ \Delta  ;  \Gamma_{{\mathrm{1}}}  \vdash \ottnt{A_{{\mathrm{1}}}} :  \textbf{Type}  }%
}{
 \Delta  ;  \Gamma  \vdash  \ottkw{inj}_2\,  \ottnt{a}  :  \ottnt{A_{{\mathrm{1}}}}  \oplus  \ottnt{A_{{\mathrm{2}}}}  }{%
{\ottdrulename{T\_inj2}}{}%
}}

\newcommand{\ottdruleTXXcase}[1]{\ottdrule[#1]{%
\ottpremise{ { \color{black}{1} }   \leq  { \color{black}{q} }}%
\ottpremise{ \Delta  ;  \Gamma_{{\mathrm{1}}}  \vdash \ottnt{a} :  \ottnt{A_{{\mathrm{1}}}}  \oplus  \ottnt{A_{{\mathrm{2}}}}  }%
\ottpremise{\ottnt{B_{{\mathrm{1}}}}  \ottsym{=}   \ottnt{B}  \ottsym{\{}   \ottkw{inj}_1\,  \ottmv{x}   \ottsym{/}  \ottmv{y}  \ottsym{\}} }%
\ottpremise{\ottnt{B_{{\mathrm{2}}}}  \ottsym{=}   \ottnt{B}  \ottsym{\{}   \ottkw{inj}_2\,  \ottmv{x}   \ottsym{/}  \ottmv{y}  \ottsym{\}} }%
\ottpremise{ \Delta  ;  \Gamma_{{\mathrm{2}}}  \vdash \ottnt{b_{{\mathrm{1}}}} :  \Pi  \ottmv{x} \!:^ { \color{black}{q} } \! \ottnt{A_{{\mathrm{1}}}} . \ottnt{B_{{\mathrm{1}}}}  }%
\ottpremise{ \Delta  ;  \Gamma_{{\mathrm{2}}}  \vdash \ottnt{b_{{\mathrm{2}}}} :  \Pi  \ottmv{x} \!:^ { \color{black}{q} } \! \ottnt{A_{{\mathrm{2}}}} . \ottnt{B_{{\mathrm{2}}}}  }%
\ottpremise{  \Delta ,   \ottmv{y} \!\!:\!\!  \ottnt{A_{{\mathrm{1}}}}  \oplus  \ottnt{A_{{\mathrm{2}}}}     ;   \Gamma_{{\mathrm{3}}} ,   \ottmv{y} \! :^{ { \color{black}{r} } }\!  \ottnt{A_{{\mathrm{1}}}}  \oplus  \ottnt{A_{{\mathrm{2}}}}     \vdash \ottnt{B} :  \textbf{Type}  }%
}{
 \Delta  ;    { \color{black}{q} }   \cdot   \Gamma_{{\mathrm{1}}}    \ottsym{+}  \Gamma_{{\mathrm{2}}}  \vdash  \ottkw{case}_ { \color{black}{q} } \,  \ottnt{a} \, \ottkw{of}\,  \ottnt{b_{{\mathrm{1}}}}  ;  \ottnt{b_{{\mathrm{2}}}}  : \ottnt{B}  \ottsym{\{}  \ottnt{a}  \ottsym{/}  \ottmv{y}  \ottsym{\}} }{%
{\ottdrulename{T\_case}}{}%
}}

\newcommand{\ottdruleTXXSigma}[1]{\ottdrule[#1]{%
\ottpremise{ \Delta  ;  \Gamma_{{\mathrm{1}}}  \vdash \ottnt{A} :  \textbf{Type}  }%
\ottpremise{  \Delta ,   \ottmv{x} \!\!:\!\! \ottnt{A}    ;   \Gamma_{{\mathrm{2}}} ,   \ottmv{x} \! :^{ { \color{black}{r} } }\! \ottnt{A}    \vdash \ottnt{B} :  \textbf{Type}  }%
}{
 \Delta  ;  \Gamma_{{\mathrm{1}}}  \ottsym{+}  \Gamma_{{\mathrm{2}}}  \vdash  \Sigma  \ottmv{x} \!\!:^ { \color{black}{q} } \!\! \ottnt{A} . \ottnt{B}  :  \textbf{Type}  }{%
{\ottdrulename{T\_Sigma}}{}%
}}

\newcommand{\ottdruleTXXTensor}[1]{\ottdrule[#1]{%
\ottpremise{ \Delta  ;  \Gamma_{{\mathrm{1}}}  \vdash \ottnt{a} : \ottnt{A} }%
\ottpremise{ \Delta  ;  \Gamma_{{\mathrm{2}}}  \vdash \ottnt{b} : \ottnt{B}  \ottsym{\{}  \ottnt{a}  \ottsym{/}  \ottmv{x}  \ottsym{\}} }%
\ottpremise{  \Delta ,   \ottmv{x} \!\!:\!\! \ottnt{A}    ;   \Gamma_{{\mathrm{3}}} ,   \ottmv{x} \! :^{ { \color{black}{r} } }\! \ottnt{A}    \vdash \ottnt{B} :  \textbf{Type}  }%
}{
 \Delta  ;    { \color{black}{q} }   \cdot   \Gamma_{{\mathrm{1}}}    \ottsym{+}  \Gamma_{{\mathrm{2}}}  \vdash \ottsym{(}  \ottnt{a}  \ottsym{,}  \ottnt{b}  \ottsym{)} :  \Sigma  \ottmv{x} \!\!:^ { \color{black}{q} } \!\! \ottnt{A} . \ottnt{B}  }{%
{\ottdrulename{T\_Tensor}}{}%
}}

\newcommand{\ottdruleTXXSpread}[1]{\ottdrule[#1]{%
\ottpremise{\ottnt{A}  \ottsym{=}   \Sigma  \ottmv{x} \!\!:^ { \color{black}{q} } \!\! \ottnt{A_{{\mathrm{1}}}} . \ottnt{A_{{\mathrm{2}}}} }%
\ottpremise{ \Delta  ;  \Gamma_{{\mathrm{1}}}  \vdash \ottnt{a} : \ottnt{A} }%
\ottpremise{  \Delta ,   \ottmv{x} \!\!:\!\! \ottnt{A_{{\mathrm{1}}}}    ;   \Gamma_{{\mathrm{2}}} ,   \ottmv{x} \! :^{ { \color{black}{q} } }\! \ottnt{A_{{\mathrm{1}}}}    \vdash \ottnt{b} :  \Pi  \ottmv{y} \!:^  { \color{black}{1} }  \! \ottnt{A_{{\mathrm{2}}}} .  \ottnt{B}  \ottsym{\{}  hack \, \ottmv{x} \, \ottmv{y}  \ottsym{/}  \ottmv{y}  \ottsym{\}}   }%
\ottpremise{  \Delta ,   \ottmv{y} \!\!:\!\! \ottnt{A}    ;   \Gamma_{{\mathrm{3}}} ,   \ottmv{y} \! :^{ { \color{black}{r} } }\! \ottnt{A}    \vdash \ottnt{B} :  \textbf{Type}  }%
}{
 \Delta  ;  \Gamma_{{\mathrm{1}}}  \ottsym{+}  \Gamma_{{\mathrm{2}}}  \vdash  \ottkw{spread}\,  \ottnt{a} \, \ottkw{to}\,  \ottmv{x} \, \ottkw{in}\,  \ottnt{b}  : \ottnt{B}  \ottsym{\{}  \ottnt{a}  \ottsym{/}  \ottmv{y}  \ottsym{\}} }{%
{\ottdrulename{T\_Spread}}{}%
}}

\newcommand{\ottdruleTXXSigmaElim}[1]{\ottdrule[#1]{%
\ottpremise{ \Delta  ;  \Gamma_{{\mathrm{1}}}  \vdash \ottnt{a} :  \Sigma  \ottmv{x} \!\!:^ { \color{black}{q} } \!\! \ottnt{A_{{\mathrm{1}}}} . \ottnt{A_{{\mathrm{2}}}}  }%
\ottpremise{    \Delta ,   \ottmv{x} \!\!:\!\! \ottnt{A_{{\mathrm{1}}}}    ,   \ottmv{y} \!\!:\!\! \ottnt{A_{{\mathrm{2}}}}    ;     \Gamma_{{\mathrm{2}}} ,   \ottmv{x} \! :^{ { \color{black}{q} } }\! \ottnt{A_{{\mathrm{1}}}}    ,   \ottmv{y} \! :^{  { \color{black}{1} }  }\! \ottnt{A_{{\mathrm{2}}}}    \vdash \ottnt{b} : \ottnt{B}  \ottsym{\{}  \ottsym{(}  \ottmv{x}  \ottsym{,}  \ottmv{y}  \ottsym{)}  \ottsym{/}  \ottmv{z}  \ottsym{\}} }%
\ottpremise{  \Delta ,   \ottmv{z} \!\!:\!\! \ottsym{(}   \Sigma  \ottmv{x} \!\!:^ { \color{black}{q} } \!\! \ottnt{A_{{\mathrm{1}}}} . \ottnt{A_{{\mathrm{2}}}}   \ottsym{)}    ;   \Gamma_{{\mathrm{3}}} ,   \ottmv{z} \! :^{ { \color{black}{r} } }\! \ottsym{(}   \Sigma  \ottmv{x} \!\!:^ { \color{black}{q} } \!\! \ottnt{A_{{\mathrm{1}}}} . \ottnt{A_{{\mathrm{2}}}}   \ottsym{)}    \vdash \ottnt{B} :  \textbf{Type}  }%
}{
 \Delta  ;  \Gamma_{{\mathrm{1}}}  \ottsym{+}  \Gamma_{{\mathrm{2}}}  \vdash \ottkw{let} \, \ottsym{(}  \ottmv{x}  \ottsym{,}  \ottmv{y}  \ottsym{)}  \ottsym{=}  \ottnt{a} \, \mathsf{in} \, \ottnt{b} : \ottnt{B}  \ottsym{\{}  \ottnt{a}  \ottsym{/}  \ottmv{z}  \ottsym{\}} }{%
{\ottdrulename{T\_SigmaElim}}{}%
}}

\newcommand{\ottdefnTyping}[1]{\begin{ottdefnblock}[#1]{$ \Delta  ;  \Gamma  \vdash \ottnt{a} : \ottnt{A} $}{\ottcom{Typing}}
\ottusedrule{\ottdruleTXXsub{}}
\ottusedrule{\ottdruleTXXtype{}}
\ottusedrule{\ottdruleTXXvar{}}
\ottusedrule{\ottdruleTXXdef{}}
\ottusedrule{\ottdruleTXXweak{}}
\ottusedrule{\ottdruleTXXweakXXdef{}}
\ottusedrule{\ottdruleTXXpi{}}
\ottusedrule{\ottdruleTXXlam{}}
\ottusedrule{\ottdruleTXXapp{}}
\ottusedrule{\ottdruleTXXconv{}}
\ottusedrule{\ottdruleTXXunit{}}
\ottusedrule{\ottdruleTXXUnit{}}
\ottusedrule{\ottdruleTXXUnitE{}}
\ottusedrule{\ottdruleTXXBox{}}
\ottusedrule{\ottdruleTXXbox{}}
\ottusedrule{\ottdruleTXXletbox{}}
\ottusedrule{\ottdruleTXXsum{}}
\ottusedrule{\ottdruleTXXinjOne{}}
\ottusedrule{\ottdruleTXXinjTwo{}}
\ottusedrule{\ottdruleTXXcase{}}
\ottusedrule{\ottdruleTXXSigma{}}
\ottusedrule{\ottdruleTXXTensor{}}
\ottusedrule{\ottdruleTXXSpread{}}
\ottusedrule{\ottdruleTXXSigmaElim{}}
\end{ottdefnblock}}

\newcommand{\ottdefnsJTyping}{
\ottdefnTyping{}}

\newcommand{\ottdruleSmallXXVar}[1]{\ottdrule[#1]{%
\ottpremise{  }%
\ottpremise{ { \color{black}{1} }   \leq  { \color{black}{r} }}%
}{
 [   \ottnt{H_{{\mathrm{1}}}}  ,     \ottmv{x}  \overset{ \ottsym{(}  { \color{black}{q} }  \ottsym{+}  { \color{black}{r} }  \ottsym{)} }{\mapsto} { \Gamma \vdash  \ottnt{a}  :  \ottnt{A} }   ,  \ottnt{H_{{\mathrm{2}}}}     ]\,  \ottmv{x}  \Rightarrow_{ \ottnt{S} }^{ { \color{black}{r} } } [   \ottnt{H_{{\mathrm{1}}}}  ,     \ottmv{x}  \overset{ { \color{black}{q} } }{\mapsto} { \Gamma \vdash  \ottnt{a}  :  \ottnt{A} }   ,  \ottnt{H_{{\mathrm{2}}}}    \, ;\,    \mathbf{0}^{| \ottnt{H_{{\mathrm{1}}}} |}   \mathop{\diamond}     { \color{black}{r} }   \mathop{\diamond}   \mathbf{0}^{| \ottnt{H_{{\mathrm{2}}}} |}     \, ;\,  \varnothing  ]\,  \ottnt{a} }{%
{\ottdrulename{Small\_Var}}{}%
}}

\newcommand{\ottdruleSmallXXAppL}[1]{\ottdrule[#1]{%
\ottpremise{ [  \ottnt{H}  ]\,  \ottnt{a}  \Rightarrow_{  \ottnt{S}  \, \cup    \,\text{fv}\,  \ottnt{b}    }^{ { \color{black}{r} } } [  \ottnt{H'} \, ;\,  \mathbf{u}' \, ;\,  \Gamma  ]\,  \ottnt{a'} }%
}{
 [  \ottnt{H}  ]\,  \ottnt{a} \, \ottnt{b}  \Rightarrow_{ \ottnt{S} }^{ { \color{black}{r} } } [  \ottnt{H'} \, ;\,  \mathbf{u}' \, ;\,  \Gamma  ]\,  \ottnt{a'} \, \ottnt{b} }{%
{\ottdrulename{Small\_AppL}}{}%
}}

\newcommand{\ottdruleSmallXXAppBeta}[1]{\ottdrule[#1]{%
\ottpremise{  }%
\ottpremise{ \ottmv{x} \ \not\in    \mathsf{Var} \,  \ottnt{H}   \, \cup       \,\text{fv}\,  \ottnt{b}   \, \cup     \,\text{fv}\,  \ottnt{a}   -   \{  \ottmv{y}  \}       \, \cup  \ottnt{S}    }%
\ottpremise{\ottnt{a'}  \ottsym{=}  \ottnt{a}  \ottsym{\{}  \ottmv{x}  \ottsym{/}  \ottmv{y}  \ottsym{\}}}%
}{
 [  \ottnt{H}  ]\,  \ottsym{(}   \lambda \ottmv{y} \!:^ { \color{black}{q} } \! \ottnt{A_{{\mathrm{1}}}} . \ottnt{a}   \ottsym{)} \, \ottnt{b}  \Rightarrow_{ \ottnt{S} }^{ { \color{black}{r} } } [   \ottnt{H}  ,   \ottmv{x}  \overset{  { \color{black}{r} }  \cdot  { \color{black}{q} }  }{\mapsto} { \Gamma \vdash  \ottnt{b}  :  \ottnt{A} }   \, ;\,    \mathbf{0}^{| \ottnt{H} |}   \mathop{\diamond}    { \color{black}{0} }    \, ;\,   \ottmv{x} \! :^{  { \color{black}{r} }  \cdot  { \color{black}{q} }  }\! \ottnt{A}   ]\,  \ottnt{a'} }{%
{\ottdrulename{Small\_AppBeta}}{}%
}}

\newcommand{\ottdruleSmallXXUnitL}[1]{\ottdrule[#1]{%
\ottpremise{ [  \ottnt{H}  ]\,  \ottnt{a}  \Rightarrow_{  \ottnt{S}  \, \cup    \,\text{fv}\,  \ottnt{b}    }^{ { \color{black}{r} } } [  \ottnt{H'} \, ;\,  \mathbf{u}' \, ;\,  \Gamma  ]\,  \ottnt{a'} }%
}{
 [  \ottnt{H}  ]\,   \ottkw{let}\,  \ottkw{unit} \,=\, \ottnt{a} \ \ottkw{in}\  \ottnt{b}   \Rightarrow_{ \ottnt{S} }^{ { \color{black}{r} } } [  \ottnt{H'} \, ;\,  \mathbf{u}' \, ;\,  \Gamma  ]\,   \ottkw{let}\,  \ottkw{unit} \,=\, \ottnt{a'} \ \ottkw{in}\  \ottnt{b}  }{%
{\ottdrulename{Small\_UnitL}}{}%
}}

\newcommand{\ottdruleSmallXXUnitBeta}[1]{\ottdrule[#1]{%
}{
 [  \ottnt{H}  ]\,   \ottkw{let}\,  \ottkw{unit} \,=\, \ottkw{unit} \ \ottkw{in}\  \ottnt{b}   \Rightarrow_{ \ottnt{S} }^{ { \color{black}{r} } } [  \ottnt{H} \, ;\,   \mathbf{0}^{| \ottnt{H} |}  \, ;\,  \varnothing  ]\,  \ottnt{b} }{%
{\ottdrulename{Small\_UnitBeta}}{}%
}}

\newcommand{\ottdruleSmallXXLetBoxL}[1]{\ottdrule[#1]{%
\ottpremise{ [  \ottnt{H}  ]\,  \ottnt{a}  \Rightarrow_{  \ottnt{S}  \, \cup    \,\text{fv}\,  \ottnt{b}    }^{ { \color{black}{r} } } [  \ottnt{H'} \, ;\,  \mathbf{u}' \, ;\,  \Gamma  ]\,  \ottnt{a'} }%
}{
 [  \ottnt{H}  ]\,   \ottkw{let}\, \ottkw{box} \, \ottmv{x} \,=\, \ottnt{a} \ \ottkw{in}\  \ottnt{b}   \Rightarrow_{ \ottnt{S} }^{ { \color{black}{r} } } [  \ottnt{H'} \, ;\,  \mathbf{u}' \, ;\,  \Gamma  ]\,   \ottkw{let}\, \ottkw{box} \, \ottmv{x} \,=\, \ottnt{a'} \ \ottkw{in}\  \ottnt{b}  }{%
{\ottdrulename{Small\_LetBoxL}}{}%
}}

\newcommand{\ottdruleSmallXXLetBoxBeta}[1]{\ottdrule[#1]{%
\ottpremise{  }%
\ottpremise{ \ottmv{x} \ \not\in    \mathsf{Var} \,  \ottnt{H}   \, \cup       \,\text{fv}\,  \ottnt{a}   \, \cup     \,\text{fv}\,  \ottnt{b}   -   \{  \ottmv{y}  \}       \, \cup  \ottnt{S}    }%
\ottpremise{\ottnt{b'}  \ottsym{=}  \ottnt{b}  \ottsym{\{}  \ottmv{x}  \ottsym{/}  \ottmv{y}  \ottsym{\}}}%
}{
 [  \ottnt{H}  ]\,   \ottkw{let}\, \ottkw{box} \, \ottmv{y} \,=\,  \ottkw{box} _ { \color{black}{q} } \, \ottnt{a}  \ \ottkw{in}\  \ottnt{b}   \Rightarrow_{ \ottnt{S} }^{ { \color{black}{r} } } [   \ottnt{H}  ,   \ottmv{x}  \overset{  { \color{black}{r} }  \cdot  { \color{black}{q} }  }{\mapsto} { \Gamma \vdash  \ottnt{a}  :  \ottnt{A} }   \, ;\,    \mathbf{0}^{| \ottnt{H} |}   \mathop{\diamond}    { \color{black}{0} }    \, ;\,   \ottmv{x} \! :^{  { \color{black}{r} }  \cdot  { \color{black}{q} }  }\! \ottnt{A}   ]\,  \ottnt{b'} }{%
{\ottdrulename{Small\_LetBoxBeta}}{}%
}}

\newcommand{\ottdruleSmallXXCaseL}[1]{\ottdrule[#1]{%
\ottpremise{ [  \ottnt{H}  ]\,  \ottnt{a}  \Rightarrow_{  \ottnt{S}  \, \cup     \,\text{fv}\,  \ottnt{b_{{\mathrm{1}}}}   \, \cup   \,\text{fv}\,  \ottnt{b_{{\mathrm{2}}}}     }^{  { \color{black}{r} }  \cdot  { \color{black}{q} }  } [  \ottnt{H'} \, ;\,  \mathbf{u}' \, ;\,  \Gamma  ]\,  \ottnt{a'} }%
}{
 [  \ottnt{H}  ]\,   \ottkw{case}_ { \color{black}{q} } \,  \ottnt{a} \, \ottkw{of}\,  \ottnt{b_{{\mathrm{1}}}}  ;  \ottnt{b_{{\mathrm{2}}}}   \Rightarrow_{ \ottnt{S} }^{ { \color{black}{r} } } [  \ottnt{H'} \, ;\,  \mathbf{u}' \, ;\,  \Gamma  ]\,   \ottkw{case}_ { \color{black}{q} } \,  \ottnt{a'} \, \ottkw{of}\,  \ottnt{b_{{\mathrm{1}}}}  ;  \ottnt{b_{{\mathrm{2}}}}  }{%
{\ottdrulename{Small\_CaseL}}{}%
}}

\newcommand{\ottdruleSmallXXCaseOne}[1]{\ottdrule[#1]{%
}{
 [  \ottnt{H}  ]\,   \ottkw{case}_ { \color{black}{q} } \,  \ottsym{(}   \ottkw{inj}_1\,  \ottnt{a}   \ottsym{)} \, \ottkw{of}\,  \ottnt{b_{{\mathrm{1}}}}  ;  \ottnt{b_{{\mathrm{2}}}}   \Rightarrow_{ \ottnt{S} }^{ { \color{black}{r} } } [  \ottnt{H} \, ;\,   \mathbf{0}^{| \ottnt{H} |}  \, ;\,  \varnothing  ]\,  \ottnt{b_{{\mathrm{1}}}} \, \ottnt{a} }{%
{\ottdrulename{Small\_Case1}}{}%
}}

\newcommand{\ottdruleSmallXXCaseTwo}[1]{\ottdrule[#1]{%
}{
 [  \ottnt{H}  ]\,   \ottkw{case}_ { \color{black}{q} } \,  \ottsym{(}   \ottkw{inj}_2\,  \ottnt{a}   \ottsym{)} \, \ottkw{of}\,  \ottnt{b_{{\mathrm{1}}}}  ;  \ottnt{b_{{\mathrm{2}}}}   \Rightarrow_{ \ottnt{S} }^{ { \color{black}{r} } } [  \ottnt{H} \, ;\,   \mathbf{0}^{| \ottnt{H} |}  \, ;\,  \varnothing  ]\,  \ottnt{b_{{\mathrm{2}}}} \, \ottnt{a} }{%
{\ottdrulename{Small\_Case2}}{}%
}}

\newcommand{\ottdruleSmallXXSub}[1]{\ottdrule[#1]{%
\ottpremise{ [  \ottnt{H_{{\mathrm{1}}}}  ]\,  \ottnt{a}  \Rightarrow_{ \ottnt{S} }^{ { \color{black}{r} } } [  \ottnt{H'} \, ;\,  \mathbf{u}' \, ;\,  \Gamma  ]\,  \ottnt{a'} }%
\ottpremise{ \ottnt{H_{{\mathrm{1}}}}  \leq  \ottnt{H_{{\mathrm{2}}}} }%
}{
 [  \ottnt{H_{{\mathrm{2}}}}  ]\,  \ottnt{a}  \Rightarrow_{ \ottnt{S} }^{ { \color{black}{r} } } [  \ottnt{H'} \, ;\,  \mathbf{u}' \, ;\,  \Gamma  ]\,  \ottnt{a'} }{%
{\ottdrulename{Small\_Sub}}{}%
}}

\newcommand{\ottdruleSmallXXProjL}[1]{\ottdrule[#1]{%
\ottpremise{ [  \ottnt{H}  ]\,  \ottnt{a}  \Rightarrow_{  \ottnt{S}  \, \cup    \,\text{fv}\,  \ottnt{b}    }^{ { \color{black}{r} } } [  \ottnt{H'} \, ;\,  \mathbf{u}' \, ;\,  \Gamma  ]\,  \ottnt{a'} }%
}{
 [  \ottnt{H}  ]\,  \ottkw{let} \, \ottsym{(}  \ottmv{x}  \ottsym{,}  \ottmv{y}  \ottsym{)}  \ottsym{=}  \ottnt{a} \, \mathsf{in} \, \ottnt{b}  \Rightarrow_{ \ottnt{S} }^{ { \color{black}{r} } } [  \ottnt{H'} \, ;\,  \mathbf{u}' \, ;\,  \Gamma  ]\,  \ottkw{let} \, \ottsym{(}  \ottmv{x}  \ottsym{,}  \ottmv{y}  \ottsym{)}  \ottsym{=}  \ottnt{a'} \, \mathsf{in} \, \ottnt{b} }{%
{\ottdrulename{Small\_ProjL}}{}%
}}

\newcommand{\ottdruleSmallXXProjBeta}[1]{\ottdrule[#1]{%
\ottpremise{  }%
\ottpremise{  }%
\ottpremise{ \ottmv{x'} \ \not\in    \mathsf{Var} \,  \ottnt{H}   \, \cup       \,\text{fv}\,  \ottnt{a_{{\mathrm{1}}}}   \, \cup   \,\text{fv}\,  \ottnt{a_{{\mathrm{2}}}}     \, \cup         \,\text{fv}\,  \ottnt{b}   -   \{  \ottmv{x}  \}     -   \{  \ottmv{y}  \}     \, \cup  \ottnt{S}      }%
\ottpremise{ \ottmv{y'} \ \not\in    \mathsf{Var} \,  \ottnt{H}   \, \cup       \,\text{fv}\,  \ottnt{a_{{\mathrm{1}}}}   \, \cup   \,\text{fv}\,  \ottnt{a_{{\mathrm{2}}}}     \, \cup         \,\text{fv}\,  \ottnt{b}   -   \{  \ottmv{x}  \}     -   \{  \ottmv{y}  \}     \, \cup    \ottnt{S}  \, \cup   \{  \ottmv{x'}  \}         }%
\ottpremise{\ottnt{b'}  \ottsym{=}  \ottnt{b}  \ottsym{\{}  \ottmv{x'}  \ottsym{/}  \ottmv{x}  \ottsym{\}}  \ottsym{\{}  \ottmv{y'}  \ottsym{/}  \ottmv{y}  \ottsym{\}}}%
}{
 [  \ottnt{H}  ]\,  \ottkw{let} \, \ottsym{(}  \ottmv{x}  \ottsym{,}  \ottmv{y}  \ottsym{)}  \ottsym{=}  \ottsym{(}  \ottnt{a_{{\mathrm{1}}}}  \ottsym{,}  \ottnt{a_{{\mathrm{2}}}}  \ottsym{)} \, \mathsf{in} \, \ottnt{b}  \Rightarrow_{ \ottnt{S} }^{ { \color{black}{r} } } [   \ottnt{H}  ,     \ottmv{x'}  \overset{ { \color{black}{r} } }{\mapsto} { \Gamma_{{\mathrm{1}}} \vdash  \ottnt{a_{{\mathrm{1}}}}  :  \ottnt{A_{{\mathrm{1}}}} }   ,   \ottmv{y'}  \overset{ { \color{black}{r} } }{\mapsto} { \Gamma_{{\mathrm{2}}} \vdash  \ottnt{a_{{\mathrm{2}}}}  :  \ottnt{A_{{\mathrm{2}}}} }     \, ;\,    \mathbf{0}^{| \ottnt{H} |}   \mathop{\diamond}      { \color{black}{0} }    \mathop{\diamond}    { \color{black}{0} }      \, ;\,    \ottmv{x'} \! :^{ { \color{black}{r} } }\! \ottnt{A_{{\mathrm{1}}}}  ,   \ottmv{y'} \! :^{ { \color{black}{r} } }\! \ottnt{A_{{\mathrm{2}}}}    ]\,  \ottnt{b'} }{%
{\ottdrulename{Small\_ProjBeta}}{}%
}}

\newcommand{\ottdruleSmallXXSpreadL}[1]{\ottdrule[#1]{%
\ottpremise{ [  \ottnt{H}  ]\,  \ottnt{a}  \Rightarrow_{  \ottnt{S}  \, \cup     \,\text{fv}\,  \ottnt{b}   -   \{  \ottmv{y}  \}     }^{ { \color{black}{r} } } [  \ottnt{H'} \, ;\,  \mathbf{u}' \, ;\,  \Gamma  ]\,  \ottnt{a'} }%
}{
 [  \ottnt{H}  ]\,   \ottkw{spread}\,  \ottnt{a} \, \ottkw{to}\,  \ottmv{y} \, \ottkw{in}\,  \ottnt{b}   \Rightarrow_{ \ottnt{S} }^{ { \color{black}{r} } } [  \ottnt{H'} \, ;\,  \mathbf{u}' \, ;\,  \Gamma  ]\,   \ottkw{spread}\,  \ottnt{a'} \, \ottkw{to}\,  \ottmv{y} \, \ottkw{in}\,  \ottnt{b}  }{%
{\ottdrulename{Small\_SpreadL}}{}%
}}

\newcommand{\ottdruleSmallXXSpread}[1]{\ottdrule[#1]{%
\ottpremise{ \ottmv{x} \ \not\in    \mathsf{Var} \,  \ottnt{H}   \, \cup       \,\text{fv}\,  \ottnt{a_{{\mathrm{1}}}}   \, \cup   \,\text{fv}\,  \ottnt{a_{{\mathrm{2}}}}     \, \cup       \,\text{fv}\,  \ottnt{b}   -   \{  \ottmv{y}  \}     \, \cup  \ottnt{S}      }%
\ottpremise{\ottnt{b'}  \ottsym{=}  \ottnt{b}  \ottsym{\{}  \ottmv{x}  \ottsym{/}  \ottmv{y}  \ottsym{\}}}%
}{
 [  \ottnt{H}  ]\,   \ottkw{spread}\,  \ottsym{(}  \ottnt{a_{{\mathrm{1}}}}  \ottsym{,}  \ottnt{a_{{\mathrm{2}}}}  \ottsym{)} \, \ottkw{to}\,  \ottmv{y} \, \ottkw{in}\,  \ottnt{b}   \Rightarrow_{ \ottnt{S} }^{ { \color{black}{r} } } [   \ottnt{H}  ,   \ottmv{x}  \overset{  { \color{black}{r} }  \cdot  { \color{black}{q} }  }{\mapsto} { \Gamma \vdash  \ottnt{a_{{\mathrm{1}}}}  :  \ottnt{A_{{\mathrm{1}}}} }   \, ;\,    \mathbf{0}^{| \ottnt{H} |}   \mathop{\diamond}    { \color{black}{0} }    \, ;\,   \ottmv{x} \! :^{  { \color{black}{r} }  \cdot  { \color{black}{q} }  }\! \ottnt{A_{{\mathrm{1}}}}   ]\,  \ottnt{b'} \, \ottnt{a_{{\mathrm{2}}}} }{%
{\ottdrulename{Small\_Spread}}{}%
}}

\newcommand{\ottdruleSmallXXDAppBeta}[1]{\ottdrule[#1]{%
\ottpremise{  }%
\ottpremise{ \ottmv{x} \ \not\in    \mathsf{Var} \,  \ottnt{H}   \, \cup       \,\text{fv}\,  \ottnt{b}   \, \cup     \,\text{fv}\,  \ottnt{a}   -   \{  \ottmv{y}  \}       \, \cup  \ottnt{S}    }%
\ottpremise{\ottnt{a'}  \ottsym{=}  \ottnt{a}  \ottsym{\{}  \ottmv{x}  \ottsym{/}  \ottmv{y}  \ottsym{\}}}%
}{
 [  \ottnt{H}  ]\,  \ottsym{(}   \lambda \ottmv{y} \!:^ { \color{black}{q} } \! \ottnt{A_{{\mathrm{1}}}} . \ottnt{a}   \ottsym{)} \, \ottnt{b}  \Rightarrow_{ \ottnt{S} }^{ { \color{black}{r} } } [   \ottnt{H}  ,   \ottmv{x}  \overset{  { \color{black}{r} }  \cdot  { \color{black}{q} }  }{\mapsto} { \Gamma \vdash  \ottnt{b}  :  \ottnt{A} }   \, ;\,    \mathbf{0}^{| \ottnt{H} |}   \mathop{\diamond}    { \color{black}{0} }    \, ;\,   \ottmv{x}  \! = \!  \ottnt{b}  \! :^{  { \color{black}{r} }  \cdot  { \color{black}{q} }  } \!  \ottnt{A}   ]\,  \ottnt{a'} }{%
{\ottdrulename{Small\_DAppBeta}}{}%
}}

\newcommand{\ottdefnSmallStep}[1]{\begin{ottdefnblock}[#1]{$ [  \ottnt{H}  ]\,  \ottnt{a}  \Rightarrow_{ \ottnt{S} }^{ { \color{black}{r} } } [  \ottnt{H'} \, ;\,  \mathbf{u}' \, ;\,  \Gamma  ]\,  \ottnt{b} $}{\ottcom{Small-step Reduction}}
\ottusedrule{\ottdruleSmallXXVar{}}
\ottusedrule{\ottdruleSmallXXAppL{}}
\ottusedrule{\ottdruleSmallXXAppBeta{}}
\ottusedrule{\ottdruleSmallXXUnitL{}}
\ottusedrule{\ottdruleSmallXXUnitBeta{}}
\ottusedrule{\ottdruleSmallXXLetBoxL{}}
\ottusedrule{\ottdruleSmallXXLetBoxBeta{}}
\ottusedrule{\ottdruleSmallXXCaseL{}}
\ottusedrule{\ottdruleSmallXXCaseOne{}}
\ottusedrule{\ottdruleSmallXXCaseTwo{}}
\ottusedrule{\ottdruleSmallXXSub{}}
\ottusedrule{\ottdruleSmallXXProjL{}}
\ottusedrule{\ottdruleSmallXXProjBeta{}}
\ottusedrule{\ottdruleSmallXXSpreadL{}}
\ottusedrule{\ottdruleSmallXXSpread{}}
\ottusedrule{\ottdruleSmallXXDAppBeta{}}
\end{ottdefnblock}}

\newcommand{\ottdefnsJSmallStep}{
\ottdefnSmallStep{}}

\newcommand{\ottdruleSmallSXXVar}[1]{\ottdrule[#1]{%
\ottpremise{ { \color{black}{1} }   \leq  { \color{black}{r} }}%
}{
 [   \ottnt{H_{{\mathrm{1}}}}  ,     \ottmv{x}  \overset{ \ottsym{(}  { \color{black}{q} }  \ottsym{+}  { \color{black}{r} }  \ottsym{)} }{\mapsto} { \Gamma \vdash  \ottnt{a}  :  \ottnt{A} }   ,  \ottnt{H_{{\mathrm{2}}}}     ]\,  \ottmv{x}  \! :  \ottnt{A}  \Rightarrow_{ \ottnt{S} } [   \ottnt{H_{{\mathrm{1}}}}  ,     \ottmv{x}  \overset{ { \color{black}{q} } }{\mapsto} { \Gamma \vdash  \ottnt{a}  :  \ottnt{A} }   ,  \ottnt{H_{{\mathrm{2}}}}    \, ;\,    \mathbf{0}^{| \ottnt{H_{{\mathrm{1}}}} |}   \mathop{\diamond}     { \color{black}{r} }   \mathop{\diamond}   \mathbf{0}^{| \ottnt{H_{{\mathrm{2}}}} |}     \, ;\,  \varnothing  ]\,  \ottnt{a}  \! :  \ottnt{A} }{%
{\ottdrulename{SmallS\_Var}}{}%
}}

\newcommand{\ottdruleSmallSXXAppL}[1]{\ottdrule[#1]{%
\ottpremise{ [  \ottnt{H}  ]\,  \ottnt{a}  \! :  \ottnt{A}  \Rightarrow_{  \ottnt{S}  \, \cup    \,\text{fv}\,  \ottnt{b}    } [  \ottnt{H'} \, ;\,  \mathbf{u}' \, ;\,  \Gamma  ]\,  \ottnt{a'}  \! :  \ottnt{B} }%
}{
 [  \ottnt{H}  ]\,  \ottnt{a} \, \ottnt{b}  \! :  \ottnt{A}  \Rightarrow_{ \ottnt{S} } [  \ottnt{H'} \, ;\,  \mathbf{u}' \, ;\,  \Gamma  ]\,  \ottnt{a'} \, \ottnt{b}  \! :  \ottnt{B} }{%
{\ottdrulename{SmallS\_AppL}}{}%
}}

\newcommand{\ottdruleSmallSXXApp}[1]{\ottdrule[#1]{%
\ottpremise{ \ottmv{x} \ \not\in    \mathsf{dom} \,  \ottnt{H}   \, \cup     \,\text{fv}\,  \ottnt{b}   \, \cup  \ottnt{S}    }%
}{
 [  \ottnt{H}  ]\,  \ottsym{(}   \lambda \ottmv{y} \!:^ { \color{black}{q} } \! \ottnt{A} . \ottnt{a}   \ottsym{)} \, \ottnt{b}  \! :  \ottnt{A}  \Rightarrow_{ \ottnt{S} } [   \ottnt{H}  ,   \ottmv{x}  \overset{  { \color{black}{q} }_{{\mathrm{1}}}  \cdot   { \color{black}{q} }  \cdot  { \color{black}{q} }_{{\mathrm{2}}}   }{\mapsto} { \Gamma \vdash  \ottnt{b}  :  \ottnt{A} }   \, ;\,    \mathbf{0}^{| \ottnt{H} |}   \mathop{\diamond}    { \color{black}{0} }    \, ;\,   \ottmv{x} \! :^{  { \color{black}{q} }_{{\mathrm{1}}}  \cdot   { \color{black}{q} }  \cdot  { \color{black}{q} }_{{\mathrm{2}}}   }\! \ottnt{A}   ]\,  \ottnt{a}  \ottsym{\{}  \ottmv{x}  \ottsym{/}  \ottmv{y}  \ottsym{\}}  \! :  \ottnt{A}  \ottsym{\{}  \ottmv{x}  \ottsym{/}  \ottmv{y}  \ottsym{\}} }{%
{\ottdrulename{SmallS\_App}}{}%
}}

\newcommand{\ottdruleSmallSXXUnitL}[1]{\ottdrule[#1]{%
\ottpremise{ [  \ottnt{H}  ]\,  \ottnt{a}  \! :  \ottnt{A}  \Rightarrow_{  \ottnt{S}  \, \cup    \,\text{fv}\,  \ottnt{b}    } [  \ottnt{H'} \, ;\,  \mathbf{u}' \, ;\,  \Gamma  ]\,  \ottnt{a'}  \! :  \ottnt{B} }%
}{
 [  \ottnt{H}  ]\,   \ottkw{let}\,  \ottkw{unit} \,=\, \ottnt{a} \ \ottkw{in}\  \ottnt{b}   \! :  \ottnt{A}  \Rightarrow_{ \ottnt{S} } [  \ottnt{H'} \, ;\,  \mathbf{u}' \, ;\,  \Gamma  ]\,   \ottkw{let}\,  \ottkw{unit} \,=\, \ottnt{a'} \ \ottkw{in}\  \ottnt{b}   \! :  \ottnt{B} }{%
{\ottdrulename{SmallS\_UnitL}}{}%
}}

\newcommand{\ottdruleSmallSXXUnitE}[1]{\ottdrule[#1]{%
}{
 [  \ottnt{H}  ]\,   \ottkw{let}\,  \ottkw{unit} \,=\, \ottkw{unit} \ \ottkw{in}\  \ottnt{b}   \! :  \ottnt{A}  \Rightarrow_{ \ottnt{S} } [  \ottnt{H} \, ;\,   \mathbf{0}^{| \ottnt{H} |}  \, ;\,  \varnothing  ]\,  \ottnt{b}  \! :  \ottnt{A} }{%
{\ottdrulename{SmallS\_UnitE}}{}%
}}

\newcommand{\ottdruleSmallSXXLetBoxL}[1]{\ottdrule[#1]{%
\ottpremise{ [  \ottnt{H}  ]\,  \ottnt{a}  \! :  \ottnt{A}  \Rightarrow_{  \ottnt{S}  \, \cup    \,\text{fv}\,  \ottnt{b}    } [  \ottnt{H'} \, ;\,  \mathbf{u}' \, ;\,  \Gamma  ]\,  \ottnt{a'}  \! :  \ottnt{B} }%
}{
 [  \ottnt{H}  ]\,   \ottkw{let}\, \ottkw{box} \, \ottmv{x} \,=\, \ottnt{a} \ \ottkw{in}\  \ottnt{b}   \! :  \ottnt{A}  \Rightarrow_{ \ottnt{S} } [  \ottnt{H'} \, ;\,  \mathbf{u}' \, ;\,  \Gamma  ]\,   \ottkw{let}\, \ottkw{box} \, \ottmv{x} \,=\, \ottnt{a'} \ \ottkw{in}\  \ottnt{b}   \! :  \ottnt{B} }{%
{\ottdrulename{SmallS\_LetBoxL}}{}%
}}

\newcommand{\ottdruleSmallSXXLetBox}[1]{\ottdrule[#1]{%
\ottpremise{  }%
\ottpremise{ \ottmv{x} \ \not\in     \mathsf{dom} \,  \ottnt{H}   \, \cup     \,\text{fv}\,  \ottnt{b}   \, \cup  \ottnt{S}     }%
\ottpremise{\ottnt{b'}  \ottsym{=}  \ottnt{b}  \ottsym{\{}  \ottmv{x}  \ottsym{/}  \ottmv{y}  \ottsym{\}}}%
\ottpremise{\ottnt{B'}  \ottsym{=}  \ottnt{B}  \ottsym{\{}  \ottmv{x}  \ottsym{/}  \ottmv{y}  \ottsym{\}}}%
}{
 [  \ottnt{H}  ]\,   \ottkw{let}\, \ottkw{box} \, \ottmv{y} \,=\,  \ottkw{box} _ { \color{black}{q} } \, \ottnt{a}  \ \ottkw{in}\  \ottnt{b}   \! :  \ottnt{B}  \Rightarrow_{ \ottnt{S} } [   \ottnt{H}  ,   \ottmv{x}  \overset{ { \color{black}{q} }' }{\mapsto} { \Gamma \vdash  \ottnt{a}  :  \ottnt{A} }   \, ;\,    \mathbf{0}^{| \ottnt{H} |}   \mathop{\diamond}    { \color{black}{0} }    \, ;\,   \ottmv{x} \! :^{ { \color{black}{q} }' }\! \ottnt{A}   ]\,  \ottnt{b'}  \! :  \ottnt{B'} }{%
{\ottdrulename{SmallS\_LetBox}}{}%
}}

\newcommand{\ottdruleSmallSXXCaseL}[1]{\ottdrule[#1]{%
\ottpremise{ [  \ottnt{H}  ]\,  \ottnt{a}  \! :  \ottnt{A}  \Rightarrow_{  \ottnt{S}  \, \cup     \,\text{fv}\,  \ottnt{b_{{\mathrm{1}}}}   \, \cup   \,\text{fv}\,  \ottnt{b_{{\mathrm{2}}}}     } [  \ottnt{H'} \, ;\,  \mathbf{u}' \, ;\,  \Gamma  ]\,  \ottnt{a'}  \! :  \ottnt{A'} }%
}{
 [  \ottnt{H}  ]\,   \ottkw{case}_ { \color{black}{q} } \,  \ottnt{a} \, \ottkw{of}\,  \ottnt{b_{{\mathrm{1}}}}  ;  \ottnt{b_{{\mathrm{2}}}}   \! :  \ottnt{A}  \Rightarrow_{ \ottnt{S} } [  \ottnt{H'} \, ;\,  \mathbf{u}' \, ;\,  \Gamma  ]\,   \ottkw{case}_ { \color{black}{q} } \,  \ottnt{a'} \, \ottkw{of}\,  \ottnt{b_{{\mathrm{1}}}}  ;  \ottnt{b_{{\mathrm{2}}}}   \! :  \ottnt{A'} }{%
{\ottdrulename{SmallS\_CaseL}}{}%
}}

\newcommand{\ottdruleSmallSXXCaseOne}[1]{\ottdrule[#1]{%
}{
 [  \ottnt{H}  ]\,   \ottkw{case}_ { \color{black}{q} } \,  \ottsym{(}   \ottkw{inj}_1\,  \ottnt{a}   \ottsym{)} \, \ottkw{of}\,  \ottnt{b_{{\mathrm{1}}}}  ;  \ottnt{b_{{\mathrm{2}}}}   \! :  \ottnt{A}  \Rightarrow_{ \ottnt{S} } [  \ottnt{H} \, ;\,   \mathbf{0}^{| \ottnt{H} |}  \, ;\,  \varnothing  ]\,  \ottnt{b_{{\mathrm{1}}}} \, \ottnt{a}  \! :  \ottnt{A} }{%
{\ottdrulename{SmallS\_Case1}}{}%
}}

\newcommand{\ottdruleSmallSXXCaseTwo}[1]{\ottdrule[#1]{%
}{
 [  \ottnt{H}  ]\,   \ottkw{case}_ { \color{black}{q} } \,  \ottsym{(}   \ottkw{inj}_2\,  \ottnt{a}   \ottsym{)} \, \ottkw{of}\,  \ottnt{b_{{\mathrm{1}}}}  ;  \ottnt{b_{{\mathrm{2}}}}   \! :  \ottnt{A}  \Rightarrow_{ \ottnt{S} } [  \ottnt{H} \, ;\,   \mathbf{0}^{| \ottnt{H} |}  \, ;\,  \varnothing  ]\,  \ottnt{b_{{\mathrm{2}}}} \, \ottnt{a}  \! :  \ottnt{A} }{%
{\ottdrulename{SmallS\_Case2}}{}%
}}

\newcommand{\ottdruleSmallSXXProjL}[1]{\ottdrule[#1]{%
\ottpremise{ [  \ottnt{H}  ]\,  \ottnt{a}  \! :  \ottnt{A}  \Rightarrow_{  \ottnt{S}  \, \cup    \,\text{fv}\,  \ottnt{b}    } [  \ottnt{H'} \, ;\,  \mathbf{u}' \, ;\,  \Gamma  ]\,  \ottnt{a'}  \! :  \ottnt{B} }%
}{
 [  \ottnt{H}  ]\,  \ottkw{let} \, \ottsym{(}  \ottmv{x}  \ottsym{,}  \ottmv{y}  \ottsym{)}  \ottsym{=}  \ottnt{a} \, \mathsf{in} \, \ottnt{b}  \! :  \ottnt{A}  \Rightarrow_{ \ottnt{S} } [  \ottnt{H'} \, ;\,  \mathbf{u}' \, ;\,  \Gamma  ]\,  \ottkw{let} \, \ottsym{(}  \ottmv{x}  \ottsym{,}  \ottmv{y}  \ottsym{)}  \ottsym{=}  \ottnt{a'} \, \mathsf{in} \, \ottnt{b}  \! :  \ottnt{B} }{%
{\ottdrulename{SmallS\_ProjL}}{}%
}}

\newcommand{\ottdruleSmallSXXProjBeta}[1]{\ottdrule[#1]{%
\ottpremise{  }%
\ottpremise{  }%
\ottpremise{ \ottmv{x'} \ \not\in     \mathsf{dom} \,  \ottnt{H}   \, \cup     \,\text{fv}\,  \ottnt{b}   \, \cup  \ottnt{S}     }%
\ottpremise{ \ottmv{y'} \ \not\in     \mathsf{dom} \,  \ottnt{H}   \, \cup     \,\text{fv}\,  \ottnt{b}   \, \cup     \{  \ottmv{x'}  \}   \, \cup  \ottnt{S}       }%
\ottpremise{\ottnt{b'}  \ottsym{=}  \ottnt{b}  \ottsym{\{}  \ottmv{x'}  \ottsym{/}  \ottmv{x}  \ottsym{\}}  \ottsym{\{}  \ottmv{y'}  \ottsym{/}  \ottmv{y}  \ottsym{\}}}%
\ottpremise{\ottnt{B'}  \ottsym{=}  \ottnt{B}  \ottsym{\{}  \ottmv{x'}  \ottsym{/}  \ottmv{x}  \ottsym{\}}  \ottsym{\{}  \ottmv{y'}  \ottsym{/}  \ottmv{y}  \ottsym{\}}}%
}{
 [  \ottnt{H}  ]\,  \ottkw{let} \, \ottsym{(}  \ottmv{x}  \ottsym{,}  \ottmv{y}  \ottsym{)}  \ottsym{=}  \ottsym{(}  \ottnt{a_{{\mathrm{1}}}}  \ottsym{,}  \ottnt{a_{{\mathrm{2}}}}  \ottsym{)} \, \mathsf{in} \, \ottnt{b}  \! :  \ottnt{B}  \Rightarrow_{ \ottnt{S} } [   \ottnt{H}  ,     \ottmv{x'}  \overset{ { \color{black}{q} } }{\mapsto} { \Gamma_{{\mathrm{1}}} \vdash  \ottnt{a_{{\mathrm{1}}}}  :  \ottnt{A_{{\mathrm{1}}}} }   ,   \ottmv{y'}  \overset{ { \color{black}{q} } }{\mapsto} { \Gamma_{{\mathrm{2}}} \vdash  \ottnt{a_{{\mathrm{2}}}}  :  \ottnt{A_{{\mathrm{2}}}} }     \, ;\,    \mathbf{0}^{| \ottnt{H} |}   \mathop{\diamond}      { \color{black}{0} }    \mathop{\diamond}    { \color{black}{0} }      \, ;\,    \ottmv{x'} \! :^{ { \color{black}{q} } }\! \ottnt{A_{{\mathrm{1}}}}  ,   \ottmv{y'} \! :^{ { \color{black}{q} } }\! \ottnt{A_{{\mathrm{2}}}}    ]\,  \ottnt{b'}  \! :  \ottnt{B'} }{%
{\ottdrulename{SmallS\_ProjBeta}}{}%
}}

\newcommand{\ottdruleSmallSXXSub}[1]{\ottdrule[#1]{%
\ottpremise{ [  \ottnt{H_{{\mathrm{1}}}}  ]\,  \ottnt{a}  \! :  \ottnt{A}  \Rightarrow_{ \ottnt{S} } [  \ottnt{H'} \, ;\,  \mathbf{u}' \, ;\,  \Gamma  ]\,  \ottnt{a'}  \! :  \ottnt{B} }%
\ottpremise{ \ottnt{H_{{\mathrm{1}}}}  \leq  \ottnt{H_{{\mathrm{2}}}} }%
}{
 [  \ottnt{H_{{\mathrm{2}}}}  ]\,  \ottnt{a}  \! :  \ottnt{A}  \Rightarrow_{ \ottnt{S} } [  \ottnt{H'} \, ;\,  \mathbf{u}' \, ;\,  \Gamma  ]\,  \ottnt{a'}  \! :  \ottnt{B} }{%
{\ottdrulename{SmallS\_Sub}}{}%
}}

\newcommand{\ottdefnSmallStepSecond}[1]{\begin{ottdefnblock}[#1]{$ [  \ottnt{H}  ]\,  \ottnt{a}  \! :  \ottnt{A}  \Rightarrow_{ \ottnt{S} } [  \ottnt{H'} \, ;\,  \mathbf{u}' \, ;\,  \Gamma  ]\,  \ottnt{b}  \! :  \ottnt{B} $}{\ottcom{Small-step Reduction}}
\ottusedrule{\ottdruleSmallSXXVar{}}
\ottusedrule{\ottdruleSmallSXXAppL{}}
\ottusedrule{\ottdruleSmallSXXApp{}}
\ottusedrule{\ottdruleSmallSXXUnitL{}}
\ottusedrule{\ottdruleSmallSXXUnitE{}}
\ottusedrule{\ottdruleSmallSXXLetBoxL{}}
\ottusedrule{\ottdruleSmallSXXLetBox{}}
\ottusedrule{\ottdruleSmallSXXCaseL{}}
\ottusedrule{\ottdruleSmallSXXCaseOne{}}
\ottusedrule{\ottdruleSmallSXXCaseTwo{}}
\ottusedrule{\ottdruleSmallSXXProjL{}}
\ottusedrule{\ottdruleSmallSXXProjBeta{}}
\ottusedrule{\ottdruleSmallSXXSub{}}
\end{ottdefnblock}}

\newcommand{\ottdefnsJSmallStepSecond}{
\ottdefnSmallStepSecond{}}

\newcommand{\ottdruleMultiXXOne}[1]{\ottdrule[#1]{%
\ottpremise{ [  \ottnt{H}  ]\,  \ottnt{a}  \Rightarrow_{ \ottnt{S} }^{ { \color{black}{r} } } [  \ottnt{H'} \, ;\,  \mathbf{u}' \, ;\,  \Gamma  ]\,  \ottnt{b} }%
}{
 [  \ottnt{H}  ]\,  \ottnt{a}  \Rightarrow\!\!\!\!\!\Rightarrow^{ { \color{black}{r} } }_{ \ottnt{S} } [  \ottnt{H'} \, ;\,  \mathbf{u}' \, ;\,  \Gamma  ]\,  \ottnt{b} }{%
{\ottdrulename{Multi\_One}}{}%
}}

\newcommand{\ottdruleMultiXXMany}[1]{\ottdrule[#1]{%
\ottpremise{ [  \ottnt{H}  ]\,  \ottnt{a}  \Rightarrow_{ \ottnt{S} }^{ { \color{black}{r} } } [  \ottnt{H'} \, ;\,  \mathbf{u}' \, ;\,  \Gamma_{{\mathrm{1}}}  ]\,  \ottnt{b_{{\mathrm{1}}}} }%
\ottpremise{ [  \ottnt{H'}  ]\,  \ottnt{b_{{\mathrm{1}}}}  \Rightarrow\!\!\!\!\!\Rightarrow^{ { \color{black}{r} } }_{ \ottnt{S} } [  \ottnt{H''} \, ;\,  \mathbf{u}'' \, ;\,  \Gamma_{{\mathrm{2}}}  ]\,  \ottnt{b} }%
}{
 [  \ottnt{H}  ]\,  \ottnt{a}  \Rightarrow\!\!\!\!\!\Rightarrow^{ { \color{black}{r} } }_{ \ottnt{S} } [  \ottnt{H''} \, ;\,     \mathbf{u}'  \mathop{\diamond}   \mathbf{0}^{| \Gamma_{{\mathrm{2}}} |}     +  \mathbf{u}''  \, ;\,   \Gamma_{{\mathrm{1}}} ,  \Gamma_{{\mathrm{2}}}   ]\,  \ottnt{b} }{%
{\ottdrulename{Multi\_Many}}{}%
}}

\newcommand{\ottdefnMultiStep}[1]{\begin{ottdefnblock}[#1]{$ [  \ottnt{H}  ]\,  \ottnt{a}  \Rightarrow\!\!\!\!\!\Rightarrow^{ { \color{black}{r} } }_{ \ottnt{S} } [  \ottnt{H'} \, ;\,  \mathbf{u}' \, ;\,  \Gamma  ]\,  \ottnt{b} $}{\ottcom{Multi-step Reduction}}
\ottusedrule{\ottdruleMultiXXOne{}}
\ottusedrule{\ottdruleMultiXXMany{}}
\end{ottdefnblock}}

\newcommand{\ottdefnsJMultiStep}{
\ottdefnMultiStep{}}

\newcommand{\ottdruleCompatXXEmpty}[1]{\ottdrule[#1]{%
}{
 \varnothing  \vdash  \varnothing ;  \varnothing }{%
{\ottdrulename{Compat\_Empty}}{}%
}}

\newcommand{\ottdruleCompatXXCons}[1]{\ottdrule[#1]{%
\ottpremise{ \ottnt{H}  \vdash  \Delta ;  \Gamma_{{\mathrm{1}}}  \ottsym{+}  \ottsym{(}   { \color{black}{q} }   \cdot   \Gamma_{{\mathrm{2}}}   \ottsym{)} }%
\ottpremise{ \Delta  ;  \Gamma_{{\mathrm{2}}}  \vdash \ottnt{a} : \ottnt{A} }%
\ottpremise{\ottmv{x} \, \not\in \, \mathsf{dom} \, \ottnt{H}}%
}{
  \ottnt{H}  ,   \ottmv{x}  \overset{ { \color{black}{q} } }{\mapsto} { \Gamma_{{\mathrm{2}}} \vdash  \ottnt{a}  :  \ottnt{A} }    \vdash   \Delta ,   \ottmv{x} \!\!:\!\! \ottnt{A}   ;   \Gamma_{{\mathrm{1}}} ,   \ottmv{x} \! :^{ { \color{black}{q} } }\! \ottnt{A}   }{%
{\ottdrulename{Compat\_Cons}}{}%
}}

\newcommand{\ottdruleCompatXXConsDef}[1]{\ottdrule[#1]{%
\ottpremise{ \ottnt{H}  \vdash  \Delta ;  \Gamma_{{\mathrm{1}}}  \ottsym{+}  \ottsym{(}   { \color{black}{q} }   \cdot   \Gamma_{{\mathrm{2}}}   \ottsym{)} }%
\ottpremise{ \Delta  ;  \Gamma_{{\mathrm{2}}}  \vdash \ottnt{a} : \ottnt{A} }%
\ottpremise{\ottmv{x} \, \not\in \, \mathsf{dom} \, \ottnt{H}}%
}{
  \ottnt{H}  ,   \ottmv{x}  \overset{ { \color{black}{q} } }{\mapsto} { \Gamma_{{\mathrm{2}}} \vdash  \ottnt{a}  :  \ottnt{A} }    \vdash   \Delta ,   \ottmv{x} \! = \!  \ottnt{a}  \! : \!  \ottnt{A}   ;   \Gamma_{{\mathrm{1}}} ,   \ottmv{x}  \! = \!  \ottnt{a}  \! :^{ { \color{black}{q} } } \!  \ottnt{A}   }{%
{\ottdrulename{Compat\_ConsDef}}{}%
}}

\newcommand{\ottdefnCompatibility}[1]{\begin{ottdefnblock}[#1]{$ \ottnt{H}  \vdash  \Delta ;  \Gamma $}{\ottcom{Compatibility}}
\ottusedrule{\ottdruleCompatXXEmpty{}}
\ottusedrule{\ottdruleCompatXXCons{}}
\ottusedrule{\ottdruleCompatXXConsDef{}}
\end{ottdefnblock}}

\newcommand{\ottdefnsJCompat}{
\ottdefnCompatibility{}}

\newcommand{\ottdruleTXXUnitElim}[1]{\ottdrule[#1]{%
\ottpremise{ \Delta  ;  \Gamma_{{\mathrm{1}}}  \vdash \ottnt{a} : \ottkw{Unit} }%
\ottpremise{ \Delta  ;  \Gamma_{{\mathrm{2}}}  \vdash \ottnt{b} : \ottnt{B}  \ottsym{\{}  \ottkw{unit}  \ottsym{/}  \ottmv{y}  \ottsym{\}} }%
\ottpremise{  \Delta ,   \ottmv{y} \!\!:\!\! \ottkw{Unit}    ;   \Gamma_{{\mathrm{3}}} ,   \ottmv{y} \! :^{ { \color{black}{r} } }\! \ottkw{Unit}    \vdash \ottnt{B} :  \textbf{Type}  }%
}{
\Delta  \mathsf{;}  \Gamma_{{\mathrm{1}}}  \ottsym{+}  \Gamma_{{\mathrm{2}}}  \vdash   \ottkw{let}\,  \ottkw{unit} \,=\, \ottnt{a} \ \ottkw{in}\  \ottnt{b}   \ottsym{:}  \ottnt{B}  \ottsym{\{}  \ottnt{a}  \ottsym{/}  \ottmv{y}  \ottsym{\}}}{%
{\ottdrulename{T\_UnitElim}}{}%
}}

\newcommand{\ottdruleTXXSumElim}[1]{\ottdrule[#1]{%
\ottpremise{ { \color{black}{1} }   \leq  { \color{black}{q} }}%
\ottpremise{ \Delta  ;  \Gamma_{{\mathrm{1}}}  \vdash \ottnt{a} :  \ottnt{A_{{\mathrm{1}}}}  \oplus  \ottnt{A_{{\mathrm{2}}}}  }%
\ottpremise{ \Delta  ;  \Gamma_{{\mathrm{2}}}  \vdash \ottnt{b_{{\mathrm{1}}}} :  \Pi  \ottmv{x} \!:^ { \color{black}{q} } \! \ottnt{A_{{\mathrm{1}}}} .  \ottnt{B}  \ottsym{\{}   \ottkw{inj}_1\,  \ottmv{x}   \ottsym{/}  \ottmv{y}  \ottsym{\}}   }%
\ottpremise{ \Delta  ;  \Gamma_{{\mathrm{2}}}  \vdash \ottnt{b_{{\mathrm{2}}}} :  \Pi  \ottmv{x} \!:^ { \color{black}{q} } \! \ottnt{A_{{\mathrm{2}}}} .  \ottnt{B}  \ottsym{\{}   \ottkw{inj}_2\,  \ottmv{x}   \ottsym{/}  \ottmv{y}  \ottsym{\}}   }%
\ottpremise{  \Delta ,   \ottmv{y} \!\!:\!\!  \ottnt{A_{{\mathrm{1}}}}  \oplus  \ottnt{A_{{\mathrm{2}}}}     ;   \Gamma_{{\mathrm{3}}} ,   \ottmv{y} \! :^{ { \color{black}{r} } }\!  \ottnt{A_{{\mathrm{1}}}}  \oplus  \ottnt{A_{{\mathrm{2}}}}     \vdash \ottnt{B} :  \textbf{Type}  }%
}{
\Delta  \mathsf{;}    { \color{black}{q} }   \cdot   \Gamma_{{\mathrm{1}}}    \ottsym{+}  \Gamma_{{\mathrm{2}}}  \vdash    \ottkw{case}_ { \color{black}{q} } \,  \ottnt{a} \, \ottkw{of}\,  \ottnt{b_{{\mathrm{1}}}}  ;  \ottnt{b_{{\mathrm{2}}}}    \ottsym{:}   \ottnt{B}  \ottsym{\{}  \ottnt{a}  \ottsym{/}  \ottmv{y}  \ottsym{\}} }{%
{\ottdrulename{T\_SumElim}}{}%
}}

\newcommand{\ottdruleTXXSpreadElim}[1]{\ottdrule[#1]{%
\ottpremise{ \Delta  ;  \Gamma_{{\mathrm{1}}}  \vdash \ottnt{a} :  \Sigma  \ottmv{x} \!\!:^ { \color{black}{q} } \!\! \ottnt{A_{{\mathrm{1}}}} . \ottnt{A_{{\mathrm{2}}}}  }%
\ottpremise{    \Delta ,   \ottmv{x} \!\!:\!\! \ottnt{A_{{\mathrm{1}}}}    ,   \ottmv{y} \!\!:\!\! \ottnt{A_{{\mathrm{2}}}}    ;     \Gamma_{{\mathrm{2}}} ,   \ottmv{x} \! :^{ { \color{black}{q} } }\! \ottnt{A_{{\mathrm{1}}}}    ,   \ottmv{y} \! :^{  { \color{black}{1} }  }\! \ottnt{A_{{\mathrm{2}}}}    \vdash \ottnt{b} : \ottnt{B}  \ottsym{\{}  \ottsym{(}  \ottmv{x}  \ottsym{,}  \ottmv{y}  \ottsym{)}  \ottsym{/}  \ottmv{z}  \ottsym{\}} }%
\ottpremise{  \Delta ,   \ottmv{z} \!\!:\!\!  \Sigma  \ottmv{x} \!\!:^ { \color{black}{q} } \!\! \ottnt{A_{{\mathrm{1}}}} . \ottnt{A_{{\mathrm{2}}}}     ;   \Gamma_{{\mathrm{3}}} ,   \ottmv{z} \! :^{ { \color{black}{r} } }\!  \Sigma  \ottmv{x} \!\!:^ { \color{black}{q} } \!\! \ottnt{A_{{\mathrm{1}}}} . \ottnt{A_{{\mathrm{2}}}}     \vdash \ottnt{B} :  \textbf{Type}  }%
}{
\Delta  \mathsf{;}  \Gamma_{{\mathrm{1}}}  \ottsym{+}  \Gamma_{{\mathrm{2}}}  \vdash  \ottkw{let} \, \ottsym{(}  \ottmv{x}  \ottsym{,}  \ottmv{y}  \ottsym{)}  \ottsym{=}  \ottnt{a} \, \mathsf{in} \, \ottnt{b}  \ottsym{:}  \ottnt{B}  \ottsym{\{}  \ottnt{a}  \ottsym{/}  \ottmv{z}  \ottsym{\}}}{%
{\ottdrulename{T\_SpreadElim}}{}%
}}

\newcommand{\ottdruleTXXSpreadOne}[1]{\ottdrule[#1]{%
\ottpremise{ \Delta  ;  \Gamma_{{\mathrm{1}}}  \vdash \ottnt{a} :  \Sigma  \ottmv{x} \!\!:\!\! \ottnt{A_{{\mathrm{1}}}} . \ottnt{A_{{\mathrm{2}}}}  }%
\ottpremise{    \Delta ,   \ottmv{x} \!\!:\!\! \ottnt{A_{{\mathrm{1}}}}    ,   \ottmv{y} \!\!:\!\! \ottnt{A_{{\mathrm{2}}}}    ;     \Gamma_{{\mathrm{2}}} ,   \ottmv{x} \! :^{  { \color{black}{1} }  }\! \ottnt{A_{{\mathrm{1}}}}    ,   \ottmv{y} \! :^{  { \color{black}{1} }  }\! \ottnt{A_{{\mathrm{2}}}}    \vdash \ottnt{b} : \ottnt{B}  \ottsym{\{}  \ottsym{(}  \ottmv{x}  \ottsym{,}  \ottmv{y}  \ottsym{)}  \ottsym{/}  \ottmv{z}  \ottsym{\}} }%
\ottpremise{  \Delta ,   \ottmv{z} \!\!:\!\!  \Sigma  \ottmv{x} \!\!:\!\! \ottnt{A_{{\mathrm{1}}}} . \ottnt{A_{{\mathrm{2}}}}     ;   \Gamma_{{\mathrm{3}}} ,   \ottmv{z} \! :^{ { \color{black}{r} } }\!  \Sigma  \ottmv{x} \!\!:\!\! \ottnt{A_{{\mathrm{1}}}} . \ottnt{A_{{\mathrm{2}}}}     \vdash \ottnt{B} :  \textbf{Type}  }%
}{
\Delta  \mathsf{;}  \Gamma_{{\mathrm{1}}}  \ottsym{+}  \Gamma_{{\mathrm{2}}}  \vdash  \ottkw{let} \, \ottsym{(}  \ottmv{x}  \ottsym{,}  \ottmv{y}  \ottsym{)}  \ottsym{=}  \ottnt{a} \, \mathsf{in} \, \ottnt{b}  \ottsym{:}  \ottnt{B}  \ottsym{\{}  \ottnt{a}  \ottsym{/}  \ottmv{z}  \ottsym{\}}}{%
{\ottdrulename{T\_Spread1}}{}%
}}

\newcommand{\ottdruleTXXSpreadTwo}[1]{\ottdrule[#1]{%
\ottpremise{ \Delta  ;  \Gamma_{{\mathrm{1}}}  \vdash \ottnt{a} :  \Sigma  \ottmv{x} \!\!:\!\! \ottnt{A_{{\mathrm{1}}}} . \ottnt{A_{{\mathrm{2}}}}  }%
\ottpremise{    \Delta ,   \ottmv{x} \!\!:\!\! \ottnt{A_{{\mathrm{1}}}}    ,   \ottmv{y} \!\!:\!\! \ottnt{A_{{\mathrm{2}}}}    ;     \Gamma_{{\mathrm{2}}} ,   \ottmv{x} \! :^{ { \color{black}{q} } }\! \ottnt{A_{{\mathrm{1}}}}    ,   \ottmv{y} \! :^{ { \color{black}{q} } }\! \ottnt{A_{{\mathrm{2}}}}    \vdash \ottnt{b} : \ottnt{B}  \ottsym{\{}  \ottsym{(}  \ottmv{x}  \ottsym{,}  \ottmv{y}  \ottsym{)}  \ottsym{/}  \ottmv{z}  \ottsym{\}} }%
\ottpremise{  \Delta ,   \ottmv{z} \!\!:\!\!  \Sigma  \ottmv{x} \!\!:\!\! \ottnt{A_{{\mathrm{1}}}} . \ottnt{A_{{\mathrm{2}}}}     ;   \Gamma_{{\mathrm{3}}} ,   \ottmv{z} \! :^{ { \color{black}{r} } }\!  \Sigma  \ottmv{x} \!\!:\!\! \ottnt{A_{{\mathrm{1}}}} . \ottnt{A_{{\mathrm{2}}}}     \vdash \ottnt{B} :  \textbf{Type}  }%
}{
\Delta  \mathsf{;}    { \color{black}{q} }   \cdot   \Gamma_{{\mathrm{1}}}    \ottsym{+}  \Gamma_{{\mathrm{2}}}  \vdash  \ottkw{let} \, \ottsym{(}  \ottmv{x}  \ottsym{,}  \ottmv{y}  \ottsym{)}  \ottsym{=}  \ottnt{a} \, \mathsf{in} \, \ottnt{b}  \ottsym{:}  \ottnt{B}  \ottsym{\{}  \ottnt{a}  \ottsym{/}  \ottmv{z}  \ottsym{\}}}{%
{\ottdrulename{T\_Spread2}}{}%
}}

\newcommand{\ottdruleTXXSpreadThree}[1]{\ottdrule[#1]{%
\ottpremise{ \Delta  ;  \Gamma_{{\mathrm{1}}}  \vdash \ottnt{a} :  \Sigma  \ottmv{x} \!\!:^ { \color{black}{q} }_{{\mathrm{1}}} \!\! \ottnt{A_{{\mathrm{1}}}} . \ottnt{A_{{\mathrm{2}}}}  }%
\ottpremise{    \Delta ,   \ottmv{x} \!\!:\!\! \ottnt{A_{{\mathrm{1}}}}    ,   \ottmv{y} \!\!:\!\! \ottnt{A_{{\mathrm{2}}}}    ;     \Gamma_{{\mathrm{2}}} ,   \ottmv{x} \! :^{ \ottsym{(}  { \color{black}{q} }_{{\mathrm{2}}}  \cdot  { \color{black}{q} }_{{\mathrm{1}}}  \ottsym{)} }\! \ottnt{A_{{\mathrm{1}}}}    ,   \ottmv{y} \! :^{ { \color{black}{q} }_{{\mathrm{2}}} }\! \ottnt{A_{{\mathrm{2}}}}    \vdash \ottnt{b} : \ottnt{B}  \ottsym{\{}  \ottsym{(}  \ottmv{x}  \ottsym{,}  \ottmv{y}  \ottsym{)}  \ottsym{/}  \ottmv{z}  \ottsym{\}} }%
\ottpremise{  \Delta ,   \ottmv{z} \!\!:\!\!  \Sigma  \ottmv{x} \!\!:^ { \color{black}{q} }_{{\mathrm{1}}} \!\! \ottnt{A_{{\mathrm{1}}}} . \ottnt{A_{{\mathrm{2}}}}     ;   \Gamma_{{\mathrm{3}}} ,   \ottmv{z} \! :^{ { \color{black}{r} } }\!  \Sigma  \ottmv{x} \!\!:^ { \color{black}{q} }_{{\mathrm{1}}} \!\! \ottnt{A_{{\mathrm{1}}}} . \ottnt{A_{{\mathrm{2}}}}     \vdash \ottnt{B} :  \textbf{Type}  }%
}{
\Delta  \mathsf{;}    { \color{black}{q} }_{{\mathrm{2}}}   \cdot   \Gamma_{{\mathrm{1}}}    \ottsym{+}  \Gamma_{{\mathrm{2}}}  \vdash  \ottkw{let} \, \ottsym{(}  \ottmv{x}  \ottsym{,}  \ottmv{y}  \ottsym{)}  \ottsym{=}  \ottnt{a} \, \mathsf{in} \, \ottnt{b}  \ottsym{:}  \ottnt{B}  \ottsym{\{}  \ottnt{a}  \ottsym{/}  \ottmv{z}  \ottsym{\}}}{%
{\ottdrulename{T\_Spread3}}{}%
}}

\newcommand{\ottdruleTXXTensorOne}[1]{\ottdrule[#1]{%
\ottpremise{ \Delta  ;  \Gamma_{{\mathrm{1}}}  \vdash \ottnt{a} : \ottnt{A} }%
\ottpremise{ \Delta  ;  \Gamma_{{\mathrm{2}}}  \vdash \ottnt{b} : \ottnt{B}  \ottsym{\{}  \ottnt{a}  \ottsym{/}  \ottmv{x}  \ottsym{\}} }%
\ottpremise{  \Delta ,   \ottmv{x} \!\!:\!\! \ottnt{A}    ;   \Gamma_{{\mathrm{3}}} ,   \ottmv{x} \! :^{ { \color{black}{r} } }\! \ottnt{A}    \vdash \ottnt{B} :  \textbf{Type}  }%
}{
\Delta  \mathsf{;}  \Gamma_{{\mathrm{1}}}  \ottsym{+}  \Gamma_{{\mathrm{2}}}  \vdash  \ottsym{(}  \ottnt{a}  \ottsym{,}  \ottnt{b}  \ottsym{)}  \ottsym{:}   \Sigma  \ottmv{x} \!\!:\!\! \ottnt{A} . \ottnt{B} }{%
{\ottdrulename{T\_Tensor1}}{}%
}}

\newcommand{\ottdruleTXXconvert}[1]{\ottdrule[#1]{%
\ottpremise{ \Delta  ;  \Gamma_{{\mathrm{1}}}  \vdash \ottnt{a} : \ottnt{A} }%
\ottpremise{ \Delta  ;  \Gamma_{{\mathrm{2}}}  \vdash \ottnt{B} :  \textbf{Type}  }%
\ottpremise{\ottnt{A}  \equiv  \ottnt{B}}%
}{
\Delta  \mathsf{;}  \Gamma_{{\mathrm{1}}}  \vdash  \ottnt{a}  \ottsym{:}  \ottnt{B}}{%
{\ottdrulename{T\_convert}}{}%
}}

\newcommand{\ottdefnATyping}[1]{\begin{ottdefnblock}[#1]{$\Delta  \mathsf{;}  \Gamma  \vdash  \ottnt{a}  \ottsym{:}  \ottnt{A}$}{}
\ottusedrule{\ottdruleTXXUnitElim{}}
\ottusedrule{\ottdruleTXXSumElim{}}
\ottusedrule{\ottdruleTXXSpreadElim{}}
\ottusedrule{\ottdruleTXXSpreadOne{}}
\ottusedrule{\ottdruleTXXSpreadTwo{}}
\ottusedrule{\ottdruleTXXSpreadThree{}}
\ottusedrule{\ottdruleTXXTensorOne{}}
\ottusedrule{\ottdruleTXXconvert{}}
\end{ottdefnblock}}

\newcommand{\ottdruleSTXXLetPairElim}[1]{\ottdrule[#1]{%
\ottpremise{ \Delta \ ;\  \Gamma_{{\mathrm{1}}}  \vdash \ottnt{a} :  \ottnt{A_{{\mathrm{1}}}}  \otimes  \ottnt{A_{{\mathrm{2}}}}  }%
\ottpremise{ \Delta \ ;\   \Gamma_{{\mathrm{2}}} ,  \ottsym{(}    \ottmv{x} \! :^{ { \color{black}{q} } }\! \ottnt{A_{{\mathrm{1}}}}  ,   \ottmv{y} \! :^{ { \color{black}{q} } }\! \ottnt{A_{{\mathrm{2}}}}    \ottsym{)}   \vdash \ottnt{b} : \ottnt{B} }%
}{
\Delta  \mathsf{;}  \ottsym{(}   { \color{black}{q} }   \cdot   \Gamma_{{\mathrm{1}}}   \ottsym{)}  \ottsym{+}  \Gamma_{{\mathrm{2}}}  \vdash  \ottkw{let} \, \ottsym{(}  \ottmv{x}  \ottsym{,}  \ottmv{y}  \ottsym{)}  \ottsym{=}  \ottnt{a} \, \mathsf{in} \, \ottnt{b}  \ottsym{:}  \ottnt{B}}{%
{\ottdrulename{ST\_LetPairElim}}{}%
}}

\newcommand{\ottdruleSTXXLetBoxElim}[1]{\ottdrule[#1]{%
\ottpremise{ \Delta \ ;\  \Gamma_{{\mathrm{1}}}  \vdash \ottnt{a} :  \Box^{ { \color{black}{q} } }  \ottnt{A}  }%
\ottpremise{  \Delta ,   \ottmv{x} \!\!:\!\! \ottnt{A}   \ ;\   \Gamma_{{\mathrm{2}}} ,   \ottmv{x} \! :^{ { \color{black}{r} }  \cdot  { \color{black}{q} } }\! \ottnt{A}    \vdash \ottnt{b} : \ottnt{B} }%
}{
\Delta  \mathsf{;}  \ottsym{(}   { \color{black}{r} }   \cdot   \Gamma_{{\mathrm{1}}}   \ottsym{)}  \ottsym{+}  \Gamma_{{\mathrm{2}}}  \vdash   \ottkw{let}\, \ottkw{box} \, \ottmv{x} \,=\, \ottnt{a} \ \ottkw{in}\  \ottnt{b}   \ottsym{:}  \ottnt{B}}{%
{\ottdrulename{ST\_LetBoxElim}}{}%
}}

\newcommand{\ottdefnASTyping}[1]{\begin{ottdefnblock}[#1]{$\Delta  \mathsf{;}  \Gamma  \vdash  \ottnt{a}  \ottsym{:}  \ottnt{A}$}{}
\ottusedrule{\ottdruleSTXXLetPairElim{}}
\ottusedrule{\ottdruleSTXXLetBoxElim{}}
\end{ottdefnblock}}

\newcommand{\ottdefnsJExtra}{
\ottdefnATyping{}\ottdefnASTyping{}}

\newcommand{\ottdefnss}{
\ottdefnsJOp
\ottdefnsJSimpleTyping
\ottdefnsJPath
\ottdefnsJTyping
\ottdefnsJSmallStep
\ottdefnsJSmallStepSecond
\ottdefnsJMultiStep
\ottdefnsJCompat
\ottdefnsJExtra
}

\newcommand{\ottall}{\ottmetavars\\[0pt]
\ottgrammar\\[5.0mm]
\ottdefnss}

  \renewottcommands[ott]

\scw{The page limit for the final version is 26 pages, excluding
   references. You may buy up to 4 additional pages for \$100 per page.}

\section{Introduction}
Consider this typing judgement.
\[    \ottmv{x} \! :^{  { \color{black}{1} }  }\! \ottkw{Bool}  ,   \ottmv{y} \! :^{  { \color{black}{1} }  }\! \ottkw{Int}   ,   \ottmv{z} \! :^{  { \color{black}{0} }  }\! \ottkw{Bool}   \vdash \ottkw{if}\, x \, \ottkw{then}\, y + 1\, \ottkw{else}\, y - 1\, :\, \ottkw{Int} \]
Here, the numbers in the context indicate that the variable $x$ is used once
in the expression, the variable $y$ is also used only once (although it
appears twice), and the variable $z$ is never used at all.

This sentence is a judgement of a \emph{graded} type system which ensures
that the \emph{grades} or \emph{quantities} annotating each in-scope variable
reflects how it is used at run time. Graded type systems have been explored in much detail in the literature ~\cite{Ghica:2014,Brunel:2014,Gaboardi:2016,McBride:2016,atkey,10.1145/2628136.2628160,orchard:2019}\scw{add more citations here}. The process of tracking usage through grades is straightforward, but this is a powerful method of instrumenting type systems with analyses of irrelevance and linearity that have practical benefits like erasure of irrelevant terms (resulting in speed-up) and compiler optimizations (such as in-place update of linear resources). This approach is also versatile. 
By abstracting over a domain of resources, the same form of type system can be used to guarantee safe memory usage, or prevent insecure information flow, or quantify information leakage, or
identify irrelevant computations, or combine various modal logics. Several research
languages, such as Idris 2~\cite{brady:2020} and Agda~\cite{agda}, are starting to adopt
ideas from this domain, and new systems like Granule~\cite{orchard:2019} are being developed to explore its possibilities.

Our concrete motivation for studying graded type systems is a desire to merge Haskell's
current form of a linear type system~\cite{linear-haskell} with dependent
types~\cite{weirich:icfp17} in a clean manner. Crucially, the combined system
must support \emph{type erasure}: the compiler must be able to eliminate type arguments
to polymorphic functions. Type erasure is key both to support
parametric polymorphism and to efficiently execute Haskell programs. We discuss this in more detail in Section~\ref{sec:goal}.


Although Haskell is our eventual goal, our work remains general. Our designs
are compatible with the current approaches in GHC, but are not specialized to
Haskell.


We make the following contributions in this paper:
\begin{itemize}
\item Our system flexibly abstracts over an
  algebraic structure used to count resources.
  Section~\ref{sec:semiring} describes this structure---a partially-ordered
  semiring---and its properties. This use of a resource algebra is standard,
  although we identify subtle differences in its specification.
\item Section~\ref{sec:simple} presents a simple graded type system, with standard
  algebraic types and a graded modal type. This system is not novel;
  instead, it establishes a foundation for the dependent system. However, even at this
  stage, we identify subtleties in the design space.
\item Because the standard operational semantics does not track resources,
  type safety does not imply that usage tracking is correct.  Section~\ref{sec:heap-semantics} describes
  a heap-based operational semantics, inspired by \citet{turner}. 
  Every variable in the heap has an associated resource tag from our abstract
  structure, modelling how resources are used during computation. We prove
  that our type system is sound with respect to this instrumented semantics.
  This theorem tells us that well-typed terms will not get stuck by running out
  of resources.  In the process of showing that this result holds, we identify
  a key restriction on case analysis that was not forced by the non-resourced
  version of type safety.
\item Using soundness, we show (a generalization of) the \emph{single pointer property} for
  linear resources in Section~\ref{sec:applications}. The single pointer property
  says that a linear resource is referenced by precisely one pointer at runtime.
  Such a property would enable in-place updates of linear resources.
\item Our key contribution is the design of the language, \Langname{},
  extending our ideas to dependent types.  In contrast to other
  approaches~\cite{atkey}, we use the same rules to check
  relevant and irrelevant phrases (that is, terms and types). When computing
  the resources used by the entire term, we discard irrelevant usages. Types
  are irrelevant to computation, so our system ignores these usages. We
  describe the design of the type system in Section~\ref{sec:dependent} and
  extend the soundness proof with respect to a heap semantics in
  Section~\ref{sec:heap-dependent}.

  Our system is thus both simpler and more uniform than prior work that combines
  usage tracking with dependent types. In particular, Quantitative Type Theory
  (QTT)~\cite{McBride:2016,atkey} disables resource checking in types, leading to
  limitations on the sorts of reasoning that can be done in the type
  system. On the other hand, Resourceful Dependent Types~\cite{abel:2018} and
  GrTT~\cite{Moon:2020} maintain separate counts of usages in types and terms,
  incurring additional bookkeeping for less benefit.   Section~\ref{sec:qtt-comparison} provides a 
  detailed comparison of our work with QTT.
\item We have mechanized, in Coq, some intricate syntactic properties of our
  development (substitution, weakening, preservation, progress). These proof
  scripts are available online at
  \url{http://www.github.com/sweirich/graded-haskell}.
\end{itemize} 
\ifextended
This paper is an extended version of ``A graded dependent type system with a usage-aware semantics''.
\else
An extended version of this paper is available at \extendedurl. 
\fi

\section{Our Goal}
\label{sec:goal}

While the exploration of graded type systems in this paper is applicable to
a wide array of examples (see Section~\ref{sec:semiring-examples}), we were
originally motivated to study such systems in the context of GHC/Haskell, where
we wish to combine its existing support for linearity~\cite{linear-haskell}
(available as of GHC~9.0) with support for dependent types~\cite{gundry-thesis,eisenberg-thesis,weirich:icfp17,role-dependent-haskell}.

A key advantage of a successful combination of graded and dependent types for Haskell is that it allows us to use the $0$ quantity to mean \emph{irrelevant} usage, where an irrelevant sub-term is one that is 
not needed while computing the reduct of the concerned term. Irrelevant sub-terms are quite common in terms derived in dependent type systems. They are essential for type-checking the terms but if left as such, they can make programs run much slower. So we need to track irrelevant sub-terms and erase them before running a program. \citet{weirich:icfp17} use a relevance tag $+/-$ on the $\Pi$ for this purpose. On the other hand, \citet{linear-haskell} use a linearity tag $1/\omega$ on the function domain type to track linear usage. We can combine these two together using $0$, $1$ and $\omega$ to track irrelevant, linear and unrestricted usages respectively. This has an added advantage. It will allow us to provide Haskell programmers the option of annotating arguments with a usage tag that subsumes relevance and linearity tags \cite{proposal-102}. So the use of $0$ to mark irrelevance fits in swimmingly with Haskell's current story around linear types.

Furthermore, given that we plan to implement these ideas concretely inside GHC,
it is essential that the system be as simple as possible. As discussed in
more detail in Section~\ref{sec:qtt-comparison}, our system eliminates features that are not necessary in our case. Doing so will aid in integration with the rest of the GHC implementation. 

Our intentions laid out, we start our exploration by reviewing semirings, the
key algebraic structure used to abstractly represent grades.

\section{The Algebra of Quantities}
\label{sec:semiring}


The goal of a graded type theory is to track the demands that computations
make on variables that appear in the context. In other words, the type system
enables a static accounting of runtime resources ``used'' in the evaluation of
terms.  This form of type system generalizes linear types (where linear
resources must be used exactly once)~\cite{Wadler:1990} and bounded linear
types (where bounded resources must be used a finite number of
times)~\cite{Girard:1992}, as well as many, many other type
systems~\cite{reed:2010,Miquel:ICC,pfenning:2001,abadi:dcc,volpano:noninterference}.

This generality derives from the fact that the type system is parametrized
over an abstract algebraic structure of \emph{grades} to model
resources.\footnote{Grades are also called quantities, modalities, resources, coeffects or usages.}  The abstract algebraic structure enables addition and multiplication of resources and these operations conform to our general understanding of resource arithmetic. A  partially-ordered semiring is one such algebraic structure that captures this idea of resource modelling nicely. 


\subsection{Partially-Ordered Semirings}
\label{posemiring}

A \emph{semiring} is a set $Q$ with two binary operations,
$\_{+}\_ : Q \times Q \to Q$ (addition) and $\_{\cdot}\_ : Q \times Q \to Q$
(multiplication), and two distinguished elements, $0$ and $1$, such
that $(Q,+,0)$ is a commutative monoid and $(Q,\cdot,1)$ is a monoid;
furthermore, multiplication is both left and right distributive over addition
and $0$ is an annihilator for multiplication. Note that a semiring is not a
full ring because addition does not have an inverse---we cannot subtract.

We mark the variables in our contexts with quantities drawn from a semiring to represent demand of resources. In other words, if we have a typing derivation for a term $\ottnt{a}$ with
free variable $\ottmv{x}$ marked with ${ \color{black}{q} }$, we know that $\ottnt{a}$ demands ${ \color{black}{q} }$ uses of $\ottmv{x}$. 

We can weaken the precision of our type system (but increase its flexibility)
by allowing the judgement to express higher demand than is actually
necessary. For example, we may need to use some variable only once but it may
be convenient to declare that the usage of that variable
is unrestricted. To model this notion of \emph{sub-usage}, we need an ordering
on the elements of the abstract semiring, reflecting our notion of
leniency. A partial order captures the idea nicely. Since we work with a semiring, such an order should
also respect the binary operations of the semiring. Concretely, for a partial
order $ \leq $ on $Q$, if ${ \color{black}{q} }_{{\mathrm{1}}}  \leq  { \color{black}{q} }_{{\mathrm{2}}}$, then for any $q \in Q$, we should have
${ \color{black}{q} }  \ottsym{+}  { \color{black}{q} }_{{\mathrm{1}}}  \leq  { \color{black}{q} }  \ottsym{+}  { \color{black}{q} }_{{\mathrm{2}}}$, ${ \color{black}{q} }  \cdot  { \color{black}{q} }_{{\mathrm{1}}}  \leq  { \color{black}{q} }  \cdot  { \color{black}{q} }_{{\mathrm{2}}}$, and ${ \color{black}{q} }_{{\mathrm{1}}}  \cdot  { \color{black}{q} }  \leq  { \color{black}{q} }_{{\mathrm{2}}}  \cdot  { \color{black}{q} }$.
 A semiring with a partial order satisfying this condition is called a
\emph{partially-ordered} semiring.

This abstract structure captures the operations and properties that
the type system needs for resource accounting. Because we are working
abstractly, we are limited to exactly these assumptions. In practice, it means
our design is applicable to settings beyond the simple use of natural numbers to count resources.

\subsection{Examples of Partially-Ordered Semirings}
\label{sec:semiring-examples}

Looking ahead, there are a few semirings that we are interested in. The
\emph{trivial semiring} has a single element, and all operations just return
that element. Our type system, when specialized to this semiring,
degenerates to the usual form of types as the quantities are uninformative.

The \emph{boolean semiring} has two elements, 0 and 1, with the property that
$1 + 1 = 1$. A type system drawing quantities from this semiring distinguishes between variables that are
used (marked with one) and ones that are unused (marked with zero). In such a 
system, the quantity $1$ does \emph{not} correspond to a linear usage: this system
does not count usage, but instead checks \emph{whether} a variable is used or not.

There are two different partial orders that make sense for the boolean
semiring.  If we use the reflexive relation, then this type system
tracks relevance precisely. If a variable is marked with $0$ in the context, then we
know that the variable \emph{must not} be used at runtime, and if it is marked
with $1$, then we know that it \emph{must} be used. On the other hand, if the
partial ordering declares that $ { \color{black}{0} }   \leq   { \color{black}{1} } $, then we still can determine that
$0$-marked variables are unused, but we do not know anything about the usage
of $1$-marked variables.

The \emph{linearity semiring} has three elements, 0, 1 and $\omega$, where addition and multiplication
are defined in the usual way after interpreting $\omega$ as ``greater than $1$''. So, we have $1 + 1 = \omega$, $\omega + 1 = \omega$, and $\omega \cdot \omega = \omega$. A system using the linearity semiring
tracks linearity by marking linear variables with 1 and unrestricted variables with $\omega$. A suitable ordering in this semiring is the reflexive closure of $\{ (0,\omega), (1, \omega) \}$. We do \emph{not} want $0 \leq 1$, since then we would not be able to guarantee that linear variables in the context are used exactly once. This semiring is the one that makes the most sense for Haskell as it integrates linearity (1) with irrelevance (0) and unrestricted usage ($\omega$).

The \emph{five-point linearity semiring} has five elements, 0, 1, Aff, Rel and $\omega$, where addition and multiplication are defined in the usual way after interpreting Aff as ``1 or less", Rel as ``1 or more", and $\omega$ as unrestricted. An ordering reflecting this interpretation is the reflexive transitive closure of $\{ (0,\text{Aff})), (1,\text{Aff}), (1,\text{Rel}), (\text{Aff}, \omega), (\text{Rel},\omega)\}$. This semiring can be used to track irrelevant, linear, affine, relevant, and unrestricted usage. \rae{Can someone double-check that ordering? It doesn't match my expectation.} \hde{Looks right to me.  Think of it as, as you go up the chain you increase in the number of structural rules you can use.  So you can think of it as specifying the subsets of structural rules Aff, Rel, and $\omega$ have starting with weakening (0) and linearity (1).}\scw{also looks correct to me.}

A \emph{security semiring} is based on a lattice of
security levels, with increasing order representing decreasing security. The $+$ and $ \cdot $ correspond to the join and meet operations of the lattice respectively. The partial order corresponds to the lattice order and $0$ and $1$ are the $ {\color{black}{\mathsf{Private} } } $ and $ {\color{black}{\mathsf{Public} } } $ security levels respectively. $ {\color{black}{\mathsf{Public} } } $ can never be as or more secure than $ {\color{black}{\mathsf{Private} } } $, i.e. $ {\color{black}{\mathsf{Public} } }  \nleq  {\color{black}{\mathsf{Private} } } $. This lattice may include additional elements besides $ {\color{black}{\mathsf{Private} } } $ and $ {\color{black}{\mathsf{Public} } } $, corresponding to multiple levels of secrecy. As \citet{abel:icfp2020} describe, security type systems defined in this way differ from the usual convention (such as \citet{abadi:dcc}) in that security levels are relative to 1, the level of the program under execution.

Many other examples of semirings are possible. \citet{orchard:2019} and
\citet{abel:icfp2020} include comprehensive lists of several other
applications including a type system for differential
privacy~\cite{reed:2010} and a type system that tracks
covariant/contravariant use of assumptions.

Partially-ordered semirings have been used to track resource usage in many type systems \\ \cite{Ghica:2014,Brunel:2014,Gaboardi:2016,McBride:2016,atkey,10.1145/2628136.2628160,orchard:2019,abel:icfp2020} but there are some variations with respect to the formal requirements. For example: \citet{Brunel:2014} require the underlying set (of the semiring) along with the order to form a bounded sup-semilattice while \citet{abel:icfp2020} define the order using an additional meet operation on the underlying set; \citet{McBride:2016} uses a hemiring (a semiring without $1$) while \citet{atkey} uses a semiring where zero-usage satisfies a certain condition. Our theory is parametrized by a partially ordered semiring as defined in Section \ref{posemiring}. We add additional constraints as required only while deriving specific properties in Section \ref{sec:applications}.   

\section{A simple graded type system}
\label{sec:simple}

\begin{figure}
\centering
\begin{flushright}
\textit{(Grammar)}
\end{flushright}
\[
\begin{array}{llcl}
\textit{types} & \ottnt{A},\ottnt{B}   & ::=& \ottkw{Unit} \alt   {}^{ { \color{black}{q} } } \ottnt{A} \rightarrow  \ottnt{B}   \alt   \Box^{ { \color{black}{q} } }  \ottnt{A}  \alt  \ottnt{A}  \otimes  \ottnt{B}  \alt  \ottnt{A}  \oplus  \ottnt{B}   \\
\textit{terms} & \ottnt{a}, \ottnt{b}   & ::=& \ottmv{x} 
                                       \alt  \lambda \ottmv{x} \!:^ { \color{black}{q} } \! \ottnt{A} . \ottnt{a}  \alt \ottnt{a} \, \ottnt{b} \\
                                && \alt & \ottkw{unit}  \alt  \ottkw{let}\,  \ottkw{unit} \,=\, \ottnt{a} \ \ottkw{in}\  \ottnt{b}  \alt  \ottkw{box} _ { \color{black}{q} } \, \ottnt{a}  \alt  \ottkw{let}\, \ottkw{box} \, \ottmv{x} \,=\, \ottnt{a} \ \ottkw{in}\  \ottnt{b}  \\
                                && \alt & \ottsym{(}  \ottnt{a}  \ottsym{,}  \ottnt{b}  \ottsym{)} \alt \ottkw{let} \, \ottsym{(}  \ottmv{x}  \ottsym{,}  \ottmv{y}  \ottsym{)}  \ottsym{=}  \ottnt{a} \, \mathsf{in} \, \ottnt{b} \\
                                && \alt &  \ottkw{inj}_1\,  \ottnt{a}  \alt  \ottkw{inj}_2\,  \ottnt{a}  \alt  \ottkw{case}_ { \color{black}{q} } \,  \ottnt{a} \, \ottkw{of}\,  \ottnt{b_{{\mathrm{1}}}}  ;  \ottnt{b_{{\mathrm{2}}}}  \\
\\
\textit{usage contexts} & \Gamma & ::= &  \varnothing  \alt  \Gamma ,   \ottmv{x} \! :^{ { \color{black}{q} } }\! \ottnt{A}   \\
\textit{contexts} & \Delta & ::= &  \varnothing  \alt  \Delta ,   \ottmv{x} \!\!:\!\! \ottnt{A}  \\
\\
\textit{typing judgement} & & & \boxed{ \Delta \ ;\  \Gamma  \vdash \ottnt{a} : \ottnt{A} }
\end{array}
\]

\caption{The simply typed graded $\lambda$-calculus}
\label{fig:min}
\end{figure}


Our goal is to design a \emph{dependent} usage-aware type system. But, for
simplicity, we start with a simply-typed usage-aware system similar to the system of \citet{10.1145/2628136.2628160}. The grammar for this system appear in \pref{fig:min}. It is parametrized
over an arbitrary partially-ordered semiring $(Q, 1,\cdot,0,+,\leq)$ with
grades ${ \color{black}{q} } \in Q$.

The typing judgement for this system has the form $ \Delta \ ;\  \Gamma  \vdash \ottnt{a} : \ottnt{A} $; the rules appear inline below.
This judgement includes both a standard typing context $\Delta$ and a usage context $\Gamma$, a
copy of the typing context annotated with grades. For brevity in examples, we often elide the 
standard typing context as the information is subsumed by the usage context.
Indeed, in any derivation, the typing context and the usage context
correspond:

\begin{notation} ~
\begin{itemize}
\item
  The notation $ \lfloor \Gamma \rfloor $ denotes a typing context $\Delta$ same as $\Gamma$,
  but with no grades. 
\item The notation
  $\ncoverline{\Gamma}$ denotes the vector of grades in $\Gamma$.
\item The notation $ \Delta \vdash \Gamma $ denotes that $\Delta  \ottsym{=}   \lfloor \Gamma \rfloor $.
\end{itemize}
\end{notation}

\begin{lemma}[Typing context correspondence]
  If $ \Delta \ ;\  \Gamma  \vdash \ottnt{a} : \ottnt{A} $, then $ \Delta \vdash \Gamma $.
\end{lemma}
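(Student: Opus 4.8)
The plan is to proceed by structural induction on the derivation of $\Delta ; \Gamma \vdash a : A$. Since $\Delta \vdash \Gamma$ unfolds to the equation $\Delta = \lfloor \Gamma \rfloor$, in each case I need only check that the way the rule assembles its conclusion keeps the plain context and the usage context in lockstep once grades are erased. The key property of the erasure operation $\lfloor \cdot \rfloor$ is that it touches only the grade component of each binding, leaving the variable-and-type skeleton untouched; this is what makes every case reduce, ultimately, to a definitional equality.

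For the variable rule the conclusion singles out one graded entry against a matching plain entry, so the correspondence holds by construction. For the binding rules---$\lambda$ and the elimination forms for unit, box, tensor, and sum, each of which types a subterm in an extended context---the induction hypothesis supplies the correspondence for the extended contexts; since $\lfloor \Gamma, x :^{q} A \rfloor = \lfloor \Gamma \rfloor, x : A$ by definition, removing the freshly bound variable(s) from both sides preserves the equation and yields the conclusion.

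The cases that combine two usage contexts---application and the various eliminations, where the conclusion's usage context is typically built as $\Gamma_1 + q \cdot \Gamma_2$ from pointwise semiring operations---require the small auxiliary facts that $\lfloor q \cdot \Gamma \rfloor = \lfloor \Gamma \rfloor$ and that $\lfloor \Gamma_1 + \Gamma_2 \rfloor = \lfloor \Gamma_1 \rfloor$ whenever $\lfloor \Gamma_1 \rfloor = \lfloor \Gamma_2 \rfloor$. Both hold because addition and scalar multiplication act only on grades; combined with the two induction hypotheses ($\Delta = \lfloor \Gamma_1 \rfloor$ and $\Delta = \lfloor \Gamma_2 \rfloor$) they close each combining case. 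I expect the only mild obstacle to be the bookkeeping of how each rule shares a single plain context $\Delta$ across premises whose usage contexts are split or added together---this lemma is in effect the formal statement that such sharing is consistent---but once erasure is seen to forget grades homomorphically, no genuine difficulty remains.
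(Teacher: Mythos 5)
Your proof is correct and is exactly the argument the paper intends: the lemma is stated without an explicit proof precisely because it follows by routine induction on the typing derivation, using the facts that erasure commutes with context extension and is invariant under scaling and addition---the same auxiliary facts you isolate. The only case you do not name explicitly, \rref{ST-Sub}, is equally immediate in your framework, since the definition of context sub-usage $\Gamma_{{\mathrm{1}}} \leq \Gamma_{{\mathrm{2}}}$ already requires $\lfloor \Gamma_{{\mathrm{1}}} \rfloor = \lfloor \Gamma_{{\mathrm{2}}} \rfloor$.
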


This style of including both a plain typing  context $\Delta$ and its usage
counterpart $\Gamma$ in the judgement is merely for convenience; it allows us
to easily tell when two usage contexts differ only in their quantities. There are many alternative ways to express the same information
in the type system: we could have only one usage context $\Gamma$ and add
constraints, or have a typing context $\Delta$ and a separate vector
of quantities. 

\subsection{Type System Basics}

We are now ready to start our tour of the typing rules of this system.

\paragraph{Variables}\

\centerline{\drule{ST-Var}\drule{ST-Weak}}
\vspace{1ex}

\scw{I find this explanation about variables awkward, but I'm not sure how to fix it.}
We see here that a variable $\ottmv{x}$ has type
$\ottnt{A}$ if it has type $\ottnt{A}$ in the context---that part is
unsurprising. However, as is typical in this style of systems, the context
is extended to include $  { \color{black}{0} }    \cdot   \Gamma $:
this notation means that all variables in $\Gamma$ must
have a quantity of 0. 

\begin{notation}[Context scaling]
  The notation $ { \color{black}{q} }   \cdot   \Gamma $ denotes a context $\Gamma'$ such that, for each
  $ \ottmv{x} \! :^{ { \color{black}{r} } }\! \ottnt{A}  \in \Gamma$, we have $ \ottmv{x} \! :^{ { \color{black}{q} }  \cdot  { \color{black}{r} } }\! \ottnt{A}  \in \Gamma'$.
\end{notation}

The \rref{ST-Var} states that all variables other than $\ottmv{x}$ are
not used in the expression $\ottmv{x}$, that is why their quantity is zero.
Note also that $ \ottmv{x} \! :^{  { \color{black}{1} }  }\! \ottnt{A} $ occurs last in the context. If we wish to use a
variable that is not the last item in the context, the \rref{ST-Weak} allows us
to remove (reading from bottom to top) zero-usage variables at the end of a
context.

\paragraph{Sub-usage}\

\centerline{\drule{ST-Sub}}

We may allow our contexts to provide more resources than is necessary. Sub-usaging, as it is commonly
referred to, allows us to assume more resources in our context than are necessary.

\begin{notation}[Context sub-usage]
The notation $\Gamma_{{\mathrm{1}}}  \leq  \Gamma_{{\mathrm{2}}}$ means $ \lfloor \Gamma_{{\mathrm{1}}} \rfloor  =  \lfloor \Gamma_{{\mathrm{2}}} \rfloor $ where, for every corresponding pair of assumptions $ \ottmv{x} \! :^{ { \color{black}{q} }_{{\mathrm{1}}} }\! \ottnt{A}  \in \Gamma_{{\mathrm{1}}}$ and $ \ottmv{x} \! :^{ { \color{black}{q} }_{{\mathrm{2}}} }\! \ottnt{A}  \in \Gamma_{{\mathrm{2}}}$, the condition ${ \color{black}{q} }_{{\mathrm{1}}}  \leq  { \color{black}{q} }_{{\mathrm{2}}}$ holds.
\end{notation}

\paragraph{Functions}\

\centerline{\drule{ST-Lam}\drule{ST-App}}

Any quantitative type system must be careful around expressions that contain
multiple sub-expressions. Function application is a prime example, so we
examine \rref{ST-App} next. In this rule, we see that the function $\ottnt{a}$ has
type $ {}^{ { \color{black}{q} } } \ottnt{A} \rightarrow  \ottnt{B} $, meaning that it uses its argument, of type $\ottnt{A}$,
${ \color{black}{q} }$ times to produce a result of type $\ottnt{B}$.  Accordingly, we must make
sure that the argument expression $\ottnt{b}$ can be used ${ \color{black}{q} }$
times. Put another way, we must \emph{multiply} the usage required for
$\ottnt{b}$, as recorded in the typing context $\Gamma_{{\mathrm{2}}}$, by ${ \color{black}{q} }$. We see this in the context used in the rule's conclusion:
$\Gamma_{{\mathrm{1}}}  \ottsym{+}   { \color{black}{q} }   \cdot   \Gamma_{{\mathrm{2}}} $. 

This introduces another piece of important notation:
\begin{notation}[Context addition]
Adding contexts $\Gamma_{{\mathrm{1}}}  \ottsym{+}  \Gamma_{{\mathrm{2}}}$ is defined only when $ \lfloor \Gamma_{{\mathrm{1}}} \rfloor   \ottsym{=}   \lfloor \Gamma_{{\mathrm{2}}} \rfloor $.
The result context $\Gamma_{{\mathrm{3}}}$ is obtained by point-wise addition of quantities; i.e. for every $ \ottmv{x} \! :^{ { \color{black}{q} }_{{\mathrm{1}}} }\! \ottnt{A}  \in \Gamma_{{\mathrm{1}}}$
and $ \ottmv{x} \! :^{ { \color{black}{q} }_{{\mathrm{2}}} }\! \ottnt{A}  \in \Gamma_{{\mathrm{2}}}$, we have $ \ottmv{x} \! :^{ { \color{black}{q} }_{{\mathrm{1}}}  \ottsym{+}  { \color{black}{q} }_{{\mathrm{2}}} }\! \ottnt{A}  \in \Gamma_{{\mathrm{3}}}$.
\end{notation}
Our approach using two contexts $\Delta$ and $\Gamma$ works nicely here. Because
both premises to \rref{ST-App} use the same $\Delta$, we know that the required
precondition of context addition is satisfied.
\hde{We have brought the above statement up a bunch, but this is not new, and we can do this for the case when $\Gamma$ is just a vector.}
The high-level idea here is
common in sub-structural type systems: whenever we use multiple
sub-expressions within one expression, we must \emph{split} the context. One
part of the context checks one sub-expression, and the remainder checks other
sub-expression(s).

\begin{example}[Irrelevant application]
\label{example:irrelevant}
Before considering the rest of the system, it is instructive to step through
an example involving a function that does not use its argument in the context of the linearity semiring. We say that
such arguments are \emph{irrelevant}. 
Suppose that we have a function $\ottmv{f}$,
of type $ {}^{  { \color{black}{0} }  } \ottnt{B} \rightarrow    {}^{  { \color{black}{1} }  } \ottnt{A} \rightarrow  \ottnt{A}   $.  (Just from this type, we can see that $\ottmv{f}$
must be constant in $B$.) Suppose also that we want to apply this function
to some variable $\ottmv{x}$. In this case, define the usage contexts
\begin{gather*}
   \Gamma_{{\mathrm{0}}} =   \ottmv{f} \! :^{  { \color{black}{1} }  }\! \ottsym{(}   {}^{  { \color{black}{0} }  } \ottnt{B} \rightarrow    {}^{  { \color{black}{1} }  } \ottnt{A} \rightarrow  \ottnt{A}     \ottsym{)}   \qquad
   \Gamma_{{\mathrm{1}}} =  \Gamma_{{\mathrm{0}}} ,    \ottmv{x} \! :^{  { \color{black}{0} }  }\! \ottnt{B}     \qquad 
   \Gamma_{{\mathrm{2}}} =    \ottmv{f} \! :^{  { \color{black}{0} }  }\! \ottsym{(}   {}^{  { \color{black}{0} }  } \ottnt{B} \rightarrow    {}^{  { \color{black}{1} }  } \ottnt{A} \rightarrow  \ottnt{A}     \ottsym{)}   ,    \ottmv{x} \! :^{  { \color{black}{1} }  }\! \ottnt{B}    
\end{gather*}
and construct a typing derivation for the application:
\[
\inferrule*[left=ST-App]
{
  \inferrule*[left=ST-Weak]
  { \inferrule*[left=ST-Var]{\ \ }{  \lfloor \Gamma_{{\mathrm{0}}} \rfloor  \ ;\  \Gamma_{{\mathrm{0}}}  \vdash \ottmv{f} :  {}^{  { \color{black}{0} }  } \ottnt{B} \rightarrow    {}^{  { \color{black}{1} }  } \ottnt{A} \rightarrow  \ottnt{A}    }  
  }  
  {   \lfloor \Gamma_{{\mathrm{1}}} \rfloor  \ ;\  \Gamma_{{\mathrm{1}}}  \vdash \ottmv{f} :  {}^{  { \color{black}{0} }  } \ottnt{B} \rightarrow    {}^{  { \color{black}{1} }  } \ottnt{A} \rightarrow  \ottnt{A}     }
  \qquad
  { \inferrule*[left=ST-Var]{\ \ }{  \lfloor \Gamma_{{\mathrm{1}}} \rfloor  \ ;\  \Gamma_{{\mathrm{2}}}  \vdash \ottmv{x} : \ottnt{B} }
  }
}
{ 
    \lfloor \Gamma_{{\mathrm{1}}} \rfloor  \ ;\  \Gamma_{{\mathrm{1}}}  \ottsym{+}    { \color{black}{0} }    \cdot   \Gamma_{{\mathrm{2}}}   \vdash \ottmv{f} \, \ottmv{x} :  {}^{  { \color{black}{1} }  } \ottnt{A} \rightarrow  \ottnt{A}  
}
\]

Working through the context expression $\Gamma_{{\mathrm{1}}}  \ottsym{+}    { \color{black}{0} }    \cdot   \Gamma_{{\mathrm{2}}} $, 
we see that the computed final
context, derived in the conclusion of the application rule is just $\Gamma_{{\mathrm{1}}}$
again. Although the variable $\ottmv{x}$ appears free in the expression
$\ottmv{f} \, \ottmv{x}$, because it is the
argument to a constant function here, this use does not contribute to the
overall result.
\end{example}

\subsection{Data Structures}


\paragraph{Unit}\

\centerline{\drule{ST-Unit}\drule{ST-UnitE}}
\vspace{2ex}

The $\ottkw{Unit}$ type has a single element, $\ottkw{unit}$. To eliminate a term of this type, we just match it with $\ottkw{unit}$. Since the elimination form requires the resources used for both the terms, we add the two contexts in the conclusion of \rref{ST-UnitE}.

\paragraph{The graded modal type}\ 


\centerline{\drule{ST-Box}\drule{ST-LetBox}}
\vspace{2ex}

The type $ \Box^{ { \color{black}{q} } }  \ottnt{A} $ is called a \emph{graded modal type} or \emph{usage
  modal type}. It is introduced by the construct $ \ottkw{box} _ { \color{black}{q} } \, \ottnt{a} $, which uses the
expression ${ \color{black}{q} }$ times to build the box. This box can then be passed
around as an entity. When unboxed (\rref{ST-LetBox}), the continuation has
access to ${ \color{black}{q} }$ copies of the contents.


\paragraph{Products}\ 

\centerline{\drule{ST-Pair}\drule{ST-Spread}} 
\vspace{2ex}

The type system includes (multiplicative) products, also known as tensor
products. The two components of these pairs do not share variable
usages. Therefore the introduction rule adds the two contexts together. These
products must be eliminated via pattern matching because both components must
be used in the continuation. An elimination form that projects only
one component of the tuple would lose the usage constraints from the other
component. Note that even though both components of the tuple must be used
exactly once, by nesting a modal type within the tuple, programmers can
construct data structures with components of varying usage.

\paragraph{Sums}\ 

\centerline{\drule{ST-InjOne}\drule{ST-InjTwo}\drule{ST-Case}}
\vspace{2ex}

Last, the system includes (additive) sums and case analysis.  The introduction
rules for the first and second injections are no different from a standard
type system. However, in the elimination form, \rref{ST-Case}, the quantities
used for the scrutinee can be different than the quantities used (and shared
by) the two branches. Furthermore, the case expression may be annotated with a
quantity $q$ that indicates how many copies of the scrutinee may be demanded
in the branches.  Both branches of the case analysis \emph{must} use the
scrutinee at least once, as indicated by the $ { \color{black}{1} }   \leq  { \color{black}{q} }$ constraint. 

\subsection{Type Soundness}

For the language presented above, we define an entirely standard call-by-name
reduction relation $ \ottnt{a}  \leadsto  \ottnt{a'} $, included in \auxref{app:simple-opsem}.
With this operational semantics, a syntactic proof of type soundness follows
in the usual manner, via the entirely standard progress and preservation lemmas.
The substitution lemma that is part of this proof is of particular interest to us, as it must account
for the number of times the substituted variable ($\ottmv{x}$, in our
statement) is used when computing the contexts used in the conclusion of the lemma:
\begin{lemma}[Substitution] \label{SSub} If $ \Delta_{{\mathrm{1}}} \ ;\  \Gamma  \vdash \ottnt{a} : \ottnt{A} $ and
$  \Delta_{{\mathrm{1}}} ,     \ottmv{x} \!\!:\!\! \ottnt{A}  ,  \Delta_{{\mathrm{2}}}    \ ;\   \Gamma_{{\mathrm{1}}} ,     \ottmv{x} \! :^{ { \color{black}{q} } }\! \ottnt{A}  ,  \Gamma_{{\mathrm{2}}}     \vdash \ottnt{b} : \ottnt{B} $, then 
$  \Delta_{{\mathrm{1}}} ,  \Delta_{{\mathrm{2}}}  \ ;\   \Gamma_{{\mathrm{1}}}  \ottsym{+}    { \color{black}{q} }   \cdot   \Gamma   ,  \Gamma_{{\mathrm{2}}}   \vdash \ottnt{b}  \ottsym{\{}  \ottnt{a}  \ottsym{/}  \ottmv{x}  \ottsym{\}} : \ottnt{B} $.
\end{lemma}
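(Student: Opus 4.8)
The plan is to proceed by induction on the derivation of the second hypothesis, $\Delta_1, x\!:\!A, \Delta_2 \mathbin{;} \Gamma_1, x :^{q} A, \Gamma_2 \vdash b : B$, generalizing over $\Delta_1$, $\Gamma_1$, $\Delta_2$, $\Gamma_2$, $q$, $\Gamma$, $A$, $B$, and $b$ while keeping the first hypothesis fixed. I induct on the derivation rather than on the structure of $b$ because \rref{ST-Sub} and \rref{ST-Weak} are not syntax-directed. Before the case analysis I would establish a handful of algebraic facts about contexts that simply lift the partially-ordered semiring axioms pointwise: $0 \cdot \Gamma$ is the zero context, $1 \cdot \Gamma = \Gamma$, $(q_1 + q_2)\cdot \Gamma = q_1 \cdot \Gamma + q_2 \cdot \Gamma$, $q \cdot (\Gamma' + \Gamma'') = q \cdot \Gamma' + q \cdot \Gamma''$, $(q_1 \cdot q_2) \cdot \Gamma = q_1 \cdot (q_2 \cdot \Gamma)$, and monotonicity of $+$ and $\cdot$ with respect to $\leq$. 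These are the workhorses that make the resource bookkeeping go through, so I would prove them once up front, taking care that $+$ and $\cdot$ are only ever applied to contexts sharing the same underlying $\lfloor\,\cdot\,\rfloor$.

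For the base case \rref{ST-Var}, the typed variable sits last in the context with grade $1$ and all other grades are $0$. If that variable is $x$ itself, then $\Gamma_2$ is empty, $q = 1$, and every grade in $\Gamma_1$ is $0$; the substitution $b\{a/x\}$ is just $a$, and the goal collapses to the first hypothesis using $1 \cdot \Gamma = \Gamma$ and the fact that adding a zero context is the identity. If instead the typed variable is some $z \neq x$, then $x$ carries grade $q = 0$, so $q \cdot \Gamma$ is the zero context and $x$ may simply be dropped, leaving another instance of \rref{ST-Var}. The \rref{ST-Weak} case is similar: when the weakened zero-usage variable is $x$, then $q = 0$ and $x$ is not free in $b$, so the substitution is the identity and the result follows from the premise; otherwise the weakened variable lies in $\Gamma_2$, and I apply the induction hypothesis and re-apply \rref{ST-Weak}. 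For \rref{ST-Sub}, the premise has a smaller context in which $x$ may carry some $q' \leq q$; I apply the induction hypothesis at grade $q'$ and then recover the stated grade using monotonicity, since $q' \leq q$ gives $q' \cdot \Gamma \leq q \cdot \Gamma$, and conclude with \rref{ST-Sub}.

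The interesting cases split and scale the context, with \rref{ST-App} as the representative. There $b = a_1\,a_2$, the two premises are checked in contexts $\Gamma^{(1)}$ and $\Gamma^{(2)}$, and the conclusion context is $\Gamma^{(1)} + q' \cdot \Gamma^{(2)}$, where $q'$ is the function's argument multiplicity. The grade of $x$ therefore decomposes as $q = q^{(1)} + q' \cdot q^{(2)}$, where $q^{(i)}$ is the grade of $x$ in $\Gamma^{(i)}$. Applying the induction hypothesis to each premise (against the same first hypothesis) introduces the terms $q^{(1)} \cdot \Gamma$ and $q^{(2)} \cdot \Gamma$, and the proof obligation reduces to the identity $q^{(1)} \cdot \Gamma + q' \cdot (q^{(2)} \cdot \Gamma) = (q^{(1)} + q' \cdot q^{(2)}) \cdot \Gamma = q \cdot \Gamma$, which is exactly distributivity and associativity of scaling lifted to contexts. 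The remaining elimination forms that split contexts---\rref{ST-UnitE}, \rref{ST-Spread}, \rref{ST-LetBox}, and especially \rref{ST-Case} with its scrutinee multiplicity---follow the same pattern, while the non-splitting rules (\rref{ST-Lam}, \rref{ST-Unit}, \rref{ST-Box}, \rref{ST-Pair}, \rref{ST-InjOne}, \rref{ST-InjTwo}) pass the context through and appeal to the induction hypothesis directly.

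I expect the main obstacle to be precisely this context arithmetic in the splitting-and-scaling cases: tracking how the single grade $q$ on $x$ distributes across the subderivations and verifying that reassembling the substituted contexts yields exactly $\Gamma_1 + q \cdot \Gamma, \Gamma_2$ rather than something merely equal up to the semiring laws. Having the lifted distributivity, associativity, and unit/annihilator lemmas available at the outset turns each such case into a routine equational rearrangement, so the real effort lies in stating those lemmas correctly and in being disciplined about domain-compatibility of context addition and scaling throughout.
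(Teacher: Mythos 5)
Your proposal is correct and takes essentially the same route as the paper: the paper's proof (carried out in its Coq mechanization) is precisely this induction on the typing derivation of $b$, with the partially-ordered-semiring laws lifted pointwise to contexts to discharge the splitting, scaling, and sub-usage cases. The only slip is your classification of \rref{ST-Pair} (which adds the contexts of its two premises) and \rref{ST-Box} (which scales its premise's context by the box grade) as pass-through rules; both in fact belong in your splitting-and-scaling bucket, but they are discharged by exactly the distributivity and associativity identities you already set up for \rref{ST-App}, so nothing in the argument breaks.
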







\subsection{Discussion and Variations}
\label{sec:variations}

At this point, the language that we have developed recalls
systems found in prior work, such as \citet{Brunel:2014}, \citet{orchard:2019}, \citet{Wood:2020}
and \citet{abel:icfp2020}. Most differences are cosmetic---especially in the
treatment of usage contexts. Of these, the most similar is the
concurrently developed \citet{abel:icfp2020}, which we compare below.

\begin{itemize}
\item First, \citet{abel:icfp2020} include a slightly more expressive form of pattern
  matching. Their elimination forms for the box modality and products 
  multiply each scrutinee by a quantity $q$, providing that many copies of its subcomponents to the
  continuation, as in our \rref{ST-Case}.
  For simplicity, we have omitted this feature; it is not difficult to add.
\item Second, \citet{abel:icfp2020} require that the semiring include least-upper bounds
  for the partial order of the semiring. This allows them to compose case branches with differing     usages.
  \item Third, in the rule for $\ottkw{case}$, like \citet{abel:icfp2020}, we need the
 requirement that $ { \color{black}{1} }   \leq  { \color{black}{q} }$. In our system as well as theirs, it turns out
 that this requirement is not motivated by the standard type soundness theorem: the theorem holds without it.  Their condition was instead motivated by their parametricity theorems.  Our condition is motivated by   the heap soundness theorem that we present in the next section.
\end{itemize}

The standard type soundness theorem is not very
informative because it does not show that the quantities are correctly
used. 
Therefore, to address this issue, we turn to a heap-based semantics, based on
\citet{launchbury} and \citet{turner}, to account for resource usage during
computation.

\section{Heap semantics for simple type system}
\label{sec:heap-semantics}

A heap semantics shows how a term evaluates when the free variables of the term are assigned other terms. The assignments are stored in a heap, represented here as an ordered list. We associate an \emph{allowed usage}, basically an abstract quantity of resources, to each assignment. We change these quantities as the evaluation progresses. For example, a typical call-by-name reduction goes like this:\footnote{We don't have $\ottkw{Int}$ type and $+$ function in our language, but we use them for the sake of explanation.}
\begin{align*}
& [x \stackrel{3}{\mapsto} 1, y \stackrel{1}{\mapsto} x + x] (x + y)  
&\mbox{\textit{look up value of $x$, decrement its usage}}
\\
\Rightarrow \, & [x \stackrel{2}{\mapsto} 1, y \stackrel{1}{\mapsto} x + x] 1 + y
&\mbox{\textit{look up value of $y$, decrement its usage}} 
\\
\Rightarrow \, & [x \stackrel{2}{\mapsto} 1, y \stackrel{0}{\mapsto} x + x] 1 + (x + x)
&\mbox{\textit{look up value of $x$, decrement its usage}}
\\
\Rightarrow \, & [x \stackrel{1}{\mapsto} 1, y \stackrel{0}{\mapsto} x + x] 1 + (1 + x)
&\mbox{\textit{look up value of $x$, decrement its usage}}
\\
\Rightarrow \, & [x \stackrel{0}{\mapsto} 1, y \stackrel{0}{\mapsto} x + x] 1 + (1 + 1)
&\mbox{\textit{addition step}}
\\
\Rightarrow \, & [x \stackrel{0}{\mapsto} 1, y \stackrel{0}{\mapsto} x + x] 3
\end{align*}

\subsection{The Step Judgement}

\begin{figure}
\drules[Small]{$ [  \ottnt{H}  ]\,  \ottnt{a}  \Rightarrow_{ \ottnt{S} }^{ { \color{black}{r} } } [  \ottnt{H'} \, ;\,  \mathbf{u}' \, ;\,  \Gamma'  ]\,  \ottnt{a'} $}
{Small-step reduction relation (excerpt)}
{Var,AppL,AppBeta,CaseL,CaseOne,CaseTwo,Sub}
\scw{Should we drop UnitL and UnitBeta from this figure?}
\caption{Heap semantics (excerpt)}
\label{fig:heap-semantics}
\begin{notation}
The notation $ \mathbf{0}^{ \ottnt{n} } $ denotes a vector of $0$'s of length $n$. The notation $ \mathbf{u}_{{\mathrm{1}}}  \mathop{\diamond}  \mathbf{u}_{{\mathrm{2}}} $ denotes concatenation. Here $ \,\text{fv}\,  \ottnt{a} $ stands for the free variables of $a$ while $ \mathsf{Var} \,  \ottnt{H} $ stands for the domain of $H$ and the free variables of the terms appearing in the assignments of $H$.
\end{notation}
\end{figure}

The reduction above is expressed informally as a sequence of pairs of heap $\ottnt{H}$ and expression $\ottnt{a}$. We formalize this relation using the following judgement, which appears in Fig \ref{fig:heap-semantics}. 
\[
 [  \ottnt{H}  ]\,  \ottnt{a}  \Rightarrow_{ \ottnt{S} }^{ { \color{black}{r} } } [  \ottnt{H'} \, ;\,  \mathbf{u}' \, ;\,  \Gamma'  ]\,  \ottnt{a'} 
\]

The meaning of this relation is that ${ \color{black}{r} }$ copies of the term $a$ use the
resources of the heap $H$ and step to $r$ copies of the term $a'$, with $H'$
being the new heap. The relation also maintains additional information, which
we explain below.

Heap assignments are of the form $ \ottmv{x}  \overset{ { \color{black}{q} } }{\mapsto} { \Gamma \vdash  \ottnt{a}  :  \ottnt{A} } $, associating an
\textit{assignee variable} with its \textit{allowed usage} $q$ and assignment
$a$. The \textit{embedded context} $\Gamma$ and type $\ottnt{A}$ help in the proof of our soundness
theorem (\ref{thm:heap-soundness}).  For a heap $H$, we use
$\lfloor H \rfloor$ to represent $H$ excluding the allowed usages and
$\llfloor H \rrfloor$ to represent just the list of underlying assignments. We
call $\lfloor H \rfloor$ and $\llfloor H \rrfloor$ the \textit{erased} and
\textit{bare} views of $H$ respectively. For example, for
$H = [  \ottmv{x}  \overset{ { \color{black}{q} } }{\mapsto} { \Gamma \vdash  \ottnt{a}  :  \ottnt{A} }  ]$, the erased view
$\lfloor H \rfloor = [ x \mapsto  \Gamma  \vdash \ottnt{a} : \ottnt{A}  ]$ and the bare view
$\llfloor H \rrfloor = [ x \mapsto a ]$ and $\Gamma$ is the embedded context. The vector of allowed usages of the
variables in $H$ is denoted by $ \ncoverline{ \ottnt{H} } $.

Because we use a call-by-name reduction, we don't evaluate the terms in the heap; we just modify the quantities associated with the assignments as they are retrieved. Therefore, after any step, $\ottnt{H'}$ will contain all the previous assignments of $\ottnt{H}$, possibly with different usages. Furthermore, a beta-reduction step may also add new assignments to $H$.
To allocate new variable names appropriately, we need a support set $\ottnt{S}$ in this relation; fresh names are chosen avoiding the variables in this set.
We keep track of these new variables that are added to the heap along with the allowed usages of their assignments
using the added context $\Gamma'$.\footnote{Instead of full contexts $\Gamma'$, we could have just used a list of variable/usage pairs here; but we pass dummy types along with them for ease of presentation later.}
Therefore, after a step $ [  \ottnt{H}  ]\,  \ottnt{a}  \Rightarrow_{ \ottnt{S} }^{ { \color{black}{r} } } [  \ottnt{H'} \, ;\,  \mathbf{u}' \, ;\,  \Gamma'  ]\,  \ottnt{a'} $, the length of $\ottnt{H'}$ is the sum of the lengths of $\ottnt{H}$ and $\Gamma'$. 

Now, because we work with an arbitrary semiring (possibly without subtraction), this heap semantics is non-deterministic. For example, consider a step $[x \stackrel{q}{\mapsto} a] x \Rightarrow [x \stackrel{q'}{\mapsto} a] a$, where $q = q' + 1$. Here, we are using $x$ once, so we need to reduce its usage by $1$. But in an arbitrary semiring, there may exist multiple new quantities, $q'' \neq q'$, such that $q = q' + 1 = q'' + 1$. For example, in the linearity semiring, we have $\omega = 1 + 1 = \omega + 1$. In this case, $[x \stackrel{\omega}{\mapsto} a] x \Rightarrow [x \stackrel{1}{\mapsto} a] a$ and $[x \stackrel{\omega}{\mapsto} a] x \Rightarrow [x \stackrel{\omega}{\mapsto} a] a$.

The absence of subtraction also means that given an initial heap and a final
heap, we really don't know how much resources have been used by the
computation. The only way to know this is to keep track of resources while
they are being used. The amount of resources used up can be expressed as a
quantity vector $\mathbf{u}'$ called \textit{consumption vector}, with its
components showing usage at the corresponding variables in $\ottnt{H'}$. (The
length of $\mathbf{u}'$ will always be the same as $\ottnt{H'}$.)

Finally, owing to the presence of $ \ottkw{case} $ expressions that can use the
scrutinee more than once, we need to be able to evaluate several copies of the
scrutinee in parallel before passing them on to the appropriate branch. So we
step ${ \color{black}{r} }$ copies of a term $a$ in parallel to get ${ \color{black}{r} }$ copies of
$\ottnt{a'}$. We call ${ \color{black}{r} }$ the \textit{copy quantity} of the step. For the
most part, we shall be interested in copy quantity of $1$. 
\begin{notation}
We use $ [  \ottnt{H}  ]\,  \ottnt{a}  \Rightarrow_{ \ottnt{S} } [  \ottnt{H'} \, ;\,  \mathbf{u}' \, ;\,  \Gamma'  ]\,  \ottnt{a'} $ to denote $ [  \ottnt{H}  ]\,  \ottnt{a}  \Rightarrow_{ \ottnt{S} }^{  { \color{black}{1} }  } [  \ottnt{H'} \, ;\,  \mathbf{u}' \, ;\,  \Gamma'  ]\,  \ottnt{a'} $.
\end{notation}

\subsection{Reduction Relation}

Figure~\ref{fig:heap-semantics} contains an excerpt of the reduction relation. They mirror the ordinary small-step rules, but there are some crucial differences. For example, this relation includes \rref{Small-Var} that allows a variable look-up, provided its usage permits. The look-up consumes the copy quantity from the allowed usage. But the copy quantity cannot be arbitrary here -- we are consuming the resource at least once. So we restrict the copy quantity to be $1$ or more. This is the only rule that modifies the usage of an existing variable in the heap.

In this relation, \rref{AppBeta} loads new assignment into the heap instead of immediate substitution. The substitution happens in steps through variable look-ups. To avoid conflict, we choose new variables excluding the ones already in use. Since we are evaluating $r$ copies, we set the allowed usage of the variable to ${ \color{black}{r} }  \cdot  { \color{black}{q} }$ where $q$ is the usage annotation on the term.

Let us look at \rref{Small-CaseL}. This is interesting since the copy quantity in the premise and the conclusion are different. In fact, we introduced copy quantity to properly handle usages while evaluating $ \ottkw{case} $ expressions. For evaluating ${ \color{black}{r} }$ copies of the $ \ottkw{case} $ expression, we need to evaluate ${ \color{black}{r} }  \cdot  { \color{black}{q} }$ copies of the scrutinee since the scrutinee gets used ${ \color{black}{q} }$ times in either branch.

The \rref{Small-Sub} reduces the allowed usages in the heap and then lets the term take a step. Here, we use $ \ottnt{H_{{\mathrm{1}}}}  \leq  \ottnt{H_{{\mathrm{2}}}} $ to mean $\lfloor \ottnt{H_{{\mathrm{1}}}} \rfloor = \lfloor \ottnt{H_{{\mathrm{2}}}} \rfloor$ and for corresponding pair of assignments $ \ottmv{x}  \overset{ { \color{black}{q} }_{{\mathrm{1}}} }{\mapsto} { \Gamma \vdash  \ottnt{a}  :  \ottnt{A} } $ and $ \ottmv{x}  \overset{ { \color{black}{q} }_{{\mathrm{2}}} }{\mapsto} { \Gamma \vdash  \ottnt{a}  :  \ottnt{A} } $ in $\ottnt{H_{{\mathrm{1}}}}$ and $\ottnt{H_{{\mathrm{2}}}}$ respectively, the condition ${ \color{black}{q} }_{{\mathrm{1}}}  \leq  { \color{black}{q} }_{{\mathrm{2}}}$ holds. 

The multi-step reduction relation is the transitive closure of the single-step relation. In \rref{Multi-Many}, the consumption vectors from the steps are added up and the added contexts of new variables are concatenated. The copy quantity is the same in both the premises and the conclusion since the rule represents parallel multi-step evaluation of ${ \color{black}{r} }$ copies.

\drules[Multi]{$ [  \ottnt{H}  ]\,  \ottnt{a}  \Rightarrow\!\!\!\!\!\Rightarrow^{ { \color{black}{r} } }_{ \ottnt{S} } [  \ottnt{H'} \, ;\,  \mathbf{u}' \, ;\,  \Gamma  ]\,  \ottnt{b} $}
{Multi-Step relation}
{One,Many}

\subsection{Accounting of Resources}

The reduction relation enforces fair usage of resources, leading to the following theorem.

\begin{theorem}[Conservation]
If $ [  \ottnt{H}  ]\,  \ottnt{a}  \Rightarrow\!\!\!\!\!\Rightarrow^{ { \color{black}{r} } }_{ \ottnt{S} } [  \ottnt{H'} \, ;\,  \mathbf{u}' \, ;\,  \Gamma'  ]\,  \ottnt{a'} $, then $  \ncoverline{ \ottnt{H'} }   +  \mathbf{u}'  \leq   \ncoverline{ \ottnt{H} }   \mathop{\diamond}   \ncoverline{ \Gamma' }  $.
\end{theorem}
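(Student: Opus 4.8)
The plan is to reduce the multi-step statement to a single-step version and then push it through the closure by induction. First I would establish an auxiliary \emph{single-step conservation} lemma: whenever $[H]\,a \Rightarrow_S^{r} [H';\mathbf{u}';\Gamma']\,a'$, we have $\ncoverline{H'} + \mathbf{u}' \leq \ncoverline{H} \mathop{\diamond} \ncoverline{\Gamma'}$. Note that both sides are vectors of the same length $|H'| = |H| + |\Gamma'|$, so the inequality is read pointwise and relies only on the order-compatibility of $+$ and $\cdot$ guaranteed by the partially-ordered semiring of Section~\ref{posemiring}. Given this lemma, the theorem follows by induction on the derivation of the multi-step relation.

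For the single-step lemma I would do case analysis on the rule concluding the step, and the cases split into three kinds. (i) \emph{Exact accounting.} In \rref{Small-Var} the only modified entry is the looked-up variable $x$, whose allowed usage drops from $q$ to $q'$ with $q = q' + r$, while $\mathbf{u}'$ records exactly $r$ at $x$ and $0$ elsewhere and $\Gamma'$ is empty; hence $\ncoverline{H'} + \mathbf{u}' = \ncoverline{H}$, so the inequality holds with equality. (ii) \emph{Allocation.} In \rref{AppBeta} (and likewise \rref{Small-CaseOne} and \rref{Small-CaseTwo}) no existing usage is consumed and a fresh assignment is appended whose allowed usage $r \cdot q$ equals the grade recorded in $\Gamma'$; with $\mathbf{u}' = \mathbf{0}$ this again gives equality. (iii) \emph{Congruence and sub-usage.} The congruence rules \rref{Small-AppL} and \rref{Small-CaseL} change neither the heap accounting nor $\Gamma'$ relative to their premise, so the claim is inherited directly from the sub-derivation (for \rref{Small-CaseL} the premise merely uses copy quantity $r \cdot q$, which does not affect the vectors). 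The essential case is \rref{Small-Sub}: here the premise steps from a smaller heap $H_0 \leq H$, so the hypothesis gives $\ncoverline{H'} + \mathbf{u}' \leq \ncoverline{H_0} \mathop{\diamond} \ncoverline{\Gamma'}$, and since $\ncoverline{H_0} \leq \ncoverline{H}$ implies $\ncoverline{H_0} \mathop{\diamond} \ncoverline{\Gamma'} \leq \ncoverline{H} \mathop{\diamond} \ncoverline{\Gamma'}$, transitivity closes the goal. This is precisely the case that forces the statement to be an inequality rather than an equality.

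For the theorem itself I would induct on the multi-step derivation. The base case \rref{Multi-One} is exactly the single-step lemma. For \rref{Multi-Many}, write the composite as $[H]\,a \Rightarrowdbl^{r}_S [H_1;\mathbf{u}_1;\Gamma_1]\,b$ followed by $[H_1]\,b \Rightarrowdbl^{r}_{S'} [H';\mathbf{u}_2;\Gamma_2]\,c$, with combined consumption $\mathbf{u}' = (\mathbf{u}_1 \mathop{\diamond} \mathbf{0}^{|\Gamma_2|}) + \mathbf{u}_2$ and added context $\Gamma_1 \mathop{\diamond} \Gamma_2$. The two induction hypotheses read $\ncoverline{H_1} + \mathbf{u}_1 \leq \ncoverline{H} \mathop{\diamond} \ncoverline{\Gamma_1}$ and $\ncoverline{H'} + \mathbf{u}_2 \leq \ncoverline{H_1} \mathop{\diamond} \ncoverline{\Gamma_2}$. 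Adding the zero-padded vector $\mathbf{u}_1 \mathop{\diamond} \mathbf{0}^{|\Gamma_2|}$ to both sides of the second and using the interchange law
\[
(\ncoverline{H_1} \mathop{\diamond} \ncoverline{\Gamma_2}) + (\mathbf{u}_1 \mathop{\diamond} \mathbf{0}^{|\Gamma_2|}) = (\ncoverline{H_1} + \mathbf{u}_1) \mathop{\diamond} \ncoverline{\Gamma_2},
\]
the first hypothesis lets me replace $\ncoverline{H_1} + \mathbf{u}_1$ by the larger $\ncoverline{H} \mathop{\diamond} \ncoverline{\Gamma_1}$; transitivity and associativity of $\mathop{\diamond}$ then yield $\ncoverline{H'} + \mathbf{u}' \leq \ncoverline{H} \mathop{\diamond} \ncoverline{\Gamma_1} \mathop{\diamond} \ncoverline{\Gamma_2}$, which is the goal.

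I expect the main obstacle to be the vector-length bookkeeping in \rref{Multi-Many}: the second step allocates the variables of $\Gamma_2$, so $\mathbf{u}_1$ and $\ncoverline{H_1}$ must be re-aligned by zero-padding before they can be combined with the longer vectors of the second step, and the telescoping hinges on the interchange law above together with monotonicity of $+$ with respect to $\leq$. The per-rule reasoning for the single-step lemma is routine once the \rref{Small-Sub} case is identified as the source of the inequality; the real care is in getting the padding and concatenation to line up so that $\ncoverline{H_1}$ cancels cleanly between the two hypotheses.
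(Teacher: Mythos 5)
Your proof is correct, and it is the argument this theorem calls for: the paper states Conservation without an accompanying proof, and the intended route is exactly yours --- a single-step conservation lemma proved by induction on the step derivation (with the sub-usage rule \emph{Small-Sub} as the sole source of the inequality, all other rules giving equality), followed by induction over the transitive closure, where the zero-padding of $\mathbf{u}_1$ and the interchange law $(\ncoverline{H_1} \mathop{\diamond} \ncoverline{\Gamma_2}) + (\mathbf{u}_1 \mathop{\diamond} \mathbf{0}) = (\ncoverline{H_1} + \mathbf{u}_1) \mathop{\diamond} \ncoverline{\Gamma_2}$ do the bookkeeping. One inessential misdescription: the case-beta rules (\emph{Small-CaseOne}/\emph{Small-CaseTwo}) do not allocate in this system --- they leave the heap unchanged with empty $\Gamma'$ and zero consumption, the branch simply being applied to the injected value --- so those cases fall under your ``exact accounting'' kind rather than ``allocation,'' which only makes them easier.
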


Here, $ \ncoverline{ \ottnt{H} } $ represents the initial resources and $ \ncoverline{ \Gamma' } $ represents the newly added resources; whereas $ \ncoverline{ \ottnt{H'} } $ represents the resources left and $\mathbf{u}'$ the resources that were consumed. So the theorem says that the initial resources concatenated with those that are added during evaluation, are equal to or more than the remaining resources plus those that were used up. Note that if the partial order is the trivial reflexive order, $ \leq $ becomes an equality. In such a scenario, the reduction relation enforces strict conservation of resources. More generally, this theorem states that we don't use more resources than what we are entitled to.\\\

Unlike the substitution-based semantics, in this heap semantics, terms can ``get stuck'' due to lack of resources. Let us look at the following evaluation:
\begin{align*}
& [x \stackrel{2}{\mapsto} 1, y \stackrel{1}{\mapsto} x + x] (x + y)  
&\mbox{\textit{look up value of $x$, decrement its usage}}
\\
\Rightarrow \, & [x \stackrel{1}{\mapsto} 1, y \stackrel{1}{\mapsto} x + x] 1 + y
&\mbox{\textit{look up value of $y$, decrement its usage}} 
\\
\Rightarrow \, & [x \stackrel{1}{\mapsto} 1, y \stackrel{0}{\mapsto} x + x] 1 + (x + x)
&\mbox{\textit{look up value of $x$, decrement its usage}}
\\
\Rightarrow \, & [x \stackrel{0}{\mapsto} 1, y \stackrel{0}{\mapsto} x + x] 1 + (1 + x)
&\mbox{\textit{look up value of $x$, stuck!}}
\end{align*}

The evaluation gets stuck because the starting heap does not contain enough resources for the evaluation of the term. The term needs to use $x$ thrice; whereas the heap contains only two copies of $x$.

But this is not the only way in which an evaluation can run out of resources. Such a situation may also happen through ``unwise usage'', even when the starting heap contains enough resources. For example, over the linearity semiring, the evaluation: $[ x \stackrel{\omega}{\mapsto} 5 ] x + (x + x) \Rightarrow [ x \stackrel{1}{\mapsto} 5 ] 5 + (x + x) \Rightarrow [ x \stackrel{0}{\mapsto} 5 ] 5 + (5 + x)$ gets stuck because in the first step, $\omega$ was ``unwisely'' split as $1 + 1$ instead of being split as $\omega + 1$.

Our aim, then, is to show that given a heap that contains enough resources, a well-typed term that is not a value, can always take a step such that the resulting heap contains enough resources for the evaluation of the resulting term. We shall formalize what it means for a heap to contain enough resources. But before that, let us explore the relationship between the various possible steps a term can take when provided with a heap. \scw{Should have a forward reference to the soundness theorem here.}

\subsection{Determinism and Alpha-Equivalence}

Earlier, we pointed out that the step relation is non-deterministic. But on a closer look, we find that the non-determinism is limited more or less to the usages. If a term steps in two different ways when provided with a heap, the resulting terms are the same; the resulting heaps, though, may have different allowed usage vectors. Here, we formulate a precise version of this statement.

A term, when provided with a heap, can step either by looking up a variable or by adding a new assignment. \scw{Maybe we don't have to talk about proper heaps} Now, if the heap does not contain duplicate assignments for the same variable, look-up will always produce the same result. We call such heaps \textit{proper}. Note that the reduction relation maintains this property of heaps. So hereafter, we restrict our attention to proper heaps. Next, if a term steps by adding a new assignment, we may choose different fresh variables leading to different resulting terms. But such a difference is reconcilable. Viewed as closures, such heap term pairs are $\alpha$-equivalent.

Given a heap $H$ and a term $a$, let us call $(\llfloor H \rrfloor , a)$ a \textit{machine configuration}. Two heap term pairs are $\alpha$-equivalent if the corresponding machine configurations are identical up to systematic renaming of assignee variables. We denote $\alpha$-equivalence by $\sim_{\alpha}$.

The step relation, then, is deterministic in the following sense: 
 
\begin{lemma}[Determinism]
If $ [  \ottnt{H_{{\mathrm{1}}}}  ]\,  \ottnt{a_{{\mathrm{1}}}}  \Rightarrow_{ \ottnt{S} }^{ { \color{black}{r} }_{{\mathrm{1}}} } [  \ottnt{H'_{{\mathrm{1}}}} \, ;\,  \mathbf{u}'_{{\mathrm{1}}} \, ;\,  \Gamma'_{{\mathrm{1}}}  ]\,  \ottnt{a'_{{\mathrm{1}}}} $ and $ [  \ottnt{H_{{\mathrm{2}}}}  ]\,  \ottnt{a_{{\mathrm{2}}}}  \Rightarrow_{ \ottnt{S} }^{ { \color{black}{r} }_{{\mathrm{2}}} } [  \ottnt{H'_{{\mathrm{2}}}} \, ;\,  \mathbf{u}'_{{\mathrm{2}}} \, ;\,  \Gamma'_{{\mathrm{2}}}  ]\,  \ottnt{a'_{{\mathrm{2}}}} $ and $ ( \ottnt{H_{{\mathrm{1}}}} , \ottnt{a_{{\mathrm{1}}}} ) \sim_{\alpha} ( \ottnt{H_{{\mathrm{2}}}}  , \ottnt{a_{{\mathrm{2}}}} ) $, then $ ( \ottnt{H'_{{\mathrm{1}}}} , \ottnt{a'_{{\mathrm{1}}}} ) \sim_{\alpha} ( \ottnt{H'_{{\mathrm{2}}}}  , \ottnt{a'_{{\mathrm{2}}}} ) $.
\end{lemma}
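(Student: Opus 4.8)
The plan is to prove this by rule induction on the two step derivations, using the sum of their heights as the induction measure, and to exploit the fact that at the level of \emph{machine configurations} --- which by definition forget all allowed usages --- the step relation becomes essentially syntax-directed. The key observation is that the genuine sources of non-determinism all live in the usages: \rref{Small-Sub} alters only the allowed usages and leaves $\llfloor \ottnt{H} \rrfloor$ and $\ottnt{a}$ untouched, the copy quantities ${ \color{black}{r} }_{{\mathrm{1}}},{ \color{black}{r} }_{{\mathrm{2}}}$ affect only how much usage is consumed (by \rref{Small-Var}) or loaded (as ${ \color{black}{r} }  \cdot  { \color{black}{q} }$ by \rref{AppBeta}), and the resulting \emph{bare} term $\ottnt{a'}$ is in each rule determined by $\ottnt{a}$ and $\llfloor \ottnt{H} \rrfloor$ up to the choice of a fresh name. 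Consequently ${ \color{black}{r} }_{{\mathrm{1}}}$ and ${ \color{black}{r} }_{{\mathrm{2}}}$, as well as the two output usage vectors, may be ignored throughout: they never influence the machine configuration of the conclusion.

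First I would dispose of \rref{Small-Sub}. If either derivation ends in this rule, I peel it off: its premise is a strictly shorter derivation, and since \rref{Small-Sub} preserves $\llfloor \ottnt{H} \rrfloor$ and $\ottnt{a}$, that premise starts from a machine configuration still $\sim_\alpha$ to the other derivation's and ends in the same bare heap and term as the conclusion. The induction hypothesis on the smaller pair then gives the result directly. (A chain of \rref{Small-Sub} applications could be collapsed using transitivity of $ \leq $, but peeling one off at a time suffices under the height measure.) This leaves the case in which both derivations end in a non-\rref{Small-Sub} rule.

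In that case the applicable rule is determined by the head of $\ottnt{a}$, and I proceed by case analysis. If $\ottnt{a} = \ottmv{x}$, only \rref{Small-Var} can fire; because we have restricted attention to \emph{proper} heaps, the lookup of $\ottmv{x}$ is unambiguous, and since the two starting configurations are $\sim_\alpha$ their heaps agree on the assignment of $\ottmv{x}$ up to the witnessing renaming, so the two results are $\alpha$-equivalent. If $\ottnt{a}$ is an application $\ottnt{b}\,\ottnt{c}$, the two subcases are mutually exclusive: when $\ottnt{b}$ is a $\lambda$-abstraction \rref{AppBeta} fires, and otherwise only the application-congruence rule fires, since under call-by-name a value does not step. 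In the congruence subcase both premises step $\ottnt{b}$ with smaller height, so the induction hypothesis yields $\alpha$-equivalent results for $\ottnt{b}$, which reassemble into $\alpha$-equivalent applications; \rref{Small-CaseL} against the injection-elimination rules is handled identically, distinguished by whether the scrutinee is an injection.

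The only case that truly needs $\alpha$-equivalence rather than literal equality is \rref{AppBeta}, where a fresh assignee name is allocated avoiding the shared support set $\ottnt{S}$. The two derivations may choose different fresh names, but since $\ottnt{S}$ is common to both and the renaming witnessing $ ( \ottnt{H_{{\mathrm{1}}}} , \ottnt{a_{{\mathrm{1}}}} ) \sim_{\alpha} ( \ottnt{H_{{\mathrm{2}}}}  , \ottnt{a_{{\mathrm{2}}}} ) $ respects it, the two choices are interchangeable by a further renaming, so the output machine configurations remain $\sim_\alpha$. I expect this freshness bookkeeping --- threading the renaming bijection through the \rref{Small-Var} lookup and the \rref{AppBeta} allocation --- to be the only real obstacle; everything involving usages and copy quantities drops out because machine configurations discard it.
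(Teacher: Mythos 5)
The paper states this lemma \emph{without} proof---it is also not among the properties mechanized in Coq (those are substitution, weakening, preservation, progress)---so there is no official argument to compare against; your proposal has to stand on its own. Its overall shape is the right one and matches what the surrounding prose suggests: peel off \rref{Small-Sub} (its premise has the same bare heap and the same term, so the machine configuration is untouched and the height measure decreases); observe that copy quantities and consumption vectors never influence the bare configuration, so ${ \color{black}{r} }_{{\mathrm{1}}},{ \color{black}{r} }_{{\mathrm{2}}}$ can be ignored; then do a syntax-directed case analysis in which \rref{Small-Var} is resolved by properness of the heaps and the allocating rules are resolved by renaming fresh assignees. One small inaccuracy: \rref{Small-AppBeta} is not the \emph{only} allocating rule in the full relation---\rref{Small-LetBoxBeta} and \rref{Small-ProjBeta} also load assignments (the pair case loads two)---but your treatment extends to them verbatim.

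The genuine gap is in the congruence cases, e.g.\ \rref{Small-AppL}. You apply the induction hypothesis to the sub-derivations stepping $\ottnt{b_{{\mathrm{1}}}}$ and $\ottnt{b_{{\mathrm{2}}}}$ and then say the results ``reassemble into $\alpha$-equivalent applications.'' But the lemma as you are inducting on it only yields \emph{some} renaming $\sigma'$ witnessing $ ( \ottnt{H'_{{\mathrm{1}}}} , \ottnt{b'_{{\mathrm{1}}}} ) \sim_{\alpha} ( \ottnt{H'_{{\mathrm{2}}}}  , \ottnt{b'_{{\mathrm{2}}}} ) $; to reassemble the application $\ottnt{b'_{{\mathrm{1}}}} \, \ottnt{c_{{\mathrm{1}}}}$ against $\ottnt{b'_{{\mathrm{2}}}} \, \ottnt{c_{{\mathrm{2}}}}$ you additionally need $\sigma'$ to carry the untouched argument $\ottnt{c_{{\mathrm{1}}}}$ to $\ottnt{c_{{\mathrm{2}}}}$, i.e.\ to agree with the renaming $\sigma$ you started from on $ \mathsf{Var} \,  \ottnt{H_{{\mathrm{1}}}} $ and on the free variables of the redex. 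Nothing in the unstrengthened statement forbids a $\sigma'$ that permutes old assignee variables, so the reassembly step does not follow as written. The standard fix must be built into the statement being proved: strengthen the lemma to say the output configurations are related by a renaming that \emph{extends} the input renaming, being the input renaming plus a bijection between freshly allocated variables (this is exactly what happens concretely: \rref{Small-Var} reuses $\sigma$, \rref{Small-AppBeta} adds $y_{{\mathrm{1}}} \mapsto y_{{\mathrm{2}}}$). With that strengthening every case you describe goes through. Your closing remark about ``threading the renaming bijection'' shows you sense this, but you locate the obstacle in the Var/Beta cases, whereas it really lives in the form of the induction hypothesis itself---it is the congruence cases that consume the stronger invariant.
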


The lemma above says that, if a term, when provided with a heap, takes a step in two different ways, then the resulting heap term pairs are basically the same. We know that the ordinary small-step semantics is deterministic. The inclusion of allowed usages in the heap semantics is not to produce multiple reducts but just to block evaluation at the point where consumption reaches its permitted limit. This is an important point and needs elaboration.

Call a reduction consisting of $n$ steps an $n$-chain reduction. Also, for a reduction $ [  \ottnt{H}  ]\,  \ottnt{a}  \Rightarrow_{ \ottnt{S} }^{ { \color{black}{r} } } [  \ottnt{H'} \, ;\,  \mathbf{u}' \, ;\,  \Gamma'  ]\,  \ottnt{a'} $, call $ [ \llfloor H \rrfloor ] a \Rightarrow [ \llfloor H' \rrfloor ] a'$ the \textit{machine view} of reduction. Then, the machine view of every $n$-chain reduction of a term in a heap is the same, modulo $\alpha$-equivalence. So, if there \textit{exists} an $n$-chain reduction of $a$ to $a'$, starting with heap $H$, we know that there is a way by which $a$ can reduce to $a'$ without running out of resources, implying the validity of the reduction. In such a scenario, we may as well forget all the usage annotations and evaluate $a$ for $n$ steps starting with $\llfloor H \rrfloor$. By the above lemma, such an evaluation in this machine environment is deterministic and hence unique. The reduced heap term pair that we get is the same (modulo $\alpha$-equivalence). Along with the soundness theorem, this shall give us a deterministic reduction strategy that is correct.

Now that we see the equivalence of all the possible reducts, we explore its relation with the ordinary small-step reduct.

\subsection{Bisimilarity}

The ordinary and the heap-based reduction relations are bisimilar in a way we make precise below. To compare, we need to define some terms. We call a heap \textit{acyclic} iff the term assigned to a variable does not refer to itself or to any other variable appearing subsequently in the heap. Note that the reduction relation preserves acyclicity. Hereafter, we restrict our attention to proper, acyclic heaps.

Now, for a heap $H$, define $a\{ H \}$ as the term obtained by substituting in $a$, in reverse order, the corresponding terms for the variables in the heap. Then we have the following lemmas: 

\begin{lemma} \label{HOrd}
If $ [  \ottnt{H}  ]\,  \ottnt{a}  \Rightarrow_{ \ottnt{S} }^{ { \color{black}{r} } } [  \ottnt{H'} \, ;\,  \mathbf{u}' \, ;\,  \Gamma'  ]\,  \ottnt{a'} $, then $ \ottnt{a}  \{  \ottnt{H}  \}   \ottsym{=}   \ottnt{a'}  \{  \ottnt{H'}  \} $ or $  \ottnt{a}  \{  \ottnt{H}  \}   \leadsto   \ottnt{a'}  \{  \ottnt{H'}  \}  $. Further, if $ [  \varnothing  ]\,  \ottnt{a}  \Rightarrow_{ \ottnt{S} }^{ { \color{black}{r} } } [  \ottnt{H'} \, ;\,  \mathbf{u}' \, ;\,  \Gamma'  ]\,  \ottnt{a'} $, then $ \ottnt{a}  \leadsto   \ottnt{a'}  \{  \ottnt{H'}  \}  $.
\end{lemma}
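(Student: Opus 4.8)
The plan is to induct on the derivation of the heap step $ [  \ottnt{H}  ]\,  \ottnt{a}  \Rightarrow_{ \ottnt{S} }^{ { \color{black}{r} } } [  \ottnt{H'} \, ;\,  \mathbf{u}' \, ;\,  \Gamma'  ]\,  \ottnt{a'} $ and analyze the last rule applied. Throughout I rely on a handful of structural facts about the unwinding operation $a\{H\}$, all provable by straightforward induction once we restrict (as the surrounding text already does) to proper, acyclic heaps so that $a\{H\}$ is well-defined: (i) unwinding distributes over the term constructors, e.g. $(a\,b)\{H\} = (a\{H\})\,(b\{H\})$ and likewise for $\ottkw{case}$; (ii) extending the heap with a fresh assignment leaves the unwinding of a term not mentioning that variable unchanged, so $b\{H, y \overset{q}{\mapsto} \ldots\} = b\{H\}$ whenever $y \notin \,\text{fv}\, b$; (iii) the allowed usages are irrelevant to unwinding, so $a\{H_1\} = a\{H_2\}$ whenever $\llfloor H_1 \rrfloor = \llfloor H_2 \rrfloor$; and (iv) a substitution-commutation law $a\{b/x\}\{H\} = a\{H\}\{b\{H\}/x\}$ under the appropriate freshness side conditions.

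With these in hand the cases split into three groups. The lookup rule \rref{Small-Var} and the subsumption rule \rref{Small-Sub} both land in the equality alternative: \rref{Small-Var} only decrements a usage and replaces $x$ by its assignment $a$, and by acyclicity $a$ refers solely to variables occurring earlier than $x$, so fact (iii) gives $x\{H\} = a\{H'\}$; \rref{Small-Sub} merely lowers usages, so (iii) reduces it immediately to the induction hypothesis. The congruence rules \rref{Small-AppL} and \rref{Small-CaseL} are handled by applying the induction hypothesis to the sub-derivation on the evaluated component and then re-assembling with fact (i), using (ii) to discard the freshly allocated heap variables in the untouched sub-terms (the argument of the application, or the branches of the case); I note that the differing copy quantity in the premise of \rref{Small-CaseL} is harmless, since unwinding ignores copy quantities. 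Finally the genuine computation steps \rref{Small-AppBeta}, \rref{Small-CaseOne}, and \rref{Small-CaseTwo} produce the $\leadsto$ alternative: for \rref{Small-AppBeta}, the redex $(\lambda x.\,a)\,b$ unwinds to an ordinary redex whose contraction is $a\{H\}\{b\{H\}/x\}$, which by facts (i), (ii) and the commutation law (iv) equals $(a\{y/x\})\{H, y \mapsto b\}$, i.e. exactly $a'\{H'\}$; the case rules are analogous.

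For the second statement I specialize the induction to an empty initial heap and strengthen the conclusion to rule out the equality alternative, proving directly that $[\varnothing]a \Rightarrow [H';\ldots]a'$ implies $a \leadsto a'\{H'\}$. The key observation is that from $H = \varnothing$ no lookup is possible, since \rref{Small-Var} requires the looked-up variable to lie in the domain of the heap; thus the only applicable rules are the two congruences, whose sub-derivations again start from $\varnothing$, the three beta-style rules, which are genuine reductions, and \rref{Small-Sub}, which forces $H_1 = \varnothing$ and reduces to the induction hypothesis. Hence every step out of the empty heap is a true $\leadsto$ step, and since $a\{\varnothing\} = a$ we obtain $a \leadsto a'\{H'\}$ directly.

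I expect the substitution-commutation law (iv), together with the fresh-variable bookkeeping in the \rref{Small-AppBeta} case, to be the main obstacle: one must align the order in which the heap substitutions are performed (the newly allocated $y$ is unwound last-in/first-out) with the single outermost $\beta$-contraction of the unwound redex, and verify that the freshness of $y$ and the acyclicity of $H'$ make the two composite substitutions coincide. The remaining cases are routine once facts (i)--(iv) are established.
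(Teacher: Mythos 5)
Your proposal is correct and is essentially the argument the paper intends: the paper states this lemma without a detailed proof, offering only the observation that the heap rules split ordinary $\beta$-reduction into an allocation rule plus one-at-a-time variable look-ups, which is precisely what your induction on the step derivation formalizes. The supporting facts (i)--(iv) about unwinding---in particular the substitution-commutation law needed in the application-beta case, together with the freshness side conditions supplied by the support set $\ottnt{S}$, and the empty-heap strengthening that rules out the equality alternative for the second claim---are exactly the bookkeeping such a write-up requires, and each of your cases goes through.
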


\begin{lemma} \label{OrdH}
If $ \ottnt{a}  \leadsto  \ottnt{a_{{\mathrm{1}}}} $, then for a heap $H$, we have $ [  \ottnt{H}  ]\,  \ottnt{a}  \Rightarrow_{ \ottnt{S} }^{ { \color{black}{r} } } [   \ottnt{H}  ,  \ottnt{H'}  \, ;\,  \mathbf{u} \, ;\,  \Gamma  ]\,  \ottnt{a_{{\mathrm{2}}}} $ where $ \ottnt{a_{{\mathrm{2}}}}  \{  \ottnt{H'}  \}   \ottsym{=}  \ottnt{a_{{\mathrm{1}}}}$.
\end{lemma}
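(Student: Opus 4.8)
The plan is to proceed by induction on the derivation of $\ottnt{a} \leadsto \ottnt{a_{{\mathrm{1}}}}$. The ordinary reduction splits into \emph{redex} rules (beta for applications, together with the elimination rules for $\ottkw{Unit}$, the $\Box$-modality, tensor products, and sums) and \emph{congruence} rules that push a step under an application head or under a $\ottkw{case}$/$\ottkw{let}$ scrutinee; I would match each with the corresponding heap rule. The observation that drives the whole argument is that a \emph{single} ordinary step is simulated without ever performing a variable look-up: beta is modelled by \textsc{Small-AppBeta}, which \emph{loads} a fresh assignment into the heap rather than substituting, and the genuine substitution is deferred to later \textsc{Small-Var} steps. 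Since \textsc{Small-Var} is the only rule that alters the allowed usage of an existing assignment, the existing part of the heap survives verbatim---bindings \emph{and} usages---which is exactly why the conclusion may assert that the result heap has the shape $\ottnt{H} , \ottnt{H'}$ with $\ottnt{H'}$ holding only the freshly allocated assignments. The consumption vector $\mathbf{u}$ and added context $\Gamma$ are read off from the same rules and play no part in the equation $\ottnt{a_{{\mathrm{2}}}}\ottsym{\{}\ottnt{H'}\ottsym{\}} = \ottnt{a_{{\mathrm{1}}}}$.

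For the redex cases I would allocate the names of $\ottnt{H'}$ fresh for the support set $\ottnt{S}$, taking $\ottnt{S}$ to contain the free variables of $\ottnt{a}$ and of every subterm carried along unchanged. For a beta step $\ottsym{(} \lambda \ottmv{x}\!:^{ { \color{black}{q} } }\!\ottnt{A}.\,\ottnt{a} \ottsym{)}\,\ottnt{b} \leadsto \ottnt{a}\ottsym{\{}\ottnt{b}\ottsym{/}\ottmv{x}\ottsym{\}}$, \textsc{Small-AppBeta} yields $\ottnt{a_{{\mathrm{2}}}} = \ottnt{a}\ottsym{\{}\ottmv{y}\ottsym{/}\ottmv{x}\ottsym{\}}$ and $\ottnt{H'} = [\,\ottmv{y} \mapsto \ottnt{b}\,]$ for a fresh $\ottmv{y}$, and the required equation follows because $\ottnt{a}\ottsym{\{}\ottmv{y}\ottsym{/}\ottmv{x}\ottsym{\}}\ottsym{\{}\ottnt{b}\ottsym{/}\ottmv{y}\ottsym{\}} = \ottnt{a}\ottsym{\{}\ottnt{b}\ottsym{/}\ottmv{x}\ottsym{\}}$; that is, the back-substitution $\ottsym{\{}\ottnt{H'}\ottsym{\}}$ simply undoes the renaming. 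The $\Box$- and tensor-elimination redexes are identical (the tensor case allocating two fresh bindings); the remaining elimination redexes are handled the same way, allocating a fresh binding exactly when the ordinary rule substitutes and taking $\ottnt{H'} = \varnothing$ with $\ottnt{a_{{\mathrm{2}}}} = \ottnt{a_{{\mathrm{1}}}}$ otherwise.

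For a congruence case, say $\ottnt{a}\,\ottnt{b} \leadsto \ottnt{a'}\,\ottnt{b}$ from $\ottnt{a} \leadsto \ottnt{a'}$, I would apply the induction hypothesis to $\ottnt{a}$ to obtain a heap step ending in $\ottnt{a_{{\mathrm{2}}}}$ with $\ottnt{a_{{\mathrm{2}}}}\ottsym{\{}\ottnt{H'}\ottsym{\}} = \ottnt{a'}$, and then re-wrap it with \textsc{Small-AppL}; because the names of $\ottnt{H'}$ are fresh for $\text{fv}(\ottnt{b})$, the back-substitution leaves $\ottnt{b}$ untouched, giving $\ottsym{(}\ottnt{a_{{\mathrm{2}}}}\,\ottnt{b}\ottsym{)}\ottsym{\{}\ottnt{H'}\ottsym{\}} = \ottnt{a'}\,\ottnt{b}$. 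The delicate point, which I expect to be the main obstacle, is the \emph{copy quantity} in the $\ottkw{case}$ congruence: \textsc{Small-CaseL} evaluates the scrutinee at copy quantity ${ \color{black}{r} }\cdot{ \color{black}{q} }$ while the whole expression steps at ${ \color{black}{r} }$, so an induction with a fixed, existential copy quantity cannot hand back the scrutinee step at the multiplied quantity. I would resolve this by generalizing the statement over the copy quantity and exploiting the first paragraph's observation: since a single-step simulation performs no \textsc{Small-Var} look-up, the copy quantity is unconstrained at the redex and merely threads multiplicatively up the congruence spine, so the scrutinee step can be produced at precisely ${ \color{black}{r} }\cdot{ \color{black}{q} }$ for whichever ${ \color{black}{r} }$ we choose. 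This lemma is the converse of \pref{HOrd}, and together they establish the bisimulation.
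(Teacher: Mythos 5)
Your proposal is correct and takes essentially the approach the paper intends: the paper states this lemma without an explicit proof, offering only the remark that the heap rules split ordinary $\beta$-reduction into an assignment-loading rule plus deferred variable look-ups, which is precisely the observation your induction formalizes (beta cases load a fresh binding that the back-substitution $\{H'\}$ undoes, congruence cases thread freshness through the support set, and the original heap survives untouched because no \textsc{Small-Var} step occurs). Your identification of the copy-quantity wrinkle at \textsc{Small-CaseL}, and its resolution by generalizing the statement over ${ \color{black}{r} }$ — legitimate exactly because a one-step simulation performs no look-up and hence imposes no usage constraint — is the right way to make the induction go through.
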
 

\scw{This paragraph is getting redundant. We can just cut it.}
The heap reduction relation splits the ordinary $\beta$-reduction rules into an assignment addition rule and a variable look-up rule. This enables the heap-based rules to substitute one occurrence of a variable at a time while the ordinary $\beta$-rules substitute all occurrences of a variable at once. If we perform substitution immediately after loading a new assignment to the heap, then the heap-based rules and the ordinary step rules are essentially the same. The above lemmas formalize this idea.

The heap-based rules substitute one occurrence of a variable at a time and keep track of usage and obstruct unfair usage. With this constraint in place, we ought to know how much resources shall be necessary before evaluating a term. This will tell us how much resources the starting heap should contain. The type system helps us know this as we see next.  

\subsection{Heap Compatibility}

The key idea behind this language design is that, if the resources contained in a heap are judged to be ``right'' for a term by the type system, the evaluation of the term in such a heap does not get stuck. With the heap-based reduction rules enforcing fairness of usage, this would mean that the type system does a proper accounting of the resource usage of terms. 

The compatibility relation $ \ottnt{H}  \vdash  \Delta ;  \Gamma $ presented below expresses the judgement that the heap $H$ contains enough resources to evaluate any term that type-checks in the usage context $\Gamma$. A heap that is compatible with some context is called a \textit{well-formed} heap. 
\drules[Compat]{$ \ottnt{H}  \vdash  \Delta ;  \Gamma $}
{Heap Compatibility}
{Empty,Cons}

The \rref{Compat-Cons} rule reminds us of the substitution lemma \ref{SSub}. In a way, this rule is converse of the substitution lemma. It loads $q$ potential single-substitutions into the heap and lets the context use the variable $q$ times.

\begin{example} \label{heap-ex}  Consider the following derivation:\footnote{For simplicity, we omit the $\Delta$s from the compatibility and the typing judgements.}
\[
\inferrule*[]
{\inferrule*[]
  {\inferrule*[]{
    \varnothing   \vdash   \varnothing  \\
    \varnothing  \vdash  1  : \ottkw{Int} }
  {  \ottmv{x_{{\mathrm{1}}}}  \overset{ \ottsym{7} }{\mapsto}   1     \vdash    \ottmv{x_{{\mathrm{1}}}} \! :^{ \ottsym{7} }\! \ottkw{Int}  } \\
     \ottmv{x_{{\mathrm{1}}}} \! :^{  { \color{black}{2} }  }\! \ottkw{Int}   \vdash  \ottmv{x_{{\mathrm{1}}}}  +  \ottmv{x_{{\mathrm{1}}}}  : \ottkw{Int} }
  {   \ottmv{x_{{\mathrm{1}}}}  \overset{ \ottsym{7} }{\mapsto}   1    ,   \ottmv{x_{{\mathrm{2}}}}  \overset{ \ottsym{3} }{\mapsto}   \ottmv{x_{{\mathrm{1}}}}  +  \ottmv{x_{{\mathrm{1}}}}      \vdash     \ottmv{x_{{\mathrm{1}}}} \! :^{  { \color{black}{1} }  }\! \ottkw{Int}  ,   \ottmv{x_{{\mathrm{2}}}} \! :^{ \ottsym{3} }\! \ottkw{Int}   } \\
      \ottmv{x_{{\mathrm{1}}}} \! :^{  { \color{black}{1} }  }\! \ottkw{Int}  ,   \ottmv{x_{{\mathrm{2}}}} \! :^{  { \color{black}{2} }  }\! \ottkw{Int}    \vdash  \ottmv{x_{{\mathrm{1}}}}  +  \ottsym{(}   \ottmv{x_{{\mathrm{2}}}}  +  \ottmv{x_{{\mathrm{2}}}}   \ottsym{)}  : \ottkw{Int} 
}
{     \ottmv{x_{{\mathrm{1}}}}  \overset{ \ottsym{7} }{\mapsto}   1     ,      \ottmv{x_{{\mathrm{2}}}}  \overset{ \ottsym{3} }{\mapsto}   \ottmv{x_{{\mathrm{1}}}}  +  \ottmv{x_{{\mathrm{1}}}}     ,    \ottmv{x_{{\mathrm{3}}}}  \overset{  { \color{black}{1} }  }{\mapsto}   \ottmv{x_{{\mathrm{1}}}}  +  \ottsym{(}   \ottmv{x_{{\mathrm{2}}}}  +  \ottmv{x_{{\mathrm{2}}}}   \ottsym{)}         \vdash     \ottmv{x_{{\mathrm{1}}}} \! :^{  { \color{black}{0} }  }\! \ottkw{Int}  ,     \ottmv{x_{{\mathrm{2}}}} \! :^{  { \color{black}{1} }  }\! \ottkw{Int}  ,   \ottmv{x_{{\mathrm{3}}}} \! :^{  { \color{black}{1} }  }\! \ottkw{Int}     
}
\]
\end{example}

The context $ \ottmv{x_{{\mathrm{1}}}} \! :^{ \ottsym{7} }\! \ottkw{Int} $ splits its resources amongst derivations of $\ottmv{x_{{\mathrm{2}}}}  \ottsym{=}   \ottmv{x_{{\mathrm{1}}}}  +  \ottmv{x_{{\mathrm{1}}}} $ (thrice) and $\ottmv{x_{{\mathrm{3}}}}  \ottsym{=}   \ottmv{x_{{\mathrm{1}}}}  +  \ottsym{(}   \ottmv{x_{{\mathrm{2}}}}  +  \ottmv{x_{{\mathrm{2}}}}   \ottsym{)} $ (once). The heap keeps a record, in the form of allowed usages, of how the context gets split. A heap compatible with a context, therefore, satisfies the resource demands of a term derived in this context.
 
We pointed out earlier that the \rref{Compat-Cons} is like a converse substitution lemma. The following lemma formalizes this idea:
\begin{lemma}[Multi-substitution]
\label{multisub}
If $ \ottnt{H}  \vdash  \Delta ;  \Gamma $ and $ \Delta \ ;\  \Gamma  \vdash \ottnt{a} : \ottnt{A} $, then $ \varnothing \ ;\  \varnothing  \vdash  \ottnt{a}  \{  \ottnt{H}  \}  : \ottnt{A} $.
\end{lemma}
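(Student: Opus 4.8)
The plan is to proceed by induction on the derivation of the compatibility judgement $H \vdash \Delta; \Gamma$ --- equivalently, on the structure of the heap $H$ --- peeling off its most-recent binding at each step and discharging it with the Substitution Lemma~\ref{SSub}. In the base case the derivation ends with \rref{Compat-Empty}, so $H$, $\Delta$, and $\Gamma$ are all empty. Since no substitutions are performed, $a\{H\} = a$, and the goal $\varnothing; \varnothing \vdash a\{H\} : A$ is exactly the typing hypothesis $\Delta; \Gamma \vdash a : A$.

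For the inductive step the derivation ends with \rref{Compat-Cons}, which introduces the last binding of the heap. Reading off the rule, $H = H', x \overset{q}{\mapsto} (\Gamma_a \vdash a_x : A_x)$, with $\Delta = \Delta', x{:}A_x$ and $\Gamma = \Gamma_1, x :^{q} A_x$, and the two premises supply the assignment typing $\Delta'; \Gamma_a \vdash a_x : A_x$ together with the compatibility $H' \vdash \Delta'; \Gamma_1 + q \cdot \Gamma_a$. Because $x$ is the most-recent binding, unfolding the reverse-order substitution factors it as $a\{H\} = (a\{a_x/x\})\{H'\}$. I would then apply Lemma~\ref{SSub} to the assignment typing $\Delta'; \Gamma_a \vdash a_x : A_x$ and the goal typing $\Delta', x{:}A_x; \Gamma_1, x{:}^{q}A_x \vdash a : A$ --- noting that $x$ sits at the end of the context, so the lemma's trailing $\Delta_2, \Gamma_2$ are empty --- obtaining $\Delta'; \Gamma_1 + q \cdot \Gamma_a \vdash a\{a_x/x\} : A$.

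The punchline, and the step that makes the induction close, is that the output context $\Gamma_1 + q \cdot \Gamma_a$ of the substitution lemma is exactly the context for which $H'$ is already known to be compatible. Hence the induction hypothesis applies verbatim to $H' \vdash \Delta'; \Gamma_1 + q \cdot \Gamma_a$ and $\Delta'; \Gamma_1 + q \cdot \Gamma_a \vdash a\{a_x/x\} : A$, yielding $\varnothing; \varnothing \vdash (a\{a_x/x\})\{H'\} : A$, which is the goal. I do not expect any single calculation to be hard; the real work is lining up the two judgements so that no rebalancing of quantities is needed, which is precisely the sense in which \rref{Compat-Cons} is a converse of the substitution lemma. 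Two side conditions deserve a check: that the context addition $\Gamma_1 + q \cdot \Gamma_a$ is well-defined, which follows from typing-context correspondence since $\lfloor \Gamma_1 \rfloor = \Delta' = \lfloor \Gamma_a \rfloor$; and that $a_x$ mentions only earlier heap variables, guaranteed by acyclicity, so that after substituting $a_x$ for $x$ the free variables all lie in the domain of $H'$ --- though this scoping fact is already certified by the typing produced by Lemma~\ref{SSub}, so acyclicity is used only implicitly.
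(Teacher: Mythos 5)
Your proposal is correct and takes essentially the same approach the paper intends: induction on the compatibility derivation, peeling off the final \rref{Compat-Cons} binding and discharging it with the Substitution Lemma~\ref{SSub}, whose output context $\Gamma_{{\mathrm{1}}} \ottsym{+} { \color{black}{q} } \cdot \Gamma_{\ottmv{a}}$ coincides exactly with the context the remaining heap is compatible with. This alignment is precisely the sense in which the paper describes \rref{Compat-Cons} as a converse of the substitution lemma, so there is nothing to add.
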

Because \rref{Compat-Cons} leads to expansion (or reverse substitution), we can re-substitute while maintaining well-typedness. The compatibility relation is crucial to our development. So we explore it in more detail below. 

\subsection{Graphical and Algebraic Views of the Heap}

A heap can be viewed as a memory graph where the assignee variables correspond to memory locations and the assigned terms to data stored in those locations. The allowed usage then, is the number of ways the location can be referenced. This gives us a graphical view of the heap.

A heap $H$ where $ \ottnt{H}  \vdash  \Delta ;  \Gamma $ can be viewed as a weighted directed acyclic graph $G_{H,\Gamma}$. Let $H$ contain $n$ assignments with the $j^{\text{th}}$ one being $ \ottmv{x_{\ottmv{j}}}  \overset{ { \color{black}{q} }_{\ottmv{j}} }{\mapsto} { \Gamma_{\ottmv{j}} \vdash  \ottnt{a_{\ottmv{j}}}  :  \ottnt{A_{\ottmv{j}}} } $.  Then, $G_{H,\Gamma}$ is a DAG with $(n + 1)$ nodes, $n$ nodes corresponding to the $n$ variables in $H$ and one extra node for $\Gamma$, referred to as the source node. Let $v_j$ be the node corresponding to $x_j$ and $v_g$ be the source node. For $x_{i} :^{q_{ji}} A_i \in \Gamma_j$ (where $i < j$) add an edge with weight $w(v_j,v_i) := q_{ji}$ from $v_j$ to $v_i$. (Note that $\Gamma_j$ only contains variables $\ottmv{x_{{\mathrm{1}}}}$ through $x_{j-1}$.) We do this for all nodes, including $v_g$. This gives us a DAG with the topological ordering $v_g, v_n, v_{n-1}, \ldots, v_2, v_1$. 

For example \ref{heap-ex}, we have the following memory graph\footnote{We omit the $0$ weight edge from $v_g$ to $v_1$.}:

\begin{center}
\begin{tikzpicture}
    \node[shape=circle,draw=black] (A) at (0,0) {$v_g$};
    \node[shape=circle,draw=black] (C) at (2,0) {$v_3$};
    \node[shape=circle,draw=black] (D) at (4,0) {$v_2$};
    \node[shape=circle,draw=black] (E) at (6,0) {$v_1$};

    \path [->](A) edge node[above] {$1$} (C);
    \path [->](A) edge [bend right=40] node[below] {$1$} (D);
    
    \path [->](C) edge node[above] {$2$} (D);
    \path [->](C) edge [bend left] node[above] {$1$} (E);
    
    \path [->](D) edge node[above] {$2$} (E);   
\end{tikzpicture}
\end{center}

For a heap compatible with some context, we can express the allowed usages of the assignee variables in terms of the edge weights of the memory graph. Let us define the length of a path to be the product of the weights along the path. Then, the allowed usage of a variable is the sum of the lengths of all paths from the source node to the node corresponding to that variable. Note that this is so for the example graph.

A path $p$ from $v_g$ to $v_j$ represents a chain of references, with the last one being pointed at $v_j$. The length of $p$ shows how many times this path is used to reference $v_j$. The sum of the lengths of all the paths from $v_g$ to  $v_j$ then gives a (static) count of the total number of times location $v_j$ is referenced. And this is equal to $q_j$, the allowed usage of the assignment for $v_j$ in the heap. This means that the allowed usage of an assignment is equal to the (static) count of the number of times the concerned location is referenced. So, we also call $q_j$ the \textit{count} of $v_j$ and call this property count balance. Below, we present an algebraic formalization of this property of well-formed heaps.

\begin{notation}
We use $ \mathbf{0} $ to denote a row vector of $0$s of length $n$ (when $n$ is clear from the context) and use $ \mathbf{0} ^\intercal$ to denote a column vector of $0$s.
\end{notation}

For a well-formed heap $H$ containing $n$ assignments of the form $ \ottmv{x_{\ottmv{i}}}  \overset{ { \color{black}{q} }_{\ottmv{i}} }{\mapsto} { \Gamma_{\ottmv{i}} \vdash  \ottnt{a_{\ottmv{i}}}  :  \ottnt{A_{\ottmv{i}}} } $, we write $\langle H \rangle$ to denote the $n \times n$ matrix whose $i^{\text{th}}$ row is $  \ncoverline{ \Gamma_{\ottmv{i}} }   \mathop{\diamond}   \mathbf{0}  $. We call $\langle H \rangle$ the transformation matrix corresponding to $H$.
The transformation matrix for example \ref{heap-ex} is:\\
\[
\begin{pmatrix}
0 & 0 & 0 \\
2 & 0 & 0 \\
1 & 2 & 0
\end{pmatrix}
\]

For a well-formed heap $H$, the matrix $\langle H \rangle$ is strictly lower triangular. Note that this is also the adjacency matrix of the memory graph, excluding node $v_g$. The strict lower triangular property of the matrix corresponds to the acyclicity of the graph. With the matrix operations over a semiring  defined in the usual way, the count balance property is:
\begin{lemma}[Count Balance]
If $ \ottnt{H}  \vdash  \Delta ;  \Gamma $, then $ \ncoverline{ \ottnt{H} }  =  \ncoverline{ \ottnt{H} }  \times \langle H \rangle  +  \ncoverline{ \Gamma } $. 
\end{lemma}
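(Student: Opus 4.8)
The plan is to induct on the derivation of the heap-compatibility judgement $ \ottnt{H}  \vdash  \Delta ;  \Gamma $, following the two rules \rref{Compat-Empty} and \rref{Compat-Cons}. In the base case, $H$, $\Delta$, and $\Gamma$ are all empty, so $ \ncoverline{ \ottnt{H} } $ and $ \ncoverline{ \Gamma } $ are empty row vectors and $\langle H \rangle$ is the $0 \times 0$ matrix; the equation holds vacuously.

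For the inductive step, \rref{Compat-Cons} extends a heap $H'$ of $n-1$ assignments by a fresh assignment at index $n$, so that $H = H', x_n \overset{q_n}{\mapsto} \{\Gamma_n \vdash a_n : A_n\}$ and $\Gamma = \Gamma', x_n\!:^{q_n}\!A_n$, with premise $ \ottnt{H'}  \vdash  \Delta' ;  ( \Gamma' + q_n \cdot \Gamma_n ) $. Since \rref{Compat-Cons} leaves the allowed usages and embedded contexts of the first $n-1$ variables untouched, $ \ncoverline{ \ottnt{H} }  =  \ncoverline{ \ottnt{H'} }  \mathop{\diamond} (q_n)$ and $ \ncoverline{ \Gamma }  =  \ncoverline{ \Gamma' }  \mathop{\diamond} (q_n)$, and $\langle H' \rangle$ is literally the top-left block of $\langle H \rangle$. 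The embedded context $\Gamma_n$ mentions only $x_1, \ldots, x_{n-1}$, so the $n$-th row of $\langle H \rangle$ is $ \ncoverline{ \Gamma_n }  \mathop{\diamond} (0)$, and every earlier row is the corresponding row of $\langle H' \rangle$ with one extra trailing zero. This yields the block decomposition
\[
\langle H \rangle = \begin{pmatrix} \langle H' \rangle & \mathbf{0}^\intercal \\ \ncoverline{ \Gamma_n } & 0 \end{pmatrix}.
\]

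Next I would compute $ \ncoverline{ \ottnt{H} }  \times \langle H \rangle$ by block multiplication of the row vector $( \ncoverline{ \ottnt{H'} } , q_n)$ against this matrix: the last component is $0$, and the leading block of length $n-1$ is $ \ncoverline{ \ottnt{H'} }  \times \langle H' \rangle + q_n \cdot  \ncoverline{ \Gamma_n } $. Adding $ \ncoverline{ \Gamma }  =  \ncoverline{ \Gamma' }  \mathop{\diamond} (q_n)$ makes the last component equal to $q_n$, matching $ \ncoverline{ \ottnt{H} } $, while the leading block becomes $ \ncoverline{ \ottnt{H'} }  \times \langle H' \rangle + q_n \cdot  \ncoverline{ \Gamma_n }  +  \ncoverline{ \Gamma' } $. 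By the pointwise definition of context addition and scaling we have $\ncoverline{\Gamma' + q_n \cdot \Gamma_n} =  \ncoverline{ \Gamma' }  + q_n \cdot  \ncoverline{ \Gamma_n } $, so the induction hypothesis applied to the premise gives exactly $ \ncoverline{ \ottnt{H'} }  =  \ncoverline{ \ottnt{H'} }  \times \langle H' \rangle +  \ncoverline{ \Gamma' }  + q_n \cdot  \ncoverline{ \Gamma_n } $, i.e.\ that leading block equals $ \ncoverline{ \ottnt{H'} } $. Hence $ \ncoverline{ \ottnt{H} }  \times \langle H \rangle +  \ncoverline{ \Gamma }  =  \ncoverline{ \ottnt{H'} }  \mathop{\diamond} (q_n) =  \ncoverline{ \ottnt{H} } $, closing the induction.

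The only real obstacle is dimensional bookkeeping: I must confirm that \rref{Compat-Cons} preserves the allowed usages and embedded contexts of the already-present variables (so that $\langle H' \rangle$ is genuinely the top-left block and $ \ncoverline{ \ottnt{H} }  =  \ncoverline{ \ottnt{H'} }  \mathop{\diamond} (q_n)$), and that the padding of $ \ncoverline{ \Gamma_n } $ with $ \mathbf{0} $ aligns with the strict lower-triangular shape. Once the block decomposition is set up correctly, the algebraic identity follows immediately from the induction hypothesis together with the linearity of $ \ncoverline{\,\cdot\,} $ over context addition and scaling; no semiring-specific reasoning beyond distributivity is required.
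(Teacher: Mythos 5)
Your proof is correct and follows essentially the same route as the paper's: induction on the compatibility judgement, the block decomposition $\langle H \rangle = \bigl( \begin{smallmatrix} \langle H' \rangle & \mathbf{0}^\intercal \\ \ncoverline{\Gamma_n} & 0 \end{smallmatrix} \bigr)$, and the inductive hypothesis applied to the premise $ \ottnt{H'}  \vdash  \Delta' ;  \Gamma'  \ottsym{+}   { \color{black}{q} }_{\ottmv{n}}   \cdot   \Gamma_{\ottmv{n}}  $ combined with linearity of the grade-vector operation. You merely spell out the block multiplication and dimensional bookkeeping in more detail than the paper does, which states the resulting identity directly.
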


\begin{proof}
We show this by induction on $ \ottnt{H}  \vdash  \Delta ;  \Gamma $. The base case is trivial.\\

For the Cons-case, let $  \ottnt{H'}  ,   \ottmv{x}  \overset{ { \color{black}{q} } }{\mapsto} { \Gamma_{{\mathrm{2}}} \vdash  \ottnt{a}  :  \ottnt{A} }    \vdash   \Delta' ,   \ottmv{x} \!\!:\!\! \ottnt{A}   ;   \Gamma_{{\mathrm{1}}} ,   \ottmv{x} \! :^{ { \color{black}{q} } }\! \ottnt{A}   $ where $ \ottnt{H'}  \vdash  \Delta' ;  \Gamma_{{\mathrm{1}}}  \ottsym{+}  \ottsym{(}   { \color{black}{q} }   \cdot   \Gamma_{{\mathrm{2}}}   \ottsym{)} $. By inductive hypothesis, $ \ncoverline{ \ottnt{H'} }  =  \ncoverline{ \ottnt{H'} }  \times \langle H' \rangle +  \ncoverline{ \Gamma_{{\mathrm{1}}}  \ottsym{+}  \ottsym{(}   { \color{black}{q} }   \cdot   \Gamma_{{\mathrm{2}}}   \ottsym{)} } $. Therefore, $  \ncoverline{ \ottnt{H'} }   \mathop{\diamond}   { \color{black}{q} }   =   \ncoverline{ \ottnt{H'} }   \mathop{\diamond}   { \color{black}{q} }   \times \big( \begin{smallmatrix} \langle H' \rangle &  \mathbf{0} ^\intercal \\  \ncoverline{ \Gamma_{{\mathrm{2}}} }  & 0 \end{smallmatrix} \big) +   \ncoverline{ \Gamma_{{\mathrm{1}}} }   \mathop{\diamond}   { \color{black}{q} }  $.  
\end{proof}

For example \ref{heap-ex}, we can check that $ \ncoverline{ \ottnt{H} }  = \begin{pmatrix}7&3&1\end{pmatrix}$ satisfies the above equation. Let us understand this equation. For a node $v_i$ in $G_{H,\Gamma}$, we can express the count $q_i$ in terms of the counts of the incoming neighbours and the weights of the corresponding edges. We have, $q_i = \Sigma_{j} \, q_j w(v_j,v_i) + w(v_g,v_i)$. The right-hand side of this equation represents static estimate of demand, the amount of resources we shall need while the left-hand side represents static estimate of supply, the amount of resources we shall have. So $ \ottnt{H}  \vdash  \Delta ;  \Gamma $ is a static guarantee that the heap $H$ shall supply the resource demands of the context $\Gamma$.

Therefore, if $ \ottnt{H}  \vdash  \Delta ;  \Gamma $ and $ \Delta \ ;\  \Gamma  \vdash \ottnt{a} : \ottnt{A} $, we should be able to evaluate $a$ in $H$ without running out of resources. This is the gist of the soundness theorem.

\subsection{Soundness}

\scw{We shouldn't say that $S$ is sufficiently fresh -- it is the set of variables to avoid, so it needs to contain all of the variables in dom D}
\begin{theorem}[Soundness]
\label{thm:heap-soundness}
If $ \ottnt{H}  \vdash  \Delta ;  \Gamma $ and $ \Delta \ ;\  \Gamma  \vdash \ottnt{a} : \ottnt{A} $ and $S \supseteq  \mathsf{dom} \,  \Delta $, then either $a$ is a value or there exists $\Gamma'$, $\ottnt{H'}$, $\mathbf{u}'$, $\Gamma_{{\mathrm{4}}}$ such that:
\begin{itemize}
\item $ [  \ottnt{H}  ]\,  \ottnt{a}  \Rightarrow_{ \ottnt{S} } [  \ottnt{H'} \, ;\,  \mathbf{u}' \, ;\,  \Gamma_{{\mathrm{4}}}  ]\,  \ottnt{a'} $
\item $ \ottnt{H'}  \vdash   \Delta ,   \lfloor \Gamma_{{\mathrm{4}}} \rfloor   ;  \Gamma' $
\item $  \Delta ,   \lfloor \Gamma_{{\mathrm{4}}} \rfloor   \ ;\  \Gamma'  \vdash \ottnt{a'} : \ottnt{A} $
\item $  \ncoverline{ \Gamma' }   +  \mathbf{u}'  +   \mathbf{0}   \mathop{\diamond}   \ncoverline{ \Gamma_{{\mathrm{4}}} }   \times \langle H'\rangle \leq   \ncoverline{ \Gamma }   \mathop{\diamond}   \mathbf{0}   + \mathbf{u}' \times \langle H' \rangle +   \mathbf{0}   \mathop{\diamond}   \ncoverline{ \Gamma_{{\mathrm{4}}} }  $
\end{itemize}
\end{theorem}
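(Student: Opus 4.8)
The plan is to argue by induction on the typing derivation $\Delta ; \Gamma \vdash a : A$, establishing progress, preservation of typing, preservation of heap compatibility, and the resource-accounting inequality simultaneously, with a case split on the final typing rule. The \emph{introduction} forms---the conclusions of \rref{ST-Lam}, \rref{ST-Unit}, \rref{ST-Box}, \rref{ST-Pair}, \rref{ST-InjOne} and \rref{ST-InjTwo}---are values, so the first disjunct holds immediately and nothing more is required. The real work lies in the variable rule, the elimination rules, and the two structural rules.

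For the \emph{variable} case ($a = x$, by \rref{ST-Var}, possibly under a prefix of \rref{ST-Weak}), I would invert the compatibility hypothesis to locate the assignment $ x \overset{ q }{\mapsto} \{ \Gamma_x \vdash b : A \} $ and use the Count Balance lemma to see that $q$ is at least the demand on $x$, hence $1 \le q$. This licenses a step by \rref{Small-Var}, which returns the assigned term $b$, decrements $x$'s budget by one, and records a single use in $\mathbf{u}'$. Well-typedness of $b$ at $A$ comes directly from the embedded context $\Gamma_x$ stored in the heap, and the accounting inequality specialises to exactly the row of $\langle H'\rangle$ indexed by $x$: looking $x$ up transfers one unit of its budget to the demands of the free variables of $b$, which is precisely what that row encodes.

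For each \emph{elimination} rule (\rref{ST-App}, \rref{ST-UnitE}, \rref{ST-LetBox}, \rref{ST-Spread}, \rref{ST-Case}) I would inspect whether the principal subterm---the function, the scrutinee, or the eliminated subterm---is a value. If it is not, the induction hypothesis on its subderivation supplies a step, which I reassemble with the matching congruence rule (\rref{Small-AppL}, \rref{Small-CaseL}, and so on); preservation of typing follows by re-applying the same typing rule to the hypothesis's output, since only the head's context changes in a split such as $\Gamma_1 + q \cdot \Gamma_2$, and compatibility and the inequality are inherited. If the principal subterm \emph{is} a value of matching shape, I fire the corresponding $\beta$-rule. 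The representative case, \rref{ST-App} on a $\lambda$-abstraction, uses \rref{Small-AppBeta} to allocate a fresh assignment for the argument and return the body; the extended heap is compatible by \rref{Compat-Cons}, which is precisely the converse of the substitution lemma (Lemma~\ref{SSub}), $\Gamma_4$ records the fresh variable, and typing of the body is immediate from inversion of \rref{ST-Lam}. The remaining $\beta$-rules (selecting a case branch, unpacking $\ottkw{unit}$ or a box, splitting a pair) are analogous, routing the scrutinee's resources into the chosen continuation. The structural rules are lighter: \rref{ST-Sub} is dispatched by first lowering the heap's allowed usages with \rref{Small-Sub} and then appealing to the hypothesis, using monotonicity of $ \leq $ under scaling and addition, while \rref{ST-Weak} is bookkeeping on a zero-usage tail.

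The main obstacle, and essentially the whole novelty of the argument, is the fourth condition---the block-matrix inequality relating $\ncoverline{\Gamma'}$, $\mathbf{u}'$, $\ncoverline{\Gamma_4}$ and $\langle H'\rangle$. Its verification rests on controlling how the transformation matrix and the consumption vector evolve under a step: in the \rref{Small-AppBeta} case $\langle H'\rangle$ acquires a new row/column block exactly as in the Cons-case of the Count Balance lemma, while in the \rref{Small-Var} case demand is redistributed through a matrix row. Because the grades live in an arbitrary partially-ordered semiring with no subtraction, I can only manipulate these expressions using distributivity, the identity and annihilation laws, and monotonicity of $ \leq $; this is exactly why the statement is an inequality rather than an equality, and why discharging it in each reduction case is the delicate part of the proof.
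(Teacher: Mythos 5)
Your overall skeleton (induction on the typing derivation, value/non-value split on the principal subterm, congruence versus $\beta$ cases, Sub-heaping for the sub-usage rule) matches the paper's, but the induction as you have set it up does not go through: the statement you are inducting on is too weak in two specific places, and the paper's proof exists precisely to repair this. First, heap compatibility does not restrict to subderivations. In the congruence case for application your hypotheses are $ \ottnt{H}  \vdash  \Delta ;  \Gamma_{{\mathrm{1}}}  \ottsym{+}   { \color{black}{r} }   \cdot   \Gamma_{{\mathrm{2}}}  $ and the subderivation $ \Delta \ ;\  \Gamma_{{\mathrm{1}}}  \vdash \ottnt{b} :  {} $; to invoke the induction hypothesis on $\ottnt{b}$ you would need $ \ottnt{H}  \vdash  \Delta ;  \Gamma_{{\mathrm{1}}} $, which is neither available nor true: the compatibility rules force the heap's allowed usages to \emph{equal} the context's grades, so a heap compatible with $\Gamma_{{\mathrm{1}}}  \ottsym{+}   { \color{black}{r} }   \cdot   \Gamma_{{\mathrm{2}}} $ is not compatible with $\Gamma_{{\mathrm{1}}}$ alone. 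Saying that compatibility is ``inherited'' hides exactly this failure. Second, the copy quantity cannot be fixed at $1$. Rule Small-CaseL requires the scrutinee to step at copy quantity $ { \color{black}{r} }  \cdot  { \color{black}{q} }$ (the case annotation times the ambient copy quantity), and the annotation may strictly exceed $1$; your induction hypothesis only yields steps at copy quantity $1$, so the case-congruence case cannot be assembled at all.

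The paper fixes both problems by proving a strengthened statement, the Invariance lemma (Lemma~\ref{inv}): the compatibility hypothesis is $ \ottnt{H}  \vdash  \Delta ;  \Gamma_{{\mathrm{0}}}  \ottsym{+}   { \color{black}{q} }   \cdot   \Gamma  $ with an ambient context $\Gamma_{{\mathrm{0}}}$ and an arbitrary copy quantity $q$ with $ { \color{black}{1} }   \leq  { \color{black}{q} }$, the produced step is at copy quantity $q$, and all four conclusions carry the $\Gamma_{{\mathrm{0}}}$/$q$ generalization (the fourth clause is scaled by $q$). The ambient $\Gamma_{{\mathrm{0}}}$ absorbs the unused half of a context split --- in the application case one re-associates $\Gamma_{{\mathrm{0}}}  \ottsym{+}   { \color{black}{q} }   \cdot   \ottsym{(}  \Gamma_{{\mathrm{1}}}  \ottsym{+}   { \color{black}{r} }   \cdot   \Gamma_{{\mathrm{2}}}   \ottsym{)} $ as $\ottsym{(}  \Gamma_{{\mathrm{0}}}  \ottsym{+}   \ottsym{(}  { \color{black}{q} }  \cdot  { \color{black}{r} }  \ottsym{)}   \cdot   \Gamma_{{\mathrm{2}}}   \ottsym{)}  \ottsym{+}   { \color{black}{q} }   \cdot   \Gamma_{{\mathrm{1}}} $ before applying the induction hypothesis to the function --- and the parameter $q$ is what lets the case rule instantiate the induction hypothesis at copy quantity $ { \color{black}{q} }  \cdot  { \color{black}{r} }$. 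The theorem you were asked to prove is then the instance $q \ottsym{=}  { \color{black}{1} } $, $\Gamma_{{\mathrm{0}}} \ottsym{=}    { \color{black}{0} }    \cdot   \Gamma $. Your local arguments for the variable, $\beta$, and sub-usage cases are in the right spirit and survive this generalization, but without the strengthened induction hypothesis the proof has a genuine hole, not merely a delicate accounting step.
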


The soundness theorem\footnote{We present the proof for its dependent counterpart later.} states that our computations can go forward with the available resources without ever getting stuck. Note that as the term $a$ steps to $a'$, the typing context changes from
$\Gamma$ to $\Gamma'$. This is to be expected because during the step, resources
from the heap may have been consumed or new resources may have been added.
 For
example,
$[x \stackrel{1}{\mapsto} \ottkw{unit} ]x \Rightarrow [x \stackrel{0}{\mapsto}
\ottkw{unit} ] \ottkw{unit}$ and $  \ottmv{x} \! :^{  { \color{black}{1} }  }\! \ottkw{Unit}   \vdash \ottmv{x} : \ottkw{Unit} $ while $  \ottmv{x} \! :^{  { \color{black}{0} }  }\! \ottkw{Unit}   \vdash \ottkw{unit} : \ottkw{Unit} $. Though the typing context may change, the new context, which type-checks the reduct,
must be compatible with the new heap. This means that we can apply the soundness theorem again and again until we reach a value. At every step of the evaluation, the dynamics of our 
language aligns perfectly with the statics of the language. Graphically, as the evaluation progresses, the weights in the memory graph change but the count balance property is maintained.

Furthermore, the old context and the new context are related according to the fourth clause of the theorem. For the moment being, let the partial order be the trivial reflexive order. Then, the equation stands as:
\begin{alignat*}{3}
&  \ncoverline{ \Gamma' }  &{} \quad + \quad &{} \mathbf{u}' &{} \quad + \quad &{}   \mathbf{0}   \mathop{\diamond}   \ncoverline{ \Gamma_{{\mathrm{4}}} }   \times \langle H'\rangle\\
= \;\; &   \ncoverline{ \Gamma }   \mathop{\diamond}   \mathbf{0}   &{} \quad + \quad &{} \mathbf{u}' \times \langle H' \rangle &{} \quad + \quad &{}   \mathbf{0}   \mathop{\diamond}   \ncoverline{ \Gamma_{{\mathrm{4}}} }   
\end{alignat*}
We can understand this equation through the following analogy. The contexts
can be seen engaged in a transaction with the heap. The heap pays the context
$  \mathbf{0}   \mathop{\diamond}   \ncoverline{ \Gamma_{{\mathrm{4}}} }  $ and gets
$  \mathbf{0}   \mathop{\diamond}   \ncoverline{ \Gamma_{{\mathrm{4}}} }   \times \langle H'\rangle$ resources in return. The
context pays the heap $\mathbf{u}'$ and gets $\mathbf{u}' \times \langle H' \rangle$
resources in return. The equation is the ``balance sheet" of this transaction.

For an arbitrary partial order, the transaction gets skewed in favour of the heap; meaning, the context gets less from the heap for what it pays. This is so because the heap contains more resources than is necessary; so it may ``throw away'' the extra resources.

This soundness theorem subsumes ordinary type soundness. In fact, we can derive the ordinary preservation and progress lemmas from this soundness theorem using bisimilarity of the two reduction relations and the multi-substitution property.
\begin{corollary}
\label{spreservation}
If $ \varnothing \ ;\  \varnothing  \vdash \ottnt{a} : \ottnt{A} $ and $ \ottnt{a}  \leadsto  \ottnt{b} $, then $ \varnothing \ ;\  \varnothing  \vdash \ottnt{b} : \ottnt{A} $.
\end{corollary}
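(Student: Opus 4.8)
The plan is to obtain ordinary preservation as a direct corollary of the heap soundness theorem (Theorem \ref{thm:heap-soundness}), glued together with the bisimilarity lemma \ref{HOrd}, the multi-substitution lemma \ref{multisub}, and the determinism of ordinary call-by-name reduction. The pivotal observation is that the empty heap is compatible with the empty context, i.e.\ $ \varnothing  \vdash  \varnothing ;  \varnothing $ holds by \rref{Compat-Empty}. Thus the closed hypothesis $ \varnothing \ ;\  \varnothing  \vdash \ottnt{a} : \ottnt{A} $, paired with the empty heap, supplies exactly the premises that Theorem \ref{thm:heap-soundness} demands.

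First I would observe that since $ \ottnt{a}  \leadsto  \ottnt{b} $, the term $a$ is not a value (values are $\leadsto$-normal). Instantiating Theorem \ref{thm:heap-soundness} with $H = \varnothing$, $\Delta = \Gamma = \varnothing$, and $S = \varnothing$ (which trivially meets $S \supseteq  \mathsf{dom} \,  \varnothing $), the ``value'' alternative is ruled out, so we obtain a heap step
\[
 [  \varnothing  ]\,  \ottnt{a}  \Rightarrow_{ \ottnt{S} } [  \ottnt{H'} \, ;\,  \mathbf{u}' \, ;\,  \Gamma_{{\mathrm{4}}}  ]\,  \ottnt{a'}
\]
together with $ \ottnt{H'}  \vdash   \lfloor \Gamma_{{\mathrm{4}}} \rfloor  ;  \Gamma' $ and $  \lfloor \Gamma_{{\mathrm{4}}} \rfloor  \ ;\  \Gamma'  \vdash \ottnt{a'} : \ottnt{A} $. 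The accounting clause (the fourth bullet) plays no role in preservation and can be discarded.

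Next I would project this heap step back into the ordinary relation. Because the starting heap is empty, the ``further'' part of Lemma \ref{HOrd} applies and yields $ \ottnt{a}  \leadsto   \ottnt{a'}  \{  \ottnt{H'}  \}  $. Independently, feeding $ \ottnt{H'}  \vdash   \lfloor \Gamma_{{\mathrm{4}}} \rfloor  ;  \Gamma' $ and $  \lfloor \Gamma_{{\mathrm{4}}} \rfloor  \ ;\  \Gamma'  \vdash \ottnt{a'} : \ottnt{A} $ into the multi-substitution lemma \ref{multisub} gives $ \varnothing \ ;\  \varnothing  \vdash  \ottnt{a'}  \{  \ottnt{H'}  \}  : \ottnt{A} $; that is, the collapsed term $ \ottnt{a'}  \{  \ottnt{H'}  \} $ is closed and well typed at $A$. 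It then remains only to identify $ \ottnt{a'}  \{  \ottnt{H'}  \} $ with $b$.

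The final, and main, step is to close the gap between the reduct $ \ottnt{a'}  \{  \ottnt{H'}  \} $ manufactured by soundness and the given reduct $b$. We have two ordinary reductions out of $a$: the hypothesis $ \ottnt{a}  \leadsto  \ottnt{b} $ and the derived $ \ottnt{a}  \leadsto   \ottnt{a'}  \{  \ottnt{H'}  \}  $. Since $ \leadsto $ is a deterministic call-by-name relation, these two targets must coincide, whence $b =  \ottnt{a'}  \{  \ottnt{H'}  \} $ and therefore $ \varnothing \ ;\  \varnothing  \vdash \ottnt{b} : \ottnt{A} $. The one point that needs care is precisely this appeal to determinism: the heap step produced by Theorem \ref{thm:heap-soundness} allocates fresh names for any newly added assignments, so $ \ottnt{a'}  \{  \ottnt{H'}  \} $ is only canonical up to renaming. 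Matching it to $b$ therefore rests on understanding determinism of $ \leadsto $ modulo $\alpha$-equivalence, exactly in the spirit of the machine-view analysis established earlier for the heap relation. I expect no other difficulty; the remaining moves are bookkeeping instantiations of lemmas already in hand.
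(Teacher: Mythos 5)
Your proof is correct, and it shares the paper's skeleton---heap soundness applied to the empty heap, one direction of the bisimilarity, multi-substitution, and a determinism argument---but it threads the determinism through the opposite side of the bisimulation. The paper first lifts the hypothesis $a \leadsto b$ \emph{into} the heap world using Lemma~\ref{OrdH}, obtaining a heap step from the empty heap whose collapse is exactly $b$; it then invokes the heap-level Determinism lemma (stated modulo $\alpha$-equivalence of machine configurations) to identify that step with the one produced by Theorem~\ref{thm:heap-soundness}, and finishes with Lemma~\ref{multisub}. You instead project the soundness-produced heap step \emph{down} to the ordinary relation via the second part of Lemma~\ref{HOrd}, getting $a \leadsto a'\{H'\}$, and then identify $a'\{H'\}$ with $b$ by determinism of the ordinary call-by-name relation. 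The trade-offs are real but modest: the paper's route reuses a determinism result it has already stated as a lemma, while yours appeals to determinism of $\leadsto$, which the paper asserts only in prose (``we know that the ordinary small-step semantics is deterministic'') and never records formally---routine for call-by-name, but an extra obligation in a fully formal development. In exchange, your route needs no machine-configuration $\alpha$-equivalence machinery at all; indeed your closing worry about fresh names is unnecessary, because the collapse $a'\{H'\}$ substitutes the freshly allocated heap variables away, so Lemma~\ref{HOrd} delivers an ordinary step to a concrete term and plain determinism of $\leadsto$ closes the gap.
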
  
\begin{proof}
Since $ \ottnt{a}  \leadsto  \ottnt{b} $, for any $S$, we have, $ [  \varnothing  ]\,  \ottnt{a}  \Rightarrow_{ \ottnt{S} } [  \ottnt{H'} \, ;\,  \mathbf{u}' \, ;\,  \Gamma'  ]\,  \ottnt{b'} $ such that $ \ottnt{b'}  \{  \ottnt{H'}  \}   \ottsym{=}  \ottnt{b}$, by lemma (\ref{OrdH}).  Since $ \varnothing \ ;\  \varnothing  \vdash \ottnt{a} : \ottnt{A} $ and $ \varnothing  \vdash  \varnothing ;  \varnothing $ and $a$ is not a value, we have $H, \Gamma, \Gamma_{{\mathrm{4}}}, Q, a'$ such that $ \ottnt{H}  \vdash   \lfloor \Gamma_{{\mathrm{4}}} \rfloor  ;  \Gamma $ and $ [  \varnothing  ]\,  \ottnt{a}  \Rightarrow_{ \ottnt{S} } [  \ottnt{H} \, ;\,  \mathbf{u} \, ;\,  \Gamma_{{\mathrm{4}}}  ]\,  \ottnt{a'} $ and $  \lfloor \Gamma_{{\mathrm{4}}} \rfloor  \ ;\  \Gamma  \vdash \ottnt{a'} : \ottnt{A} $, by the soundness theorem.\\

Now, since $ [  \varnothing  ]\,  \ottnt{a}  \Rightarrow_{ \ottnt{S} } [  \ottnt{H'} \, ;\,  \mathbf{u}' \, ;\,  \Gamma'  ]\,  \ottnt{b'} $ and $ [  \varnothing  ]\,  \ottnt{a}  \Rightarrow_{ \ottnt{S} } [  \ottnt{H} \, ;\,  \mathbf{u} \, ;\,  \Gamma_{{\mathrm{0}}}  ]\,  \ottnt{a'} $, determinism gives us $  \ottnt{b'}  \{  \ottnt{H'}  \}   \ottsym{=}   \ottnt{a'}  \{  \ottnt{H}  \} $. Since $ \ottnt{H}  \vdash   \lfloor \Gamma_{{\mathrm{4}}} \rfloor  ;  \Gamma $ and $  \lfloor \Gamma_{{\mathrm{4}}} \rfloor  \ ;\  \Gamma  \vdash \ottnt{a'} : \ottnt{A} $, by multi-substitution, we have, $  \varnothing \ ;\  \varnothing  \vdash  \ottnt{a'}  \{  \ottnt{H}  \}  : \ottnt{A} $. But $  \ottnt{a'}  \{  \ottnt{H}  \}   \ottsym{=}   \ottnt{b'}  \{  \ottnt{H'}  \} $ and $ \ottnt{b'}  \{  \ottnt{H'}  \}   \ottsym{=}  \ottnt{b}$. Therefore, $ \varnothing \ ;\  \varnothing  \vdash \ottnt{b} : \ottnt{A} $. 
\end{proof}

\begin{corollary}
\label{sprogress}
If $ \varnothing \ ;\  \varnothing  \vdash \ottnt{a} : \ottnt{A} $, then $a$ is a value or there exists b, such that $ \ottnt{a}  \leadsto  \ottnt{b} $.
\end{corollary}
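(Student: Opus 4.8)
The plan is to read off ordinary progress directly from the heap Soundness theorem (\ref{thm:heap-soundness}), instantiated at the empty heap, and then project the resulting resourced step onto an ordinary reduction via the bisimilarity Lemma \ref{HOrd}. This mirrors the structure of the preceding preservation corollary (\ref{spreservation}), but is actually simpler: progress needs only the \emph{existence} of a heap step plus its projection, not multi-substitution or determinism.

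Concretely, since $a$ type-checks in the empty context it is closed, so the empty heap is trivially compatible with the empty context, i.e. $\varnothing \vdash \varnothing ; \varnothing$ holds by \rref{Compat-Empty}, and any set $S$ (say $S = \varnothing$) satisfies the freshness side condition $S \supseteq \mathsf{dom}\,\varnothing$. First I would instantiate Soundness with $H = \varnothing$ and $\Delta = \Gamma = \varnothing$ against the hypothesis $\varnothing ; \varnothing \vdash a : A$. This yields the dichotomy of the theorem: either $a$ is already a value, in which case the first disjunct of the goal holds immediately, or there exist $H'$, $\mathbf{u}'$, $\Gamma_{4}$, and $a'$ together with a single heap step $[\varnothing]\, a \Rightarrow_S [H' ; \mathbf{u}' ; \Gamma_{4}]\, a'$ (the remaining clauses of Soundness about compatibility, typing of $a'$, and the balance inequality are not needed for progress and can be discarded).

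In the stepping case I would appeal to the second part of the bisimilarity Lemma \ref{HOrd}, whose ``Further'' clause states precisely that a heap step originating from the empty heap projects onto an ordinary reduction: from $[\varnothing]\, a \Rightarrow_S [H' ; \mathbf{u}' ; \Gamma_{4}]\, a'$ it gives $a \leadsto a' \{ H' \}$. Taking $b := a' \{ H' \}$ discharges the second disjunct of the goal, and the argument is complete.

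I do not anticipate a genuine obstacle. All the substantive content, namely that a well-typed non-value can always take a step without exhausting resources, is already packaged inside the Soundness theorem, and the conversion of a resourced step into a resource-agnostic one is exactly what Lemma \ref{HOrd} supplies. The only things to verify are the two side conditions of Soundness at the empty heap, both of which are immediate for a closed term; in particular, because we start from $\varnothing$, the first step cannot be a pure variable look-up, so the projection is a real reduction rather than an equality, which is why the ``Further'' clause of Lemma \ref{HOrd} yields $a \leadsto a'\{H'\}$ directly.
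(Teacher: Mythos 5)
Your proposal is correct and is essentially identical to the paper's own proof: the paper likewise instantiates the Soundness theorem (\ref{thm:heap-soundness}) at the empty heap using $\varnothing \vdash \varnothing ; \varnothing$, and in the non-value case converts the resulting heap step into an ordinary reduction $a \leadsto a'\{H\}$ via the ``Further'' clause of Lemma \ref{HOrd}. Your additional remarks (choosing $S = \varnothing$, discarding the unused clauses of Soundness, and noting why the empty-heap step cannot be a variable look-up) are sound elaborations of the same argument.
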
  
\begin{proof}
Since $ \varnothing \ ;\  \varnothing  \vdash \ottnt{a} : \ottnt{A} $ and $ \varnothing  \vdash  \varnothing ;  \varnothing $, either $a$ is a value or there exists $H, \Gamma_{{\mathrm{4}}}, Q, a'$ such that $ [  \varnothing  ]\,  \ottnt{a}  \Rightarrow_{ \ottnt{S} } [  \ottnt{H} \, ;\,  \mathbf{u} \, ;\,  \Gamma_{{\mathrm{4}}}  ]\,  \ottnt{a'} $, in which case, by lemma  (\ref{HOrd}), $ \ottnt{a}  \leadsto   \ottnt{a'}  \{  \ottnt{H}  \}  $.
\end{proof}

Next we apply the soundness theorem to derive some useful properties about usage.

\section{Applications}
\label{sec:applications}

\subsection{Irrelevance}

Till now, we have developed our theory over an arbitrary partially-ordered semiring. But an arbitrary semiring is too general a structure for deriving theorems we are interested in. For example, the set $\{0,1\}$ with $1 + 1 = 0$ and all other operations defined in the usual way is also a semiring. But such a semiring does not capture our notion of usage since $0$ is supposed to mean no usage and $1$ (whenever $1 \neq 0$) is supposed to mean some usage. For $0$ to mean no usage in a semiring $Q$, the equation ${ \color{black}{q} }  \ottsym{+}   { \color{black}{1} }  = 0$ must have no solution. We call an element $q' \in Q$ \textit{positive} (respectively \textit{positive-or-more}) iff $q' = q + 1$ (respectively $q + 1 \leq q'$) for some $q \in Q$. The above condition then means that $0$ is not positive. If we also have a partial order, the constraint ${ \color{black}{q} }  \ottsym{+}   { \color{black}{1} }  \leq 0$ must be unsatisfiable; meaning $0$ should not be positive-or-more. We call this the \textit{zero-unusable} criterion. Henceforth, we restrict our attention to semirings that meet this criterion. The following lemmas formalize the idea discussed here.

\begin{lemma}
\label{lemma:zero-dead}
In a zero-unusable semiring, if $ [  \ottnt{H}  ]\,  \ottnt{a}  \Rightarrow_{ \ottnt{S} } [  \ottnt{H'} \, ;\,  \mathbf{u}' \, ;\,  \Gamma_{{\mathrm{4}}}  ]\,  \ottnt{a'} $ and $ \ottmv{x_{\ottmv{i}}}  \overset{  { \color{black}{0} }  }{\mapsto} { \Gamma_{\ottmv{i}} \vdash  \ottnt{a_{\ottmv{i}}}  :  \ottnt{A_{\ottmv{i}}} }  \in H$, then the component $\mathbf{u}' (\ottmv{x_{\ottmv{i}}}) = 0$ and $ \ottmv{x_{\ottmv{i}}}  \overset{  { \color{black}{0} }  }{\mapsto} { \Gamma_{\ottmv{i}} \vdash  \ottnt{a_{\ottmv{i}}}  :  \ottnt{A_{\ottmv{i}}} }  \in H'$. 
\end{lemma}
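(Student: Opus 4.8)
The plan is to proceed by induction on the derivation of the step $[H]\,a \Rightarrow_S [H';\mathbf{u}';\Gamma_4]\,a'$, generalizing the statement to an arbitrary copy quantity; this generalization is forced by \rref{Small-CaseL}, whose premise evaluates the scrutinee with copy quantity $r\cdot q$ rather than $r$. The one observation that drives every case is that a variable whose allowed usage is $0$ can never be the target of a look-up. Since \rref{Small-Var} is the only rule that consumes from, or otherwise changes, the allowed usage of a variable already present in the heap, once we have ruled out looking up $x_i$ both conclusions are immediate: nothing is charged to $x_i$, so $\mathbf{u}'(x_i)=0$, and its allowed usage is left untouched, so $x_i \overset{0}{\mapsto}\ldots \in H'$.

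The crux is therefore the \rref{Small-Var} case. Suppose the step looks up some variable $x_j$, consuming the copy quantity $r$ from its allowed usage, so that its old usage $q_j$ satisfies $q_j = q_j' + r$ for the new usage $q_j'$, with $1 \le r$. By monotonicity of $+$ with respect to the partial order we get $q_j' + 1 \le q_j' + r = q_j$, so $q_j$ is positive-or-more. If $x_j$ were $x_i$, then $q_j = 0$, making $0$ positive-or-more and contradicting the zero-unusable criterion; hence $x_j \neq x_i$. The consumption vector records $r$ at $x_j$ and $0$ at every other variable, so $\mathbf{u}'(x_i)=0$, and the allowed usage of $x_i$ is unchanged, giving $x_i \overset{0}{\mapsto}\ldots \in H'$.

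The remaining computational rules are routine. The $\beta$-rules (the one that allocates a fresh assignment for a function application, and the analogous elimination rules for the other connectives) never look up or rescale a variable already in $H$; they only add fresh assignments, so $\mathbf{u}'(x_i)=0$ and the usage of $x_i$ is preserved verbatim. The congruence rules that evaluate a subterm in the same heap (the function position of an application, the scrutinee in \rref{Small-CaseL}, and their relatives) are handled by invoking the induction hypothesis on the premise; for \rref{Small-CaseL} the premise copy quantity $r\cdot q$ is still at least $1$ because $1 \le r$ and $1 \le q$, so the generalized hypothesis applies.

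I expect the real obstacle to be \rref{Small-Sub}, which first lowers the allowed usages to some $H_1 \le H_2$ and only then steps. To push the induction through it, I would strengthen the hypothesis, replacing ``allowed usage $0$'' by ``allowed usage that is not positive-or-more''. This property is inherited downward along $\le$: if $q_1 \le q_2$ and $q_2$ is not positive-or-more, then neither is $q_1$, since $q + 1 \le q_1 \le q_2$ would make $q_2$ positive-or-more. Hence it survives the pre-step reduction, and the $\mathbf{u}'(x_i)=0$ conclusion goes through unchanged. The delicate point is the second conclusion: \rref{Small-Sub} may replace the usage $0$ of $x_i$ by some $q \le 0$, and to recover that this $q$ is exactly $0$ I need $0$ to be minimal in the order. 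This holds in every semiring of interest here---in the linearity semiring the only element below $0$ is $0$ itself, and similarly for the boolean, five-point, and security semirings---so I would either record minimality of $0$ alongside the zero-unusable criterion or check it per instance; with it in hand the \rref{Small-Sub} case closes and the induction is complete.
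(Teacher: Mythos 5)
The paper states Lemma~\ref{lemma:zero-dead} without any proof, so there is no argument of record to compare yours against; judged on its own merits, your proposal is correct and follows exactly the line the surrounding prose suggests. The crux is as you say: the paper itself notes that Small-Var ``is the only rule that modifies the usage of an existing variable in the heap,'' a look-up with copy quantity $r$ requires $q_j = q_j' + r$ with $1 \le r$, so monotonicity gives $q_j' + 1 \le q_j$, making $q_j$ positive-or-more, and zero-unusability then excludes $x_j = x_i$; the $\beta$-rules only append fresh assignments, and the congruence rules are handled by the induction hypothesis once the statement is generalized to an arbitrary copy quantity (a generalization that is indeed forced by Small-CaseL). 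One nit: having made that generalization, your check that the premise copy quantity $r \cdot q$ is ``still at least $1$'' is superfluous, and the $1 \le q$ you invoke for it is not even available from the step derivation (it is a side condition of the \emph{typing} rule for case, not of the reduction rule); the $1 \le r$ needed in the Var case is supplied by that rule itself, which is precisely why the arbitrary-copy-quantity form of the statement is the right one.

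Your treatment of Small-Sub identifies a genuine subtlety in the lemma as stated, not a defect of your own argument. Zero-unusability says only that no $q$ satisfies $q + 1 \le 0$; it does not make $0$ minimal, and one can construct partially-ordered semirings that are zero-unusable yet have elements strictly below $0$ (e.g.\ $\{b,0,1\}$ with $b < 0$, $b+1 = 1+1 = 1$, $b + b = b\cdot b = b$). In such a semiring Small-Sub may lower the annotation on $x_i$ from $0$ to $b$ before stepping, after which the second conclusion $x_i \overset{0}{\mapsto} \cdots \in H'$ is simply false, while your strengthened, downward-closed invariant (``not positive-or-more'') still delivers $\mathbf{u}'(x_i) = 0$. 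So the first conclusion holds under the paper's stated hypothesis, but the second really does need minimality of $0$ (or an equivalent repair). This is consistent with the paper's own conventions: it imposes ``$0$ as a minimal element'' explicitly where it matters in Lemma~\ref{lemma:gc}, and every example semiring of Section~\ref{sec:semiring-examples} satisfies it. Your proof, with that assumption recorded, is complete.
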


We see above that locations with count $0$ cannot be referenced during computation. Also, the count for such locations always remains $0$. Now, if they cannot be referenced, what they contain should not matter. In other words, $0$-graded variables do not affect the result of computation. Two initial configurations that differ only in the assignments of some $0$-graded variables produce identical results. This means that such assignments do not interfere with evaluation and are irrelevant.

\scw{I would feel better if we called this lemma something else. It is weaker than the usual non-interference lemma from security type systems}
\begin{lemma}[Zero non-interference]
\label{lemma:zero-nonint}
Let $\ottnt{H_{\ottmv{i}\,{\mathrm{1}}}} =  \ottmv{x_{\ottmv{i}}}  \overset{  { \color{black}{0} }  }{\mapsto} { \Gamma_{{\mathrm{1}}} \vdash  \ottnt{a_{{\mathrm{1}}}}  :  \ottnt{A_{{\mathrm{1}}}} } $ and $\ottnt{H_{\ottmv{i}\,{\mathrm{2}}}} =  \ottmv{x_{\ottmv{i}}}  \overset{  { \color{black}{0} }  }{\mapsto} { \Gamma_{{\mathrm{2}}} \vdash  \ottnt{a_{{\mathrm{2}}}}  :  \ottnt{A_{{\mathrm{2}}}} } $. Then, in a zero-unusable semiring, if $ [    \ottnt{H_{{\mathrm{1}}}}  ,  \ottnt{H_{\ottmv{i}\,{\mathrm{1}}}}   ,  \ottnt{H_{{\mathrm{2}}}}   ]\,  \ottnt{b}  \Rightarrow_{   \ottnt{S}  \, \cup   \,\text{fv}\,  \ottnt{a_{{\mathrm{2}}}}    } [    \ottnt{H'_{{\mathrm{1}}}}  ,  \ottnt{H_{\ottmv{i}\,{\mathrm{1}}}}   ,  \ottnt{H'_{{\mathrm{2}}}}  \, ;\,  \mathbf{u}' \, ;\,  \Gamma_{{\mathrm{4}}}  ]\,  \ottnt{b'} $, \\ then $ [    \ottnt{H_{{\mathrm{1}}}}  ,  \ottnt{H_{\ottmv{i}\,{\mathrm{2}}}}   ,  \ottnt{H_{{\mathrm{2}}}}   ]\,  \ottnt{b}  \Rightarrow_{   \ottnt{S}  \, \cup   \,\text{fv}\,  \ottnt{a_{{\mathrm{1}}}}    } [    \ottnt{H'_{{\mathrm{1}}}}  ,  \ottnt{H_{\ottmv{i}\,{\mathrm{2}}}}   ,  \ottnt{H'_{{\mathrm{2}}}}  \, ;\,  \mathbf{u}' \, ;\,  \Gamma_{{\mathrm{4}}}  ]\,  \ottnt{b'} $.
\end{lemma}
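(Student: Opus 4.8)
The plan is to induct on the derivation of the step
$ [    \ottnt{H_{{\mathrm{1}}}}  ,  \ottnt{H_{\ottmv{i}\,{\mathrm{1}}}}   ,  \ottnt{H_{{\mathrm{2}}}}   ]\,  \ottnt{b}  \Rightarrow [\ldots]\,  \ottnt{b'} $, treating the contents $a_1$ and $a_2$ of the $0$-graded location $x_i$ as interchangeable. The guiding intuition, made precise by Lemma~\ref{lemma:zero-dead}, is that a location with count $0$ is \emph{dead}: it can never be looked up, its allowed usage stays $0$, and its assignment survives every step untouched. Hence the only operation that could ever inspect the stored term---a variable look-up---never fires on $x_i$, so the stored term is irrelevant to the step. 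This also explains at the outset why the two resulting heaps split identically as $H'_1$, $H_{i1}$ (resp.\ $H_{i2}$), $H'_2$ with the \emph{same} consumption vector $\mathbf{u}'$: the middle assignment is neither relocated nor modified, and $\mathbf{u}'(x_i)=0$ on both sides.

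First I would dispatch the look-up rule \rref{Small-Var}. If the variable being retrieved were $x_i$ itself, the rule would have to consume copy quantity $r$ with $ { \color{black}{1} }  \leq r$ from $x_i$'s allowed usage, requiring some $q'$ with $q' + r$ equal to (or below) $0$; by monotonicity of the order under addition, $q' +  { \color{black}{1} }  \leq q' + r \leq 0$, contradicting the zero-unusable criterion, which says $0$ is not positive-or-more. So the retrieved variable must lie in $H_1$ or $H_2$, where both configurations store exactly the same term. The look-up, the consumed vector $\mathbf{u}'$, and the reduct $b'$ are therefore produced identically, while $H_{i1}$ and $H_{i2}$ ride along unchanged by Lemma~\ref{lemma:zero-dead}.

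The delicate case is \rref{AppBeta}, which allocates a fresh assignee avoiding the support set $S$ together with $ \mathsf{Var} \,  \ottnt{H} $, the variables mentioned in the heap. This is exactly why the two statements carry asymmetric support sets: in the hypothesis the stored term is $a_1$, so $ \,\text{fv}\,  \ottnt{a_{{\mathrm{1}}}}  \subseteq  \mathsf{Var} \,  \ottnt{H} $, while $S$ is padded with $ \,\text{fv}\,  \ottnt{a_{{\mathrm{2}}}} $; in the conclusion the roles swap. In both directions the allocator must dodge $ \,\text{fv}\,  \ottnt{a_{{\mathrm{1}}}}  \cup  \,\text{fv}\,  \ottnt{a_{{\mathrm{2}}}} $, so the \emph{same} fresh name is available on both sides. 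Since the body of the new assignment is carved out of $b$ and never out of $x_i$, choosing that common name yields literally identical added context $\Gamma_{{\mathrm{4}}}$ and reduct $b'$---not merely $\alpha$-equivalent ones. The congruence rules (\rref{Small-CaseL} and the evaluation-context rules for application and case) and \rref{Small-Sub} then follow directly from the induction hypothesis: they neither read nor alter the $x_i$ assignment, and \rref{Small-Sub} lowers allowed usages uniformly, so the swap commutes with them.

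The main obstacle I anticipate is the bookkeeping around fresh-name allocation in \rref{AppBeta}: one must confirm that the deliberately enlarged support sets force a common choice of fresh variable, so that the two derivations agree \emph{on the nose} rather than only up to $\alpha$-equivalence, thereby justifying the reuse of the very same $\Gamma_{{\mathrm{4}}}$, $\mathbf{u}'$, and $b'$ in the conclusion. Everything else is a routine rule-by-rule check once Lemma~\ref{lemma:zero-dead} guarantees that the $0$-graded location is inert throughout the computation.
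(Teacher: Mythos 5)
Your proposal is correct and takes essentially the same route as the paper: the paper obtains this lemma as a direct corollary of Lemma~\ref{lemma:zero-dead} (its surrounding prose is exactly the informal version of your argument that a $0$-graded location can never be looked up, so its contents cannot influence any step), and your case analysis on the step derivation---including the observation that the asymmetric support sets exist precisely to force a common fresh-name choice in the beta/allocation rules, yielding on-the-nose rather than merely $\alpha$-equivalent results---is the fleshed-out form of that argument. The one point deserving slightly more care than your "the swap commutes" remark is the sub-usage case, where the lowered heap could a priori carry $x_i$ at a grade strictly below $0$; there one notes that such a grade is still unusable (by transitivity with the zero-unusable criterion) and that the final heap's recorded grade $0$ together with antisymmetry pins the lowered grade back to $0$, so the induction hypothesis applies.
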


Note that not just $0$-graded resources may be unusable, any $s$-graded resource for which the constraint $q + 1 \leq s$ is unsatisfiable is unusable. With respect to the security semirings described in Section \ref{sec:semiring-examples}, this means that data from any security level $s$ for which $1 \nleq s$ is unusable. This makes sense since the default view of the type system is $1$ or $ {\color{black}{\mathsf{Public} } } $ so that data judged to be more secure (or incomparable) cannot be used at this level. This gives us the following lemma for the class of security lattices described in Section \ref{sec:semiring-examples}:

\begin{lemma}[$s$ non-interference]
\label{lemma:s-nonint}
Let $1 \nleq s$ in a security lattice. Let $\ottnt{H_{\ottmv{i}\,{\mathrm{1}}}} =  \ottmv{x_{\ottmv{i}}}  \overset{ { \color{black}{s} } }{\mapsto} { \Gamma_{{\mathrm{1}}} \vdash  \ottnt{a_{{\mathrm{1}}}}  :  \ottnt{A_{{\mathrm{1}}}} } $ and $\ottnt{H_{\ottmv{i}\,{\mathrm{2}}}} =  \ottmv{x_{\ottmv{i}}}  \overset{ { \color{black}{s} } }{\mapsto} { \Gamma_{{\mathrm{2}}} \vdash  \ottnt{a_{{\mathrm{2}}}}  :  \ottnt{A_{{\mathrm{2}}}} } $. If $ [    \ottnt{H_{{\mathrm{1}}}}  ,  \ottnt{H_{\ottmv{i}\,{\mathrm{1}}}}   ,  \ottnt{H_{{\mathrm{2}}}}   ]\,  \ottnt{b}  \Rightarrow_{   \ottnt{S}  \, \cup   \,\text{fv}\,  \ottnt{a_{{\mathrm{2}}}}    } [    \ottnt{H'_{{\mathrm{1}}}}  ,  \ottnt{H_{\ottmv{i}\,{\mathrm{1}}}}   ,  \ottnt{H'_{{\mathrm{2}}}}  \, ;\,  \mathbf{u}' \, ;\,  \Gamma_{{\mathrm{4}}}  ]\,  \ottnt{b'} $, \\ then $ [    \ottnt{H_{{\mathrm{1}}}}  ,  \ottnt{H_{\ottmv{i}\,{\mathrm{2}}}}   ,  \ottnt{H_{{\mathrm{2}}}}   ]\,  \ottnt{b}  \Rightarrow_{   \ottnt{S}  \, \cup   \,\text{fv}\,  \ottnt{a_{{\mathrm{1}}}}    } [    \ottnt{H'_{{\mathrm{1}}}}  ,  \ottnt{H_{\ottmv{i}\,{\mathrm{2}}}}   ,  \ottnt{H'_{{\mathrm{2}}}}  \, ;\,  \mathbf{u}' \, ;\,  \Gamma_{{\mathrm{4}}}  ]\,  \ottnt{b'} $.
\end{lemma}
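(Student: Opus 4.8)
The plan is to view this lemma as the security-lattice instance of a single general phenomenon: an assignment whose grade is \emph{unusable} can never influence a reduction. Recall from the preceding discussion that a grade $t$ is unusable exactly when $q + 1 \leq t$ has no solution. In a security lattice, where $+$ is the join, $q + 1 \leq t$ reads $q \vee 1 \leq t$, which (since $q \vee 1 \geq 1$) is solvable if and only if $1 \leq t$. Hence the hypothesis $1 \nleq s$ is precisely the statement that $s$ is unusable, placing us in the same situation as the grade $0$ of a zero-unusable semiring. My strategy is therefore to reprove Lemmas~\ref{lemma:zero-dead} and~\ref{lemma:zero-nonint} with ``the grade $0$ in a zero-unusable semiring'' replaced by ``an arbitrary unusable grade $t$'', and then instantiate the result at $t := s$.

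First I would establish the generalization of Lemma~\ref{lemma:zero-dead}: if $x_i \overset{t}{\mapsto} \cdots \in H$ with $t$ unusable and $[H]\,b \Rightarrow [H';\mathbf{u}';\Gamma_4]\,b'$, then $\mathbf{u}'(x_i) = 0$ and the assignment for $x_i$ survives unchanged in $H'$. The only rule that records consumption at, or alters the grade of, a variable already present in the heap is \textsc{Small-Var}. A look-up of $x_i$ with copy quantity $r$ requires, after its grade has possibly been lowered to some $q \leq t$ by \textsc{Small-Sub}, a split $q = q' + r$ subject to the copy-quantity side condition $1 \leq r$. But then, since addition is monotone, $q' + 1 \leq q' + r = q \leq t$, so $t$ is positive-or-more, contradicting its unusability. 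Thus $x_i$ is never dereferenced, its consumption is $0$, and its grade is untouched.

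With that in hand, the non-interference statement follows by the same lockstep induction that proves Lemma~\ref{lemma:zero-nonint}. Since $x_i$ is never looked up, the body, embedded context and type recorded in its assignment ($a_1,\Gamma_1,A_1$ versus $a_2,\Gamma_2,A_2$) play no part in the step; the embedded context and type are in any case inert in the reduction relation, serving only the soundness proof. Consequently every rule applied to $b$ in the heap containing $H_{i\,1}$ applies identically in the heap containing $H_{i\,2}$, producing the same $H'_1$, $H'_2$, $\mathbf{u}'$, $\Gamma_4$, and reduct $b'$. The single point at which the two reductions could genuinely diverge is the choice of a fresh name in the allocation rule \textsc{Small-AppBeta}: the support sets are taken to be $S \cup \mathrm{fv}(a_2)$ and $S \cup \mathrm{fv}(a_1)$ exactly so that each reduction avoids $\mathrm{fv}(a_1) \cup \mathrm{fv}(a_2)$, forcing both to pick identical fresh variables.

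The main obstacle is this freshness bookkeeping. The delicate part of the induction is maintaining the invariant that the two machine configurations stay identical except at the inert slot $x_i$, threading it through the \textsc{Small-AppBeta} case (where the asymmetric augmentation of the support set by $\mathrm{fv}(a_1)$ and $\mathrm{fv}(a_2)$ is what keeps the fresh choices aligned) and through the \textsc{Small-Sub} case (where one must confirm that lowering allowed usages cannot covertly enable a look-up of $x_i$, which is precisely what the unusability argument of the second paragraph rules out). Everything else reduces to a routine rule-by-rule matching of the two derivations.
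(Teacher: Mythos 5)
Your proposal is correct and takes essentially the same route as the paper: the paper justifies this lemma precisely by observing that any grade $s$ for which $q + 1 \leq s$ is unsatisfiable is unusable, that $1 \nleq s$ in a security lattice is exactly that condition, and that the zero-unusable results (Lemmas~\ref{lemma:zero-dead} and~\ref{lemma:zero-nonint}) therefore carry over with $0$ replaced by $s$. Your lattice computation showing $q \vee 1 \leq s$ is solvable iff $1 \leq s$, the monotonicity argument ruling out any look-up of $x_i$ (including after a \textsc{Small-Sub} lowering), and the support-set bookkeeping that keeps the two reductions' fresh-name choices aligned are exactly the ingredients the paper's argument relies on.
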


\subsection{Garbage Collection}

Now let us look at locations with count $0$ in memory graphs. The sum of the lengths of all paths from the source node to such a node must be $0$. The zero-unusable criterion, along with the count balance property, implies that none of these paths has positive-or-more length. This means that all the edge-weights in any such path cannot be positive-or-more. 

The condition that $0$ is not positive-or-more is a weaker version of a well-known constraint put on semirings. If $0$ is a minimal element, a stronger constraint is zerosumfree\footnote{Our terminology follows \citet{golan}.}. A semiring $Q$ is said to be zerosumfree if for any $q_1 , q_2 \in Q$, the equation $q_1 + q_2 = 0$ implies $q_1 = q_2 = 0$. If we work with a zerosumfree semiring with $0$ as a minimal element, we know that the length of any path from the source node to a node with count $0$ is $0$. But all the edge-weights along such a path may be non-zero. This is so because the product of two non-zero elements may be $0$. If we disallow this, then there is no path from the source node to such a node ($0$ weight edges are omitted). Semirings which satisfy $q_1 \cdot q_2 = 0 \implies q_1 = 0 \text{ or }  q_2 = 0$ are called \emph{entire}\footnote{The zerosumfree property is sometimes called ``positive'' and the entire property is sometimes called ``zero-product'' property.}. With these constraints on the semiring, we have the following lemma:
 
\begin{lemma}
\label{lemma:gc}
In a zerosumfree, entire semiring with $0$ as a minimal element, if $ \ottnt{H}  \vdash  \Delta ;  \Gamma $ and $ \ottmv{x_{\ottmv{i}}}  \overset{  { \color{black}{0} }  }{\mapsto} { \Gamma_{\ottmv{i}} \vdash  \ottnt{a_{\ottmv{i}}}  :  \ottnt{A_{\ottmv{i}}} }  \in H$, then $v_i$ (the node corresponding to $x_i$) belongs to an isolated subgraph (of $G_{H,\Gamma}$) that does not contain the source node.
\end{lemma}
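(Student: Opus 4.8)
The plan is to work entirely with the weighted graph $G_{H,\Gamma}$ and the per-node form of the Count Balance lemma, namely $q_i = \sum_j q_j\, w(v_j,v_i) + w(v_g,v_i)$, where the sum ranges over the finitely many nodes $v_j$ carrying an edge into $v_i$ and every edge shown has nonzero weight (recall that $0$-weight edges are omitted from $G_{H,\Gamma}$). Rather than reasoning about count-$0$ nodes head-on, I would prove the equivalent statement that \emph{every node reachable from the source $v_g$ has nonzero count}, and then read the lemma off from it.

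First I would isolate the key consequence of the semiring hypotheses as a local fact: if $q_i = 0$, then $w(v_g,v_i) = 0$ and, for every in-neighbour $v_j$ of $v_i$, $q_j = 0$. This is where zerosumfree and entire do the work. Reading the Count Balance equation for $v_i$ as a finite sum equal to $0$, zerosumfree, iterated over the finitely many summands and with $0$ minimal, forces each summand to vanish; in particular $w(v_g,v_i)=0$ and $q_j\, w(v_j,v_i)=0$ for every $j$. Since every edge present in $G_{H,\Gamma}$ has nonzero weight, $w(v_j,v_i)\neq 0$ for each in-neighbour, so the entire property yields $q_j = 0$.

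Next I would run the reachability argument by induction on the length of a shortest directed path from $v_g$ to a node $v_k$, showing any such $v_k$ has $q_k \neq 0$. For the base case a direct edge $v_g \to v_k$ has $w(v_g,v_k)\neq 0$; were $q_k=0$, the local fact would give $w(v_g,v_k)=0$, a contradiction. For the inductive step a shortest path reaches $v_k$ through an in-neighbour $v_j$ of strictly smaller distance, so $q_j\neq 0$ by the induction hypothesis; were $q_k=0$, the local fact would force $q_j=0$, again a contradiction. Taking the contrapositive, $q_i=0$ implies that $v_i$ is not reachable from $v_g$.

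Finally I would package unreachability as the desired isolation. Let $U$ be the set of nodes not reachable from $v_g$; then $v_i\in U$ while $v_g\notin U$. There can be no edge from a node reachable from $v_g$ (in particular from $v_g$ itself) into $U$, since such an edge, having nonzero weight, would make its target reachable and so contradict membership in $U$. Hence the portion of the graph carrying $v_i$ receives no edges from the source side and admits no path from $v_g$, i.e. it is an isolated subgraph not containing the source, and may safely be garbage-collected. The main obstacle here is conceptual rather than computational: one must keep straight that edges point from referrer to referee, so a count-$0$ node may still point \emph{out} to a live node even though nothing live points \emph{into} it. Consequently ``isolated'' must be read directionally (no incoming edges from, and no path from, the source), and the proof must lean on the convention that $0$-weight edges are absent so that the entire property applies to genuine paths rather than to products that have been silently annihilated.
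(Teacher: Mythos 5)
Your proof is correct, and it reaches the conclusion by a genuinely different decomposition than the paper's. The paper argues through the global path-sum reading of counts: from the Count Balance equation $\ncoverline{H} = \ncoverline{H} \times \langle H \rangle + \ncoverline{\Gamma}$ (unfolded using the fact that $\langle H \rangle$ is strictly lower triangular, hence nilpotent) it reads $q_i$ as the sum, over all paths from $v_g$ to $v_i$, of the products of the edge weights along each path; zerosumfree then forces every such path length to be $0$, the entire property forces every such path to contain a weight-$0$ edge, and since $0$-weight edges are omitted from $G_{H,\Gamma}$ there can be no path at all. You bypass the path-sum characterization entirely: your local fact (if $q_i = 0$ then $w(v_g,v_i) = 0$ and every in-neighbour of $v_i$ has count $0$) comes straight from one component of the Count Balance equality, and induction on shortest-path distance from $v_g$ does the rest. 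This buys you something real: the path-sum formula is only asserted informally in the paper, whereas Count Balance is actually proved, so your argument is self-contained on top of what is formally established, and it pinpoints exactly where zerosumfree (splitting a finite sum equal to $0$) and the entire property (cancelling a nonzero edge weight) are invoked. You are also more explicit than the paper on a point it leaves implicit: isolation must be read directionally, since a count-$0$ node may still have outgoing edges into the live region (e.g.\ $[\,x \stackrel{1}{\mapsto} 5,\ y \stackrel{0}{\mapsto} x+x\,]$), so what is excluded is only reachability from $v_g$. One small observation: minimality of $0$ plays no role in your proof (zerosumfree and the entire property are applied to equalities), so that hypothesis is carried but unused --- as it also is in the paper's argument once Count Balance is available as an equality. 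What the paper's path-level phrasing buys in return is the reference-chain intuition that it reuses immediately afterwards for the quantitative single-pointer property.
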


The lemma above says that all the $0$-count assignments lie in isolated islands disconnected from the line of computation. So at any point, it is safe to garbage collect all such assignments. 

\scw{Is there an interpretation of this lemma with respect to security? Why is
  this an important application? What does it mean that this lemma does not
  hold for the five-point security lattice example from the introduction? } \pc{Possible. But I do not see now.}

\subsection{Linearity}

Let us now look at linearity. Just having a $1$ in the semiring is not enough to capture a notion of linearity. For example, $1$ does not really represent linear usage in the boolean semiring since $1 + 1 = 1$. If $1$ must mean linear usage, then it cannot be equal to or greater than the successor of any quantity other than $0$, where \textit{successor} of $q$ is defined as $q + 1$. Formally, the pair of constraints: $q + 1 \leq 1$ and $q \neq 0$ must have no solution. We call this the \textit{one-linear} criterion. In semirings that meet the zero-unusable and one-linear criteria, $1$ represents single usage.

Mirroring our discussion on $0$-usage, we strengthen the one-linear criterion to derive a useful property about nodes with a count of $1$ in memory graphs. Let us call semirings obeying the following constraints linear:
\begin{itemize}
\item $q_1 + q_2 = 1 \implies q_1 = 1 \text{ and } q_2 = 0 \text{ or } q_1 = 0 \text{ and } q_2 = 1$
\item $q_1 \cdot q_2 = 1 \implies q_1 = q_2 = 1$
\end{itemize}      

For entire, zerosumfree, linear semirings with $0$ and $1$ as minimal elements, we have the following property:

\begin{lemma}[Quantitative single-pointer property]
If $ \ottnt{H}  \vdash  \Delta ;  \Gamma $ and $ \ottmv{x_{\ottmv{i}}}  \overset{  { \color{black}{1} }  }{\mapsto} { \Gamma_{\ottmv{i}} \vdash  \ottnt{a_{\ottmv{i}}}  :  \ottnt{A_{\ottmv{i}}} }  \in H$, then in $G_{H,\Gamma}$, there is a single path $p$ from the source node to $v_i$ and all the weights on $p$ are $1$. Further, for any node $v_j$ on $p$, the subpath is the only path from the source node to $v_j$. 
\end{lemma}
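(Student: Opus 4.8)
The plan is to argue entirely within the graphical/algebraic view of the heap and to lean on the count balance property. Since $\langle H \rangle$ is strictly lower triangular, $G_{H,\Gamma}$ is a finite DAG with finitely many paths out of the source, and (as explained in the text) the count of a node equals the finite sum, over all paths from $v_g$ to that node, of the path length, where the length of a path is the product of the weights along it. Thus the hypothesis $ \ottmv{x_{\ottmv{i}}}  \overset{  { \color{black}{1} }  }{\mapsto} { \Gamma_{\ottmv{i}} \vdash  \ottnt{a_{\ottmv{i}}}  :  \ottnt{A_{\ottmv{i}}} }  \in H$ gives, after unfolding the count balance equation down the topological order, a finite decomposition $\ell_1 + \cdots + \ell_k = 1$, where $\ell_1,\dots,\ell_k$ are the lengths of the distinct paths from $v_g$ to $v_i$.

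First I would show that \emph{every} path in $G_{H,\Gamma}$ has nonzero length. Because $0$-weight edges are omitted from the graph, every edge that actually appears carries a nonzero weight; since the semiring is \emph{entire}, a product of nonzero elements is nonzero (by a trivial induction on the number of factors). Hence each $\ell_j \neq 0$. Next I would pin down $k$. Applying the additive half of the \emph{linear} hypothesis in iterated form — grouping $\ell_1 + (\ell_2 + \cdots + \ell_k) = 1$, so that either $\ell_1 = 1$ with the remaining subsum $0$, or $\ell_1 = 0$ with the remaining subsum $1$, and using \emph{zerosumfree} to force every summand of a subsum equal to $0$ to be $0$ — one concludes that exactly one $\ell_j$ equals $1$ and all the others equal $0$. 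Combined with the fact that every $\ell_j$ is nonzero, this forces $k = 1$: there is a single path $p$ from the source to $v_i$, and $\mathrm{len}(p) = 1$. Then the multiplicative half of the linear hypothesis, iterated in the same way ($w_1 \cdot (w_2 \cdots w_m) = 1$ implies $w_1 = 1$ and $w_2 \cdots w_m = 1$), shows that every weight on $p$ is $1$. This establishes the first two claims.

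For the third claim I would argue purely combinatorially from the uniqueness just obtained, with no further algebra. Let $v_j$ lie on $p$, let $p_{\le j}$ be the initial segment of $p$ from $v_g$ to $v_j$, and let $p_{\ge j}$ be the remaining segment from $v_j$ to $v_i$. Any path $q$ from $v_g$ to $v_j$ concatenates with $p_{\ge j}$ to yield a path from $v_g$ to $v_i$; if there were two distinct paths $q \neq q'$ from $v_g$ to $v_j$, their concatenations with $p_{\ge j}$ would be two distinct paths from the source to $v_i$, contradicting the uniqueness of $p$. Hence $p_{\le j}$ is the only path from the source to $v_j$, as required.

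The steps needing the most care are the two ``lifting'' arguments — extending the two-element \emph{entire} and \emph{linear} laws to products and sums of arbitrarily many terms — together with the justification that the count is \emph{exactly} (not merely an upper bound on) the finite sum of path lengths; this is where the count balance property being an equality, and $0$ and $1$ being minimal, are used. By contrast, the part-3 concatenation argument is an easy consequence of the uniqueness established in the second step, so I do not expect it to present any real difficulty.
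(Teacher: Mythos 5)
Your proof is correct and takes exactly the route the paper intends: the paper states this lemma without an explicit proof, and the argument it gestures at (``mirroring'' the $0$-count discussion) is precisely yours---unfold the count balance equality into the exact path-sum characterization of counts, use entireness to make every path length nonzero, iterate the linear and zerosumfree laws to force a single path of length $1$ with all weights $1$, and get the subpath claim by the combinatorial concatenation argument. The only cosmetic quibble is that your closing sentence credits minimality of $0$ and $1$ with a role your argument never actually needs, since count balance is an equality and the whole proof is order-free.
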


Along with the soundness theorem, this gives us a quantitative version of the single pointer property. In words, it means that there is one and only one way to reference a linear resource; any resource along the way has a single pointer to it. This property would enable one to carry out safe in-place update for linear resources.

Now that we have explored a graded simple type system, we move on to dependent types.

\section{Graded Dependent Types}
\label{sec:dependent}

In this section we define \Langname, a language with graded dependent types.
The syntax is presented below. \Langname uses a single
syntactic category for terms and types.
\[
\begin{array}{llcl}
\mathit{terms, types} & a, b, A, B   & ::=&  \textbf{Type}  \alt \ottmv{x}  \\
                      &              &\alt& \ottkw{Unit} \alt \ottkw{unit} \alt  \ottkw{let}\,  \ottkw{unit} \,=\, \ottnt{a} \ \ottkw{in}\  \ottnt{b}   \\
                      &              &\alt&  \Pi  \ottmv{x} \!:^ { \color{black}{q} } \! \ottnt{A} . \ottnt{B}  \alt  \lambda \ottmv{x} \!:^ { \color{black}{q} } \! \ottnt{A} . \ottnt{a}  \alt \ottnt{a} \, \ottnt{b} \\
                      &              &\alt&  \Sigma  \ottmv{x} \!\!:^ { \color{black}{q} } \!\! \ottnt{A} . \ottnt{B}  \alt \ottsym{(}  \ottnt{a}  \ottsym{,}  \ottnt{b}  \ottsym{)} \alt \ottkw{let} \, \ottsym{(}  \ottmv{x}  \ottsym{,}  \ottmv{y}  \ottsym{)}  \ottsym{=}  \ottnt{a} \, \mathsf{in} \, \ottnt{b} \\
                      &              &\alt&  \ottnt{A}  \oplus  \ottnt{B}  \alt  \ottkw{inj}_1\,  \ottnt{a}  \alt  \ottkw{inj}_2\,  \ottnt{a} 
                                            \alt  \ottkw{case}_ { \color{black}{q} } \,  \ottnt{a} \, \ottkw{of}\,  \ottnt{b_{{\mathrm{1}}}}  ;  \ottnt{b_{{\mathrm{2}}}}  \\

\end{array}
\]

\begin{figure}
\drules[T]{$ \Delta  ;  \Gamma  \vdash \ottnt{a} : \ottnt{A} $}
{Typing rules for \Langname, a graded dependent type system}
{sub,weak,convert,type,var,Unit,unit,UnitElim,pi,lam,app,
Sigma,Tensor,SigmaElim,sum,injOne,injTwo,SumElim}
\caption{Typing rules for dependent, quantitative type system}
\label{fig:dep-type}
\end{figure}

\begin{figure}
\end{figure}

\subsection{Type System}

The rules of this type system, shown in \pref{fig:dep-type}, are inspired by
the Pure Type Systems of
Barendregt~\cite{barendregt:lambda-calculi-with-types}. However, for
simplicity, this system includes only a single sort, $ \textbf{Type} $ and a single
axiom $ \textbf{Type} : \textbf{Type} $.\footnote{This definition corresponds to
  $\lambda\ast$, which is `inconsistent' in the sense that all types are
  inhabited. However, this inconsistency does not interfere with the syntactic
  properties of the system that we are interested in as a core for Dependent
  Haskell.} We annotate Barendregt's system with quantities, as well as add the
unit type, sums and sigma types. Note that the \rref{T-Convert} uses the definitional 
equivalence relation, which is essentially $\beta$-equivalence. This relation is axiomatically specified in Section \ref{defeq}.

The key idea of this design is that quantities only count the
\emph{runtime} usage of variables. In a judgement $ \Delta  ;  \Gamma  \vdash \ottnt{a} : \ottnt{A} $, the
quantities recorded in $\Gamma$ should be derived only from the parts of $\ottnt{a}$
that are needed during computation. All other uses of these variables, whether
in the type $\ottnt{A}$, in irrelevant parts of $\ottnt{a}$, or in types that appear
later in the context, should not contribute to this count. This distinction is significant because in a dependently-typed system terms may appear in types. As a result, the typing rules must ensure that both terms and types are well-formed during type checking. Therefore, the type system
must include premises of the form $ \Delta \ ;\  \Gamma  \vdash \ottnt{A} :  \textbf{Type}  $, that hold when
$\ottnt{A}$ is a well-formed type. But we don't want to add this usage to the usage of the term.

What this means for the type system is that any usage of a context to check an
irrelevant component, like the type, should be multiplied by $0$, just like the irrelevant
argument in example~\ref{example:irrelevant}. For example, in the rule for
variables \rref{T-Var}, any uses of the context $\Gamma$ to check the type $\ottnt{A}$ are
discarded (multiplied by 0) in the resulting derivation. Similarly, in the
rule for weakening \rref{T-Weak}, we check that the type of the weakened variable is
well-formed using some context $\Gamma_{{\mathrm{2}}}$ that is compatible with $\Gamma_{{\mathrm{1}}}$
(same $\Delta$). Since $\Gamma_{{\mathrm{1}}}  \ottsym{+}     { \color{black}{0} }    \cdot   \Gamma_{{\mathrm{2}}}   = \Gamma_{{\mathrm{1}}}$, usage $\Gamma_{{\mathrm{2}}}$ doesn't appear in the conclusion of the rule. Many rules follow this pattern of checking types with
some usage-unconstrained context, including $\Gamma_{{\mathrm{2}}}$ in \rref{T-convert} and
\rref{T-lam}, and $\Gamma_{{\mathrm{3}}}$ in \rref{T-UnitElim}. 
This \rref{T-UnitElim} implements a form of dependent pattern
matching. Here, the type of the branch can observe 
that the eliminated term $\ottnt{a}$ is equal to the pattern $\ottkw{unit}$. 
To support this refinement, the result type $\ottnt{B}$ must type check with a
free variable $\ottmv{y}$ of $\ottkw{Unit}$ type. The other elimination rules, \rref{T-SigmaElim} and \rref{T-SumElim}, also follow this style of dependent pattern matching.

\paragraph{Irrelevant Quantification}

Now consider the \rref{T-pi}. 
\ifextended
\[ \ottdruleTXXpi{} \]
\fi
In particular, note that the usage annotation on the type itself ($q$) is
different from $r$, which records how many times $x$ is used in $B$. The
annotation $q$ tracks the usage of the argument in the body of a function with
this type and this usage may have no relation to the usage of $x$ in
the body of the type itself.
This difference between $q$ and $r$ allows \Langname{} to represent parametric polymorphism by marking
type arguments with usage $0$. For example, the analogue of the System F type
$\forall\alpha.\alpha\rightarrow\alpha$, can be expressed in this system as
$ \Pi  \ottmv{x} \!:^  { \color{black}{0} }  \!  \textbf{Type}  .  {}^{  { \color{black}{1} }  } \ottmv{x} \rightarrow  \ottmv{x}  $.
This type is well-formed because, even though the annotation on the variable
$x$ is 0, that rule allows $x$ to be used any number of times in the body of
the type.

Some versions of irrelevant quantifiers in type theories constrain $r$ to be
equal to $q$ ~\cite{Abel12}.  By coupling the usage of variables in the body of the
lambda with the result type of the $\Pi$, these systems rule out the
representation of polymorphic types, such as the one shown
above. Here, we can model this more restricted form of quantifier with the
assistance of the box modality. If, instead of using the type
$ \Pi  \ottmv{x} \!:^  { \color{black}{0} }  \! \ottnt{A} . \ottnt{B} $, we use the type $ \Pi  \ottmv{x} \!:^  { \color{black}{1} }  \!  \Box^{  { \color{black}{0} }  }  \ottnt{A}  . \ottnt{B} $, we can force the
result type to also make no (relevant) use of the argument within $\ottnt{B}$. The
box $x$ can be unboxed as many times as desired, but each unboxing must be
used exactly $0$ times.

It is this distinction between the types $ \Pi  \ottmv{x} \!:^  { \color{black}{0} }  \! \ottnt{A} . \ottnt{B} $ and
$ \Pi  \ottmv{x} \!:^  { \color{black}{1} }  \! \ottsym{(}   \Box^{  { \color{black}{0} }  }  \ottnt{A}   \ottsym{)} . \ottnt{B} $ (and a similar distinction between $ \Sigma  \ottmv{x} \!\!:^  { \color{black}{0} }  \!\! \ottnt{A} . \ottnt{B} $ and $ \Sigma  \ottmv{x} \!\!:^  { \color{black}{1} }  \!\! \ottsym{(}   \Box^{  { \color{black}{0} }  }  \ottnt{A}   \ottsym{)} . \ottnt{B} $) that motivates our inclusion of the usage annotation
on the $\Pi$ and $\Sigma$ types directly. In the simple type
system, we can derive usage-annotated functions from linear functions and the
box modality: there is no need to annotate the arrow with any quantity other
than 1. But here, due to dependency, we cannot have parametrically polymorphic types
without this additional form. 
On the other hand, with the presence of usage-annotated $\Sigma$-types, we do
not need to include the box modality. Instead, we can encode the type
$ \Box^{ { \color{black}{q} } }  \ottnt{A} $ using the non-dependent tensor $ \Sigma  \ottmv{x} \!\!:^ { \color{black}{q} } \!\! \ottnt{A} . \ottkw{Unit} $. Thus, we
can eliminate this special form from the language.

\subsection{Metatheory}
\label{sec:dep-metatheory}

We have proven, in Coq, the following properties about the dependently-typed
system.

First, well-formed terms have well-formed types. However, the resources used by such types are unrelated to those of the terms.

\begin{lemma}[Regularity]
If $ \Delta \ ;\  \Gamma  \vdash \ottnt{a} : \ottnt{A} $ then there exists some $\Gamma'$ such that $ \Delta \ ;\  \Gamma'  \vdash \ottnt{A} :  \textbf{Type}  $.
\end{lemma}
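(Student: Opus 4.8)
The plan is to proceed by induction on the derivation of $\Delta ; \Gamma \vdash a : A$. Because the statement asks only for \emph{some} usage context $\Gamma'$ and imposes no constraint relating $\Gamma'$ to $\Gamma$, the usage bookkeeping never obstructs the argument: wherever a rule already type-checks a type against a fresh, usage-unconstrained context (the contexts that get multiplied by $0$ when folded into the term's usage, as in \rref{T-var}, \rref{T-weak}, \rref{T-convert}, and \rref{T-lam}), I can simply reuse that context as the witness $\Gamma'$. The core of the proof is therefore about \emph{which} type is produced, not about counting.

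I would first dispatch the cases in which the synthesized type is literally $\textbf{Type}$—namely \rref{T-type}, the unit-formation rule, \rref{T-pi}, \rref{T-Sigma}, and \rref{T-sum}—by appealing to the axiom $\textbf{Type} : \textbf{Type}$, i.e. reapplying \rref{T-type} with $\Gamma' = 0 \cdot \Gamma$. Next come the cases in which the well-formedness of $A$ is reassembled directly from the premises: the unit-introduction case concludes $\ottkw{unit} : \ottkw{Unit}$, whose type's well-formedness $\ottkw{Unit} : \textbf{Type}$ is supplied by the unit-formation rule; \rref{T-lam} packages the premises witnessing the domain and the (extended-context) codomain into a \rref{T-pi} derivation; \rref{T-Tensor} reassembles a \rref{T-Sigma} derivation; and \rref{T-injOne}/\rref{T-injTwo} reassemble a \rref{T-sum} derivation from the premises asserting that both summands are types. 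For \rref{T-var} the type's well-formedness sits in a premise and needs only weakening to account for the newly bound variable; \rref{T-weak} follows from the induction hypothesis plus weakening; \rref{T-sub} is immediate since the type is unchanged; and \rref{T-convert} is handled by the premise that already witnesses the target type is well-formed.

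The substantive cases are the dependent eliminators—\rref{T-app}, \rref{T-UnitElim}, \rref{T-SigmaElim}, and \rref{T-SumElim}—where the result type is a substitution instance of a motive. Here I would apply the induction hypothesis to the scrutinee to obtain well-formedness of its type (a $\Pi$-, $\ottkw{Unit}$-, $\Sigma$-, or $\oplus$-type), invert that derivation to extract well-formedness of the motive or codomain in the context extended by the pattern variable(s), and then apply the substitution lemma for \Langname{} (mechanized in Coq, the dependent analogue of \pref{SSub}) to substitute the actual argument or scrutinee, yielding well-formedness of the result type. I expect the main obstacle to be exactly this interaction: establishing the inversion principles for $\Pi$- and $\Sigma$-well-formedness in the presence of \rref{T-convert}, so that inversion must be phrased up to definitional equivalence rather than on the syntactic shape of the type, and then checking that the substitution lemma delivers a usable $\Gamma'$ when the motive mentions the pattern variable at a grade unrelated to the term's grade. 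Since regularity asks only for the existence of $\Gamma'$, these resource mismatches are benign, but the inversion-up-to-conversion step is where the real care is required.
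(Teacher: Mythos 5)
Your overall skeleton is right: induction on the typing derivation, existential choice of $\Gamma'$ making all grade bookkeeping irrelevant, weakening for \rref{T-var}/\rref{T-weak}, the premise for \rref{T-convert}, reassembly for the introduction forms, and — for \rref{T-app} — the induction hypothesis on the function followed by inversion of the well-formedness of $\Pi\,x\!:^r\!A.B$ and the substitution lemma. That last dance is indeed the one substantive case, and it is essentially what the paper's (Coq-mechanized) proof must do there. One reassurance: the inversion needed for \rref{T-app} is on the \emph{subject} of the judgement, which is syntactically a $\Pi$-term; intervening uses of \rref{T-convert} only perturb the classifier (something convertible to $\textbf{Type}$), not the subject, so the ``inversion up to conversion'' you worry about is more benign than you suggest.

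There is, however, a genuine gap in how you treat the other three eliminators. For \rref{T-UnitElim}, \rref{T-SigmaElim}, and \rref{T-SumElim}, you propose to obtain the motive's well-formedness by applying the IH to the scrutinee and inverting the resulting derivation of its type. This step fails: the scrutinee's type is $\ottkw{Unit}$, $\Sigma\,x\!:^r\!A_1.A_2$, or $A_1\oplus A_2$, and the motive $B$ — an arbitrary type family over the scrutinee, used in the conclusion as $B\{a/y\}$ — simply does not occur in it. Inverting $\ottkw{Unit}:\textbf{Type}$ yields nothing; inverting a $\Sigma$ or $\oplus$ well-formedness yields only its components. Nor can you recover $B\{a/y\}$ from the branches' types (e.g.\ $\Pi\,x\!:^r\!A_1.B\{\ottkw{inj}_1\,x/y\}$), since $a$ need not be a value of injection or pair form. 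The repair is easier than your plan: precisely because these rules implement dependent pattern matching, each of them already carries a premise of the form $ \Delta ,   \ottmv{y} \!\!:\!\! \ottnt{A}   ;   \Gamma ,   \ottmv{y} \! :^{ { \color{black}{s} } }\! \ottnt{A}   \vdash B : \textbf{Type} $ checking the motive in the extended context (this is visible in the paper's invariance proof, where these premises are listed). So those cases need no inversion at all: apply the substitution lemma to that premise and the scrutinee's typing $\Delta ; \Gamma_{1} \vdash a : A$ to conclude $\Delta ; \Gamma + s\cdot\Gamma_{1} \vdash B\{a/y\} : \textbf{Type}$. Only \rref{T-app} requires the IH-plus-inversion route, because it alone lacks a motive premise.
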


Next, we generalize the substitution lemma for the simple version to this system, by propagating it
through the context and type.

\begin{lemma}[Substitution] If 
$ \Delta_{{\mathrm{1}}} \ ;\  \Gamma  \vdash \ottnt{a} : \ottnt{A} $ and 
$   \Delta_{{\mathrm{1}}} ,   \ottmv{x} \!\!:\!\! \ottnt{A}   ,  \Delta_{{\mathrm{2}}}   ;    \Gamma_{{\mathrm{1}}} ,   \ottmv{x} \! :^{ { \color{black}{q} } }\! \ottnt{A}   ,  \Gamma_{{\mathrm{2}}}   \vdash \ottnt{b} : \ottnt{B} $ then 
$  \Delta_{{\mathrm{1}}} ,  \Delta_{{\mathrm{2}}}   \ottsym{\{}  \ottnt{a}  \ottsym{/}  \ottmv{x}  \ottsym{\}}  ;   \ottsym{(}  \Gamma_{{\mathrm{1}}}  \ottsym{+}   { \color{black}{q} }   \cdot   \Gamma   \ottsym{)} ,  \Gamma_{{\mathrm{2}}}   \ottsym{\{}  \ottnt{a}  \ottsym{/}  \ottmv{x}  \ottsym{\}}  \vdash \ottnt{b}  \ottsym{\{}  \ottnt{a}  \ottsym{/}  \ottmv{x}  \ottsym{\}} : \ottnt{B}  \ottsym{\{}  \ottnt{a}  \ottsym{/}  \ottmv{x}  \ottsym{\}} $.
\end{lemma}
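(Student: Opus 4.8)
The plan is to proceed by structural induction on the derivation of the second hypothesis, $ \Delta_{{\mathrm{1}}} ,   \ottmv{x} \!\!:\!\! \ottnt{A}   ,  \Delta_{{\mathrm{2}}}   ;    \Gamma_{{\mathrm{1}}} ,   \ottmv{x} \! :^{ { \color{black}{q} } }\! \ottnt{A}   ,  \Gamma_{{\mathrm{2}}}   \vdash \ottnt{b} : \ottnt{B} $, generalizing over $\Delta_2$, $\Gamma_2$, $b$, and $B$ so that the inductive hypothesis still applies when we pass under a binder (a newly bound variable simply extends the tails $\Delta_2$ and $\Gamma_2$). All of the grade bookkeeping will be discharged by the semiring laws: left-distributivity ${ \color{black}{q} }  \cdot  \ottsym{(}  \Gamma  \ottsym{+}  \Gamma'  \ottsym{)}  =  { \color{black}{q} }   \cdot   \Gamma  \ottsym{+}  { \color{black}{q} }   \cdot   \Gamma'$, the mixed law $\ottsym{(}  { \color{black}{q} }_{{\mathrm{1}}}  \ottsym{+}  { \color{black}{q} }_{{\mathrm{2}}}  \ottsym{)}   \cdot   \Gamma  =  { \color{black}{q} }_{{\mathrm{1}}}   \cdot   \Gamma  \ottsym{+}  { \color{black}{q} }_{{\mathrm{2}}}   \cdot   \Gamma$, associativity of scaling ${ \color{black}{q} }_{{\mathrm{1}}}   \cdot   \ottsym{(}  { \color{black}{q} }_{{\mathrm{2}}}   \cdot   \Gamma  \ottsym{)}  =  \ottsym{(}  { \color{black}{q} }_{{\mathrm{1}}}  \cdot  { \color{black}{q} }_{{\mathrm{2}}}  \ottsym{)}   \cdot   \Gamma$, and annihilation by $0$. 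Before starting the induction I would establish two auxiliary facts: (i) substitution preserves definitional equivalence, i.e.\ $B  \equiv  B'$ implies $B\{a/x\}  \equiv  B'\{a/x\}$ (needed for \rref{T-convert}); and (ii) order preservation, so that $\Gamma_{{\mathrm{1}}}  \leq  \Gamma_{{\mathrm{2}}}$ is respected by scaling and addition (needed for \rref{T-sub}). These are routine but must be in place first.

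The two families of cases that carry the real content are the variable case and the context-splitting cases. For \rref{T-var} applied to $b = x$ itself, the variable rule forces every other entry of the context to grade $0$ and pins $x$'s grade at $1$ (the slack coming only from the separately-handled \rref{T-sub}); hence $\Gamma_1$ and $\Gamma_2$ are zero-graded and the target context $(\Gamma_1 + q\cdot\Gamma),\Gamma_2\{a/x\}$ collapses modulo zeros to $\Gamma$ extended by zero-graded tails, so the required judgement is exactly the first hypothesis $ \Delta_{{\mathrm{1}}} \ ;\  \Gamma  \vdash \ottnt{a} : \ottnt{A} $ relocated into the larger context. This relocation uses a weakening lemma to insert the zero-graded variables of $\Delta_2\{a/x\}$, together with the scoping invariant that $A$, being declared before $x$, does not mention $x$, so $A\{a/x\} = A$. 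For $b = y \neq x$, substitution leaves $y$ in place (its type possibly rewritten when $y \in \Delta_2$), $q$ must be $0$ so that $q\cdot\Gamma$ vanishes, and we simply re-derive the variable in the substituted context. For \rref{T-app} (and likewise the pair, \rref{T-UnitElim}, \rref{T-SigmaElim}, and \rref{T-SumElim} rules) the ambient context splits additively, and $x$'s grade splits correspondingly as ${ \color{black}{q} } = { \color{black}{q} }_a + { \color{black}{q} }' \cdot { \color{black}{q} }_b$, where $q'$ is the function-domain grade; applying the inductive hypothesis to each premise and reassembling via distributivity and associativity of scaling reproduces exactly $\Gamma_1 + q\cdot\Gamma$.

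For the binder rules \rref{T-lam}, \rref{T-pi}, and \rref{T-Sigma}, I would invoke the inductive hypothesis once on the principal, grade-tracking premise and once on the auxiliary premise that checks a domain type for well-formedness. The essential simplification is the design choice that types are checked under a context that is discarded (multiplied by $0$) in the conclusion: substituting into such a premise still yields a well-formed-type derivation, but its grades need not be tracked precisely since they are annihilated. This is where dependency is tamed—substitution must genuinely descend into the types occurring in $\Gamma_2$ and in $B$, yet the resources spent there never pollute the term-level count. The \rref{T-convert} case is dispatched with auxiliary fact (i) followed by a fresh application of \rref{T-convert} to $B\{a/x\}  \equiv  B'\{a/x\}$, and \rref{T-sub} and \rref{T-weak} are handled with fact (ii) and the annihilation/monotonicity laws.

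I expect the main obstacle to be the grade arithmetic in the context-splitting rules taken together with the variable case: one must verify that the decomposition of $q$ across a split context, after each piece is scaled by the appropriate domain grade and summed, reassembles to precisely $\Gamma_1 + q\cdot\Gamma$ with no slack—exactly the point at which left-distributivity and associativity of scaling are indispensable and an off-by-a-grade error would break the invariant. A secondary difficulty is ensuring that the weakening/relocation machinery used in the $b = x$ subcase interacts correctly with substitution into the dependent tails $\Delta_2$ and $\Gamma_2$; since this is a mechanized development, these are precisely the manipulations that the Coq formalization is designed to keep honest.
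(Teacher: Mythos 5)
Your proposal is correct and takes essentially the same route as the paper's proof (which is mechanized in Coq rather than written out): induction on the typing derivation of $b$, generalized over the context tail, with the grade bookkeeping discharged by the semiring laws in the context-splitting cases, closure of definitional equivalence under substitution for the conversion case, and order-monotonicity for the sub-usage case. The only cosmetic difference is that under the paper's formulation of \textsc{T-var}/\textsc{T-weak} (the subject or weakened variable sits at the \emph{end} of the context), the $b=x$ subcase forces $\Gamma_{{\mathrm{2}}}$ and $\Delta_{{\mathrm{2}}}$ to be empty, so the goal is literally the first hypothesis (since $\Gamma_{{\mathrm{1}}}  \ottsym{+}   { \color{black}{1} }   \cdot   \Gamma   \ottsym{=}  \Gamma$ and $x \notin \,\text{fv}\, \ottnt{A}$) and the weakening/relocation you invoke there is instead absorbed into the separate \textsc{T-weak} cases of the induction.
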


Furthermore, even though we have an explicit weakening rule in this system, we
also show that we can weaken with a zero-annotated fresh variable anywhere in
the judgement.

\begin{lemma}[Weakening]
\label{lem:weakening}
If $  \Delta_{{\mathrm{1}}} ,  \Delta_{{\mathrm{2}}}   ;   \Gamma_{{\mathrm{1}}} ,  \Gamma_{{\mathrm{2}}}   \vdash \ottnt{a} : \ottnt{A} $  and $ \Delta_{{\mathrm{1}}}  ;  \Gamma_{{\mathrm{3}}}  \vdash \ottnt{B} :  \textbf{Type}  $ 
then $   \Delta_{{\mathrm{1}}} ,   \ottmv{x} \!\!:\!\! \ottnt{B}   ,  \Delta_{{\mathrm{2}}}   ;    \Gamma_{{\mathrm{1}}} ,   \ottmv{x} \! :^{  { \color{black}{0} }  }\! \ottnt{B}   ,  \Gamma_{{\mathrm{2}}}   \vdash \ottnt{a} : \ottnt{A} $.
\end{lemma}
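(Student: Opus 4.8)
The plan is to prove this by structural induction on the derivation of $\Delta_1, \Delta_2 ; \Gamma_1, \Gamma_2 \vdash a : A$, generalizing over the split position --- that is, over $\Delta_2$ and $\Gamma_2$ --- while holding $\Delta_1$, $\Gamma_1$, $x$, $B$, and $\Gamma_3$ fixed. For each rule I would apply the induction hypothesis to every typing premise, inserting the fresh assumption $x :^{ { \color{black}{0} } } B$ at the position marked by $\Delta_1$, and then reassemble the conclusion with the same rule. Because $x$ is fresh it occurs in neither $a$ nor $A$, so the subject and type are unchanged and no capture arises; in the mechanized (locally-nameless) development this freshness is automatic. The insertion point is unambiguous in every premise because, by typing context correspondence, each premise's usage context shares the same underlying $\Delta$, so ``before $\Delta_2$'' and ``before $\Gamma_2$'' always refer to the same boundary.

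Two families of facts make the reassembly go through. First, inserting a $ { \color{black}{0} } $-graded column commutes with all the context operations appearing in the rules: $ { \color{black}{0} }  +  { \color{black}{0} }  =  { \color{black}{0} } $ handles context addition (the $\Gamma_1 + \Gamma_2$ in \rref{T-app}, \rref{T-Tensor}, and the eliminators), ${ \color{black}{q} } \cdot  { \color{black}{0} }  =  { \color{black}{0} } $ handles scaling (the $ { \color{black}{q} }   \cdot   \Gamma $ of \rref{T-app} and the $  { \color{black}{0} }   \cdot   \Gamma $ of \rref{T-var}), and $ { \color{black}{0} }  \leq  { \color{black}{0} } $ handles sub-usage, so \rref{T-sub} reassembles with the sub-usage relation still holding on the new column. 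Second, for the \emph{irrelevant} premises that check a type against a throwaway usage context --- such as $\Delta ; \Gamma_2 \vdash A : \textbf{Type} $ in \rref{T-lam}, or the corresponding premises in \rref{T-convert}, \rref{T-pi}, \rref{T-UnitElim}, \rref{T-SigmaElim}, and \rref{T-SumElim} --- the induction hypothesis applies verbatim, since the lemma is stated for an arbitrary usage context and the split aligns under the shared $\Delta$.

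The role of the hypothesis $\Delta_1 ; \Gamma_3 \vdash B : \textbf{Type} $ is to feed the type-well-formedness premise of the built-in weakening rule \rref{T-weak}: at the boundary cases --- chiefly \rref{T-var} and \rref{T-weak} when $\Delta_2$ is empty, so the fresh column lands at the extreme right end --- I would reconstruct the original derivation and then apply \rref{T-weak} once more, discharging its type premise with (a context-matching instance of) $\Delta_1 ; \Gamma_3 \vdash B : \textbf{Type} $, which is legitimate since $\lfloor \Gamma_3 \rfloor = \Delta_1 = \lfloor \Gamma_1 \rfloor$. Viewed this way, weakening-in-the-middle is a controlled generalization of the system's native end-weakening. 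For the binder and dependent-eliminator rules, the premises extend the context with one or two pattern variables at the right end; since these sit to the right of the insertion point, I invoke the induction hypothesis with $\Delta_2$ (resp. $\Gamma_2$) extended by those binders, and the insertion point for $x :^{ { \color{black}{0} } } B$ remains fixed between $\Gamma_1$ and $\Gamma_2$.

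I expect the main obstacle to be bookkeeping rather than mathematics: keeping the split of each premise's usage context synchronized with the split of $\Delta$ so that the $ { \color{black}{0} } $-graded column lands consistently across all premises and subderivations, and carefully treating the boundary cases (empty $\Delta_2$), where whether one reapplies \rref{T-var} directly or must route through an extra \rref{T-weak} depends on the exact position of the subject variable. In a named presentation this is exactly where the freshness side-conditions demand care; the Coq formalization sidesteps most of it by treating the insertion point as an index, which is presumably why this lemma is proved there alongside substitution and the preservation/progress results.
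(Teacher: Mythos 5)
Your overall plan---induction on the typing derivation, inserting the $0$-graded assumption at the split position, using $0+0=0$, $q\cdot 0=0$, and $0\le 0$ to reassemble each rule, and discharging the boundary cases with the native end-weakening rule fed by $\Delta_1;\Gamma_3\vdash B:\textbf{Type}$---is the right one, and it is essentially the approach of the paper (which gives no in-text proof and defers to its Coq development, where weakening is proved exactly this way). However, your stated induction setup has a genuine flaw: you cannot hold $\Gamma_1$ fixed while generalizing only over $\Delta_2$ and $\Gamma_2$. In most rules the premises' usage contexts have a prefix at the split position that is \emph{not} $\Gamma_1$: in \textsc{T-sub} the premise's prefix is some $\Gamma_1'\le\Gamma_1$; in \textsc{T-app} the conclusion's prefix is $\Gamma_a + q\cdot\Gamma_b$ where $\Gamma_a,\Gamma_b$ are the prefixes of the two premises' contexts, so neither premise has prefix $\Gamma_1$; in \textsc{T-var} the conclusion's prefix is all zeros while the premise checking the type has an arbitrary prefix; and the type-well-formedness premises you call ``irrelevant'' carry throwaway usage contexts unrelated to $\Gamma_1$. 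With $\Gamma_1$ held fixed, the induction hypothesis literally does not apply to any of these premises, so each of these cases would fail.

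The fix is small but necessary: the induction hypothesis must be the fully quantified statement, generalized over the \emph{entire} usage context (both halves), keeping fixed only the split index $|\Delta_1|$, the plain prefix $\Delta_1$, and the inserted $x$, $B$, $\Gamma_3$. (Fixing $\Delta_1$ is harmless, since no rule alters the plain context except at its right end, and the cases where truncation reaches into $\Delta_1$ are exactly your boundary cases, handled by applying \textsc{T-weak} directly rather than the IH.) Your own prose implicitly assumes this stronger IH---you justify applying it to the type premises ``since the lemma is stated for an arbitrary usage context''---but that contradicts the setup you announced. With the IH generalized in this way, your case analysis and the $0$-algebra reassembly arguments go through as written.
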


With a small-step relation that is identical to that of the simply
typed version, we have the following type soundness theorem. 

\begin{theorem}[Preservation]
If $ \Delta  ;  \Gamma  \vdash \ottnt{a} : \ottnt{A} $ and $ \ottnt{a}  \leadsto  \ottnt{a'} $ then $ \Delta  ;  \Gamma  \vdash \ottnt{a'} : \ottnt{A} $.
\end{theorem}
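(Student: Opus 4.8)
The plan is to prove subject reduction by induction on the typing derivation $ \Delta  ;  \Gamma  \vdash \ottnt{a} : \ottnt{A} $, performing case analysis on the reduction $ \ottnt{a}  \leadsto  \ottnt{a'} $ within each case. The three administrative rules that do not inspect the shape of $\ottnt{a}$---sub-usage, conversion (\rref{T-convert}), and weakening---are handled uniformly: since they leave the subject term unchanged, I apply the induction hypothesis to the immediate sub-derivation (with the same reduction $ \ottnt{a}  \leadsto  \ottnt{a'} $) and then re-apply the same rule, carrying along its side condition (the sub-usage inequality, the definitional equality, or the zero-graded weakening variable, respectively). The real work happens in the rules whose conclusion fixes the head form of $\ottnt{a}$, and for these I will need: inversion lemmas that see through the administrative rules to recover the canonical premises of a constructor; the dependent Substitution Lemma; Regularity (to produce the well-formed type required whenever I invoke conversion); and the structural properties of definitional equivalence from Section~\ref{defeq}, namely that it is a congruence closed under single-step reduction and under substitution.

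For the $\beta$-redexes the argument is driven by the Substitution Lemma, and the grading lines up by construction. In the application case, if $\ottnt{a} = \ottsym{(} \lambda \ottmv{x} \!:^ { \color{black}{q} } \! \ottnt{A} . \ottnt{b} \ottsym{)} \, \ottnt{c} \leadsto \ottnt{b} \ottsym{\{} \ottnt{c} \ottsym{/} \ottmv{x} \ottsym{\}}$, inversion of \rref{T-lam} (through any intervening conversions) yields a typing of $\ottnt{b}$ in a context extended by $ \ottmv{x} \! :^{ { \color{black}{q} } }\! \ottnt{A} $; the Substitution Lemma then produces exactly the context $\Gamma_{{\mathrm{1}}}  \ottsym{+}   { \color{black}{q} }   \cdot   \Gamma_{{\mathrm{2}}} $ and result type $\ottnt{B} \ottsym{\{} \ottnt{c} \ottsym{/} \ottmv{x} \ottsym{\}}$ appearing in the conclusion of the application rule. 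The eliminators for $\ottkw{Unit}$, $\Sigma$, and sums are similar: reducing $\ottkw{let}\, \ottkw{unit} = \ottkw{unit} \ \ottkw{in}\ \ottnt{b}$ to $\ottnt{b}$, reducing $\ottkw{let} \, \ottsym{(} \ottmv{x} \ottsym{,} \ottmv{y} \ottsym{)} \ottsym{=} \ottsym{(} \ottnt{c_{{\mathrm{1}}}} \ottsym{,} \ottnt{c_{{\mathrm{2}}}} \ottsym{)} \, \mathsf{in} \, \ottnt{b}$ to $\ottnt{b} \ottsym{\{} \ottnt{c_{{\mathrm{1}}}} \ottsym{/} \ottmv{x} \ottsym{\}} \ottsym{\{} \ottnt{c_{{\mathrm{2}}}} \ottsym{/} \ottmv{y} \ottsym{\}}$, and reducing $\ottkw{case}_ { \color{black}{q} } \, \ottsym{(} \ottkw{inj}_1\, \ottnt{c} \ottsym{)} \, \ottkw{of}\, \ottnt{b_{{\mathrm{1}}}} ; \ottnt{b_{{\mathrm{2}}}}$ to $\ottnt{b_{{\mathrm{1}}}} \, \ottnt{c}$. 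In each case the branch was typed precisely so that, after the one or two substitutions supplied by the Substitution Lemma, its type coincides with the dependent result type instantiated at the matched constructor (e.g.\ $\ottnt{B} \ottsym{\{} \ottkw{unit} \ottsym{/} \ottmv{y} \ottsym{\}}$ for the unit eliminator); because the scrutinee is literally the constructor here, no conversion is needed.

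The congruence (non-$\beta$) reductions of the dependent eliminators are where conversion becomes essential. If $\ottkw{let}\, \ottkw{unit} = \ottnt{a_{{\mathrm{1}}}} \ \ottkw{in}\ \ottnt{b}$ steps by reducing the scrutinee $\ottnt{a_{{\mathrm{1}}}} \leadsto \ottnt{a'_{{\mathrm{1}}}}$, the induction hypothesis retypes the reduced scrutinee, and re-applying \rref{T-UnitElim} gives the reduct the type $\ottnt{B} \ottsym{\{} \ottnt{a'_{{\mathrm{1}}}} \ottsym{/} \ottmv{y} \ottsym{\}}$, whereas the expected type is $\ottnt{B} \ottsym{\{} \ottnt{a_{{\mathrm{1}}}} \ottsym{/} \ottmv{y} \ottsym{\}}$. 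These agree only up to definitional equivalence, so I close the gap with \rref{T-convert}, using that $\ottnt{a_{{\mathrm{1}}}} \leadsto \ottnt{a'_{{\mathrm{1}}}}$ implies $\ottnt{a_{{\mathrm{1}}}}$ and $\ottnt{a'_{{\mathrm{1}}}}$ are definitionally equal and that equivalence is a congruence stable under substitution; Regularity supplies the premise $ \Delta \ ;\  \Gamma'  \vdash \ottnt{A} :  \textbf{Type}  $ that \rref{T-convert} demands. The \rref{T-SigmaElim} and \rref{T-SumElim} cases behave identically in their scrutinee positions, while the ordinary (non-dependent) congruences, such as reducing the function of an application, need no conversion because the result type does not mention the reduced subterm.

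I expect the main obstacle to be the inversion step for $\lambda$-abstractions (and, dually, for the pairs and injections feeding the eliminators): because a constructor may have been typed via \rref{T-convert} at a type only \emph{definitionally equal} to the expected $\Pi$-type, recovering its canonical premises requires injectivity of the $\Pi$ type constructor under definitional equivalence---that $ \Pi  \ottmv{x} \!:^ { \color{black}{q} } \! \ottnt{A} . \ottnt{B}  \equiv  \Pi  \ottmv{x} \!:^ { \color{black}{q'} } \! \ottnt{A'} . \ottnt{B'} $ forces $ { \color{black}{q} }  =  { \color{black}{q'} } $, $\ottnt{A} \equiv \ottnt{A'}$, and $\ottnt{B} \equiv \ottnt{B'}$, with analogous statements for $\Sigma$ and $\oplus$. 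In a system with $ \textbf{Type} : \textbf{Type} $ this injectivity is not free; it follows from confluence (Church--Rosser) of the reduction underlying definitional equivalence, which is exactly the kind of structural property that must be established for the axiomatically specified relation of Section~\ref{defeq}. The remaining difficulty is purely in the grade arithmetic: at each $\beta$-case I must check that the context split exposed by inversion is the one the Substitution Lemma consumes, so that the computed usage context in the reduct is identical (not merely sub-usage related) to the one in the original judgement; this is routine given the semiring laws but must be tracked carefully.
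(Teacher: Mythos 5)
Your proposal is correct and is essentially the paper's own proof: the paper establishes this theorem in Coq by induction on the typing derivation, using exactly the ingredients you list---the dependent Substitution Lemma, Regularity, Weakening, inversion through the administrative rules, and the soundness properties of the abstract relation $\equiv$---with conversion (justified by the facts that $\equiv$ contains the step relation and is closed under substitution, and with Regularity supplying the well-formedness premise) repairing the mismatch between $B\{a/y\}$ and $B\{a'/y\}$ in the scrutinee-congruence cases of the dependent eliminators. Two small calibrations: the $\Pi$/$\Sigma$/$\oplus$ injectivity you flag as the main obstacle is not derived from Church--Rosser here but is assumed outright as property (4) of Definition~\ref{defeql} on the abstract equivalence relation, so the inversion step gets it for free (confluence matters only when verifying that concrete $\beta$-conversion is a sound instance); and in the constructor $\beta$-cases the reduct's natural context is in general only sub-usage-below the redex's context---for the unit eliminator the branch types in $\Gamma_2$ while the redex types in $\Gamma_1+\Gamma_2$, where inversion only yields that $\Gamma_1$ is pointwise above the zero context---so those cases close with the sub-usage rule rather than with the exact context identity your last paragraph insists on.
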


\begin{theorem}[Progress]
If $ \varnothing  ;  \varnothing  \vdash \ottnt{a} : \ottnt{A} $ then either $\ottnt{a}$ is a value or there exists some $\ottnt{a'}$ such
that $ \ottnt{a}  \leadsto  \ottnt{a'} $.
\end{theorem}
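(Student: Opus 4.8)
The plan is to prove Progress by induction on the derivation of $ \varnothing  ;  \varnothing  \vdash \ottnt{a} : \ottnt{A} $, paired with a \emph{canonical forms} lemma, in the standard style of a syntactic type-soundness argument. The two rules that leave the term unchanged, \rref{T-sub} and \rref{T-convert}, are dispatched immediately by the induction hypothesis on their premise, since both keep the contexts empty. The rules \rref{T-var} and \rref{T-weak} require a non-empty context in their conclusion, so those cases are vacuous. Every introduction form is already a value and needs no further work: this covers \rref{T-type}, \rref{T-Unit}, \rref{T-pi}, \rref{T-Sigma}, and \rref{T-sum}, which produce the values $ \textbf{Type} $, $\ottkw{Unit}$, and the $\Pi$-, $\Sigma$-, and $\oplus$-types, together with \rref{T-lam}, \rref{T-unit}, \rref{T-Tensor}, \rref{T-injOne}, and \rref{T-injTwo}.

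The work is concentrated in the four elimination forms, \rref{T-app}, \rref{T-UnitElim}, \rref{T-SigmaElim}, and \rref{T-SumElim}. In each case I apply the induction hypothesis to the principal subterm---the function in an application, and the scrutinee otherwise. If that subterm can step, the whole term steps by the corresponding congruence rule of the small-step relation. If it is already a value, the canonical forms lemma pins down its shape from its type: a closed value of a $\Pi$-type is a $\lambda$-abstraction, a value of type $\ottkw{Unit}$ is $\ottkw{unit}$, a value of a $\Sigma$-type is a pair $ \ottsym{(}  \ottnt{a}  \ottsym{,}  \ottnt{b}  \ottsym{)} $, and a value of a sum type is $ \ottkw{inj}_1\,  \ottnt{a} $ or $ \ottkw{inj}_2\,  \ottnt{a} $. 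In every case the matching $\beta$-rule then fires, so the term steps.

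The crux---and the step I expect to be the main obstacle---is the canonical forms lemma itself, which is genuinely delicate here because of \rref{T-convert} together with the $ \textbf{Type} : \textbf{Type} $ axiom. Since types are ordinary terms and the definitional equivalence is essentially $\beta$-equivalence, a value may be assigned, say, a $\Pi$-type only after a chain of conversions, so I must first show that $\equiv$ never confuses distinct type constructors (for instance, that a $\Pi$-type is never equivalent to a sum type or to $ \textbf{Type} $). I would obtain this from confluence (Church--Rosser) of the reduction underlying $\equiv$: confluence yields that whenever $\ottnt{A}$ is equivalent to a $\Pi$-type it itself reduces to a $\Pi$-type, and likewise for $\ottkw{Unit}$, $\Sigma$-, and $\oplus$-types, giving injectivity and no-confusion of head constructors. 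An inversion on the typing of values, repeatedly stripping \rref{T-convert} using these facts, then delivers the lemma. It is worth noting that the logical inconsistency of $ \textbf{Type} : \textbf{Type} $ is irrelevant to this argument: progress is a purely operational property, so we never need normalization or consistency, only confluence of the equivalence relation.
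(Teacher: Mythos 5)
You have the right proof, and in outline it is the same induction-plus-canonical-forms argument that underlies the paper's (Coq-mechanized) proof: the variable and weakening cases are vacuous in an empty context, sub-usage and conversion go through the induction hypothesis, introduction forms are values, and the four elimination forms split on whether the principal subterm steps (congruence rule) or is a value (canonical forms, then the matching $\beta$-rule). The one genuine difference is how the step you call the crux is discharged. You propose to prove constructor injectivity and no-confusion for $\equiv$ via Church--Rosser of the underlying reduction. The paper never does this: its conversion rule is stated for an \emph{abstract} equivalence relation, and Definition~\ref{defeql} simply \emph{assumes} the two facts you need --- injectivity for type constructors and consistency (equivalent values have the same head form) --- as part of what it means for $\equiv$ to be sound. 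Canonical forms is then immediate from those axioms, and progress holds parametrically in any sound relation; confluence enters only when one checks that ordinary $\beta$-conversion is such a relation, a fact the paper asserts without proof. So your plan buys a self-contained argument for the concrete $\beta$-equivalence instance at the price of a Church--Rosser development, while the paper's axiomatization buys modularity: the same proof covers coarser equivalences (e.g.\ ones blind to irrelevant subterms), which the authors flag as a goal of future work. Note also that the paper sketches a second, independent route to progress --- as a corollary of the heap soundness theorem together with the bisimulation lemma relating heap steps to ordinary steps --- but that derivation is not the proof behind the theorem as stated in the metatheory section, so its absence from your proposal is not a gap.
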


Now, akin to the simple version, we develop a heap semantics for the dependent version.

\section{Heap semantics for \Langname}
\label{sec:heap-dependent}

The presence of dependent types causes one issue with the heap semantics: because substitutions are delayed through the heap, the terms and their types can ``get out of sync''. 

For example, if we have the application of a polymorphic identity function $ \lambda \ottmv{x} \!:^  { \color{black}{0} }  \!  \textbf{Type}  .  \lambda \ottmv{y} \!:^  { \color{black}{1} }  \! \ottmv{x} . \ottmv{y}  $ to some type argument $\ottkw{Unit}$, then the result $\ottsym{(}   \lambda \ottmv{x} \!:^  { \color{black}{0} }  \!  \textbf{Type}  .  \lambda \ottmv{y} \!:^  { \color{black}{1} }  \! \ottmv{x} . \ottmv{y}    \ottsym{)} \, \ottkw{Unit}$ should have type 
$ \Pi  \ottmv{y} \!:^  { \color{black}{1} }  \! \ottkw{Unit} . \ottkw{Unit} $. By the \rref{Small-AppBeta}, $ [  \varnothing  ]\,  \ottsym{(}   \lambda \ottmv{x} \!:^  { \color{black}{0} }  \!  \textbf{Type}  .  \lambda \ottmv{y} \!:^  { \color{black}{1} }  \! \ottmv{x} . \ottmv{y}    \ottsym{)} \, \ottkw{Unit}  \Rightarrow [   \ottmv{x}  \overset{  { \color{black}{0} }  }{\mapsto}  \ottkw{Unit}   ] \,   \lambda \ottmv{y} \!:^  { \color{black}{1} }  \! \ottmv{x} . \ottmv{y}  $.  The term $ \lambda \ottmv{y} \!:^  { \color{black}{1} }  \! \ottmv{x} . \ottmv{y} $ has type $ \Pi  \ottmv{y} \!:^  { \color{black}{1} }  \! \ottmv{x} . \ottmv{x} $. But since $\ottmv{x}  \ottsym{=}  \ottkw{Unit}$, we see that $ \Pi  \ottmv{y} \!:^  { \color{black}{1} }  \! \ottmv{x} . \ottmv{x}   \ottsym{=}   \Pi  \ottmv{y} \!:^  { \color{black}{1} }  \! \ottkw{Unit} . \ottkw{Unit} $; as such, $ \lambda \ottmv{y} \!:^  { \color{black}{1} }  \! \ottmv{x} . \ottmv{y} $ can also be assigned the type $ \Pi  \ottmv{y} \!:^  { \color{black}{1} }  \! \ottkw{Unit} . \ottkw{Unit} $. So to align the types of the redex and the reduct, we need to know about the new assignments loaded into the heap. This issue did not exist in the simple setting since the types did not depend on term variables. Note that this is not a usage-related issue, any heap-based reduction that delays substitution will need to address it while proving soundness. The good news is that it can be resolved with a simple extension to the type system.

\subsection{A Dependently-Typed Language with Definitions}

We extend our contexts with definitions that mimic delayed substitutions. These definitions are used \emph{only} in deriving type equalities. From the type system perspective, they are essentially a bookkeeping device added to enable reasoning with respect to the heap semantics.
\[
\begin{array}{llcl}
\textit{usage contexts} & \Gamma & ::= &  \varnothing  \alt  \Gamma ,   \ottmv{x} \! :^{ { \color{black}{q} } }\! \ottnt{A}   \alt  \Gamma ,   \ottmv{x}  \! = \!  \ottnt{a}  \! :^{ { \color{black}{q} } } \!  \ottnt{A}  \\
\textit{contexts} & \Delta & ::= &  \varnothing  \alt  \Delta ,   \ottmv{x} \!\!:\!\! \ottnt{A}   \alt  \Delta ,   \ottmv{x} \! = \!  \ottnt{a}  \! : \!  \ottnt{A}  \\
\end{array}
\]

Along with this extension to the context, we modify the conversion rule and add two new
typing rules to the system, as shown below. (In \rref{T-conv}, $ \ottnt{A}  \{  \Delta  \} $ denotes the type
obtained by substituting in $A$, in reverse order, the definiens in place of
the variables for the definitions in $\Delta$.)

\drules[T]{$ \Delta  ;  \Gamma  \vdash \ottnt{a} : \ottnt{A} $}
{Typing rules for dependent system with definitions}
{conv,def,weak-def}

The definitions act like usual variable assumptions: \rref{T-def} and
\rref{T-weak-def} mirror \rref{T-var} and \rref{T-weak} respectively.
They are applied only during the conversion \rref{T-conv}
that substitutes out these definitions before comparing for
$\beta$-equivalence. This modified rule means that the term $ \lambda \ottmv{y} \!:^  { \color{black}{1} }  \! \ottmv{x} . \ottmv{y} $ can
be given the type $ \Pi  \ottmv{y} \!:^  { \color{black}{1} }  \! \ottkw{Unit} . \ottkw{Unit} $ in a context that defines $\ottmv{x}$ to
be $\ottkw{Unit}$.  

The extended type system too has the syntactic soundness
properties mentioned in Section~\ref{sec:dep-metatheory}.  Furthermore, because
definitions act only on types, definitions do not add extra resource demands
to the typing derivation. As a result, we can always convert a normal variable assumption to include
some definition as long as the definiens type checks. Furthermore, the
resources used by the definiens ($\Gamma$ below) are unimportant.
\begin{lemma}[InsertEq]\label{InsertEq}
If $    \Delta_{{\mathrm{1}}} ,   \ottmv{x} \!\!:\!\! \ottnt{A}    ,  \Delta_{{\mathrm{2}}}  \ ;\     \Gamma_{{\mathrm{1}}} ,   \ottmv{x} \! :^{ { \color{black}{q} } }\! \ottnt{A}    ,  \Gamma_{{\mathrm{2}}}   \vdash \ottnt{b} : \ottnt{B} $ and $ \Delta_{{\mathrm{1}}} \ ;\  \Gamma  \vdash \ottnt{a} : \ottnt{A} $, then $    \Delta_{{\mathrm{1}}} ,   \ottmv{x} \! = \!  \ottnt{a}  \! : \!  \ottnt{A}    ,  \Delta_{{\mathrm{2}}}  \ ;\     \Gamma_{{\mathrm{1}}} ,   \ottmv{x}  \! = \!  \ottnt{a}  \! :^{ { \color{black}{q} } } \!  \ottnt{A}    ,  \Gamma_{{\mathrm{2}}}   \vdash \ottnt{b} : \ottnt{B} $.
\end{lemma}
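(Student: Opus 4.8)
The plan is to proceed by induction on the derivation of $  \Delta_{{\mathrm{1}}} ,   \ottmv{x} \!\!:\!\! \ottnt{A}    ,  \Delta_{{\mathrm{2}}}  \ ;\     \Gamma_{{\mathrm{1}}} ,   \ottmv{x} \! :^{ { \color{black}{q} } }\! \ottnt{A}    ,  \Gamma_{{\mathrm{2}}}   \vdash \ottnt{b} : \ottnt{B} $, generalizing over the tail segments $\Delta_{{\mathrm{2}}}$ and $\Gamma_{{\mathrm{2}}}$ (which grow and shrink as the derivation passes through binders and weakenings) while keeping the prefix $\Delta_{{\mathrm{1}}}; \Gamma$ and the auxiliary derivation $ \Delta_{{\mathrm{1}}} \ ;\  \Gamma  \vdash \ottnt{a} : \ottnt{A} $ fixed. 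The prefix preceding $\ottmv{x}$ is stable under every rule, since the rules only extend or contract the context at its right end; this is exactly what keeps the hypothesis $ \Delta_{{\mathrm{1}}} \ ;\  \Gamma  \vdash \ottnt{a} : \ottnt{A} $ applicable wherever it is needed. Crucially, the usage annotation $ { \color{black}{q} } $ on $\ottmv{x}$ is carried over verbatim to the definition $ \ottmv{x}  \! = \!  \ottnt{a}  \! :^{ { \color{black}{q} } } \!  \ottnt{A} $, so no resource computation in the derivation changes; the proof is, at heart, a relabeling.

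The guiding principle, already noted in the surrounding discussion, is that a definition behaves identically to an ordinary variable assumption under every rule except conversion: context scaling, context addition, and sub-usage act the same on variable and definition entries. Consequently, for every structural rule---\rref{T-pi}, \rref{T-lam}, \rref{T-app}, the $\Sigma$ and sum rules, and the dependent elimination forms---I would simply apply the induction hypothesis to each premise and reassemble the rule unchanged. Only two cases require bookkeeping. First, any use of \rref{T-var} that projects $\ottmv{x}$ itself must be replaced by \rref{T-def}; by the mirroring of these two rules they yield the identical typing judgement. Second, any use of \rref{T-weak} that introduces $\ottmv{x}$ at the right end of the context must be replaced by \rref{T-weak-def}. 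This second case is the one place that consumes the hypothesis $ \Delta_{{\mathrm{1}}} \ ;\  \Gamma  \vdash \ottnt{a} : \ottnt{A} $, which supplies exactly the definiens side-condition that \rref{T-weak-def} demands over \rref{T-weak}; where $\ottmv{x}$ is never introduced by weakening (because it is already present at the root), the hypothesis instead justifies the definition at the level of the conclusion's context.

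The interesting case is \rref{T-conv}, the sole rule that inspects definitions. The original derivation establishes a definitional equality between types after substituting out the definitions of the old context via $ \ottnt{A}  \{  \Delta  \} $; there $\ottmv{x}$ is a plain variable and is therefore left untouched. After relabeling, the new context additionally substitutes $\ottmv{x} \mapsto \ottnt{a}$, so I must re-establish the equality relative to the enlarged context $\Delta'$. The two substituted equalities differ precisely by the extra substitution $ \ottsym{\{}  \ottnt{a}  \ottsym{/}  \ottmv{x}  \ottsym{\}} $ threaded through the reverse-order substitution of $\Delta_{{\mathrm{1}}}$. Since the definitional equivalence of Section~\ref{defeq} is essentially $\beta$-equivalence, hence a congruence closed under substitution, applying $ \ottsym{\{}  \ottnt{a}  \ottsym{/}  \ottmv{x}  \ottsym{\}} $ to both sides preserves the equivalence, and a standard substitution-commutation argument reconciles the ordering to give exactly the equality required in $\Delta'$. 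Combined with the induction hypotheses on the premises of \rref{T-conv}, this closes the case.

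The main obstacle is precisely this conversion case: one must check that introducing the definition $\ottmv{x} = \ottnt{a}$ only \emph{enlarges} the set of provable type equalities and never invalidates an old one. This reduces cleanly to the closure of definitional equivalence under substitution, so I expect no genuine difficulty once that standard property is invoked; every remaining case is routine relabeling, since definitions and variables are interchangeable outside of conversion and all usage annotations are preserved unchanged.
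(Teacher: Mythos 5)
The paper gives no written proof of this lemma---it is stated as a direct payoff of how \rref{T-def} and \rref{T-weak-def} were designed, with the syntactic metatheory deferred to the Coq development---so there is nothing to compare against line by line. On its own merits, your architecture is the right one: induction on the typing derivation, generalizing the tail $\Delta_2;\Gamma_2$ (and, implicitly, the grade $q$ and the grades in $\Gamma_1$, which vary across the premises of rules like \rref{T-app}), identity reassembly of all structural rules, replacement of \rref{T-var}/\rref{T-weak} acting on $x$ by \rref{T-def}/\rref{T-weak-def}, and a substitution argument for \rref{T-conv}.

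Two bookkeeping claims need repair, though neither breaks the proof. First, the hypothesis $\Delta_1;\Gamma\vdash a:A$ is \emph{not} consumed only at weakening: \rref{T-def} and \rref{T-weak-def} each take a typing derivation of the definiens $a$ as a premise, whereas \rref{T-var} and \rref{T-weak} instead demand well-formedness of the type $A$; so the hypothesis must be invoked at every point where $x$ enters the derivation, in the variable case just as in the weakening case. (The conclusions still match there because the embedded context is scaled by $0$, and $0\cdot\Gamma$ coincides with the zeroed context of the original \rref{T-var} premise, both being graded versions of $\Delta_1$.) Your closing clause gestures at this, but the explicit statement that weakening is ``the one place'' is wrong as written. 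Second, in the conversion case the substitution fed to closure-under-substitution (property (3) of Definition~\ref{defeql}) is $\{a\{\Delta_1\}/x\}$ rather than $\{a/x\}$: since $x\notin\mathsf{dom}\,\Delta_1$, $x$ is not free in the definiens recorded in $\Delta_1$, and the free variables of $a$ lie in $\mathsf{dom}\,\Delta_1$, one has $C\{a/x\}\{\Delta_1\}=C\{\Delta_1\}\{a\{\Delta_1\}/x\}$; applying property (3) with $a\{\Delta_1\}$ to the old premise $A\{\Delta\}\equiv B\{\Delta\}$ then produces exactly the premise the new \rref{T-conv} requires. With these two corrections your proof goes through.
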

Contexts can also be weakened with new (unused) definitions, analogous to lemma ~\ref{lem:weakening}.
\begin{lemma}[Weakening with Definitions]
\label{weakdefn}
If $  \Delta_{{\mathrm{1}}} ,  \Delta_{{\mathrm{2}}}   ;   \Gamma_{{\mathrm{1}}} ,  \Gamma_{{\mathrm{2}}}   \vdash \ottnt{b} : \ottnt{B} $ and $ \Delta_{{\mathrm{1}}}  ;  \Gamma  \vdash \ottnt{a} : \ottnt{A} $ 
then $   \Delta_{{\mathrm{1}}} ,   \ottmv{x} \! = \!  \ottnt{a}  \! : \!  \ottnt{A}   ,  \Delta_{{\mathrm{2}}}   ;    \Gamma_{{\mathrm{1}}} ,   \ottmv{x}  \! = \!  \ottnt{a}  \! :^{  { \color{black}{0} }  } \!  \ottnt{A}   ,  \Gamma_{{\mathrm{2}}}   \vdash \ottnt{b} : \ottnt{B} $.
\end{lemma}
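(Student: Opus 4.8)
The plan is to derive this lemma by composing two results already in hand, Weakening (Lemma~\ref{lem:weakening}) and InsertEq (Lemma~\ref{InsertEq}), using the fact that a fresh zero-usage definition can be introduced in two stages: first add a fresh zero-usage \emph{variable} assumption of type $A$, then upgrade that assumption into a definition whose definiens is $a$. Since definitions carry no resource demands and act only on types during conversion, each stage is resource-neutral, which is exactly why the annotation on the new entry can be $0$.

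Concretely, I would first apply the Regularity lemma to the hypothesis $\Delta_1 ; \Gamma \vdash a : A$ to obtain some $\Gamma'$ with $\Delta_1 ; \Gamma' \vdash A : \textbf{Type}$; this supplies the well-formedness premise needed to weaken by a variable of type $A$. Next, I would instantiate Weakening with the derivation $\Delta_1, \Delta_2 ; \Gamma_1, \Gamma_2 \vdash b : B$ and this well-formedness derivation, choosing a fresh $x$, to obtain $\Delta_1, x \mathbin{:} A, \Delta_2 ; \Gamma_1, x :^{0} A, \Gamma_2 \vdash b : B$. Finally, I would apply InsertEq with $q := 0$ to this derivation, together with the second hypothesis $\Delta_1 ; \Gamma \vdash a : A$; InsertEq turns the variable assumption $x :^{0} A$ into the definition $x = a :^{0} A$ (and correspondingly in the plain context $\Delta$), yielding precisely the desired conclusion.

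I do not expect a real obstacle, since both component lemmas already exist and the argument is pure bookkeeping. The only point requiring a little care is the freshness of $x$: it must be chosen to avoid every variable occurring in $\Delta_1$, $\Delta_2$, and the terms and types in play, so that Weakening applies without capture and the two-context correspondence $\lfloor \Gamma_i \rfloor = \Delta_i$ is preserved. This is handled by the same freshness conventions underlying Lemmas~\ref{lem:weakening} and \ref{InsertEq}, so the composition goes through directly.
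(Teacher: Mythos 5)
Your composition is correct and is essentially the route the paper intends: the paper states Lemma~\ref{weakdefn} immediately after InsertEq (Lemma~\ref{InsertEq}) precisely because a fresh zero-graded definition is obtained by weakening with a fresh zero-graded variable assumption (via Regularity and Lemma~\ref{lem:weakening}) and then upgrading it with InsertEq at $q = 0$. The only implicit step you rely on is that Regularity and Weakening hold in the \emph{extended} system with definitions, which the paper explicitly grants (``The extended type system too has the syntactic soundness properties mentioned in Section~\ref{sec:dep-metatheory}''), so the argument goes through.
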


\scw{I want to claim some sort of multisubstitution lemma just for the type system here. But I have not proved it.}

Because we have modified the contexts to include definitions, we need to 
modify the heap reduction and compatibility relations. Only for the reduction rules that load new
assignments into the heap, we now need the added context of new variables ($\Gamma_{{\mathrm{4}}}$) to
remember their assignments. For example, the \rref{Small-AppBeta} is modified
as below.

\drules[Small]{$ [  \ottnt{H}  ]\,  \ottnt{a}  \Rightarrow_{ \ottnt{S} }^{ { \color{black}{q} } } [  \ottnt{H'} \, ;\,  \mathbf{u}' \, ;\,  \Gamma_{{\mathrm{4}}}  ]\,  \ottnt{a'} $}{SmallStep with definitions}{DAppBeta}

Similarly, \rref{Compat-Cons} needs to track more information in the context.

\drules[Compat]{$ \ottnt{H}  \vdash  \Delta ;  \Gamma $}{Compatibility with definitions}{ConsDef}
   
Note that with this modification, if $ \ottnt{H}  \vdash  \Delta ;  \Gamma $ then $ \ottnt{b}  \{  \ottnt{H}  \}   \ottsym{=}   \ottnt{b}  \{  \Delta  \} $ for any term $b$.

These are all the changes we need. Since the added context of new variables does not
play a major role in the step relation, all the previously stated lemmas regarding this relation
hold. But with dependency, the multi-substitution lemma \ref{multisub} needs to be modified
to also substitute into the type (in addition to the term).

\begin{lemma}[Multi-substitution]
If $ \ottnt{H}  \vdash  \Delta ;  \Gamma $ and $ \Delta \ ;\  \Gamma  \vdash \ottnt{a} : \ottnt{A} $, then $ \varnothing \ ;\  \varnothing  \vdash  \ottnt{a}  \{  \ottnt{H}  \}  :  \ottnt{A}  \{  \ottnt{H}  \}  $.
\end{lemma}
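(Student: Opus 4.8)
The plan is to induct on the compatibility derivation $H \vdash \Delta ; \Gamma$, exactly as in the simply-typed Lemma~\ref{multisub}, but now threading the type through each substitution step so that the conclusion substitutes into $A$ as well as into $a$. For the base case, $H$, $\Delta$, and $\Gamma$ are all empty, so $a\{H\} = a$ and $A\{H\} = A$, and the hypothesis $\varnothing ; \varnothing \vdash a : A$ is already the goal.

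For the inductive step I would peel the last entry off the heap. Writing $H$ as $H_0$ followed by a final entry with assignee $x$, allowed usage $q$, embedded context $\Gamma_0$, assignment $a_0$, and type $A_0$, the compatibility rule (\textsc{Compat-Cons}, resp.\ its definition-carrying variant \textsc{Compat-ConsDef}) supplies two facts: the assignment is itself well typed, $\Delta_0 ; \Gamma_0 \vdash a_0 : A_0$, and the truncated heap is compatible with the combined context, $H_0 \vdash \Delta_0 ; \Gamma_1 + q \cdot \Gamma_0$, where $\Delta = \Delta_0, x{:}A_0$ (resp.\ $x = a_0 : A_0$) and $\Gamma = \Gamma_1, x :^{q} A_0$. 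Applying the Substitution lemma to the assignment typing and to the main hypothesis $\Delta ; \Gamma \vdash a : A$ (with the trailing context segments empty, since $x$ is the last variable) eliminates $x$ and yields $\Delta_0 ; \Gamma_1 + q \cdot \Gamma_0 \vdash a\{a_0/x\} : A\{a_0/x\}$; here acyclicity of the heap guarantees that $x$ is not free in $\Delta_0$, $\Gamma_0$, or $\Gamma_1$, so the substitution leaves those contexts untouched. The induction hypothesis, applied with $H_0 \vdash \Delta_0 ; \Gamma_1 + q \cdot \Gamma_0$, then gives $\varnothing ; \varnothing \vdash (a\{a_0/x\})\{H_0\} : (A\{a_0/x\})\{H_0\}$. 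Because $\{H\}$ performs the heap substitutions in reverse order, $a\{H\} = (a\{a_0/x\})\{H_0\}$ and identically $A\{H\} = (A\{a_0/x\})\{H_0\}$, which is precisely the goal.

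The part I expect to require the most care is the definition-carrying case. In the dependent system the compatibility relation stores heap entries as context definitions (\textsc{Compat-ConsDef}), so the context peeled off in the inductive step contains $x = a_0 : A_0$ rather than a bare assumption; I would either invoke a form of the Substitution lemma phrased for context definitions, or first observe, via the identity $b\{H\} = b\{\Delta\}$ noted above, that substituting out the heap coincides with substituting out the definitions and thereby reduce to the assumption case. The second delicate point, absent from the simply-typed proof, is that the type must be substituted in lockstep with the term: one must verify that the Substitution lemma's conclusion genuinely propagates into the type (it does, producing $A\{a_0/x\}$) and that the reverse-order identity $A\{H\} = (A\{a_0/x\})\{H_0\}$ holds, which again rests on the acyclicity and freshness bookkeeping used for the term. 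Once these two points are discharged, the induction goes through uniformly.
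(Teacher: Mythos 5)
Your proposal is correct and takes essentially the approach the paper intends: the paper states this lemma without an explicit proof, but its remark that \textsc{Compat-Cons} is a ``converse substitution lemma'' points to exactly your argument---induction on the compatibility derivation, peeling off the last heap entry, discharging it with the Substitution lemma (which propagates into the type, giving $A\{a_0/x\}$), and closing with the reverse-order identity for $\{H\}$. The subtleties you flag---needing a definition-aware variant of substitution for the \textsc{Compat-ConsDef} case (or reducing to it via $b\{H\} = b\{\Delta\}$), and substituting into the type in lockstep with the term---are precisely the details the paper leaves implicit.
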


Another point worth noting here is that, if $ [  \ottnt{H}  ]\,  \ottnt{a}  \Rightarrow_{ \ottnt{S} }^{ { \color{black}{q} } } [  \ottnt{H'} \, ;\,  \mathbf{u}' \, ;\,  \Gamma'  ]\,  \ottnt{a'} $ then $ \ottnt{a}  \{  \ottnt{H}  \}   \equiv   \ottnt{a'}  \{  \ottnt{H'}  \} $ by lemma \ref{HOrd} and definition \ref{defeql}. Before moving further, let us reflect how the original typing and heap compatibility judgements relate to their extended counterparts. For the sake of distinction, let us denote the original relations by $\vdash_{o}$. Now, for $ \ottnt{H}  \vdash_{o}  \Delta ;  \Gamma $, let $ \Delta _ \ottnt{H} $ and $ \Gamma _ \ottnt{H} $ be $\Delta$ and $\Gamma$ respectively with their variables defined according to (assignments in) $H$. Also, let $ \ottnt{H} _{ \ottnt{H} } $ denote $H$ with the variables in the embedded contexts in $H$ defined according to $H$. Then, we have:

\begin{lemma}[Elaboration]
If $ \ottnt{H}  \vdash_{o}  \Delta ;  \Gamma $ and $ \Delta \ ;\  \Gamma  \vdash_{o}  \ottnt{a} : \ottnt{A} $, then $  \ottnt{H} _{ \ottnt{H} }   \vdash   \Delta _ \ottnt{H}  ;   \Gamma _ \ottnt{H}  $ and $  \Delta _ \ottnt{H}  \ ;\   \Gamma _ \ottnt{H}   \vdash \ottnt{a} : \ottnt{A} $.
\end{lemma}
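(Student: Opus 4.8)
The plan is to prove the two conclusions separately, in both cases reducing the extended-system judgements back to the original ones by inserting, for every in-scope variable, the definition that the heap $H$ supplies for it. The only machinery required is Lemma~\ref{InsertEq}, which converts a single plain assumption $x:A$ into a definition $x=a:A$ once the definiens $a$ type-checks in the (plain) prefix, together with weakening (Lemma~\ref{lem:weakening}) and the fact that the original rules are a special case of the extended ones (so any $\vdash_o$ derivation is also a derivation of the extended system, since \rref{T-conv} only admits more equalities). Throughout I rely on $H$ being proper and acyclic, so that the embedded context $\Gamma_j$ of the $j$-th assignment mentions only the earlier variables $x_1,\dots,x_{j-1}$; this is what makes ``defining $\Delta$ according to $H$'' well-founded and guarantees that each definiens can be typed in the plain prefix.

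For the typing conclusion $\Delta_H;\Gamma_H\vdash a:A$, I would first extract from the compatibility derivation $H\vdash_o\Delta;\Gamma$ the well-typedness of every assignment: the $j$-th use of \rref{Compat-Cons} carries $\lfloor\Gamma_j\rfloor;\Gamma_j\vdash_o a_j:A_j$, which weakening promotes to a derivation over the full plain prefix $\Delta_{<j}$ (with zero usages on the extra variables). Then I would elaborate the context by iterating Lemma~\ref{InsertEq} from right to left: starting from $\Delta;\Gamma\vdash a:A$, I turn $x_n$ into a definition, then $x_{n-1}$, and so on down to $x_1$. The right-to-left order is the crucial point, because at the step that elaborates $x_k$ the prefix $x_1,\dots,x_{k-1}$ is still entirely plain, exactly matching the hypothesis of Lemma~\ref{InsertEq}, while the already-elaborated suffix $x_{k+1},\dots,x_n$ sits in the (unconstrained) $\Delta_2$ slot of that lemma. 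After $n$ applications every assumption has become a definition, which is precisely $\Delta_H;\Gamma_H\vdash a:A$. This argument uses only $\Delta=\lfloor\Gamma\rfloor$ and the definiens typings, not the usage annotations of $\Gamma$, so the statement I actually prove keeps the usage context of the typing judgement independent of the one appearing in the compatibility judgement.

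For the compatibility conclusion $H_H\vdash\Delta_H;\Gamma_H$, I would induct on the derivation of $H\vdash_o\Delta;\Gamma$. The base case is \rref{Compat-Empty}, which transfers verbatim. In the \rref{Compat-Cons} case, with $H=H',x\overset{q}{\mapsto}(\Gamma_2\vdash a_0:A_0)$, $\Delta=\Delta',x:A_0$, $\Gamma=\Gamma_1,x:^q A_0$, and premises $H'\vdash_o\Delta';\Gamma_1+q\cdot\Gamma_2$ and $\Delta';\Gamma_2\vdash_o a_0:A_0$, the induction hypothesis gives $(H')_{H'}\vdash(\Delta')_{H'};(\Gamma_1+q\cdot\Gamma_2)_{H'}$, and the typing conclusion above, applied to the assignment, gives $(\Delta')_{H'};(\Gamma_2)_{H'}\vdash a_0:A_0$. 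Feeding these two facts into \rref{Compat-ConsDef} (the definitional analogue of \rref{Compat-Cons}) yields $H_H\vdash\Delta_H;\Gamma_H$. Here I use that elaboration commutes with context addition, scaling and erasure, e.g.\ $(\Gamma_1+q\cdot\Gamma_2)_{H'}=(\Gamma_1)_{H'}+q\cdot(\Gamma_2)_{H'}$ and $\lfloor\Gamma_H\rfloor=\Delta_H$, since adding definitions only decorates assumptions and leaves the pointwise quantity arithmetic untouched.

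The main obstacle I anticipate is the bookkeeping around the order of insertions and the prefixes. Lemma~\ref{InsertEq} requires the definiens to be typed in a plain prefix and leaves that prefix plain, so a naive left-to-right or induction-on-compatibility elaboration of the typing judgement stalls: after inserting one definition the remaining goal is an extended-system judgement over a partially-defined context, to which neither the lemma nor the induction hypothesis applies cleanly. The right-to-left iteration sidesteps this, but it forces me to first harvest all definiens typings (over plain prefixes, via weakening) before rewriting the context, and to check that the quantity annotations transported by each application of Lemma~\ref{InsertEq} line up with those demanded by $\Gamma_H$. Verifying the commutation of elaboration with $+$, $\cdot$ and $\lfloor\cdot\rfloor$ is routine, but it is exactly what makes the \rref{Compat-ConsDef} step in the compatibility induction go through.
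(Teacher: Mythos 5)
The paper states this lemma without giving a proof (it appears only as a remark before ``By virtue of this elaboration, soundness for the extended system implies soundness for the original one''), so there is no official argument to compare against; judged on its own, your proposal is correct and uses exactly the machinery the paper intends (Lemma~\ref{InsertEq}, weakening, and \rref{Compat-ConsDef}). You also correctly identify the two points where a naive attempt would stall: the typing half must be proved with the usage context of the typing judgement decoupled from the one in the compatibility judgement (otherwise the \rref{Compat-Cons} case of the compatibility induction cannot invoke it, since the definiens is typed with $\Gamma_2$ while compatibility holds for $\Gamma_1 + q\cdot\Gamma_2$), and the \ref{InsertEq} insertions must proceed right-to-left so that each definiens typing harvested from the compatibility derivation is over a still-plain prefix. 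One minor simplification: the premise of \rref{Compat-Cons} already types each definiens over the full plain prefix $\Delta_{<j}$, so the preliminary weakening step you describe is unnecessary (though harmless). Your observation that elaboration commutes with $+$, $\cdot$, and $\lfloor\cdot\rfloor$, and that the prefix-only scoping of embedded contexts makes the whole construction well-founded, completes the argument.
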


By virtue of this elaboration, soundness for the extended system implies soundness for the original one.

\subsection{Proof of the Heap Soundness Theorem}

Now we prove the heap soundness theorem for \Langname{}.  However, we
first state some subordinate lemmas that are required in the proof. The
following lemma allows us to throw away resources from heaps.

\begin{lemma}[Sub-heaping]
\label{subheap}
If $ \ottnt{H}  \vdash  \Delta ;  \Gamma $ and $\Gamma' \leq \Gamma$, then there exists $H'$ such that $ \ottnt{H'}  \vdash  \Delta ;  \Gamma' $ and $H' \leq H$.
\end{lemma}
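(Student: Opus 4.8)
The plan is to proceed by induction on the derivation of $H \vdash \Delta ; \Gamma$. In each step I lower the allowed usage stored in the topmost heap assignment to match the smaller usage demanded by $\Gamma'$, and recurse into the tail of the heap; this mirrors the structure already used in the Count Balance proof. For the base case \rref{Compat-Empty} we have $H = \varnothing$ and $\Gamma = \varnothing$, and the only context below $\varnothing$ in the sub-usage order is $\varnothing$ itself, so $\Gamma' = \varnothing$ and $H' = \varnothing$ works.

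For the main case \rref{Compat-Cons} the derivation ends with $H = H_0, x \overset{q}{\mapsto} \Gamma_2 \vdash a : A$ and $\Gamma = \Gamma_1, x :^q A$, where $H_0 \vdash \Delta_0 ; \Gamma_1 + q \cdot \Gamma_2$ and $\Delta_0 ; \Gamma_2 \vdash a : A$. Since $\Gamma' \leq \Gamma$ preserves the erased context, I can write $\Gamma' = \Gamma_1', x :^{q'} A$ with $\Gamma_1' \leq \Gamma_1$ and $q' \leq q$. I would then construct $H' = H_0', x \overset{q'}{\mapsto} \Gamma_2 \vdash a : A$, keeping the embedded context $\Gamma_2$, the term $a$, and the type $A$ unchanged and lowering only the allowed usage to $q'$. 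To obtain the tail $H_0'$, I invoke the induction hypothesis on $H_0 \vdash \Delta_0 ; \Gamma_1 + q \cdot \Gamma_2$, which requires establishing the sub-usage inequality $\Gamma_1' + q' \cdot \Gamma_2 \leq \Gamma_1 + q \cdot \Gamma_2$.

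This inequality is the crux, and it reduces to two monotonicity facts about the context operations, each a pointwise consequence of the order-compatibility axioms of the partially-ordered semiring from Section~\ref{posemiring}. First, $q' \leq q$ gives $q' \cdot \Gamma_2 \leq q \cdot \Gamma_2$ by right-monotonicity of multiplication (applying $q_1 \leq q_2 \implies q_1 \cdot r \leq q_2 \cdot r$ at each component $r$ of $\Gamma_2$). Second, I combine this with $\Gamma_1' \leq \Gamma_1$ through monotonicity of addition; since the semiring axiom is stated only for a fixed summand, adding two inequalities is not primitive but follows by the transitivity chain $\Gamma_1' + q' \cdot \Gamma_2 \leq \Gamma_1 + q' \cdot \Gamma_2 \leq \Gamma_1 + q \cdot \Gamma_2$ (using commutativity of $+$ for the first step). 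The induction hypothesis then yields $H_0'$ with $H_0' \vdash \Delta_0 ; \Gamma_1' + q' \cdot \Gamma_2$ and $H_0' \leq H_0$; because $\lfloor H_0' \rfloor = \lfloor H_0 \rfloor$, the side condition $\Delta_0 ; \Gamma_2 \vdash a : A$ carries over verbatim, so \rref{Compat-Cons} reassembles $H' \vdash \Delta ; \Gamma'$. Finally $H' \leq H$ holds because $H_0' \leq H_0$ and the single lowered assignment satisfies $q' \leq q$ with all other data identical.

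For \Langname{} I must also treat the definition case \rref{Compat-ConsDef}, where the top entry is a definition $x = a :^q A$; this is handled identically, lowering $q$ to $q'$, keeping the definiens $a$ and the type $A$, and recursing via the same two monotonicity facts. The only genuine work is the sub-usage inequality above, so I expect the main obstacle to be purely the bookkeeping of confirming that addition of two inequalities is licensed, which, as noted, is a two-step transitivity argument rather than a single axiom; everything else is routine reassembly of the compatibility derivation.
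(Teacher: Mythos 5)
Your proof is correct: induction on the compatibility derivation, splitting $\Gamma' \leq \Gamma$ pointwise into $\Gamma_1' \leq \Gamma_1$ and $q' \leq q$, establishing $\Gamma_1' + q' \cdot \Gamma_2 \leq \Gamma_1 + q \cdot \Gamma_2$ via the semiring's order-compatibility axioms (with commutativity and transitivity to combine the two inequalities), and reassembling with \rref{Compat-Cons} (resp.\ \rref{Compat-ConsDef}) while keeping the embedded context, term, and type fixed so that $H' \leq H$ holds. The paper states this lemma without an explicit proof, but your argument mirrors exactly the induction structure the paper uses for Count Balance and is the evident intended one; there is no gap.
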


We can insert new definitions into the heap.
\begin{lemma}[SmallStep weakening]
\label{lem:smallstep-weakening}
If $ [   \ottnt{H_{{\mathrm{1}}}}  ,  \ottnt{H_{{\mathrm{2}}}}   ]\,  \ottnt{a}  \Rightarrow_{   \ottnt{S}  \, \cup   \{  \ottmv{x}  \}    }^{ { \color{black}{r} } } [   \ottnt{H'_{{\mathrm{1}}}}  ,  \ottnt{H'}  \, ;\,   \mathbf{u}_{{\mathrm{1}}}  \mathop{\diamond}  \mathbf{u}  \, ;\,  \Gamma_{{\mathrm{4}}}  ]\,  \ottnt{a'} $ 
  and $ | \ottnt{H'_{{\mathrm{1}}}} |  =  | \mathbf{u}_{{\mathrm{1}}} |  =  | \ottnt{H_{{\mathrm{1}}}} | $ 
  and $\ottmv{x} \, \not\in \, \mathsf{dom} \,   \ottnt{H_{{\mathrm{1}}}}  ,  \ottnt{H_{{\mathrm{2}}}}  $
  and $ \,\text{fv}\,  \ottnt{a_{{\mathrm{1}}}}  \cap  \mathsf{dom} \,  \ottnt{H_{{\mathrm{2}}}}  = \emptyset$, 
then \[
    [    \ottnt{H_{{\mathrm{1}}}}  ,   \ottmv{x}  \overset{ { \color{black}{q} } }{\mapsto} { \Gamma_{{\mathrm{1}}} \vdash  \ottnt{a_{{\mathrm{1}}}}  :  \ottnt{A_{{\mathrm{1}}}} }    ,  \ottnt{H_{{\mathrm{2}}}}   ]\,  \ottnt{a}  \Rightarrow_{ \ottnt{S} }^{ { \color{black}{r} } } [    \ottnt{H'_{{\mathrm{1}}}}  ,   \ottmv{x}  \overset{ { \color{black}{q} } }{\mapsto} { \Gamma_{{\mathrm{1}}} \vdash  \ottnt{a_{{\mathrm{1}}}}  :  \ottnt{A_{{\mathrm{1}}}} }    ,  \ottnt{H'}  \, ;\,    \mathbf{u}'_{{\mathrm{1}}}  \mathop{\diamond}    { \color{black}{0} }     \mathop{\diamond}  \mathbf{u}  \, ;\,  \Gamma_{{\mathrm{4}}}  ]\,  \ottnt{a'} 
\]
\end{lemma}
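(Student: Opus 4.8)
The plan is to proceed by induction on the derivation of the given step $ [   \ottnt{H_{{\mathrm{1}}}}  ,  \ottnt{H_{{\mathrm{2}}}}   ]\,  \ottnt{a}  \Rightarrow_{   \ottnt{S}  \, \cup   \{  \ottmv{x}  \}    }^{ { \color{black}{r} } } [   \ottnt{H'_{{\mathrm{1}}}}  ,  \ottnt{H'}  \, ;\,   \mathbf{u}_{{\mathrm{1}}}  \mathop{\diamond}  \mathbf{u}  \, ;\,  \Gamma_{{\mathrm{4}}}  ]\,  \ottnt{a'} $, analysing the last rule applied. The guiding intuition is that a freshly inserted, unreferenced assignment $ \ottmv{x}  \overset{ { \color{black}{q} } }{\mapsto} { \Gamma_{{\mathrm{1}}} \vdash  \ottnt{a_{{\mathrm{1}}}}  :  \ottnt{A_{{\mathrm{1}}}} } $ is inert: it is never looked up, it never collides with a freshly generated name, and every reduction rule fires exactly as before. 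The hypothesis $ | \ottnt{H'_{{\mathrm{1}}}} |  =  | \mathbf{u}_{{\mathrm{1}}} |  =  | \ottnt{H_{{\mathrm{1}}}} | $ is what pins the insertion point: since the step relation only rewrites allowed usages in place (via lookup) and appends new bindings at the tail (via the beta rules), the boundary after $H_1$ is preserved, so $x$ can be reinserted immediately after $H_1'$ with the tail $H'$ following it unchanged. Accordingly, in each case the new consumption vector is obtained from the premise's vector by splicing a single $ { \color{black}{0} } $ at the position of $x$; its $H_1$-segment and $H'$-segment agree with the premise's $\mathbf{u}_1$ and $\mathbf{u}$, giving $ \mathbf{u}'_{{\mathrm{1}}}  \mathop{\diamond}    { \color{black}{0} }     \mathop{\diamond}  \mathbf{u} $.

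I would dispatch the cases as follows. For the lookup rule \rref{Small-Var}, the variable read lies in $H_1$ or $H_2$ and, since $\ottmv{x} \, \not\in \, \mathsf{dom} \,   \ottnt{H_{{\mathrm{1}}}}  ,  \ottnt{H_{{\mathrm{2}}}} $ guarantees $x$ is distinct from it, the same lookup succeeds in the extended heap; the usage of $x$ is left alone, which is exactly why its consumption entry is $ { \color{black}{0} } $. The congruence rules \rref{Small-AppL} and \rref{Small-CaseL} recurse on a subterm, so they follow from the induction hypothesis, threading the extended heap through (and the length hypotheses are inherited by the subderivation); the remaining value-elimination rules follow the same inert pattern. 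The sub-usage rule \rref{Small-Sub} likewise reduces to the induction hypothesis, keeping the allowed usage of $x$ fixed. The interesting cases are the binding rules \rref{Small-AppBeta} and its definition-aware variant \rref{Small-DAppBeta}, which allocate fresh names: here the support-set bookkeeping does the work. In the premise fresh names are chosen to avoid $ \ottnt{S}  \cup  \{  \ottmv{x}  \} $; in the conclusion the support set is only $S$, but $x$ now appears in the heap, so the freshness check against the heap domain again excludes $x$. Hence the identical fresh names can be chosen, the new bindings are appended to the tail (after $x$), and the derivation goes through with the same $\Gamma_{{\mathrm{4}}}$.

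The two remaining side conditions guarantee that the extended heap is still a legitimate (proper, acyclic) heap rather than enabling any rule to fire differently. The condition $\ottmv{x} \, \not\in \, \mathsf{dom} \,   \ottnt{H_{{\mathrm{1}}}}  ,  \ottnt{H_{{\mathrm{2}}}} $ ensures we do not create a duplicate binding, preserving properness so that lookups remain deterministic; and $ \,\text{fv}\,  \ottnt{a_{{\mathrm{1}}}}  \cap  \mathsf{dom} \,  \ottnt{H_{{\mathrm{2}}}}  = \emptyset$ ensures that the assignment $a_1$ refers only to variables occurring before $x$, so inserting $ \ottmv{x}  \overset{ { \color{black}{q} } }{\mapsto} { \Gamma_{{\mathrm{1}}} \vdash  \ottnt{a_{{\mathrm{1}}}}  :  \ottnt{A_{{\mathrm{1}}}} } $ between $H_1$ and $H_2$ does not introduce a forward reference and hence preserves acyclicity.

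I expect the main obstacle to be the beta cases, specifically the fresh-name and support-set accounting. One must verify precisely that moving $x$ from the support set into the heap domain leaves the set of admissible fresh names unchanged, and --- since \rref{Small-AppBeta} may in a single step add a binding and extend the tail --- that the consumption-vector decomposition is threaded so that the spliced $ { \color{black}{0} } $ lands at exactly the $x$ position and the appended entries remain in the $\mathbf{u}$-segment. The lookup, congruence, and sub-usage cases are routine once this alignment is in place.
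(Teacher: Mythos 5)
The paper gives you nothing to compare against here: Lemma~\ref{lem:smallstep-weakening} is stated bare and invoked exactly once, in the \rref{T-weak-def} case of the Invariance lemma, with only the remark that ``extra assignments do not impact the evaluation.'' Your proof---induction on the step derivation, treating the inserted binding as inert, splicing a $0$ into the consumption vector at the position of $x$, using the length hypothesis $ | \ottnt{H'_{{\mathrm{1}}}} |  =  | \mathbf{u}_{{\mathrm{1}}} |  =  | \ottnt{H_{{\mathrm{1}}}} | $ to keep the $H_1$/tail boundary aligned (lookups rewrite usages in place, allocations append at the tail), and discharging the lookup, congruence, and sub-usage cases routinely---is precisely the argument the statement presupposes, and your case decomposition is the right one.

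The one point that must be nailed down is exactly the one you flag as the main obstacle: what the allocation rules check freshness against. Your claim that ``moving $x$ from the support set into the heap domain leaves the set of admissible fresh names unchanged'' is correct if \rref{Small-AppBeta} and its relatives require the new name to avoid $S$ together with the \emph{domain} of the heap: then the premise's constraint $(S\cup\{x\})\cup\mathsf{dom}(H_1,H_2)$ and the conclusion's constraint $S\cup\mathsf{dom}(H_1,H_2)\cup\{x\}$ are literally the same set, so the identical names and hence the identical $\Gamma_4$ are available. However, the notation accompanying Figure~\ref{fig:heap-semantics} defines $\mathsf{Var}\,H$ as the domain of $H$ \emph{together with the free variables of the terms stored in $H$}; if freshness is checked against $\mathsf{Var}$ rather than $\mathsf{dom}$, the conclusion additionally demands that the new names avoid $\mathsf{fv}(a_1)$ (and the variables of $\Gamma_1$ and $A_1$), and no hypothesis of the lemma guarantees this for the premise's names---$\mathsf{fv}(a_1)\cap\mathsf{dom}(H_2)=\emptyset$ says nothing about freshly allocated variables. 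Under that reading both your proof and the lemma itself need a strengthened hypothesis (e.g.\ $\mathsf{fv}(a_1)\subseteq S\cup\mathsf{dom}(H_1)$), which does hold at the lemma's single use site, since there $a_1$ is well typed in $\Delta$ and $\mathsf{dom}(\Delta)\subseteq S$. So your proof is sound under the domain-based reading of the freshness side condition, and you have correctly isolated the only place it could break; when writing the beta cases in full, make that reading explicit or add the missing hypothesis.
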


\begin{lemma}[Compatibility weakening]
\label{lem:compatibility-weakening}
Let
$  \ottnt{H_{{\mathrm{1}}}}  ,  \ottnt{H_{{\mathrm{2}}}}   \vdash   \Delta ,  \Delta'  ;   \ottsym{(}  \ottsym{(}   { \color{black}{r} }   \cdot   \Gamma_{{\mathrm{11}}}   \ottsym{)}  \ottsym{+}  \Gamma_{{\mathrm{0}}}  \ottsym{)} ,  \Gamma_{{\mathrm{1}}}   $
and $ \Delta \ ;\  \Gamma_{{\mathrm{11}}}  \vdash \ottnt{a} : \ottnt{A} $
and $ | \ottnt{H_{{\mathrm{1}}}} |  =  | \Gamma_{{\mathrm{0}}} |  =  | \Delta | $
and $\ottmv{x} \, \not\in \, \mathsf{dom} \,   \ottnt{H_{{\mathrm{1}}}}  ,  \ottnt{H_{{\mathrm{2}}}}  $. 
Let $\ottnt{H'_{{\mathrm{2}}}}$ be $\ottnt{H_{{\mathrm{2}}}}$ with the embedded contexts weakened by inserting $ \ottmv{x}  \! = \!  \ottnt{a}  \! :^{  { \color{black}{0} }  } \!  \ottnt{A} $ at the $ | \Delta | $ position. Then, 
\[    \ottnt{H_{{\mathrm{1}}}}  ,   \ottmv{x}  \overset{ { \color{black}{r} } }{\mapsto} { \Gamma_{{\mathrm{11}}} \vdash  \ottnt{a}  :  \ottnt{A} }    ,  \ottnt{H'_{{\mathrm{2}}}}   \vdash    \Delta ,   \ottmv{x} \! = \!  \ottnt{a}  \! : \!  \ottnt{A}   ,  \Delta'  ;    \Gamma_{{\mathrm{0}}} ,   \ottmv{x}  \! = \!  \ottnt{a}  \! :^{ { \color{black}{r} } } \!  \ottnt{A}   ,  \Gamma_{{\mathrm{1}}}    \]
\end{lemma}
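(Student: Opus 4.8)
The plan is to induct on the tail heap $H_2$---equivalently, on the derivation of $H_1, H_2 \vdash \Delta,\Delta'; ((r\cdot\Gamma_{11})+\Gamma_0),\Gamma_1$, which \rref{Compat-ConsDef} constructs by peeling assignments off the \emph{end} of the heap. The assignment $x \overset{r}{\mapsto} (\Gamma_{11}\vdash a:A)$ to be spliced in, together with the hypothesis $\Delta;\Gamma_{11}\vdash a:A$, stays fixed throughout. The invariant to maintain is that the first $|\Delta|$ entries of the usage context always retain the shape $(r\cdot\Gamma_{11})+(\cdots)$, so that the splice position and its $r$-scaled contribution survive each inductive step.

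In the base case $H_2$ is empty, forcing $\Delta'=\varnothing$ and $\Gamma_1=\varnothing$, so the hypothesis reads $H_1\vdash\Delta;(r\cdot\Gamma_{11})+\Gamma_0$. Commuting the sum to $\Gamma_0+(r\cdot\Gamma_{11})$, this is exactly the first premise of \rref{Compat-ConsDef}, whose second premise is the given $\Delta;\Gamma_{11}\vdash a:A$; applying the rule yields the goal (with $H'_2=\varnothing$).

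For the inductive step, write the last assignment of $H_2$ as $y\overset{s}{\mapsto}(\Gamma_y\vdash b:B)$, so that $\Delta'$ ends in $(y=b:B)$ and $\Gamma_1$ in $(y=b:^s B)$. Inverting \rref{Compat-ConsDef} gives a typing premise $\Delta,\Delta'';\Gamma_y\vdash b:B$ and a smaller compatibility judgement over $H_1, H_2''$ whose usage context is the remaining one plus $s\cdot\Gamma_y$. Since $\Gamma_y$ ranges only over the variables of $\Delta,\Delta''$, it splits at position $|\Delta|$ as $\Gamma_y=\Gamma_y^{\Delta},\Gamma_y^{\Delta''}$; absorbing $s\cdot\Gamma_y^{\Delta}$ into $\Gamma_0$ re-establishes the required shape and lets the induction hypothesis apply, producing the compatibility judgement over $H_1$, the new $x$-assignment, and $H_2''$ with $(x=a:^0 A)$ inserted into every embedded context.

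It remains to re-attach $y$ by one more use of \rref{Compat-ConsDef}. Its typing premise, $\Delta,(x=a:A),\Delta'';\ \Gamma_y^{\Delta},(x=a:^0 A),\Gamma_y^{\Delta''}\vdash b:B$, follows from Weakening with Definitions (Lemma~\ref{weakdefn}) applied to $\Delta,\Delta'';\Gamma_y\vdash b:B$ and $\Delta;\Gamma_{11}\vdash a:A$. Its compatibility premise requires that the context delivered by the induction hypothesis coincide with the target usage context minus $y$, plus $s$ times the weakened embedded context $\Gamma_y^{\star}=\Gamma_y^{\Delta},(x=a:^0 A),\Gamma_y^{\Delta''}$. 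Checking this is the one step that genuinely uses the semiring laws: at the spliced position the contribution is $(x=a:^r A)+s\cdot(x=a:^0 A)$, which collapses to $(x=a:^r A)$ by annihilation $s\cdot 0 = 0$ and the unit law $r+0=r$, while the remaining entries reassemble by associativity and commutativity of $+$. Ensuring the freshly inserted $0$-graded definition contributes nothing to any downstream count---and keeping the splits of $\Gamma_y$ and the re-associations aligned---is the main obstacle; everything else is bookkeeping once the shape invariant is in place.
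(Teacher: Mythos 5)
Your proof is correct. The paper states this lemma without an explicit proof (it is one of the subordinate lemmas feeding the Invariance lemma, with the details deferred to the mechanization), and your argument---induction on the compatibility derivation, splitting the embedded context $\Gamma_y$ at position $|\Delta|$ and re-associating $s\cdot\Gamma_y^{\Delta}$ into $\Gamma_0$ to preserve the $(r\cdot\Gamma_{11})+\Gamma_0$ shape, discharging the new typing premise via Lemma~\ref{weakdefn}, and collapsing $r+s\cdot 0=r$ at the spliced position when re-attaching the last assignment with the extended compatibility rule---is exactly the argument the development requires.
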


The soundness theorem follows as an instance of the invariance lemma below. The lemma provides a strong enough hypothesis for the induction to go through.

\begin{lemma}[Invariance] \label{inv}
If $ \ottnt{H}  \vdash  \Delta ;  \Gamma_{{\mathrm{0}}}  \ottsym{+}   { \color{black}{q} }   \cdot   \Gamma  $ and $ \Delta \ ;\  \Gamma  \vdash \ottnt{a} : \ottnt{A} $ and $1 \leq q$ and $ \mathsf{dom} \,  \Delta  \subseteq \ottnt{S}$, then either $a$ is a value or there exists $\Gamma'$, $\ottnt{H'}$, $\mathbf{u}'$, $\Gamma_{{\mathrm{4}}}$ and $\ottnt{a'}$ such that:
\begin{itemize}
\item $ [  \ottnt{H}  ]\,  \ottnt{a}  \Rightarrow_{ \ottnt{S} }^{ { \color{black}{q} } } [  \ottnt{H'} \, ;\,  \mathbf{u}' \, ;\,  \Gamma_{{\mathrm{4}}}  ]\,  \ottnt{a'} $
\item $ \ottnt{H'}  \vdash   \Delta ,   \lfloor \Gamma_{{\mathrm{4}}} \rfloor   ;  \ottsym{(}   \Gamma_{{\mathrm{0}}} ,    { \color{black}{0} }    \cdot   \Gamma_{{\mathrm{4}}}    \ottsym{)}  \ottsym{+}   { \color{black}{q} }   \cdot   \Gamma'  $
\item $  \Delta ,   \lfloor \Gamma_{{\mathrm{4}}} \rfloor   \ ;\  \Gamma'  \vdash \ottnt{a'} : \ottnt{A} $
\item $  { \color{black}{q} }   \cdot    \ncoverline{ \Gamma' }    +  \mathbf{u}'  +   \mathbf{0}   \mathop{\diamond}   \ncoverline{ \Gamma_{{\mathrm{4}}} }   \times \langle H'\rangle \leq  { \color{black}{q} }   \cdot    (   \ncoverline{ \Gamma }   \mathop{\diamond}   \mathbf{0}   )   + \mathbf{u}' \times \langle H'\rangle +   \mathbf{0}   \mathop{\diamond}   \ncoverline{ \Gamma_{{\mathrm{4}}} }  $
\item $ \mathsf{dom} \,  \Gamma_{{\mathrm{4}}} $ is disjoint from $S$
\end{itemize}
\end{lemma}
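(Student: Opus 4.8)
The plan is to prove the Invariance Lemma by induction on the typing derivation $\Delta; \Gamma \vdash \ottnt{a} : \ottnt{A}$, using the compatibility hypothesis $H \vdash \Delta; \Gamma_0 + q \cdot \Gamma$ to supply the resources for each step. The generalized shape of the statement is precisely what lets the induction close. The copy quantity $q$ is needed so that the $\ottkw{case}$ case can recurse on its scrutinee with a \emph{larger} copy quantity, and the spare context $\Gamma_0$ is needed to park the resources of sibling sub-terms while one sub-term steps. Whenever I isolate a premise whose usage context is $\Gamma_i$ and whose scale relative to $\Gamma$ is $s$ (so $\Gamma = \Gamma_i' + s\cdot\Gamma_i$), I re-associate $\Gamma_0 + q\cdot\Gamma$ as $(\Gamma_0 + q\cdot\Gamma_i') + (q\cdot s)\cdot\Gamma_i$ and apply the induction hypothesis with copy quantity $q\cdot s$; the side condition $1 \le q\cdot s$ follows from $1\le q$ and $1\le s$ by monotonicity of multiplication in the partially-ordered semiring. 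Clause~(5), freshness of $\mathsf{dom}\,\Gamma_4$, is immediate from the freshness discipline of the step rules together with $\mathsf{dom}\,\Delta \subseteq S$.

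The introduction forms ($\textbf{Type}$, $\ottkw{Unit}$, $\ottkw{unit}$, $\Pi$, $\Sigma$, $\lambda$, pairs, and injections) are values, so the left disjunct holds at once. The administrative rules reduce to auxiliary lemmas: for \rref{T-sub} I first shrink the heap with Sub-heaping (Lemma~\ref{subheap}) to match the smaller usage context, apply the induction hypothesis, and lift the step back using \rref{Small-Sub}; for \rref{T-conv} I apply the induction hypothesis unchanged (neither the step relation nor compatibility inspects the type) and re-apply conversion to the reduct, using that definitional equivalence survives the context extension by $\lfloor\Gamma_4\rfloor$; and \rref{T-weak}, \rref{T-def}, \rref{T-weak-def} reduce to the induction hypothesis on the smaller context, reinstating the erased assumption or definition with SmallStep weakening (Lemma~\ref{lem:smallstep-weakening}) and Lemmas~\ref{InsertEq} and~\ref{weakdefn}. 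The genuinely computational cases are: \rref{T-var}, where $x$ is retrieved by \rref{Small-Var} (the restriction that the copy quantity be $1$ or more holds since $1\le q$), the reduct being the heap assignment for $x$ and the new usage context its embedded context; \rref{T-app}, which splits on whether the function is a value---either stepping it with \rref{Small-AppL} and the re-associated hypothesis, or firing the $\beta$-rule \rref{Small-AppBeta} to load the argument as a fresh assignment of allowed usage $q\cdot q'$ (where $q'$ is the domain annotation) and invoking Compatibility weakening (Lemma~\ref{lem:compatibility-weakening}); and the eliminators \rref{T-UnitElim}, \rref{T-SigmaElim}, \rref{T-SumElim}, which split on whether the scrutinee is a value, with the $\ottkw{case}$ scrutinee stepped at copy quantity $q\cdot q_c$ through \rref{Small-CaseL} (here $q_c \ge 1$ is the annotation, so $1 \le q\cdot q_c$) and the appropriate branch selected by \rref{Small-CaseOne}/\rref{Small-CaseTwo} once it becomes an injection.

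Clauses~(2) and~(3)---preservation of compatibility and of typing---follow in each case from the relevant step rule together with Compatibility weakening for the $\beta$- and case-reduction steps and with the Substitution and Multi-substitution lemmas for the dependent pattern-matching reductions, where a definition $\ottmv{x} \! = \! \ottnt{a} \! : \! \ottnt{A}$ must be threaded into $\Delta$ (via \rref{T-def}) so that a branch's refined result type stays well-formed under the delayed substitution. I expect the main obstacle to be clause~(4), the algebraic balance-sheet inequality tying the residual usages $\ncoverline{\Gamma'}$, the consumption vector $\mathbf{u}'$, and the added context $\Gamma_4$ together through the transformation matrix $\langle H'\rangle$. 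My plan is to reduce it to the Count Balance property of well-formed heaps and then account for two local effects: in the \rref{Small-Var} case, $\mathbf{u}'$ is (a $q$-scaling of) the indicator of the looked-up location while $\langle H'\rangle = \langle H\rangle$, so the inequality is the count-balance equation with the consumed copies transposed across; in the $\beta$- and case-allocation cases, $\langle H'\rangle$ grows by one row and column into the block form $\big(\begin{smallmatrix}\langle H\rangle & \mathbf{0}^\intercal \\ \ncoverline{\Gamma_2} & 0\end{smallmatrix}\big)$ exactly as in the proof of Count Balance, and I must show that the fresh variable's contributions to $\Gamma_4$ and to the enlarged matrix cancel. Propagating this inequality uniformly---past \rref{Small-Sub}, where the order turns strict and the heap discards surplus; through the copy-quantity rescaling of \rref{Small-CaseL}; and across the block enlargement of $\langle H'\rangle$ at each allocation---is the delicate heart of the argument and is where the bulk of the routine but lengthy semiring-matrix calculation lives. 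Finally, Soundness (Theorem~\ref{thm:heap-soundness}) is recovered as the instance $q = 1$ with $\Gamma_0$ the all-zero context, for which clause~(4) collapses to its stated form.
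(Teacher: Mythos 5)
Your overall strategy matches the paper's: induction on the typing derivation, with the spare context $\Gamma_0$ parking sibling resources, copy-quantity rescaling to $q\cdot r$ for the $\ottkw{case}$ scrutinee (justified by $1\le q$ and the rule's $1\le r$ premise), Lemma~\ref{subheap} for \rref{T-sub}, Lemmas~\ref{lem:smallstep-weakening}, \ref{lem:compatibility-weakening}, \ref{InsertEq} and~\ref{weakdefn} for the weakening and definition bookkeeping, the value/non-value split with inversion and a \rref{T-conv} repair of the result type in the application and eliminator cases, block-matrix accounting against $\langle H'\rangle$ for clause~(4), and soundness recovered as the instance $q=1$, $\Gamma_0 = 0\cdot\Gamma$.

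There is, however, one concrete error in how you distribute the variable cases. The lemma is stated for the extended system with definitions, and its compatibility hypothesis is built from \rref{Compat-ConsDef}; consequently every variable in $\mathsf{dom}\,\Delta$ carries a definition, so the cases \rref{T-var} and \rref{T-weak} are vacuous --- the paper dismisses them for exactly this reason. The genuine lookup case is \rref{T-def}, which you instead classify as administrative, claiming it ``reduces to the induction hypothesis on the smaller context.'' That cannot work: the subject of \rref{T-def}'s conclusion is the variable $x$, while its premise types the definiens $a$; applying the induction hypothesis to that premise yields a step of $a$, not the required lookup step of $x$, and no weakening lemma converts one into the other. The paper handles \rref{T-def} directly: decompose $\Gamma_0$ as $\Gamma_0', x \!=\! a:^r A$ so that the heap entry for $x$ has allowed usage $r+q$, fire the lookup (legal since $1\le q$), decrement the usage to $r$, take $\Gamma'$ to be the embedded context weakened by $x$'s own definition at grade $0$, and discharge clause~(4) by reflexivity. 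Your description of the ``\rref{T-var}'' case is in substance exactly this argument (lookup via the step rule, reduct the stored assignment, new usage context the embedded context), so the fix is a relabelling: move that argument to \rref{T-def}, add the extra weakening of the embedded context by $x \!=\! a:^0 A$, and discard \rref{T-var} and \rref{T-weak} as impossible under the compatibility hypothesis.
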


\begin{proof}
Let  $ \ottnt{H}  \vdash  \Delta ;  \Gamma_{{\mathrm{0}}}  \ottsym{+}   { \color{black}{q} }   \cdot   \Gamma  $ and $ \Delta \ ;\  \Gamma  \vdash \ottnt{a} : \ottnt{A} $. We prove this lemma by induction on the typing judgement $ \Delta \ ;\  \Gamma  \vdash \ottnt{a} : \ottnt{A} $.
\ifextended\else For brevity, we show only the most interesting cases. \fi 

\begin{itemize}
\item \textbf{\rref{T-sub}}

Let $ \Delta \ ;\  \Gamma_{{\mathrm{2}}}  \vdash \ottnt{a} : \ottnt{A} $ where $ \Delta \ ;\  \Gamma_{{\mathrm{1}}}  \vdash \ottnt{a} : \ottnt{A} $ and $\Gamma_{{\mathrm{1}}} \leq \Gamma_{{\mathrm{2}}}$. Further, $ \ottnt{H}  \vdash  \Delta ;  \Gamma_{{\mathrm{0}}}  \ottsym{+}   { \color{black}{q} }   \cdot   \Gamma_{{\mathrm{2}}}  $.

Since $ \ottnt{H}  \vdash  \Delta ;  \Gamma_{{\mathrm{0}}}  \ottsym{+}   { \color{black}{q} }   \cdot   \Gamma_{{\mathrm{2}}}  $ and $\Gamma_{{\mathrm{1}}} \leq \Gamma_{{\mathrm{2}}}$, by lemma \ref{subheap}, there exists $H'$ such that $ \ottnt{H'}  \vdash  \Delta ;  \Gamma_{{\mathrm{0}}}  \ottsym{+}   { \color{black}{q} }   \cdot   \Gamma_{{\mathrm{1}}}  $ and $H' \leq H$. By inductive hypothesis, $ [  \ottnt{H'}  ]\,  \ottnt{a}  \Rightarrow_{ \ottnt{S} }^{ { \color{black}{q} } } [  \ottnt{H''} \, ;\,  \mathbf{u} \, ;\,  \Gamma_{{\mathrm{4}}}  ]\,  \ottnt{a''} $. Since $H' \leq H$, we have, $ [  \ottnt{H}  ]\,  \ottnt{a}  \Rightarrow_{ \ottnt{S} }^{ { \color{black}{q} } } [  \ottnt{H''} \, ;\,  \mathbf{u} \, ;\,  \Gamma_{{\mathrm{4}}}  ]\,  \ottnt{a''} $. The remaining clauses follow from the inductive hypothesis and the fact that $\Gamma_{{\mathrm{1}}} \leq \Gamma_{{\mathrm{2}}}$.

\item We don't need to consider \rref{T-var} and \rref{T-weak} since for $ \ottnt{H}  \vdash  \Delta ;  \Gamma $, any variable $x \in \text{ dom } \Delta$ should be a definition of the form $ \ottmv{x} \! = \!  \ottnt{a}  \! : \!  \ottnt{A} $.

\item \textbf{\rref{T-def}}

Let $  \Delta ,   \ottmv{x} \! = \!  \ottnt{a}  \! : \!  \ottnt{A}   \ ;\     { \color{black}{0} }    \cdot   \Gamma  ,   \ottmv{x}  \! = \!  \ottnt{a}  \! :^{  { \color{black}{1} }  } \!  \ottnt{A}    \vdash \ottmv{x} : \ottnt{A} $ where $ \Delta \ ;\  \Gamma  \vdash \ottnt{a} : \ottnt{A} $ and $\ottmv{x} \, \not\in \, \mathsf{dom} \, \Delta$. Further, $ \ottnt{H}  \vdash   \Delta ,   \ottmv{x} \! = \!  \ottnt{a}  \! : \!  \ottnt{A}   ;  \Gamma_{{\mathrm{0}}}  \ottsym{+}   { \color{black}{q} }   \cdot   \ottsym{(}     { \color{black}{0} }    \cdot   \Gamma  ,   \ottmv{x}  \! = \!  \ottnt{a}  \! :^{  { \color{black}{1} }  } \!  \ottnt{A}    \ottsym{)}  $. Let $\Gamma_{{\mathrm{0}}} =  \Gamma'_{{\mathrm{0}}} ,   \ottmv{x}  \! = \!  \ottnt{a}  \! :^{ { \color{black}{r} } } \!  \ottnt{A}  $. So $ \ottnt{H}  \vdash   \Delta ,   \ottmv{x} \! = \!  \ottnt{a}  \! : \!  \ottnt{A}   ;   \Gamma'_{{\mathrm{0}}} ,   \ottmv{x}  \! = \!  \ottnt{a}  \! :^{ \ottsym{(}  { \color{black}{r} }  \ottsym{+}  { \color{black}{q} }  \ottsym{)} } \!  \ottnt{A}   $. Therefore, $H =  \ottnt{H_{{\mathrm{1}}}}  ,   \ottmv{x}  \overset{ \ottsym{(}  { \color{black}{r} }  \ottsym{+}  { \color{black}{q} }  \ottsym{)} }{\mapsto} { \Gamma_{{\mathrm{11}}} \vdash  \ottnt{a}  :  \ottnt{A} }  $ where $ \Delta \ ;\  \Gamma_{{\mathrm{11}}}  \vdash \ottnt{a} : \ottnt{A} $.

Since $1 \leq q$, we have, $ [   \ottnt{H_{{\mathrm{1}}}}  ,   \ottmv{x}  \overset{ \ottsym{(}  { \color{black}{r} }  \ottsym{+}  { \color{black}{q} }  \ottsym{)} }{\mapsto} { \Gamma_{{\mathrm{11}}} \vdash  \ottnt{a}  :  \ottnt{A} }    ]\,  \ottmv{x}  \Rightarrow_{ \ottnt{S} }^{ { \color{black}{q} } } [   \ottnt{H_{{\mathrm{1}}}}  ,   \ottmv{x}  \overset{ { \color{black}{r} } }{\mapsto} { \Gamma_{{\mathrm{11}}} \vdash  \ottnt{a}  :  \ottnt{A} }   \, ;\,    \mathbf{0}^{| \ottnt{H_{{\mathrm{1}}}} |}   \mathop{\diamond}   { \color{black}{q} }   \, ;\,  \varnothing  ]\,  \ottnt{a} $. Also, $  \ottnt{H_{{\mathrm{1}}}}  ,   \ottmv{x}  \overset{ { \color{black}{r} } }{\mapsto} { \Gamma_{{\mathrm{11}}} \vdash  \ottnt{a}  :  \ottnt{A} }    \vdash   \Delta ,   \ottmv{x} \! = \!  \ottnt{a}  \! : \!  \ottnt{A}   ;  \ottsym{(}   \Gamma'_{{\mathrm{0}}} ,   \ottmv{x}  \! = \!  \ottnt{a}  \! :^{ { \color{black}{r} } } \!  \ottnt{A}    \ottsym{)}  \ottsym{+}   { \color{black}{q} }   \cdot   \ottsym{(}   \Gamma_{{\mathrm{11}}} ,   \ottmv{x}  \! = \!  \ottnt{a}  \! :^{  { \color{black}{0} }  } \!  \ottnt{A}    \ottsym{)}  $. By weakening, we have, $  \Delta ,   \ottmv{x} \! = \!  \ottnt{a}  \! : \!  \ottnt{A}   \ ;\   \Gamma_{{\mathrm{11}}} ,   \ottmv{x}  \! = \!  \ottnt{a}  \! :^{  { \color{black}{0} }  } \!  \ottnt{A}    \vdash \ottnt{a} : \ottnt{A} $. The fourth clause: $  { \color{black}{q} }   \cdot    (   \ncoverline{ \Gamma_{{\mathrm{11}}} }   \mathop{\diamond}    { \color{black}{0} }    )    +   (   \mathbf{0}   \mathop{\diamond}   { \color{black}{q} }   )   \leq     \mathbf{0}   \mathop{\diamond}   { \color{black}{q} }     +   (   \mathbf{0}   \mathop{\diamond}   { \color{black}{q} }   )   \times \big( \begin{smallmatrix} \langle H_1 \rangle &  \mathbf{0} ^\intercal \\  \ncoverline{ \Gamma_{{\mathrm{11}}} }  & 0 \end{smallmatrix} \big)$ follows by reflexivity.

\item \textbf{\rref{T-weak-def}}

Let $  \Delta ,   \ottmv{x} \! = \!  \ottnt{a}  \! : \!  \ottnt{A}   \ ;\   \Gamma ,   \ottmv{x}  \! = \!  \ottnt{a}  \! :^{  { \color{black}{0} }  } \!  \ottnt{A}    \vdash \ottnt{b} : \ottnt{B} $ where $ \Delta \ ;\  \Gamma  \vdash \ottnt{b} : \ottnt{B} $ and $ \Delta \ ;\  \Gamma_{{\mathrm{9}}}  \vdash \ottnt{a} : \ottnt{A} $  and $\ottmv{x} \, \not\in \, \mathsf{dom} \, \Delta$. Further, $ \ottnt{H}  \vdash   \Delta ,   \ottmv{x} \! = \!  \ottnt{a}  \! : \!  \ottnt{A}   ;  \Gamma_{{\mathrm{0}}}  \ottsym{+}   { \color{black}{q} }   \cdot   \ottsym{(}   \Gamma ,   \ottmv{x}  \! = \!  \ottnt{a}  \! :^{  { \color{black}{0} }  } \!  \ottnt{A}    \ottsym{)}  $. Let $\Gamma_{{\mathrm{0}}} =  \Gamma'_{{\mathrm{0}}} ,   \ottmv{x}  \! = \!  \ottnt{a}  \! :^{ { \color{black}{r} } } \!  \ottnt{A}  $. So $ \ottnt{H}  \vdash   \Delta ,   \ottmv{x} \! = \!  \ottnt{a}  \! : \!  \ottnt{A}   ;   \Gamma'_{{\mathrm{0}}}  \ottsym{+}    { \color{black}{q} }   \cdot   \Gamma   ,   \ottmv{x}  \! = \!  \ottnt{a}  \! :^{ { \color{black}{r} } } \!  \ottnt{A}   $. Therefore, $H =  \ottnt{H_{{\mathrm{1}}}}  ,   \ottmv{x}  \overset{ { \color{black}{r} } }{\mapsto} { \Gamma_{{\mathrm{11}}} \vdash  \ottnt{a}  :  \ottnt{A} }  $ where $ \Delta \ ;\  \Gamma_{{\mathrm{11}}}  \vdash \ottnt{a} : \ottnt{A} $. Also, $ \ottnt{H_{{\mathrm{1}}}}  \vdash  \Delta ;   \Gamma'_{{\mathrm{0}}}  \ottsym{+}   { \color{black}{q} }   \cdot   \Gamma    \ottsym{+}   { \color{black}{r} }   \cdot   \Gamma_{{\mathrm{11}}}  $.

Applying the inductive hypothesis, we get, $ [  \ottnt{H_{{\mathrm{1}}}}  ]\,  \ottnt{b}  \Rightarrow_{   \ottnt{S}  \, \cup   \{  \ottmv{x}  \}    }^{ { \color{black}{q} } } [   \ottnt{H'_{{\mathrm{1}}}}  ,  \ottnt{H_{{\mathrm{4}}}}  \, ;\,   \mathbf{u}_{{\mathrm{1}}}  \mathop{\diamond}  \mathbf{u}_{{\mathrm{4}}}  \, ;\,  \Gamma_{{\mathrm{4}}}  ]\,  \ottnt{b'} $ and $  \ottnt{H'_{{\mathrm{1}}}}  ,  \ottnt{H_{{\mathrm{4}}}}   \vdash   \Delta ,   \lfloor \Gamma_{{\mathrm{4}}} \rfloor   ;  \ottsym{(}    \Gamma'_{{\mathrm{0}}}  \ottsym{+}   { \color{black}{r} }   \cdot   \Gamma_{{\mathrm{11}}}   ,    { \color{black}{0} }    \cdot   \Gamma_{{\mathrm{4}}}    \ottsym{)}  \ottsym{+}   { \color{black}{q} }   \cdot   \ottsym{(}   \Gamma' ,  \Gamma''   \ottsym{)}  $ and $  \Delta ,   \lfloor \Gamma_{{\mathrm{4}}} \rfloor   \ ;\   \Gamma' ,  \Gamma''   \vdash \ottnt{b'} : \ottnt{B} $. Here, $ | \ottnt{H'_{{\mathrm{1}}}} |  =  | \mathbf{u}_{{\mathrm{1}}} |  =  | \ottnt{H_{{\mathrm{1}}}} |  =  | \Gamma' |  =  | \Delta | $. Now, note that $x$ does not appear in $\ottnt{H_{{\mathrm{4}}}}$. Let $\ottnt{H'_{{\mathrm{4}}}}$ be $\ottnt{H_{{\mathrm{4}}}}$ with the embedded contexts weakened by inserting $ \ottmv{x}  \! = \!  \ottnt{a}  \! :^{  { \color{black}{0} }  } \!  \ottnt{A} $ at the $ | \Delta | $ position. Then, by \ref{lem:compatibility-weakening}, $    \ottnt{H'_{{\mathrm{1}}}}  ,   \ottmv{x}  \overset{ { \color{black}{r} } }{\mapsto} { \Gamma_{{\mathrm{11}}} \vdash  \ottnt{a}  :  \ottnt{A} }     ,  \ottnt{H'_{{\mathrm{4}}}}   \vdash     \Delta ,   \ottmv{x} \! = \!  \ottnt{a}  \! : \!  \ottnt{A}    ,   \lfloor \Gamma_{{\mathrm{4}}} \rfloor   ;  \ottsym{(}     \Gamma'_{{\mathrm{0}}} ,   \ottmv{x}  \! = \!  \ottnt{a}  \! :^{ { \color{black}{r} } } \!  \ottnt{A}    ,    { \color{black}{0} }    \cdot   \Gamma_{{\mathrm{4}}}    \ottsym{)}  \ottsym{+}   { \color{black}{q} }   \cdot   \ottsym{(}     \Gamma' ,   \ottmv{x}  \! = \!  \ottnt{a}  \! :^{  { \color{black}{0} }  } \!  \ottnt{A}    ,  \Gamma''   \ottsym{)}  $. Because extra assignments do not impact the evaluation, we have, 
by \ref{lem:smallstep-weakening}, $ [   \ottnt{H_{{\mathrm{1}}}}  ,   \ottmv{x}  \overset{ { \color{black}{r} } }{\mapsto} { \Gamma_{{\mathrm{11}}} \vdash  \ottnt{a}  :  \ottnt{A} }    ]\,  \ottnt{b}  \Rightarrow_{ \ottnt{S} }^{ { \color{black}{q} } } [     \ottnt{H'_{{\mathrm{1}}}}  ,   \ottmv{x}  \overset{ { \color{black}{r} } }{\mapsto} { \Gamma_{{\mathrm{11}}} \vdash  \ottnt{a}  :  \ottnt{A} }     ,  \ottnt{H'_{{\mathrm{4}}}}  \, ;\,     \mathbf{u}_{{\mathrm{1}}}  \mathop{\diamond}    { \color{black}{0} }      \mathop{\diamond}  \mathbf{u}_{{\mathrm{4}}}  \, ;\,  \Gamma_{{\mathrm{4}}}  ]\,  \ottnt{b'} $. Also, by weakening lemma \ref{weakdefn}, $    \Delta ,   \ottmv{x} \! = \!  \ottnt{a}  \! : \!  \ottnt{A}    ,   \lfloor \Gamma_{{\mathrm{4}}} \rfloor   \ ;\     \Gamma' ,   \ottmv{x}  \! = \!  \ottnt{a}  \! :^{  { \color{black}{0} }  } \!  \ottnt{A}    ,  \Gamma''   \vdash \ottnt{b'} : \ottnt{B} $.  The fourth clause follows from the inductive hypothesis after inserting $0$ at the $ | \Delta | $-position on both sides.

\item \textbf{\rref{T-app}}

Let $ \Delta \ ;\  \Gamma_{{\mathrm{1}}}  \ottsym{+}   { \color{black}{r} }   \cdot   \Gamma_{{\mathrm{2}}}   \vdash \ottnt{b} \, \ottnt{a} : \ottnt{B}  \ottsym{\{}  \ottnt{a}  \ottsym{/}  \ottmv{x}  \ottsym{\}} $ where $ \Delta \ ;\  \Gamma_{{\mathrm{1}}}  \vdash \ottnt{b} :  \Pi  \ottmv{x} \!:^ { \color{black}{r} } \! \ottnt{A} . \ottnt{B}  $ and $ \Delta \ ;\  \Gamma_{{\mathrm{2}}}  \vdash \ottnt{a} : \ottnt{A} $. Further, $ \ottnt{H}  \vdash  \Delta ;  \Gamma_{{\mathrm{0}}}  \ottsym{+}   { \color{black}{q} }   \cdot   \ottsym{(}  \Gamma_{{\mathrm{1}}}  \ottsym{+}   { \color{black}{r} }   \cdot   \Gamma_{{\mathrm{2}}}   \ottsym{)}  $. Now, there are two cases to consider depending on whether $b$ is a value or not.

\begin{itemize}
 \item $b$ is not a value.
 
 In this case, we get from the inductive hypothesis, $ [  \ottnt{H}  ]\,  \ottnt{b}  \Rightarrow_{   \ottnt{S}  \, \cup   \,\text{fv}\,  \ottnt{a}    }^{ { \color{black}{q} } } [  \ottnt{H'} \, ;\,  \mathbf{u}' \, ;\,  \Gamma_{{\mathrm{4}}}  ]\,  \ottnt{b'} $ and $  \Delta ,   \lfloor \Gamma_{{\mathrm{4}}} \rfloor   \ ;\  \Gamma'_{{\mathrm{1}}}  \vdash \ottnt{b'} :  \Pi  \ottmv{x} \!:^ { \color{black}{r} } \! \ottnt{A} . \ottnt{B}  $ and $ \ottnt{H'}  \vdash   \Delta ,   \lfloor \Gamma_{{\mathrm{4}}} \rfloor   ;  \ottsym{(}    \Gamma_{{\mathrm{0}}}  \ottsym{+}   \ottsym{(}  { \color{black}{q} }  \cdot  { \color{black}{r} }  \ottsym{)}   \cdot   \Gamma_{{\mathrm{2}}}   ,    { \color{black}{0} }    \cdot   \Gamma_{{\mathrm{4}}}    \ottsym{)}  \ottsym{+}   { \color{black}{q} }   \cdot   \Gamma'_{{\mathrm{1}}}  $. So $ [  \ottnt{H}  ]\,  \ottnt{b} \, \ottnt{a}  \Rightarrow_{ \ottnt{S} }^{ { \color{black}{q} } } [  \ottnt{H'} \, ;\,  \mathbf{u}' \, ;\,  \Gamma_{{\mathrm{4}}}  ]\,  \ottnt{b'} \, \ottnt{a} $ By weakening, we get, $  \Delta ,   \lfloor \Gamma_{{\mathrm{4}}} \rfloor   \ ;\   \Gamma_{{\mathrm{2}}} ,    { \color{black}{0} }    \cdot   \Gamma_{{\mathrm{4}}}    \vdash \ottnt{a} : \ottnt{A} $. Therefore, by App, we have, $  \Delta ,   \lfloor \Gamma_{{\mathrm{4}}} \rfloor   \ ;\  \Gamma'_{{\mathrm{1}}}  \ottsym{+}   { \color{black}{r} }   \cdot   \ottsym{(}   \Gamma_{{\mathrm{2}}} ,    { \color{black}{0} }    \cdot   \Gamma_{{\mathrm{4}}}    \ottsym{)}   \vdash \ottnt{b'} \, \ottnt{a} : \ottnt{B}  \ottsym{\{}  \ottnt{a}  \ottsym{/}  \ottmv{x}  \ottsym{\}} $. Also, by rearranging, we get $ \ottnt{H'}  \vdash   \Delta ,   \lfloor \Gamma_{{\mathrm{4}}} \rfloor   ;  \ottsym{(}   \Gamma_{{\mathrm{0}}} ,    { \color{black}{0} }    \cdot   \Gamma_{{\mathrm{4}}}    \ottsym{)}  \ottsym{+}   { \color{black}{q} }   \cdot   \ottsym{(}  \Gamma'_{{\mathrm{1}}}  \ottsym{+}   { \color{black}{r} }   \cdot   \ottsym{(}   \Gamma_{{\mathrm{2}}} ,    { \color{black}{0} }    \cdot   \Gamma_{{\mathrm{4}}}    \ottsym{)}   \ottsym{)}  $. The fourth clause follows from the corresponding clause of inductive hypothesis.
 
 \item $b$ is a value.
 
 Since $b$ has a $\Pi$-type, it must be headed by a $\lambda$. Let $\ottnt{b}  \ottsym{=}   \lambda \ottmv{y} \!:^ { \color{black}{s} } \! \ottnt{A_{{\mathrm{1}}}} . \ottnt{b_{{\mathrm{1}}}} $ for some sufficiently fresh variable $y$. Then, we have, $ \Delta \ ;\  \Gamma_{{\mathrm{1}}}  \vdash  \lambda \ottmv{y} \!:^ { \color{black}{s} } \! \ottnt{A_{{\mathrm{1}}}} . \ottnt{b_{{\mathrm{1}}}}  :  \Pi  \ottmv{x} \!:^ { \color{black}{r} } \! \ottnt{A} . \ottnt{B}  $. By inversion, there exists $\ottnt{B_{{\mathrm{1}}}}$ such that $  \Delta ,   \ottmv{y} \!\!:\!\! \ottnt{A_{{\mathrm{1}}}}   \ ;\   \Gamma_{{\mathrm{1}}} ,   \ottmv{y} \! :^{ { \color{black}{s} } }\! \ottnt{A_{{\mathrm{1}}}}    \vdash \ottnt{b_{{\mathrm{1}}}} : \ottnt{B_{{\mathrm{1}}}} $ and $  \Delta \ ;\  \Gamma_{{\mathrm{7}}}  \vdash  \Pi  \ottmv{y} \!:^ { \color{black}{s} } \! \ottnt{A_{{\mathrm{1}}}} . \ottnt{B_{{\mathrm{1}}}}  :  \textbf{Type}  $ and $ \ottsym{(}   \Pi  \ottmv{y} \!:^ { \color{black}{s} } \! \ottnt{A_{{\mathrm{1}}}} . \ottnt{B_{{\mathrm{1}}}}   \ottsym{)}  \{  \Delta  \}   \equiv   \ottsym{(}   \Pi  \ottmv{x} \!:^ { \color{black}{r} } \! \ottnt{A} . \ottnt{B}   \ottsym{)}  \{  \Delta  \} $. By definition \ref{defeql}, $s = r$ and $ \ottnt{A_{{\mathrm{1}}}}  \{  \Delta  \}   \equiv   \ottnt{A}  \{  \Delta  \} $ and $ \ottnt{B_{{\mathrm{1}}}}  \{  \Delta  \}   \equiv   \ottnt{B}  \ottsym{\{}  \ottmv{y}  \ottsym{/}  \ottmv{x}  \ottsym{\}}  \{  \Delta  \} $. Now, by \rref{T-Conv}, $ \Delta \ ;\  \Gamma_{{\mathrm{2}}}  \vdash \ottnt{a} : \ottnt{A_{{\mathrm{1}}}} $. Therefore, by lemma \ref{InsertEq}, $  \Delta ,   \ottmv{y} \! = \!  \ottnt{a}  \! : \!  \ottnt{A_{{\mathrm{1}}}}   \ ;\   \Gamma_{{\mathrm{1}}} ,   \ottmv{y}  \! = \!  \ottnt{a}  \! :^{ { \color{black}{r} } } \!  \ottnt{A_{{\mathrm{1}}}}    \vdash \ottnt{b_{{\mathrm{1}}}} : \ottnt{B_{{\mathrm{1}}}} $.
 
We have, $ [  \ottnt{H}  ]\,  \ottsym{(}   \lambda \ottmv{y} \!:^ { \color{black}{r} } \! \ottnt{A_{{\mathrm{1}}}} . \ottnt{b_{{\mathrm{1}}}}   \ottsym{)} \, \ottnt{a}  \Rightarrow_{ \ottnt{S} }^{ { \color{black}{q} } } [   \ottnt{H}  ,   \ottmv{y}  \overset{ \ottsym{(}  { \color{black}{q} }  \cdot  { \color{black}{r} }  \ottsym{)} }{\mapsto} { \Gamma_{{\mathrm{2}}} \vdash  \ottnt{a}  :  \ottnt{A_{{\mathrm{1}}}} }   \, ;\,   \mathbf{0}  \, ;\,   \ottmv{y}  \! = \!  \ottnt{a}  \! :^{ \ottsym{(}  { \color{black}{q} }  \cdot  { \color{black}{r} }  \ottsym{)} } \!  \ottnt{A_{{\mathrm{1}}}}   ]\,  \ottnt{b_{{\mathrm{1}}}} $. Again, since $ \ottnt{H}  \vdash  \Delta ;   \Gamma_{{\mathrm{0}}}  \ottsym{+}   { \color{black}{q} }   \cdot   \Gamma_{{\mathrm{1}}}    \ottsym{+}   \ottsym{(}  { \color{black}{q} }  \cdot  { \color{black}{r} }  \ottsym{)}   \cdot   \Gamma_{{\mathrm{2}}}  $, we get, $  \ottnt{H}  ,   \ottmv{y}  \overset{ \ottsym{(}  { \color{black}{q} }  \cdot  { \color{black}{r} }  \ottsym{)} }{\mapsto} { \Gamma_{{\mathrm{2}}} \vdash  \ottnt{a}  :  \ottnt{A_{{\mathrm{1}}}} }    \vdash   \Delta ,   \ottmv{y} \! = \!  \ottnt{a}  \! : \!  \ottnt{A_{{\mathrm{1}}}}   ;  \ottsym{(}   \Gamma_{{\mathrm{0}}} ,   \ottmv{y}  \! = \!  \ottnt{a}  \! :^{  { \color{black}{0} }  } \!  \ottnt{A_{{\mathrm{1}}}}    \ottsym{)}  \ottsym{+}   { \color{black}{q} }   \cdot   \ottsym{(}   \Gamma_{{\mathrm{1}}} ,   \ottmv{y}  \! = \!  \ottnt{a}  \! :^{ { \color{black}{r} } } \!  \ottnt{A_{{\mathrm{1}}}}    \ottsym{)}  $.
 
By regularity, we know that $ \Delta \ ;\  \Gamma'  \vdash \ottnt{B}  \ottsym{\{}  \ottnt{a}  \ottsym{/}  \ottmv{x}  \ottsym{\}} :  \textbf{Type}  $. By weakening, $  \Delta ,   \ottmv{y} \! = \!  \ottnt{a}  \! : \!  \ottnt{A_{{\mathrm{1}}}}   \ ;\   \Gamma' ,   \ottmv{y}  \! = \!  \ottnt{a}  \! :^{  { \color{black}{0} }  } \!  \ottnt{A_{{\mathrm{1}}}}    \vdash \ottnt{B}  \ottsym{\{}  \ottnt{a}  \ottsym{/}  \ottmv{x}  \ottsym{\}} :  \textbf{Type}  $. But $ \ottnt{B}  \ottsym{\{}  \ottnt{a}  \ottsym{/}  \ottmv{x}  \ottsym{\}}  \{   \Delta ,   \ottmv{y} \! = \!  \ottnt{a}  \! : \!  \ottnt{A_{{\mathrm{1}}}}    \}   \ottsym{=}   \ottnt{B}  \ottsym{\{}  \ottnt{a}  \ottsym{/}  \ottmv{x}  \ottsym{\}}  \{  \Delta  \}  =  \ottnt{B}  \ottsym{\{}  \ottmv{y}  \ottsym{/}  \ottmv{x}  \ottsym{\}}  \ottsym{\{}  \ottnt{a}  \ottsym{/}  \ottmv{y}  \ottsym{\}}  \{  \Delta  \}  =  \ottnt{B}  \ottsym{\{}  \ottmv{y}  \ottsym{/}  \ottmv{x}  \ottsym{\}}  \{  \Delta  \}   \ottsym{\{}   \ottnt{a}  \{  \Delta  \}   \ottsym{/}  \ottmv{y}  \ottsym{\}}  \equiv   \ottnt{B_{{\mathrm{1}}}}  \{  \Delta  \}   \ottsym{\{}   \ottnt{a}  \{  \Delta  \}   \ottsym{/}  \ottmv{y}  \ottsym{\}} =  \ottnt{B_{{\mathrm{1}}}}  \ottsym{\{}  \ottnt{a}  \ottsym{/}  \ottmv{y}  \ottsym{\}}  \{  \Delta  \}   \ottsym{=}   \ottnt{B_{{\mathrm{1}}}}  \{   \Delta ,   \ottmv{y} \! = \!  \ottnt{a}  \! : \!  \ottnt{A_{{\mathrm{1}}}}    \} $. Hence, by \rref{T-Conv}, $  \Delta ,   \ottmv{y} \! = \!  \ottnt{a}  \! : \!  \ottnt{A_{{\mathrm{1}}}}   \ ;\   \Gamma_{{\mathrm{1}}} ,   \ottmv{y}  \! = \!  \ottnt{a}  \! :^{ { \color{black}{r} } } \!  \ottnt{A_{{\mathrm{1}}}}    \vdash \ottnt{b_{{\mathrm{1}}}} : \ottnt{B}  \ottsym{\{}  \ottnt{a}  \ottsym{/}  \ottmv{x}  \ottsym{\}} $. The fourth clause: $   { \color{black}{q} }   \cdot    (   \ncoverline{ \Gamma_{{\mathrm{1}}} }   \mathop{\diamond}   { \color{black}{r} }   )    +   \mathbf{0}    +   (   \mathbf{0}   \mathop{\diamond}   \ottsym{(}  { \color{black}{q} }  \cdot  { \color{black}{r} }  \ottsym{)}   )   \times \big( \begin{smallmatrix} \langle H \rangle &  \mathbf{0} ^\intercal \\  \ncoverline{ \Gamma_{{\mathrm{2}}} }  & 0 \end{smallmatrix} \big) \leq    { \color{black}{q} }   \cdot    (   (   \ncoverline{ \Gamma_{{\mathrm{1}}} }   +   { \color{black}{r} }   \cdot    \ncoverline{ \Gamma_{{\mathrm{2}}} }    )   \mathop{\diamond}    { \color{black}{0} }    )    +   \mathbf{0}    +   (   \mathbf{0}   \mathop{\diamond}   \ottsym{(}  { \color{black}{q} }  \cdot  { \color{black}{r} }  \ottsym{)}   )  $ follows by reflexivity.  
 
\end{itemize}

\item \textbf{\rref{T-conv}}

Let $ \Delta \ ;\  \Gamma  \vdash \ottnt{a} : \ottnt{B} $ where $ \Delta \ ;\  \Gamma  \vdash \ottnt{a} : \ottnt{A} $ and $ \Delta \ ;\  \Gamma_{{\mathrm{1}}}  \vdash \ottnt{B} :  \textbf{Type}  $ and $ \ottnt{A}  \{  \Delta  \}   \equiv   \ottnt{B}  \{  \Delta  \} $. Further, $ \ottnt{H}  \vdash  \Delta ;  \Gamma_{{\mathrm{0}}}  \ottsym{+}   { \color{black}{q} }   \cdot   \Gamma  $. Therefore, by inductive hypothesis, $ [  \ottnt{H}  ]\,  \ottnt{a}  \Rightarrow_{ \ottnt{S} }^{ { \color{black}{q} } } [  \ottnt{H'} \, ;\,  \mathbf{u}' \, ;\,  \Gamma_{{\mathrm{4}}}  ]\,  \ottnt{a'} $ and $  \Delta ,   \lfloor \Gamma_{{\mathrm{4}}} \rfloor   \ ;\  \Gamma'  \vdash \ottnt{a'} : \ottnt{A} $ and $ \ottnt{H'}  \vdash   \Delta ,   \lfloor \Gamma_{{\mathrm{4}}} \rfloor   ;  \ottsym{(}   \Gamma_{{\mathrm{0}}} ,    { \color{black}{0} }    \cdot   \Gamma_{{\mathrm{4}}}    \ottsym{)}  \ottsym{+}   { \color{black}{q} }   \cdot   \Gamma'  $.

Now, since $ \,\text{fv}\,  \ottnt{A}  \subseteq  \mathsf{dom} \,  \Delta $ and $ \,\text{fv}\,  \ottnt{B}  \subseteq  \mathsf{dom} \,  \Delta $; we have, $ \ottnt{A}  \{   \Delta ,   \lfloor \Gamma_{{\mathrm{4}}} \rfloor    \}   \ottsym{=}   \ottnt{A}  \{  \Delta  \} $ and $ \ottnt{B}  \{   \Delta ,   \lfloor \Gamma_{{\mathrm{4}}} \rfloor    \}   \ottsym{=}   \ottnt{B}  \{  \Delta  \} $. Therefore, $  \Delta ,   \lfloor \Gamma_{{\mathrm{4}}} \rfloor   \ ;\  \Gamma'  \vdash \ottnt{a'} : \ottnt{B} $. The other clauses follow from the inductive hypothesis.

\ifextended
\item \textbf{\rref{T-UnitElim}}

Let $ \Delta \ ;\  \Gamma_{{\mathrm{1}}}  \ottsym{+}  \Gamma_{{\mathrm{2}}}  \vdash  \ottkw{let}\,  \ottkw{unit} \,=\, \ottnt{a} \ \ottkw{in}\  \ottnt{b}  : \ottnt{B}  \ottsym{\{}  \ottnt{a}  \ottsym{/}  \ottmv{y}  \ottsym{\}} $ where $ \Delta \ ;\  \Gamma_{{\mathrm{1}}}  \vdash \ottnt{a} : \ottkw{Unit} $ and $ \Delta \ ;\  \Gamma_{{\mathrm{2}}}  \vdash \ottnt{b} : \ottnt{B}  \ottsym{\{}  \ottkw{unit}  \ottsym{/}  \ottmv{y}  \ottsym{\}} $ and $  \Delta ,   \ottmv{y} \!\!:\!\! \ottkw{Unit}   \ ;\   \Gamma ,   \ottmv{y} \! :^{ { \color{black}{r} } }\! \ottkw{Unit}    \vdash \ottnt{B} :  \textbf{Type}  $. Further, $ \ottnt{H}  \vdash  \Delta ;  \Gamma_{{\mathrm{0}}}  \ottsym{+}   { \color{black}{q} }   \cdot   \ottsym{(}  \Gamma_{{\mathrm{1}}}  \ottsym{+}  \Gamma_{{\mathrm{2}}}  \ottsym{)}  $. Now, there are two cases to consider depending on whether $a$ is a value or not.

\begin{itemize}
\item $a$ is not a value.

In this case, we get from the inductive hypothesis, $ [  \ottnt{H}  ]\,  \ottnt{a}  \Rightarrow_{   \ottnt{S}  \, \cup     \,\text{fv}\,  \ottnt{b}   \, \cup   \{  \ottmv{y}  \}      }^{ { \color{black}{q} } } [  \ottnt{H'} \, ;\,  \mathbf{u} \, ;\,  \Gamma_{{\mathrm{4}}}  ]\,  \ottnt{a'} $ and $  \Delta ,   \lfloor \Gamma_{{\mathrm{4}}} \rfloor   \ ;\  \Gamma'_{{\mathrm{1}}}  \vdash \ottnt{a'} : \ottkw{Unit} $ and $ \ottnt{H'}  \vdash   \Delta ,   \lfloor \Gamma_{{\mathrm{4}}} \rfloor   ;  \ottsym{(}   \ottsym{(}  \Gamma_{{\mathrm{0}}}  \ottsym{+}   { \color{black}{q} }   \cdot   \Gamma_{{\mathrm{2}}}   \ottsym{)} ,    { \color{black}{0} }    \cdot   \Gamma_{{\mathrm{4}}}    \ottsym{)}  \ottsym{+}   { \color{black}{q} }   \cdot   \Gamma'_{{\mathrm{1}}}  $. Therefore, $ [  \ottnt{H}  ]\,   \ottkw{let}\,  \ottkw{unit} \,=\, \ottnt{a} \ \ottkw{in}\  \ottnt{b}   \Rightarrow_{ \ottnt{S} }^{ { \color{black}{q} } } [  \ottnt{H'} \, ;\,  \mathbf{u} \, ;\,  \Gamma_{{\mathrm{4}}}  ]\,   \ottkw{let}\,  \ottkw{unit} \,=\, \ottnt{a'} \ \ottkw{in}\  \ottnt{b}  $. And, by weakening and \rref{UnitElim}, we have, $  \Delta ,   \lfloor \Gamma_{{\mathrm{4}}} \rfloor   \ ;\  \Gamma'_{{\mathrm{1}}}  \ottsym{+}  \ottsym{(}   \Gamma_{{\mathrm{2}}} ,    { \color{black}{0} }    \cdot   \Gamma_{{\mathrm{4}}}    \ottsym{)}  \vdash  \ottkw{let}\,  \ottkw{unit} \,=\, \ottnt{a'} \ \ottkw{in}\  \ottnt{b}  : \ottnt{B}  \ottsym{\{}  \ottnt{a'}  \ottsym{/}  \ottmv{y}  \ottsym{\}} $.\\

Now, by regularity, $ \Delta \ ;\  \Gamma'  \vdash \ottnt{B}  \ottsym{\{}  \ottnt{a}  \ottsym{/}  \ottmv{y}  \ottsym{\}} :  \textbf{Type}  $. By weakening, $  \Delta ,   \lfloor \Gamma_{{\mathrm{4}}} \rfloor   \ ;\   \Gamma' ,    { \color{black}{0} }    \cdot   \Gamma_{{\mathrm{4}}}    \vdash \ottnt{B}  \ottsym{\{}  \ottnt{a}  \ottsym{/}  \ottmv{y}  \ottsym{\}} :  \textbf{Type}  $. But then, $ \ottnt{B}  \ottsym{\{}  \ottnt{a}  \ottsym{/}  \ottmv{y}  \ottsym{\}}  \{   \Delta ,   \lfloor \Gamma_{{\mathrm{4}}} \rfloor    \}   \ottsym{=}   \ottnt{B}  \ottsym{\{}  \ottnt{a}  \ottsym{/}  \ottmv{y}  \ottsym{\}}  \{  \Delta  \}  =  \ottnt{B}  \{  \Delta  \}   \ottsym{\{}   \ottnt{a}  \{  \Delta  \}   \ottsym{/}  \ottmv{y}  \ottsym{\}}$. Also, $ \ottnt{B}  \ottsym{\{}  \ottnt{a'}  \ottsym{/}  \ottmv{y}  \ottsym{\}}  \{   \Delta ,   \lfloor \Gamma_{{\mathrm{4}}} \rfloor    \}   \ottsym{=}   \ottnt{B}  \{   \Delta ,   \lfloor \Gamma_{{\mathrm{4}}} \rfloor    \}   \ottsym{\{}   \ottnt{a'}  \{   \Delta ,   \lfloor \Gamma_{{\mathrm{4}}} \rfloor    \}   \ottsym{/}  \ottmv{y}  \ottsym{\}} =  \ottnt{B}  \{  \Delta  \}   \ottsym{\{}   \ottnt{a'}  \{   \Delta ,   \lfloor \Gamma_{{\mathrm{4}}} \rfloor    \}   \ottsym{/}  \ottmv{y}  \ottsym{\}}$. Again, $ \ottnt{a}  \{  \Delta  \}  =  \ottnt{a}  \{  \ottnt{H}  \}   \equiv   \ottnt{a'}  \{  \ottnt{H'}  \}  =  \ottnt{a'}  \{   \Delta ,   \lfloor \Gamma_{{\mathrm{4}}} \rfloor    \} $. Hence, by \rref{T-Conv}, $  \Delta ,   \lfloor \Gamma_{{\mathrm{4}}} \rfloor   \ ;\  \Gamma'_{{\mathrm{1}}}  \ottsym{+}  \ottsym{(}   \Gamma_{{\mathrm{2}}} ,    { \color{black}{0} }    \cdot   \Gamma_{{\mathrm{4}}}    \ottsym{)}  \vdash  \ottkw{let}\,  \ottkw{unit} \,=\, \ottnt{a'} \ \ottkw{in}\  \ottnt{b}  : \ottnt{B}  \ottsym{\{}  \ottnt{a}  \ottsym{/}  \ottmv{y}  \ottsym{\}} $. The fourth clause: $   { \color{black}{q} }   \cdot    (   \ncoverline{ \Gamma'_{{\mathrm{1}}} }   +   (   \ncoverline{ \Gamma_{{\mathrm{2}}} }   \mathop{\diamond}   \mathbf{0}   )   )    +  \mathbf{u}   +   (   \mathbf{0}   \mathop{\diamond}   \ncoverline{ \Gamma_{{\mathrm{4}}} }   )   \times \langle H' \rangle \leq   { \color{black}{q} }   \cdot    (   (   \ncoverline{ \Gamma_{{\mathrm{1}}} }   +   \ncoverline{ \Gamma_{{\mathrm{2}}} }   )   \mathop{\diamond}   \mathbf{0}   )    +  \mathbf{u}  \times \langle H' \rangle +   \mathbf{0}   \mathop{\diamond}   \ncoverline{ \Gamma_{{\mathrm{4}}} }  $ follows from the inductive hypothesis.

\item $a$ is a value.

Since $a$ has type $\ottkw{Unit}$, we have, $\ottnt{a}  \ottsym{=}  \ottkw{unit}$ and $  { \color{black}{0} }    \cdot   \Gamma   \leq  \Gamma_{{\mathrm{1}}}$, for some $\Gamma$. Now, $ [  \ottnt{H}  ]\,   \ottkw{let}\,  \ottkw{unit} \,=\, \ottkw{unit} \ \ottkw{in}\  \ottnt{b}   \Rightarrow_{ \ottnt{S} }^{ { \color{black}{q} } } [  \ottnt{H} \, ;\,   \mathbf{0}  \, ;\,  \varnothing  ]\,  \ottnt{b} $. By sub-usaging, $ \Delta \ ;\  \Gamma_{{\mathrm{1}}}  \ottsym{+}  \Gamma_{{\mathrm{2}}}  \vdash \ottnt{b} : \ottnt{B}  \ottsym{\{}  \ottkw{unit}  \ottsym{/}  \ottmv{y}  \ottsym{\}} $. The fourth clause: $ { \color{black}{q} }   \cdot    (   \ncoverline{ \Gamma_{{\mathrm{1}}} }   +   \ncoverline{ \Gamma_{{\mathrm{2}}} }   )   \leq  { \color{black}{q} }   \cdot    (   \ncoverline{ \Gamma_{{\mathrm{1}}} }   +   \ncoverline{ \Gamma_{{\mathrm{2}}} }   )  $ follows by reflexivity.
\end{itemize}

\item \textbf{\rref{T-SumElim}}

Let $ \Delta \ ;\    { \color{black}{r} }   \cdot   \Gamma_{{\mathrm{1}}}    \ottsym{+}  \Gamma_{{\mathrm{2}}}  \vdash  \ottkw{case}_ { \color{black}{r} } \,  \ottnt{a} \, \ottkw{of}\,  \ottnt{b_{{\mathrm{1}}}}  ;   \ottnt{b_{{\mathrm{2}}}}  : \ottnt{B}  \ottsym{\{}  \ottnt{a}  \ottsym{/}  \ottmv{y}  \ottsym{\}} $ where $ \Delta \ ;\  \Gamma_{{\mathrm{1}}}  \vdash \ottnt{a} :  \ottnt{A_{{\mathrm{1}}}}  \oplus  \ottnt{A_{{\mathrm{2}}}}  $ and $ \Delta \ ;\  \Gamma_{{\mathrm{2}}}  \vdash \ottnt{b_{{\mathrm{1}}}} :  \Pi  \ottmv{x} \!:^ { \color{black}{r} } \! \ottnt{A_{{\mathrm{1}}}} . \ottnt{B}   \ottsym{\{}   \ottkw{inj}_1\,  \ottmv{x}   \ottsym{/}  \ottmv{y}  \ottsym{\}} $ and $ \Delta \ ;\  \Gamma_{{\mathrm{2}}}  \vdash \ottnt{b_{{\mathrm{2}}}} :  \Pi  \ottmv{x} \!:^ { \color{black}{r} } \! \ottnt{A_{{\mathrm{2}}}} . \ottnt{B}   \ottsym{\{}   \ottkw{inj}_2\,  \ottmv{x}   \ottsym{/}  \ottmv{y}  \ottsym{\}} $ and $  \Delta ,   \ottmv{y} \!\!:\!\!  \ottnt{A_{{\mathrm{1}}}}  \oplus  \ottnt{A_{{\mathrm{2}}}}    \ ;\   \Gamma ,   \ottmv{y} \! :^{ { \color{black}{s} } }\!  \ottnt{A_{{\mathrm{1}}}}  \oplus  \ottnt{A_{{\mathrm{2}}}}     \vdash \ottnt{B} :  \textbf{Type}  $ and $ { \color{black}{1} }   \leq  { \color{black}{r} }$. Further, $ \ottnt{H}  \vdash  \Delta ;  \Gamma_{{\mathrm{0}}}  \ottsym{+}   { \color{black}{q} }   \cdot   \ottsym{(}    { \color{black}{r} }   \cdot   \Gamma_{{\mathrm{1}}}    \ottsym{+}  \Gamma_{{\mathrm{2}}}  \ottsym{)}  $. Now, there are two cases to consider depending on whether $a$ is a value or not.

\begin{itemize}
\item $a$ is not a value.

By inductive hypothesis, we get $ [  \ottnt{H}  ]\,  \ottnt{a}  \Rightarrow_{   \ottnt{S}  \, \cup     \,\text{fv}\,  \ottnt{b_{{\mathrm{1}}}}   \, \cup     \,\text{fv}\,  \ottnt{b_{{\mathrm{2}}}}   \, \cup   \{  \ottmv{y}  \}        }^{  { \color{black}{q} }  \cdot  { \color{black}{r} }  } [  \ottnt{H'} \, ;\,  \mathbf{u} \, ;\,  \Gamma_{{\mathrm{4}}}  ]\,  \ottnt{a'}  $ and $  \Delta ,   \lfloor \Gamma_{{\mathrm{4}}} \rfloor   \ ;\  \Gamma'_{{\mathrm{1}}}  \vdash \ottnt{a'} :  \ottnt{A_{{\mathrm{1}}}}  \oplus  \ottnt{A_{{\mathrm{2}}}}  $ and $ \ottnt{H'}  \vdash   \Delta ,   \lfloor \Gamma_{{\mathrm{4}}} \rfloor   ;  \ottsym{(}   \ottsym{(}  \Gamma_{{\mathrm{0}}}  \ottsym{+}   { \color{black}{q} }   \cdot   \Gamma_{{\mathrm{2}}}   \ottsym{)} ,    { \color{black}{0} }    \cdot   \Gamma_{{\mathrm{4}}}    \ottsym{)}  \ottsym{+}   \ottsym{(}  { \color{black}{q} }  \cdot  { \color{black}{r} }  \ottsym{)}   \cdot   \Gamma'_{{\mathrm{1}}}  $. This gives us, $ [  \ottnt{H}  ]\,   \ottkw{case}_ { \color{black}{r} } \,  \ottnt{a} \, \ottkw{of}\,  \ottnt{b_{{\mathrm{1}}}}  ;   \ottnt{b_{{\mathrm{2}}}}   \Rightarrow_{ \ottnt{S} }^{ { \color{black}{q} } } [  \ottnt{H'} \, ;\,  \mathbf{u} \, ;\,  \Gamma_{{\mathrm{4}}}  ]\,   \ottkw{case}_ { \color{black}{r} } \,  \ottnt{a'} \, \ottkw{of}\,  \ottnt{b_{{\mathrm{1}}}}  ;   \ottnt{b_{{\mathrm{2}}}}  $. By weakening and using \rref{T-SumElim} thereafter, we get, $  \Delta ,   \lfloor \Gamma_{{\mathrm{4}}} \rfloor   \ ;\   { \color{black}{r} }   \cdot   \Gamma'_{{\mathrm{1}}}   \ottsym{+}  \ottsym{(}   \Gamma_{{\mathrm{2}}} ,    { \color{black}{0} }    \cdot   \Gamma_{{\mathrm{4}}}    \ottsym{)}  \vdash  \ottkw{case}_ { \color{black}{r} } \,  \ottnt{a'} \, \ottkw{of}\,  \ottnt{b_{{\mathrm{1}}}}  ;   \ottnt{b_{{\mathrm{2}}}}  : \ottnt{B}  \ottsym{\{}  \ottnt{a'}  \ottsym{/}  \ottmv{y}  \ottsym{\}} $. By following the argument presented earlier, $ \ottnt{B}  \ottsym{\{}  \ottnt{a}  \ottsym{/}  \ottmv{y}  \ottsym{\}}  \{  \Delta  \}   \equiv   \ottnt{B}  \ottsym{\{}  \ottnt{a'}  \ottsym{/}  \ottmv{y}  \ottsym{\}}  \{   \Delta ,   \lfloor \Gamma_{{\mathrm{4}}} \rfloor    \} $; as such, by \rref{T-Conv}, $  \Delta ,   \lfloor \Gamma_{{\mathrm{4}}} \rfloor   \ ;\   { \color{black}{r} }   \cdot   \Gamma'_{{\mathrm{1}}}   \ottsym{+}  \ottsym{(}   \Gamma_{{\mathrm{2}}} ,    { \color{black}{0} }    \cdot   \Gamma_{{\mathrm{4}}}    \ottsym{)}  \vdash  \ottkw{case}_ { \color{black}{r} } \,  \ottnt{a'} \, \ottkw{of}\,  \ottnt{b_{{\mathrm{1}}}}  ;   \ottnt{b_{{\mathrm{2}}}}  : \ottnt{B}  \ottsym{\{}  \ottnt{a}  \ottsym{/}  \ottmv{y}  \ottsym{\}} $.\\

By inductive hypothesis, $  (  \ottsym{(}  { \color{black}{q} }  \cdot  { \color{black}{r} }  \ottsym{)}   \cdot     \ncoverline{ \Gamma'_{{\mathrm{1}}} }   +  \mathbf{u}   )   +   \ncoverline{ \Gamma_{{\mathrm{4}}} }   \times \langle H' \rangle \leq   \ottsym{(}  { \color{black}{q} }  \cdot  { \color{black}{r} }  \ottsym{)}   \cdot    (   \ncoverline{ \Gamma_{{\mathrm{1}}} }   \mathop{\diamond}   \mathbf{0}   )    +  \mathbf{u}  \times \langle H' \rangle +   \mathbf{0}   \mathop{\diamond}   \ncoverline{ \Gamma_{{\mathrm{4}}} }  $. From this, we have, $  (  { \color{black}{q} }   \cdot     \ncoverline{ \ottsym{(}   { \color{black}{r} }   \cdot   \Gamma'_{{\mathrm{1}}}   \ottsym{+}  \ottsym{(}   \Gamma_{{\mathrm{2}}} ,    { \color{black}{0} }    \cdot   \Gamma_{{\mathrm{4}}}    \ottsym{)}  \ottsym{)} }   +  \mathbf{u}   )   +   \ncoverline{ \Gamma_{{\mathrm{4}}} }   \times \langle H' \rangle \leq   { \color{black}{q} }   \cdot    (   \ncoverline{ \ottsym{(}   { \color{black}{r} }   \cdot   \Gamma_{{\mathrm{1}}}   \ottsym{+}  \Gamma_{{\mathrm{2}}}  \ottsym{)} }   \mathop{\diamond}   \mathbf{0}   )    +  \mathbf{u}  \times \langle H' \rangle +   \mathbf{0}   \mathop{\diamond}   \ncoverline{ \Gamma_{{\mathrm{4}}} }  $.  
  
\item $a$ is a value.

Since $a$ has type $ \ottnt{A_{{\mathrm{1}}}}  \oplus  \ottnt{A_{{\mathrm{2}}}} $, so $\ottnt{a}  \ottsym{=}   \ottkw{inj}_1\,  \ottnt{a_{{\mathrm{1}}}} $ or $\ottnt{a}  \ottsym{=}   \ottkw{inj}_2\,  \ottnt{a_{{\mathrm{2}}}} $.\\ Let $\ottnt{a}  \ottsym{=}   \ottkw{inj}_1\,  \ottnt{a_{{\mathrm{1}}}} $. Now, $ [  \ottnt{H}  ]\,   \ottkw{case}_ { \color{black}{r} } \,  \ottsym{(}   \ottkw{inj}_1\,  \ottnt{a_{{\mathrm{1}}}}   \ottsym{)} \, \ottkw{of}\,  \ottnt{b_{{\mathrm{1}}}}  ;   \ottnt{b_{{\mathrm{2}}}}   \Rightarrow_{ \ottnt{S} }^{ { \color{black}{q} } } [  \ottnt{H} \, ;\,   \mathbf{0}  \, ;\,  \varnothing  ]\,  \ottnt{b_{{\mathrm{1}}}} \, \ottnt{a_{{\mathrm{1}}}} $. By inverting the typing judgement, we have $ \Delta \ ;\  \Gamma_{{\mathrm{1}}}  \vdash \ottnt{a_{{\mathrm{1}}}} : \ottnt{A_{{\mathrm{1}}}} $. By \rref{T-app}, $ \Delta \ ;\  \Gamma_{{\mathrm{2}}}  \ottsym{+}   { \color{black}{r} }   \cdot   \Gamma_{{\mathrm{1}}}   \vdash \ottnt{b_{{\mathrm{1}}}} \, \ottnt{a_{{\mathrm{1}}}} : \ottnt{B}  \ottsym{\{}   \ottkw{inj}_1\,  \ottnt{a_{{\mathrm{1}}}}   \ottsym{/}  \ottmv{y}  \ottsym{\}} $. The fourth clause follows by reflexivity. The case where $\ottnt{a}  \ottsym{=}   \ottkw{inj}_2\,  \ottnt{a_{{\mathrm{2}}}} $ follows similarly.
 
\end{itemize}
  
\item \textbf{\rref{T-SigmaElim}}

Let $\ottnt{A}  \ottsym{=}   \Sigma  \ottmv{x} \!\!:^ { \color{black}{r} } \!\! \ottnt{A_{{\mathrm{1}}}} . \ottnt{A_{{\mathrm{2}}}} $ and $ \Delta \ ;\  \Gamma_{{\mathrm{1}}}  \ottsym{+}  \Gamma_{{\mathrm{2}}}  \vdash \ottkw{let} \, \ottsym{(}  \ottmv{x}  \ottsym{,}  \ottmv{y}  \ottsym{)}  \ottsym{=}  \ottnt{a} \, \mathsf{in} \, \ottnt{b} : \ottnt{B}  \ottsym{\{}  \ottnt{a}  \ottsym{/}  \ottmv{z}  \ottsym{\}} $ where $ \Delta \ ;\  \Gamma_{{\mathrm{1}}}  \vdash \ottnt{a} : \ottnt{A} $ and $   \Delta ,   \ottmv{x} \!\!:\!\! \ottnt{A_{{\mathrm{1}}}}   ,   \ottmv{y} \!\!:\!\! \ottnt{A_{{\mathrm{2}}}}   \ ;\    \Gamma_{{\mathrm{2}}} ,   \ottmv{x} \! :^{ { \color{black}{r} } }\! \ottnt{A_{{\mathrm{1}}}}   ,   \ottmv{y} \! :^{  { \color{black}{1} }  }\! \ottnt{A_{{\mathrm{2}}}}    \vdash \ottnt{b} : \ottnt{B}  \ottsym{\{}  \ottsym{(}  \ottmv{x}  \ottsym{,}  \ottmv{y}  \ottsym{)}  \ottsym{/}  \ottmv{z}  \ottsym{\}} $ and $  \Delta ,   \ottmv{z} \!\!:\!\! \ottnt{A}   \ ;\   \Gamma ,   \ottmv{z} \! :^{ { \color{black}{s} } }\! \ottnt{A}    \vdash \ottnt{B} :  \textbf{Type}  $. Further, $ \ottnt{H}  \vdash  \Delta ;  \Gamma_{{\mathrm{0}}}  \ottsym{+}   { \color{black}{q} }   \cdot   \ottsym{(}  \Gamma_{{\mathrm{1}}}  \ottsym{+}  \Gamma_{{\mathrm{2}}}  \ottsym{)}  $. Now, there are two cases to consider depending on whether $a$ is a value or not.

\begin{itemize}
\item $a$ is not a value.

By inductive hypothesis, $ [  \ottnt{H}  ]\,  \ottnt{a}  \Rightarrow_{   \ottnt{S}  \, \cup     \,\text{fv}\,  \ottnt{b}   \, \cup       \{  \ottmv{x}  \}   \, \cup   \{  \ottmv{y}  \}     \, \cup   \{  \ottmv{z}  \}        }^{ { \color{black}{q} } } [  \ottnt{H'} \, ;\,  \mathbf{u} \, ;\,  \Gamma_{{\mathrm{4}}}  ]\,  \ottnt{a'} $ and $  \Delta ,   \lfloor \Gamma_{{\mathrm{4}}} \rfloor   \ ;\  \Gamma'_{{\mathrm{1}}}  \vdash \ottnt{a'} : \ottnt{A} $. Therefore, $ [  \ottnt{H}  ]\,  \ottkw{let} \, \ottsym{(}  \ottmv{x}  \ottsym{,}  \ottmv{y}  \ottsym{)}  \ottsym{=}  \ottnt{a} \, \mathsf{in} \, \ottnt{b}  \Rightarrow_{ \ottnt{S} }^{ { \color{black}{q} } } [  \ottnt{H'} \, ;\,  \mathbf{u} \, ;\,  \Gamma_{{\mathrm{4}}}  ]\,  \ottkw{let} \, \ottsym{(}  \ottmv{x}  \ottsym{,}  \ottmv{y}  \ottsym{)}  \ottsym{=}  \ottnt{a'} \, \mathsf{in} \, \ottnt{b} $. By weakening and \rref{SigmaElim}, we have, $  \Delta ,   \lfloor \Gamma_{{\mathrm{4}}} \rfloor   \ ;\  \Gamma'_{{\mathrm{1}}}  \ottsym{+}  \ottsym{(}  \Gamma_{{\mathrm{2}}}  \ottsym{+}    { \color{black}{0} }    \cdot   \Gamma_{{\mathrm{4}}}   \ottsym{)}  \vdash \ottkw{let} \, \ottsym{(}  \ottmv{x}  \ottsym{,}  \ottmv{y}  \ottsym{)}  \ottsym{=}  \ottnt{a'} \, \mathsf{in} \, \ottnt{b} : \ottnt{B}  \ottsym{\{}  \ottnt{a'}  \ottsym{/}  \ottmv{z}  \ottsym{\}} $. Using an argument presented before, we get $ \ottnt{B}  \ottsym{\{}  \ottnt{a}  \ottsym{/}  \ottmv{z}  \ottsym{\}}  \{  \Delta  \}   \equiv   \ottnt{B}  \ottsym{\{}  \ottnt{a'}  \ottsym{/}  \ottmv{z}  \ottsym{\}}  \{   \Delta ,   \lfloor \Gamma_{{\mathrm{4}}} \rfloor    \} $; as such, by \rref{T-conv}, $  \Delta ,   \lfloor \Gamma_{{\mathrm{4}}} \rfloor   \ ;\  \Gamma'_{{\mathrm{1}}}  \ottsym{+}  \ottsym{(}  \Gamma_{{\mathrm{2}}}  \ottsym{+}    { \color{black}{0} }    \cdot   \Gamma_{{\mathrm{4}}}   \ottsym{)}  \vdash \ottkw{let} \, \ottsym{(}  \ottmv{x}  \ottsym{,}  \ottmv{y}  \ottsym{)}  \ottsym{=}  \ottnt{a'} \, \mathsf{in} \, \ottnt{b} : \ottnt{B}  \ottsym{\{}  \ottnt{a}  \ottsym{/}  \ottmv{z}  \ottsym{\}} $. The other clauses follow from the inductive hypothesis.

\item $a$ is a value.

Since $a$ has $\Sigma$-type, $\ottnt{a}  \ottsym{=}  \ottsym{(}  \ottnt{a_{{\mathrm{1}}}}  \ottsym{,}  \ottnt{a_{{\mathrm{2}}}}  \ottsym{)}$ where $ \Delta \ ;\  \Gamma_{{\mathrm{11}}}  \vdash \ottnt{a_{{\mathrm{1}}}} : \ottnt{A_{{\mathrm{1}}}} $ and $ \Delta \ ;\  \Gamma_{{\mathrm{12}}}  \vdash \ottnt{a_{{\mathrm{2}}}} : \ottnt{A_{{\mathrm{2}}}}  \ottsym{\{}  \ottnt{a_{{\mathrm{1}}}}  \ottsym{/}  \ottmv{x}  \ottsym{\}} $ and $\Gamma_{{\mathrm{1}}}  \ottsym{=}   { \color{black}{r} }   \cdot   \Gamma_{{\mathrm{11}}}   \ottsym{+}  \Gamma_{{\mathrm{12}}}$. Assuming $x'$ and $y'$ are fresh enough, $ [  \ottnt{H}  ]\,  \ottkw{let} \, \ottsym{(}  \ottmv{x}  \ottsym{,}  \ottmv{y}  \ottsym{)}  \ottsym{=}  \ottsym{(}  \ottnt{a_{{\mathrm{1}}}}  \ottsym{,}  \ottnt{a_{{\mathrm{2}}}}  \ottsym{)} \, \mathsf{in} \, \ottnt{b}  \Rightarrow_{ \ottnt{S} }^{ { \color{black}{q} } } [   \ottnt{H}  ,     \ottmv{x'}  \overset{ \ottsym{(}  { \color{black}{q} }  \cdot  { \color{black}{r} }  \ottsym{)} }{\mapsto}  \ottnt{a_{{\mathrm{1}}}}  :  \ottnt{A_{{\mathrm{1}}}}   ,   \ottmv{y'}  \overset{ { \color{black}{q} } }{\mapsto}  \ottnt{a_{{\mathrm{2}}}}  :  \ottnt{A_{{\mathrm{2}}}}  \ottsym{\{}  \ottmv{x'}  \ottsym{/}  \ottmv{x}  \ottsym{\}}     \, ;\,   \mathbf{0}  \, ;\,    \ottmv{x'}  \! = \!  \ottnt{a_{{\mathrm{1}}}}  \! :^{ \ottsym{(}  { \color{black}{q} }  \cdot  { \color{black}{r} }  \ottsym{)} } \!  \ottnt{A_{{\mathrm{1}}}}  ,   \ottmv{y'}  \! = \!  \ottnt{a_{{\mathrm{2}}}}  \! :^{ { \color{black}{q} } } \!  \ottnt{A_{{\mathrm{2}}}}    \ottsym{\{}  \ottmv{x'}  \ottsym{/}  \ottmv{x}  \ottsym{\}}  ]\,  \ottnt{b}  \ottsym{\{}  \ottmv{x'}  \ottsym{/}  \ottmv{x}  \ottsym{\}}  \ottsym{\{}  \ottmv{y'}  \ottsym{/}  \ottmv{y}  \ottsym{\}} $.\footnote{We omit some extraneous components to keep things clean.} \\

Since $ \ottnt{H}  \vdash  \Delta ;  \Gamma_{{\mathrm{0}}}  \ottsym{+}   { \color{black}{q} }   \cdot   \ottsym{(}     { \color{black}{r} }   \cdot   \Gamma_{{\mathrm{11}}}    \ottsym{+}  \Gamma_{{\mathrm{12}}}   \ottsym{+}  \Gamma_{{\mathrm{2}}}  \ottsym{)}  $, so\\ $ \ottnt{H}  ,     \ottmv{x'}  \overset{ \ottsym{(}  { \color{black}{q} }  \cdot  { \color{black}{r} }  \ottsym{)} }{\mapsto} { \Gamma_{{\mathrm{11}}} \vdash  \ottnt{a_{{\mathrm{1}}}}  :  \ottnt{A_{{\mathrm{1}}}} }   ,   \ottmv{y'}  \overset{ { \color{black}{q} } }{\mapsto} { \ottsym{(}   \Gamma_{{\mathrm{12}}} ,   \ottmv{x'}  \! = \!  \ottnt{a_{{\mathrm{1}}}}  \! :^{  { \color{black}{0} }  } \!  \ottnt{A_{{\mathrm{1}}}}    \ottsym{)} \vdash  \ottnt{a_{{\mathrm{2}}}}  :  \ottnt{A_{{\mathrm{2}}}}  \ottsym{\{}  \ottmv{x'}  \ottsym{/}  \ottmv{x}  \ottsym{\}} }      \vdash  \\ \ottsym{(}    \Gamma_{{\mathrm{0}}} ,   \ottmv{x'}  \! = \!  \ottnt{a_{{\mathrm{1}}}}  \! :^{  { \color{black}{0} }  } \!  \ottnt{A_{{\mathrm{1}}}}   ,   \ottmv{y'}  \! = \!  \ottnt{a_{{\mathrm{2}}}}  \! :^{  { \color{black}{0} }  } \!  \ottnt{A_{{\mathrm{2}}}}    \ottsym{\{}  \ottmv{x'}  \ottsym{/}  \ottmv{x}  \ottsym{\}}  \ottsym{)}  \ottsym{+}   { \color{black}{q} }   \cdot   \ottsym{(}   \Gamma_{{\mathrm{2}}} ,     \ottmv{x'}  \! = \!  \ottnt{a_{{\mathrm{1}}}}  \! :^{ { \color{black}{r} } } \!  \ottnt{A_{{\mathrm{1}}}}  ,   \ottmv{y'}  \! = \!  \ottnt{a_{{\mathrm{2}}}}  \! :^{  { \color{black}{1} }  } \!  \ottnt{A_{{\mathrm{2}}}}  \ottsym{\{}  \ottmv{x'}  \ottsym{/}  \ottmv{x}  \ottsym{\}}      \ottsym{)} $. \\

By \ref{InsertEq}, $  \Gamma_{{\mathrm{2}}} ,     \ottmv{x'}  \! = \!  \ottnt{a_{{\mathrm{1}}}}  \! :^{ { \color{black}{r} } } \!  \ottnt{A_{{\mathrm{1}}}}  ,   \ottmv{y'}  \! = \!  \ottnt{a_{{\mathrm{2}}}}  \! :^{  { \color{black}{1} }  } \!  \ottnt{A_{{\mathrm{2}}}}  \ottsym{\{}  \ottmv{x'}  \ottsym{/}  \ottmv{x}  \ottsym{\}}      \vdash \ottnt{b}  \ottsym{\{}  \ottmv{x'}  \ottsym{/}  \ottmv{x}  \ottsym{\}}  \ottsym{\{}  \ottmv{y'}  \ottsym{/}  \ottmv{y}  \ottsym{\}} : \ottnt{B}  \ottsym{\{}  \ottsym{(}  \ottmv{x'}  \ottsym{,}  \ottmv{y'}  \ottsym{)}  \ottsym{/}  \ottmv{z}  \ottsym{\}} $. But then, by conversion, $  \Gamma_{{\mathrm{2}}} ,     \ottmv{x'}  \! = \!  \ottnt{a_{{\mathrm{1}}}}  \! :^{ { \color{black}{r} } } \!  \ottnt{A_{{\mathrm{1}}}}  ,   \ottmv{y'}  \! = \!  \ottnt{a_{{\mathrm{2}}}}  \! :^{  { \color{black}{1} }  } \!  \ottnt{A_{{\mathrm{2}}}}  \ottsym{\{}  \ottmv{x'}  \ottsym{/}  \ottmv{x}  \ottsym{\}}      \vdash \ottnt{b}  \ottsym{\{}  \ottmv{x'}  \ottsym{/}  \ottmv{x}  \ottsym{\}}  \ottsym{\{}  \ottmv{y'}  \ottsym{/}  \ottmv{y}  \ottsym{\}} : \ottnt{B}  \ottsym{\{}  \ottsym{(}  \ottnt{a_{{\mathrm{1}}}}  \ottsym{,}  \ottnt{a_{{\mathrm{2}}}}  \ottsym{)}  \ottsym{/}  \ottmv{z}  \ottsym{\}} $.

The fourth clause: $  { \color{black}{q} }   \cdot    (   \ncoverline{ \Gamma_{{\mathrm{2}}} }   \mathop{\diamond}   (   { \color{black}{r} }   \mathop{\diamond}    { \color{black}{1} }    )   )    +     \mathbf{0}   \mathop{\diamond}     \ottsym{(}  { \color{black}{q} }  \cdot  { \color{black}{r} }  \ottsym{)}   \mathop{\diamond}   { \color{black}{q} }       \times \big( \begin{smallmatrix} \langle H \rangle &  \mathbf{0} ^\intercal &  \mathbf{0} ^\intercal \\  \ncoverline{ \Gamma_{{\mathrm{11}}} }  & 0 & 0 \\  \ncoverline{ \Gamma_{{\mathrm{12}}} }  & 0 & 0\end{smallmatrix} \big) \leq   { \color{black}{q} }   \cdot    (     { \color{black}{r} }   \cdot    \ncoverline{ \Gamma_{{\mathrm{11}}} }    +     \ncoverline{ \Gamma_{{\mathrm{12}}} }   +   \ncoverline{ \Gamma_{{\mathrm{2}}} }       \mathop{\diamond}      { \color{black}{0} }    \mathop{\diamond}    { \color{black}{0} }      )    +   (   \mathbf{0}   \mathop{\diamond}   (   \ottsym{(}  { \color{black}{q} }  \cdot  { \color{black}{r} }  \ottsym{)}   \mathop{\diamond}   { \color{black}{q} }   )   )  $ follows by reflexivity.

\end{itemize}

In all the other cases, $a$ is a value.
\else

\fi

\end{itemize}
\end{proof}

\begin{theorem}[Soundness]
If $ \ottnt{H}  \vdash  \Delta ;  \Gamma $ and $ \Delta \ ;\  \Gamma  \vdash \ottnt{a} : \ottnt{A} $ and $S \supseteq  \mathsf{dom} \,  \Delta $, then either $a$ is a value or there exists $\Gamma'$, $\ottnt{H'}$, $\mathbf{u}'$, $\Gamma_{{\mathrm{4}}}, \ottnt{A'}$ such that:
\begin{itemize}
\item $ [  \ottnt{H}  ]\,  \ottnt{a}  \! :  \ottnt{A}  \Rightarrow_{ \ottnt{S} } [  \ottnt{H'} \, ;\,  \mathbf{u}' \, ;\,  \Gamma_{{\mathrm{4}}}  ]\,  \ottnt{a'}  \! :  \ottnt{A'} $
\item $ \ottnt{H'}  \vdash   \Delta ,   \lfloor \Gamma_{{\mathrm{4}}} \rfloor   ;  \Gamma' $
\item $  \Delta ,   \lfloor \Gamma_{{\mathrm{4}}} \rfloor   \ ;\  \Gamma'  \vdash \ottnt{a'} : \ottnt{A'} $
\item $  \ncoverline{ \Gamma' }   +  \mathbf{u}'  +   \mathbf{0}   \mathop{\diamond}   \ncoverline{ \Gamma_{{\mathrm{4}}} }   \times \langle H'\rangle \leq   \ncoverline{ \Gamma }   \mathop{\diamond}   \mathbf{0}   + \mathbf{u}' \times \langle H' \rangle +   \mathbf{0}   \mathop{\diamond}   \ncoverline{ \Gamma_{{\mathrm{4}}} }  $
\end{itemize}
\end{theorem}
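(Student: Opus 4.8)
The plan is to obtain this theorem as the $q = 1$ specialization of the Invariance lemma (Lemma~\ref{inv}); the remark preceding that lemma already flags this, and its statement was evidently engineered so that the induction carries a free scaling factor $q$ together with an auxiliary ``already-committed resources'' context $\Gamma_0$. Consequently essentially all of the real work is assumed to live in the long induction on $\Delta ; \Gamma \vdash a : A$ proving Lemma~\ref{inv}, and what remains for the theorem is to check that the specialization lines up. Concretely, I would instantiate Lemma~\ref{inv} with $q = 1$ and $\Gamma_0 = 0 \cdot \Gamma$, the all-zero usage context over $\Delta$ (well-formed since scaling preserves $\lfloor\cdot\rfloor$, so the context addition $\Gamma_0 + q\cdot\Gamma$ is defined).

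First I would discharge the hypotheses. The compatibility premise $H \vdash \Delta ; \Gamma_0 + q \cdot \Gamma$ reduces to $H \vdash \Delta ; 0 \cdot \Gamma + \Gamma$, and since $0$ is the additive identity of the semiring and context operations are pointwise, this is exactly $H \vdash \Delta ; \Gamma$, the given hypothesis. The side condition $1 \le q$ holds by reflexivity, and the domain condition $\mathsf{dom}\,\Delta \subseteq S$ is the hypothesis $S \supseteq \mathsf{dom}\,\Delta$ verbatim. I would then read off the four conclusions, simplifying with $1 \cdot x = x$. The balance inequality specializes directly to the one in the theorem statement. The compatibility conclusion $H' \vdash \Delta, \lfloor\Gamma_4\rfloor ; (\Gamma_0, 0 \cdot \Gamma_4) + q \cdot \Gamma'$ collapses: with $q = 1$ and $\Gamma_0 = 0 \cdot \Gamma$, the prefix $(0\cdot\Gamma, 0\cdot\Gamma_4)$ is the zero context over $\Delta, \lfloor\Gamma_4\rfloor$, so adding it to $\Gamma'$ leaves $\Gamma'$ unchanged, yielding $H' \vdash \Delta, \lfloor\Gamma_4\rfloor ; \Gamma'$. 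The extra fifth clause of the lemma (disjointness of $\mathsf{dom}\,\Gamma_4$ from $S$) is simply unused here and is dropped.

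The one place where the specialization is not purely syntactic is the type annotation on the reduction arrow: the theorem asserts the type-carrying step $[H]\,a : A \Rightarrow_S [\ldots]\,a' : A'$ and types $a'$ at $A'$, whereas Lemma~\ref{inv} uses the type-erased step and keeps the fixed type $A$, concluding $\Delta, \lfloor\Gamma_4\rfloor ; \Gamma' \vdash a' : A$. The hard part is bridging these. I would use that the typed reduction erases to the untyped one, and that the computed $A'$ is definitionally equal to $A$ in the extended context — this is exactly the ``terms and types get out of sync'' phenomenon the definitions in $\Delta, \lfloor\Gamma_4\rfloor$ were introduced to repair, and it is precisely what the $\equiv$ and $A\{\Delta\}$ reasoning in the conversion cases of the Invariance proof already establishes. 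With $A \{\Delta, \lfloor\Gamma_4\rfloor\} \equiv A' \{\Delta, \lfloor\Gamma_4\rfloor\}$ in hand, Regularity supplies a context typing $A'$ at $\textbf{Type}$ and the conversion rule turns $\Delta, \lfloor\Gamma_4\rfloor ; \Gamma' \vdash a' : A$ into $\Delta, \lfloor\Gamma_4\rfloor ; \Gamma' \vdash a' : A'$, discharging the third clause. I expect this type-reconciliation to be the only genuinely non-mechanical step; everything else is bookkeeping on top of the already-proved invariant.
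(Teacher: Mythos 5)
Your proposal matches the paper's proof exactly: the paper discharges this theorem in a single line, as the instance of the Invariance lemma with $q = 1$ and $\Gamma_0 = 0 \cdot \Gamma$, and your discharging of the hypotheses (compatibility collapsing to $H \vdash \Delta ; \Gamma$, reflexivity for $1 \le q$, the domain condition verbatim) and simplification of the conclusions are precisely the intended bookkeeping. Your final paragraph reconciling the type-annotated step with the existentially quantified $A'$ supplies detail the paper leaves implicit (its proof never mentions $A'$), but it is consistent with, and if anything more careful than, the paper's own treatment.
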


\begin{proof}
Follows from \ref{inv} with $q = 1$ and $\Gamma_{{\mathrm{0}}} =   { \color{black}{0} }    \cdot   \Gamma $.
\end{proof}

The soundness theorem is similar in spirit to theorems showing correctness of usage in graded type systems  via operational methods. But this theorem can be proved by simple induction on the typing derivation; it does not require much extra machinery over and above the reduction relation, unlike the proof of soundness in \citet{Brunel:2014} which requires a realizability model on top of the reduction relation. In this regard, our soundness theorem is more in line with the modality preservation theorem in \citet{abel:icfp2020}.  

We can use the soundness theorem to prove the usual preservation and progress lemmas. The proofs are similar to the corresponding ones for the simple version (\ref{spreservation} and \ref{sprogress}). This means that the ordinary semantics is sound with respect to a resource-aware semantics.

Below, we show an example application of this theorem. Other similar examples can be worked out using the lemmas from Section \ref{sec:applications}.  
\begin{example}
\label{ex:depex}
Consider any security lattice $Q$, as described in Section \ref{sec:semiring-examples}. Let ${ \color{black}{s} }$ be any element such that $1 \nleq s$. Also, let $ \varnothing  \vdash \ottnt{A} :  {}^{  { \color{black}{1} }  }  \textbf{Type}  \rightarrow   \textbf{Type}   $ such that $\ottnt{A} \, \ottkw{Unit}  \ottsym{=}  \ottkw{Int}$. Now, let $   \ottmv{x} \! :^{  { \color{black}{1} }  }\!  \textbf{Type}   ,   \ottmv{y} \! :^{ { \color{black}{q} } }\! \ottnt{A} \, \ottmv{x}    \vdash \ottnt{B} :  \textbf{Type}  $ for some $q \in Q$.

In empty context, consider a term $f$ of type $ \Pi  \ottmv{x} \!:^ { \color{black}{r} } \!  \textbf{Type}  .  \Pi  \ottmv{y} \!:^ { \color{black}{s} } \! \ottnt{A} \, \ottmv{x} . \ottnt{B}  $ for some $r \in Q$. (Note that neither $r$ and $1$ nor $s$ and $q$ need to be equal, as explained in irrelevant quantification.) Then, we can show that both $\ottmv{f} \, \ottkw{Unit} \,  0 $ and $\ottmv{f} \, \ottkw{Unit} \,  1 $ either diverge or produce equal values. If $\ottmv{f} \, \ottkw{Unit}$ diverges, then both diverge. Otherwise, $ [  \varnothing  ]\,  \ottmv{f} \, \ottkw{Unit}  \Rightarrow\!\!\!\!\!\Rightarrow  [  \ottnt{H}  ] \,   \lambda \ottmv{y} \!:^ { \color{black}{s} } \! \ottnt{C} . \ottnt{b}  $ where $ \ottnt{C}  \{  \ottnt{H}  \}   \equiv  \ottkw{Int}$. Now, $ [  \ottnt{H}  ]\,  \ottsym{(}   \lambda \ottmv{y} \!:^ { \color{black}{s} } \! \ottnt{C} . \ottnt{b}   \ottsym{)} \,  0   \Rightarrow [   \ottnt{H}  ,   \ottmv{y}  \overset{ { \color{black}{s} } }{\mapsto}   0     ] \,  \ottnt{b} $ and $  [  \ottnt{H}  ]\,  \ottsym{(}   \lambda \ottmv{y} \!:^ { \color{black}{s} } \! \ottnt{C} . \ottnt{b}   \ottsym{)} \,  1   \Rightarrow [   \ottnt{H}  ,   \ottmv{y}  \overset{ { \color{black}{s} } }{\mapsto}   1     ] \,  \ottnt{b} $. By \ref{lemma:s-nonint}, we know that both $ ((  \ottnt{H}  ,   \ottmv{y}  \overset{ { \color{black}{s} } }{\mapsto}   0    ), \ottnt{b} ) $ and $ ((  \ottnt{H}  ,   \ottmv{y}  \overset{ { \color{black}{s} } }{\mapsto}   1    ), \ottnt{b} ) $ either diverge or reduce to the same value.
\end{example}  

\section{Discussion}

\subsection{Definitional-Equivalence  and Irrelevance}
\label{defeq}

The terms ``irrelevance'' and ``irrelevant quantification'' have multiple
meanings in the literature. Our primary focus is on erasability, the ability 
to quantify over arguments that need not be present at
runtime. However, this terminology often includes compile-time irrelevance, or
the blindness of type equality to such erasable parts of terms.  These terms
are also related to, but not the same as, ``parametricity'' or ``parametric
quantification'', which characterizes functions that map equivalent arguments
to equivalent results.

One difference between our formulation and a more traditional dependently-typed
calculus is that the conversion rule (\rref{T-conv}) is specified in terms of
an abstract equivalence relation on terms, written $\ottnt{A}  \equiv  \ottnt{B}$.  Our proofs
about this system work for any relation that satisfies the following
properties.  \scw{add a citation to POPL10 paper?}

\begin{definition}
\label{defeql}
We say that the relation $\ottnt{A}  \equiv  \ottnt{B}$ is \emph{sound} if it:
\begin{enumerate}
\item \emph{is equivalence relation},
\item \emph{contains the small step relation}, in other words, if $ \ottnt{a}  \leadsto  \ottnt{a'} $ then $\ottnt{a}  \equiv  \ottnt{a'}$,
\item \emph{is closed under substitution}, in other words, if $\ottnt{a_{{\mathrm{1}}}}  \equiv  \ottnt{a_{{\mathrm{2}}}}$ then $\ottnt{b}  \ottsym{\{}  \ottnt{a_{{\mathrm{1}}}}  \ottsym{/}  \ottmv{x}  \ottsym{\}}  \equiv  \ottnt{b}  \ottsym{\{}  \ottnt{a_{{\mathrm{2}}}}  \ottsym{/}  \ottmv{x}  \ottsym{\}}$ and $\ottnt{a_{{\mathrm{1}}}}  \ottsym{\{}  \ottnt{b}  \ottsym{/}  \ottmv{x}  \ottsym{\}}  \equiv  \ottnt{a_{{\mathrm{2}}}}  \ottsym{\{}  \ottnt{b}  \ottsym{/}  \ottmv{x}  \ottsym{\}}$, 
\item \emph{is injective for type constructors}, for example, if
  $ \Pi  \ottmv{x} \!:^ { \color{black}{q} }_{{\mathrm{1}}} \! \ottnt{A_{{\mathrm{1}}}} . \ottnt{B_{{\mathrm{1}}}}   \equiv   \Pi  \ottmv{x} \!:^ { \color{black}{q} }_{{\mathrm{2}}} \! \ottnt{A_{{\mathrm{2}}}} . \ottnt{B_{{\mathrm{2}}}} $ then ${ \color{black}{q} }_{{\mathrm{1}}}  \ottsym{=}  { \color{black}{q} }_{{\mathrm{2}}}$ and $\ottnt{A_{{\mathrm{1}}}}  \equiv  \ottnt{A_{{\mathrm{2}}}}$
  and $\ottnt{B_{{\mathrm{1}}}}  \equiv  \ottnt{B_{{\mathrm{2}}}}$ (and similar for $ \Sigma  \ottmv{x} \!\!:^ { \color{black}{r} } \!\! \ottnt{A} . \ottnt{B} $ and $ \ottnt{A}  \oplus  \ottnt{B} $), 
\item and \emph{is consistent}, in other words, if $\ottnt{A}  \equiv  \ottnt{B}$ and both are
  values, then they have the same head form.
\end{enumerate}
\end{definition}

The standard $\beta$-conversion relation, defined as the reflexive, symmetric,
transitive and congruent closure of the step relation, is a sound relation.

However, $\beta$-conversion is not the only relation that would
work. Dependent type systems with irrelevance sometimes erase irrelevant parts
of terms before comparing them up to
$\beta$-equivalence \hspace*{6pt} ~\cite{barras:icc-star}. Alternatively, a typed definition
of equivalence, might use the total relation when equating irrelevant
components ~\cite{pfenning:2001}.  In future work, we hope to show that any
sound definition of equivalence can be coarsened by ignoring irrelevant
components in terms during comparison. We conjecture that such a relation
would also satisfy the properties above. In particular, our results from
Section~\ref{sec:applications} tell us that such coarsening of the equivalence relation is
consistent with evaluation, and therefore contains the step relation.

\subsection{Connection to Haskell}
\label{sec:haskell}

The current design of linear types in GHC/Haskell
is essentially an instance of the type system described in this paper,
one that uses the linearity semiring.
Haskell users can mark arguments with grades $1$ or $\omega$, but a
grade of $0$ is sometimes needed internally. Haskell's kind system supports irrelevance, but
not linearity, so the two features do not yet interact. It is with dependent types that
we need a uniform treatment which we can achieve through this graded type system. The current Haskell structure will be able to migrate
to a graded type system with little, if any, backward compatibility trouble for users.

One feature of Haskell's linear types does cause a small wrinkle, though: Haskell supports
\emph{multiplicity polymorphism}. An easy example is the type of \texttt{map}, which is
\texttt{forall m a b. (a \%m-> b) -> [a] \%m-> [b]}. We see that the function argument to
\texttt{map} can be either linear or unrestricted, and that this choice affects whether
the input list is restricted. We cannot support quantity polymorphism in our type theory,
as quantifying over whether or not an argument is relevant would mean that we could no longer
compile a quantity-polymorphic function: would the compiled function take the argument in a
register or not? The solution is to tweak the meaning of quantity polymorphism slightly: instead
of quantifying over \emph{all} possible quantities, we would be polymorphic only over quantities
$q$ such that $ { \color{black}{1} }   \leq  { \color{black}{q} }$. That is, we would quantify over only relevant quantities. This
reinterpretation of multiplicity polymorphism avoids the mentioned trouble with static compilation.
Furthermore, we see no difficulty in extending our graded type system with this kind
of quantity polymorphism; in the linear Haskell work, multiplicity polymorphism is nicely
straightforward, and we expect the same to be true here, too.

Commentary on the practicalities of type checking Haskell based on \Langname{} appears
in \auxref{app:type-checking}.

\subsection{Comparison with Quantitative Type Theory}
\label{sec:qtt-comparison}

Quantitative Type Theory QTT~\cite{McBride:2016,atkey} uses
elements of a resource semiring to track the usage of variables in a dependent type system. 
This system has a typing judgement of the form:
$x_1 :^{\rho_1} A_1, x_2 :^{\rho_2} A_2, \ldots, x_n :^{\rho_n} A_n \vdash a
:^{\sigma} A$, where $\rho_i$s and $\sigma$ are elements of a semiring. Roughly
speaking, this judgement means that using $\rho_i$ copies of $x_i$ of type
$A_i$, with $i$ varying from $1$ to $n$, we get $\sigma$ copies of $a$ of type
$A$. 

In QTT, $\sigma$ can be either 0 or 1. When $\sigma$ is 1, the system is similar
to \Langname{}. \Langname{} does not have the $0$-fragment of QTT but this is not a limitation per se: to express the requirement of $0$ copy of $a$, one need only multiply the context by $0$. This approach implies that our system treats types the same as any other irrelevant component of terms.

In contrast, QTT disables resource checking for the $0$-fragment. In other words, in the $0$-fragment, the resource annotations are not meaningful. This difference has both
positive and negative effects on the design of the language.

On the positive side, because linear tensor types are turned into normal
(non-linear) products, QTT can support \emph{strong}-$\Sigma$ types, allowing
projections that violate the usage requirements of their
construction. In contrast, \Langname supports \emph{weak}-$\Sigma$ types only,
with resource-checked pattern matching as the only elimination form.

On the negative side, however, QTT is restricted to semirings that are
\emph{zerosumfree} ($q_1 + q_2 = 0 \Rightarrow q_1 = q_2 = 0$) and
\emph{entire} ($q_1 \cdot q_2 = 0 \Rightarrow q_1 = 0 \vee q_2 = 0$).  (These
properties are necessary to prove substitution.) This limits QTT's
applicability. For example, QTT can not be applied to the class of semirings described in Section~\ref{sec:semiring-examples} that are not entire. On the other hand, our soundness theorem places no constraint on the semiring allowing us to work with such semirings, as lemma \ref{lemma:s-nonint} and example \ref{ex:depex} show.

Furthermore, because QTT ignores usages in the $0$-fragment, its internal logic is limited in its
reasoning about the resource usage of programs. For example, the following
proposition is not provable in QTT:
\[ \forall f:  {}^{  { \color{black}{0} }  } \ottkw{Bool} \rightarrow  \ottkw{Bool} . f \mathsf{True} = f \mathsf{False} \]

The above proposition says that for any constant boolean function, the result of
applying it to $\mathsf{True}$ is the same as the result of applying it to
$\mathsf{False}$. This proposition is not provable in QTT because $f$ ranges
over many functions, including those that examine the argument. In the
0-fragment, the type system cannot prevent a function that uses its argument
to be given a type that says that it does not.
\[ \vdash \lambda x:^0 A. x\ :^0\ \Pi x:^0 A. A \]

\citet{abel:2018} also lists additional ramifications of eliminating
resource checking in types. In particular, he notes that in QTT, it is not
possible to use resource usage to optimize the computation of types during
type checking. In particular, erasing irrelevant terms not only optimizes the
output of a compiler for a dependently-typed language, it is also an
optimization that is useful during compilation, when types are normalized
for comparison.

Finally, we note that \Langname includes case expressions and sub-usaging, while QTT does
not. 

\subsection{Abstract Algebraic Generalization}

Our type system with graded contexts has operations for
addition ($\Gamma_{{\mathrm{1}}}  \ottsym{+}  \Gamma_{{\mathrm{2}}}$) and scalar multiplication ($ { \color{black}{q} }   \cdot   \Gamma $) defined
over an arbitrary partially-ordered semiring. Further, the partial ordering
from the semiring was lifted to contexts. However, we can provide
reasonable alternative definitions for these operations and relations and all our proofs
would still work the same. Here, we lay out what constitutes a reasonable definition. 

Our contexts are an example of a general algebraic structure, called a partially-ordered left semimodule.  Additionally, vectors and matrices of quantities also can also be seen through this abstract mathematical lens. This may help in future extensions and applications of the work presented in this paper.

We follow \citet{golan} in our terminology and definitions here.

\newcommand{\mzero}{\ensuremath{\ncoverline{0}}}

\begin{definition}[Left $Q$-semimodule] Given a semiring $(Q,+,\cdot,0,1)$, a left $Q$-semimodule is a commutative monoid
  $(M,\oplus,\mzero)$ along with a left multiplication function
  $\_\odot\_ : Q \times M \to M$ such that the following properties hold.
\begin{itemize}
\item for $q_1, q_2 \in Q$ and $m \in M$, we have, $(q_1 + q_2) \odot m = q_1 \odot m \oplus q_2 \odot m$
\item for $q \in Q$ and $m_1, m_2 \in M$, we have, $q \odot (m_1 \oplus m_2) = q \odot m_1 \oplus q \odot m_2$
\item for $q_1, q_2 \in Q$ and $m \in M$, we have, $(q_1 \cdot q_2) \odot m = q_1 \odot (q_2 \odot m)$
\item for $m \in M$, we have, $1 \odot m = m$
\item for $q \in Q$ and $m \in M$, we have, $0 \odot m = q \odot \mzero = \mzero$. 
\end{itemize} 
\end{definition}

Graded contexts $\Gamma$ (with the same $\lfloor \Gamma \rfloor$) satisfy this definition, with the operations as defined before. Another example of a semimodule is $Q$ itself, with $\oplus := +$ and $\odot := \cdot$.


Next, let us consider the partial ordering of our contexts. The ordering is basically a lifting of the partial ordering in the semiring. But in general, a partial order on a left semimodule needs to satisfy only the following properties. 

\begin{definition}[Partially-ordered left $Q$-semimodule]
Given a partially-ordered semiring $(Q,\leq)$, a left $Q$-semimodule $M$ is said to be partially-ordered iff there exists a partial order $\leq_M$ on $M$ such that the following properties hold.
\begin{itemize}
\item for $m_1, m_2, m \in M$, if $m_1 \leq_M m_2$, then $m \oplus m_1 \leq_M m \oplus m_2$ 
\item for $q \in Q$ and $m_1, m_2 \in M$, if $m_1 \leq_M m_2$, then $q \odot m_1 \leq_M q \odot m_2$
\item for $q_1, q_2 \in M$ and $m \in M$, if $q_1 \leq q_2$, then $ q_1 \odot m \leq_M q_2 \odot m$.
\end{itemize}
\end{definition}

Note that our ordering of contexts $\Gamma$ satisfy these properties.

We use matrices on several occasions. Matrices can be seen as homomorphisms between semimodules. Given a semiring $Q$, an $m \times n$ matrix with elements drawn from $Q$ is basically a $Q$-homomorphism from $Q^m$ to $Q^n$. 

For $Q$-semimodules $M, N$, a function $\_ \alpha : M \to N$ is said to be a $Q$-homomorphism iff:
\begin{itemize}
\item for $m_1, m_2 \in M$, we have, $(m_1 \oplus m_2)  \alpha = m_1  \alpha \oplus m_2  \alpha$
\item for $q \in Q$ and $m \in M$, we have, $(q \odot m)  \alpha = q \odot (m  \alpha)$.
\end{itemize}

So the matrix $\langle H \rangle$ for a heap $H$ is an endomorphism from $Q^n$ to $Q^n$ where $n = | H |$. Also, an identity matrix is an identity homomorphism.

Next, for natural numbers $i , j , k$ and $Q$-homomorphisms $\_  \alpha : Q^i \to Q^j$ and $\_ \beta : Q^j  \to Q^k$, the composition $\_(\alpha \circ \beta) : Q^i \to Q^k$ can be given by matrix multiplication, $\ncoverline{\alpha} \times \ncoverline{\beta}$. The composition is associative. And it obeys the identity laws.

This makes the set $V_Q =\{ Q^n | n \in \mathbb{N} \}$ with $\text{Hom}(Q^m,Q^n) = \mathcal{M}_{m,n}(Q)$, the set of $m \times n$ matrices over $Q$, a category. We worked in this category. There may be other such categories worth exploring.
\rae{I feel like I'm missing the payoff of this section. It reads like a collection
of observations that we came across while doing this work. But there is no punch line.
Maybe move to the appendix? And leave notes saying that there are connections to abstract
math, viewing graded contexts as semimodules, but with the details deferred to the
appendix.}

\section{Other Related work}
\label{sec:related-work}

\subsection{Heap Semantics for Linear Logic}

Computational and operational interpretations of linear logic have been explored in several works, especially in ~\citet{chirimar}, ~\citet{turner}. In ~\citet{turner}, the authors provide a heap-based operational interpretation of linear logic. They show that a call-by-name calculus enjoys the single pointer property, meaning a linear resource has exactly one reference while a call-by-need calculus satisfies a weaker version of this property, guaranteeing only the maintenance of a single pointer. This system considers only linear and unrestricted resources. We generalize this operational interpretation of linear logic to a graded type system by allowing resources to be drawn from an arbitrary semiring. We derive a quantitative version of the single pointer property in Section \ref{sec:applications}. We can develop a quantitative version of the weak single pointer property for call-by-need reduction but for this, we need to modify the typing rules to allow sharing of resources.

\subsection{Combining Dependent and Linear Types}
Perhaps the earliest work studying the combination of linear and
dependent types was proposed in the form of a categorical model by
\citet{Bonfante:2001} who were interested in characterizing how a
linear dependent type system should be designed.  A year later,
\citet{Cervesato:2002} proposed the Linear Logical Framework (LLF) that
combined non-dependent linear types with dependent types.  This paper
spurred a number of publications, but most relevant are in the line of
work which extend dependent types with \citet{Girard:1992}'s and
\citet{DalLago:2009}'s bounded linear types. For example,
\citet{DalLago:2011}'s $\mathsf{d}l\mathsf{PCF}$, a sound and complete
system for reasoning about evaluation bounds of PCF programs.
\citet{DalLago:2012} also show that $\mathsf{d}l\mathsf{PCF}$ can also
be used to reason about call-by-value execution. 
\citet{Gaboardi:2013} develop a similar system called DFuzz for
analyzing differential privacy of queries involving sensitive information.
In the same vein, \citet{Krishnaswami:2015} show how to combine
non-dependent linear types with dependent types by generalizing
\citet{Benton:1995}'s linear/non-linear logic. But all of these work had some separation between the  linear and non-linear parts of their languages. Quantitative type theory~\cite{McBride:2016,atkey} provided a fresh way to look at this problem by combining the linear and non-linear parts using a resource semiring.

\ifextended
\subsection{Graded Type Systems}

\citet{orchard:2019} introduced a system with a notion of graded
necessity modalities over an arbitrary semiring---here called usage modalities---in a practical
programming language with usage polymorphism and indexed types. However, their system does not have full
dependent types.  They show that usage modalities can be used to
encode a large number of graded coeffects in the style of~\citet{Gaboardi:2016} and \citet{Brunel:2014}. 

Abel and Bernardy~\citep{abel:icfp2020} use a graded type system to
provide an abstract view of modalities. Their type system
is similar in structure to ours, but its features and
requirements differ. It includes usage polymorphism and parametric
polymorphism, but lacks dependent types. Their system is also
strongly normalizing. Furthermore, Abel and Bernardy define a relational
interpretation for their system and use it to derive parametricity
theorems. Due to our
inclusion of the $ \textbf{Type} : \textbf{Type} $ axiom, this parametricity proof technique is 
unavailable to us, so we must use more syntactic methods to reason about our programs. But 
this axiom does not play a major role in our proofs. We conjecture that our
approach to graded dependent types would work equally well in normalizing type
theories.
\fi

\subsection{Irrelevance and Dependent Types}

There are several approaches to adding irrelevant quantification to
dependently-typed languages.  \citet{Miquel:ICC} first added
``implicit'' quantification to a Curry-style version of the extended Calculus
of Constructions. Implicit arguments are those that do not appear free in the body of their abstractions.  In Miquel's system, only the relevant parts of the computation
may be explicit in terms, everything else must be implicit. 
\citet{barras:icc-star} showed how to support decidable type checking by
allowing type annotations and other irrelevant subcomponents to appear in
terms. In this setting, irrelevant arguments must not be free in the erasure
of the body of their abstractions. 
\citet{erasure-pure-type-systems} extended this approach to pure type
systems. More recently, \citet{weirich:icfp17} used these ideas as part of a
proposal for a core language for Dependent Haskell.
We have followed their design in making the usage of irrelevant
variables in the co-domain of $\Pi$-types unrestricted. Specifically, the irrelevance ($-$) tag 
in their language corresponds to the $0$ grade in our language.

\section{Future Work and Conclusions}

Graded type systems are a generic framework for expressing the flow and
usage of resources in programs.  This work provides a new way of
incorporating this framework into dependently-typed languages, with the goal
of supporting both erasure and linearity in the same system.

%

We designed a graded dependent type system \Langname and presented a standard substitution-based semantics and a usage-aware heap-based semantics. The standard
semantics does not have the ability to model use of resources.  But the heap-based semantics can track
usage during evaluation of terms. Further, the heap-based reduction relation enforces fair usage of resources. We show that the type system is sound
with respect to this heap semantics. This implies that the type system does a proper static accounting of resource usage.

As always, there is more to explore:
What additional reasoning principles can we get from our heap semantics?
What happens when we add imperative features---like arrays---to our language?
What would a general form of equality up to erasure look like?
What happens when we add multiple modalities, all of them graded, to our language?

The answers to these questions may have theoretical as well as practical implications. Currently,
languages such as Haskell, Rust, Idris, and Agda are experimenting with dependent and
linear types, as well as the more general applications of graded type theories. We hope that this work will provide guidance in these language designs and extensions.

\begin{acks}
  This material is based upon work supported by the
  \grantsponsor{GS100000001}{National Science
    Foundation}{http://dx.doi.org/10.13039/100000001} under Grant
  No.~\grantnum{GS100000001}{1521539}, and Grant
  No.~\grantnum{GS100000001}{1704041}.  Any opinions, findings, and
  conclusions or recommendations expressed in this material are those of the
  author and do not necessarily reflect the views of the National Science
  Foundation.
\end{acks}

\newpage
\bibliography{weirich,eades,qtt}

\newpage

\ifextended
\appendix

\section{Full Judgements}

\subsection{Simple Graded Type System}
\label{app:simple-types}
\rae{Check that I've included the right rules here.}\scw{Checked}

\drules[ST]{$ \Delta \ ;\  \Gamma  \vdash \ottnt{a} : \ottnt{A} $}{Simple graded type system}{Sub,Var,Weak,Unit,UnitE,Lam,App,Box,LetBox,Pair,Spread,InjOne,InjTwo,Case}

\subsection{Operational Semantics for the Simple Graded Type System}
\label{app:simple-opsem}
\rae{Check that I've included the right rules here.}\scw{Checked}

The operational semantics depend on a notion of values:
\[
\begin{array}{llcl}
\textit{values}          & \ottnt{v} & ::= & \ottkw{unit} \alt 
 \lambda \ottmv{x} \!:^ { \color{black}{q} } \! \ottnt{A} . \ottnt{a}  \alt  \ottkw{box} _ { \color{black}{q} } \, \ottnt{a}  \alt \ottsym{(}  \ottnt{a}  \ottsym{,}  \ottnt{b}  \ottsym{)} \alt  \ottkw{inj}_1\,  \ottnt{a}  \alt  \ottkw{inj}_2\,  \ottnt{a} \\
\end{array}
\]

\drules[S]{$ \ottnt{a}  \leadsto  \ottnt{a'} $}{Small-step operational semantics}{AppCong,Beta,UnitCong,UnitBeta,BoxCong,BoxBeta,SpreadCong,SpreadBeta,CaseCong,CaseOneBeta,CaseTwoBeta}

\begin{theorem}[Preservation]
If $ \Delta \ ;\  \Gamma  \vdash \ottnt{a} : \ottnt{A} $ and $ \ottnt{a}  \leadsto  \ottnt{a'} $ then $ \Delta \ ;\  \Gamma  \vdash \ottnt{a'} : \ottnt{A} $.
\end{theorem}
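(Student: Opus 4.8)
The plan is to prove preservation by induction on the typing derivation $\Delta;\Gamma \vdash a : A$, reading the reduction with the call-by-name step relation. The two structural rules, \rref{ST-Sub} and \rref{ST-Weak}, are discharged uniformly: the term is unchanged, so if $a \leadsto a'$ I apply the induction hypothesis to the premise and then re-apply the same structural rule to restore the conclusion context. The rules \rref{ST-Var}, \rref{ST-Unit}, \rref{ST-Lam}, \rref{ST-Box}, \rref{ST-Pair}, \rref{ST-InjOne}, and \rref{ST-InjTwo} produce variables or values, which do not step under call-by-name, so these cases are vacuous. This leaves the elimination rules \rref{ST-App}, \rref{ST-UnitE}, \rref{ST-LetBox}, \rref{ST-Spread}, and \rref{ST-Case}, each of which admits a congruence step and one or more $\beta$-steps.

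For the congruence steps the argument is routine: the step occurs in the principal subterm (the function, the scrutinee, and so on), so I apply the induction hypothesis to the corresponding premise---whose context is unchanged---and reassemble the derivation with the same elimination rule, leaving the conclusion context $\Gamma$ intact.

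The $\beta$-steps carry the real content, and they all reduce to the Substitution Lemma~\ref{SSub}. Consider \rref{ST-App}, with conclusion $\Delta;\Gamma_1 + q\cdot\Gamma_2 \vdash a_1\, a_2 : B$ from premises $\Delta;\Gamma_1 \vdash a_1 : {}^q A \to B$ and $\Delta;\Gamma_2 \vdash a_2 : A$. When $a_1$ is a value $\lambda x{:}^q A. a_1'$ and the step is $\beta$, I must invert the typing of the abstraction. Because that subderivation may end in uses of \rref{ST-Sub} and \rref{ST-Weak}, I first establish a generation lemma: if $\Delta;\Gamma_1 \vdash \lambda x{:}^q A. a_1' : {}^q A \to B$, then $\Delta, x{:}A;\, \Gamma_1', x{:}^q A \vdash a_1' : B$ for some $\Gamma_1' \leq \Gamma_1$, where the arrow's annotation forces the grade on $x$ to be exactly $q$. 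Feeding this body-typing together with $\Delta;\Gamma_2 \vdash a_2 : A$ into Lemma~\ref{SSub} yields $\Delta;\, \Gamma_1' + q\cdot\Gamma_2 \vdash a_1'\{a_2/x\} : B$; since $\Gamma_1' \leq \Gamma_1$ gives $\Gamma_1' + q\cdot\Gamma_2 \leq \Gamma_1 + q\cdot\Gamma_2$ by monotonicity of context addition and scaling, a final appeal to \rref{ST-Sub} recovers the required conclusion. The other $\beta$-steps follow the same template: \rref{ST-UnitE} on $\ottkw{unit}$ reduces directly to its continuation; \rref{ST-LetBox} on $\ottkw{box}_q\, a'$ inverts \rref{ST-Box} (which supplies the factor $q$) and then substitutes; \rref{ST-Spread} on a pair performs two nested substitutions via Lemma~\ref{SSub}; and the two \rref{ST-Case} $\beta$-rules rebuild an application of the selected branch to the injected value using \rref{ST-App}, again closing with \rref{ST-Sub} to absorb the sub-usaging introduced by inversion.

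The main obstacle is exactly the non-syntax-directedness caused by \rref{ST-Sub} and \rref{ST-Weak}: inversion of an introduction form cannot recover an exact context, only one below the observed $\Gamma$ (and possibly padded with zero-usage variables from weakening). The generation lemmas must therefore be stated ``up to $\leq$'', and every $\beta$-case must re-close with \rref{ST-Sub}, relying on the monotonicity of $+$ and $\cdot$ with respect to the semiring order---precisely the partially-ordered semiring laws assumed in Section~\ref{posemiring}. Once these inversion lemmas are in hand, no machinery beyond Lemma~\ref{SSub} is required.
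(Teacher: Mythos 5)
Your proof is correct and takes essentially the same route as the paper's: the paper establishes this theorem (mechanized in Coq) by the ``entirely standard'' syntactic argument whose centerpiece is the Substitution Lemma~\ref{SSub}, which is exactly your structure---induction on typing, vacuous value/variable cases, congruence cases by the induction hypothesis, and $\beta$-cases by inversion stated up to sub-usaging followed by substitution and a final application of \rref{ST-Sub} using monotonicity of context addition and scaling. No gaps.
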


\begin{theorem}[Progress]
  If $ \varnothing \ ;\  \varnothing  \vdash \ottnt{a} : \ottnt{A} $ then either $a$ is a value or there exists
  some $a'$ such that $ \ottnt{a}  \leadsto  \ottnt{a'} $.
\end{theorem}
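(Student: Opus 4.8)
The plan is to prove Progress by structural induction on the typing derivation $\varnothing \ ;\ \varnothing \vdash \ottnt{a} : \ottnt{A}$, supported by a standard \emph{canonical forms} lemma that recovers the shape of a closed value from its type. The key preliminary observation is that in the empty context every typing rule also types its immediate subterms in the empty context: in each binary rule the usage contexts $\Gamma_{{\mathrm{1}}}, \Gamma_{{\mathrm{2}}}$ share the domain of $\Delta = \varnothing$ and so are themselves empty, and therefore the induction hypothesis applies directly to every subterm. First I would dispatch the structural rules: the last rule cannot be \rref{ST-Var} or \rref{ST-Weak}, since both require a nonempty context, and \rref{ST-Sub} leaves the term $\ottnt{a}$ syntactically unchanged, so its case follows immediately from the induction hypothesis.

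For the remaining rules I would split into introduction and elimination forms. The introduction rules \rref{ST-Unit}, \rref{ST-Lam}, \rref{ST-Box}, \rref{ST-Pair}, \rref{ST-InjOne} and \rref{ST-InjTwo} each conclude with a term that is already a value, so nothing remains to show. For each elimination form I would apply the induction hypothesis to its principal subterm---the function in \rref{ST-App}, and the scrutinee in \rref{ST-UnitE}, \rref{ST-LetBox}, \rref{ST-Spread} and \rref{ST-Case}. If that subterm steps, the corresponding congruence rule (\rref{S-AppCong}, \rref{S-UnitCong}, \rref{S-BoxCong}, \rref{S-SpreadCong}, \rref{S-CaseCong}) lifts the step to the whole term. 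Otherwise the subterm is a value, and canonical forms forces it to be, respectively, a $\lambda$-abstraction, $\ottkw{unit}$, a $\ottkw{box}$, a pair, or an injection; in each case the matching $\beta$-rule (\rref{S-Beta}, \rref{S-UnitBeta}, \rref{S-BoxBeta}, \rref{S-SpreadBeta}, and \rref{S-CaseOneBeta} or \rref{S-CaseTwoBeta}) fires, so $\ottnt{a}$ steps. Because the semantics is call-by-name, the argument in \rref{S-Beta} need not be evaluated, so no auxiliary progress on arguments is required.

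The one genuine piece of content is the canonical forms lemma: a closed value of a function type ${}^{q}\ottnt{A} \to \ottnt{B}$ is a $\lambda$, of $\ottkw{Unit}$ is $\ottkw{unit}$, of $\Box^{q}\ottnt{A}$ is a $\ottkw{box}$, of $\ottnt{A} \otimes \ottnt{B}$ is a pair, and of $\ottnt{A} \oplus \ottnt{B}$ is an injection. I would prove it by inversion on the typing judgement of a value; the only subtlety is that \rref{ST-Sub} may appear above the head rule, but since sub-usaging changes neither the type nor the (empty) typing context it is transparent to the inversion, and all other rules are syntax-directed by the value's head constructor. I expect this lemma, rather than the induction itself, to be where what little care is needed goes, though it is entirely routine for this grammar.

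Finally, I would note an alternative route that reuses machinery already in hand: this statement is exactly Corollary~\ref{sprogress}, which follows from the heap soundness theorem~\ref{thm:heap-soundness} together with the forward-simulation lemma~\ref{HOrd}. Given $\varnothing \ ;\ \varnothing \vdash \ottnt{a} : \ottnt{A}$ and $\varnothing \vdash \varnothing ; \varnothing$, soundness yields a heap step out of $\ottnt{a}$ whenever $\ottnt{a}$ is not a value, and \ref{HOrd} projects this to an ordinary step $\ottnt{a} \leadsto \ottnt{a'}\{H\}$. Either route suffices; I would present the direct syntactic induction as the more self-contained argument for this appendix.
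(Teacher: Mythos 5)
Your proposal is correct, and it in fact covers both of the routes the paper itself takes. The paper never writes out a direct syntactic proof of this theorem: it is stated in the appendix without proof, described in Section~\ref{sec:simple} as following ``in the usual manner, via the entirely standard progress and preservation lemmas,'' and mechanized in Coq. The only written-out proof of this statement in the paper is the one you mention last, Corollary~\ref{sprogress}, which obtains progress from the heap soundness theorem~(\ref{thm:heap-soundness}) together with the bisimulation lemma~(\ref{HOrd}). Your primary route (induction on the typing derivation plus a canonical forms lemma) is exactly the standard argument the paper alludes to, and you handle the graded-specific wrinkles correctly: \rref{ST-Var} and \rref{ST-Weak} are impossible at the empty context, \rref{ST-Sub} is transparent to inversion, context addition forces the sub-contexts of every binary rule to be empty so the induction hypothesis applies to subterms, and call-by-name evaluation means no progress is needed on function arguments. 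The trade-off between your two routes is worth keeping in mind: the direct induction is self-contained but needs canonical forms, whereas the route via Corollary~\ref{sprogress} gets progress as a by-product of the much stronger resource-accounting theorem, at the cost of the entire heap-semantics apparatus (compatibility, multi-substitution, bisimilarity).
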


\subsection{Heap Semantics}

\drules[Small]{$ [  \ottnt{H}  ]\,  \ottnt{a}  \Rightarrow_{ \ottnt{S} }^{ { \color{black}{r} } } [  \ottnt{H'} \, ;\,  \mathbf{u}' \, ;\,  \Gamma'  ]\,  \ottnt{a'} $}
{Small-step reduction relation (part 1)}
{Var,AppL,AppBeta,UnitL,UnitBeta,CaseL,CaseOne,CaseTwo,Sub}
\drules[Small]{$ [  \ottnt{H}  ]\,  \ottnt{a}  \Rightarrow_{ \ottnt{S} }^{ { \color{black}{r} } } [  \ottnt{H'} \, ;\,  \mathbf{u}' \, ;\,  \Gamma'  ]\,  \ottnt{a'} $}
{Small-step reduction relation (part 2)}
{LetBoxL,LetBoxBeta,ProjL,ProjBeta}

\subsection{Dependent Graded Type System}
\label{app:dependent-types}

The full typing rules for \Langname are below. This system uses the same definition of values and operational semantics as the simple system.

\drules[T]{$ \Delta  ;  \Gamma  \vdash \ottnt{a} : \ottnt{A} $}
{Typing rules for dependent system}
{sub,weak,conv1,type,var,Unit,unit,UnitElim,pi,lam,app,
Sigma,Tensor,SigmaElim,sum,injOne,injTwo,case}

\section{Type-checking a graded, dependent Haskell}
\label{app:type-checking}

This paper concerns itself with an implicit, internal language. Yet, if we are to
integrate with GHC, we must make these ideas practical. There are two type-checking
challenges that will arise:

\begin{description}
\item[Producing \Langname{} via elaboration] A real-world compiler must support
taking a surface language, performing type inference, and then producing well-typed
\Langname{} programs via an elaboration step. The key question here: is \Langname{}
a suitable target for elaboration? We claim that it is. 
One author of the current paper, Eisenberg, has been involved in the
day-to-day implementation concerns of both linear and dependent types in GHC.
While challenges surely remain in any task this substantial, Eisenberg believes the
type inference concerns of linear types and of dependent types to be largely
orthogonal. The former have been worked out during the implementation of
today's linear types~\cite{linear-haskell}, and the latter have been carefully
studied in the context of Haskell previously~\cite{eisenberg-thesis}.

\item[Checking \Langname{} itself] GHC uses a typed intermediate language.
Type-checking this language serves only as a check on the compiler itself---but
a vital check it is. With the right compiler flags, GHC will repeat the check
after every optimization pass, frequently discovering bugs that might have
otherwise gone unnoticed. If we are to use \Langname{} as GHC's intermediate
language, it, too, must support reasonably efficient type-checking. Yet, \Langname{}
as presented here does not. The solution is not to encode \Langname{} into
GHC directly, but instead use an encoding of \Langname{}'s typing judgements
as the internal language within GHC. The relationship between the implicit
nature and the explicit, implementable nature of a more detailed
encoding is one focus of our previous work~\cite{weirich:icfp17}. A particular
challenge is how to encode the context splitting in, say, the application rule.
The solution is not to encode this at all, but to have grades be an \emph{output}
of the checking algorithm, not an input. The algorithm then checks that the grades
line up with expectations at the binding sites of restricted variables---just as
is done in the implementation today.
\end{description}

\fi

\end{document}